\documentclass[12pt]{article}
\usepackage[utf8]{inputenc}
\usepackage{amsmath}
\usepackage{amsthm}
\usepackage{amssymb}
\usepackage{caption}
\usepackage{subcaption}
\usepackage{algorithm}
\usepackage{algcompatible}
\usepackage{algpseudocode}
\setlength{\parindent}{20pt}
\setlength{\parskip}{5pt}
\newsavebox{\measurebox}
\usepackage{graphicx,psfrag,epsf}
\usepackage{multirow}
\usepackage{mathtools}
\usepackage{enumerate}
\usepackage{arydshln}
\usepackage{bbm}
\usepackage{enumitem}
\usepackage{natbib}
\usepackage{url}
\usepackage{xcolor}
\usepackage[export]{adjustbox}
\theoremstyle{definition}

\usepackage{geometry}
\addtolength{\oddsidemargin}{-.5in}%
\addtolength{\evensidemargin}{-.5in}%
\addtolength{\textwidth}{1in}%
\addtolength{\textheight}{1.3in}%
\addtolength{\topmargin}{-.8in}%
\newcommand{\blind}{0}
\usepackage{comment}

\usepackage{xr}
\makeatletter
\newcommand*{\addFileDependency}[1]{
  \typeout{(#1)}
  \@addtofilelist{#1}
  \IfFileExists{#1}{}{\typeout{No file #1.}}
}
\makeatother

\newcommand*{\myexternaldocument}[1]{%
    \externaldocument{#1}%
    \addFileDependency{#1.tex}%
    \addFileDependency{#1.aux}%
}
\myexternaldocument{supplimentary}

\newcommand\bm[1]{\boldsymbol #1}

\DeclareMathAlphabet\mathbfcal{OMS}{cmsy}{b}{n}
\DeclareMathOperator\sign{sign}
\DeclareMathOperator*{\argmin}{arg\,min}
\newcommand{\E}{\textnormal{E}}

\newcommand{\cov}{\textnormal{cov}}

\providecommand{\keywords}[1]
{
  \small	
  \textbf{\textit{Keywords---}} #1
}
\newtheorem{theorem}{Theorem}

\newtheorem{lemma}{Lemma}



\begin{document}

\if0\blind
{
  \title{\bf Two Gaussian regularization methods for time-varying networks}
  \author{Jie Jian, Peijun Sang, and Mu Zhu \\
    Department of Statistics and Actuarial Science, University of Waterloo}
  \maketitle
} \fi

\if1\blind
{
  \bigskip
  \bigskip
  \bigskip
  \begin{center}
    {\LARGE\bf Title}
\end{center}
  \medskip
} \fi

\bigskip

\begin{abstract}
We model time-varying network data as realizations from multivariate Gaussian distributions with precision matrices that change over time. To facilitate parameter estimation, we require not only that each precision matrix at any given time point be sparse, but also that precision matrices at neighboring time points be similar. We accomplish this with two different algorithms, by generalizing the elastic net and the fused LASSO, respectively. Our main focuses are efficient computational algorithms and convenient degree-of-freedom formulae for choosing tuning parameters. We illustrate our methods with two simulation studies. By applying them to an fMRI data set, we also detect some interesting differences in brain connectivity between healthy individuals and ADHD patients.
\end{abstract}\hspace{10pt}

\keywords{Sparse Gaussian graphical models, Generalized fused LASSO, Generalized elastic net, Block ADMM algorithm, Model selection}

\section{Introduction}
\label{sec:intro}
In many applications, we need to identify and estimate associations and interactions among a set of random variables to uncover their latent topological structures, such as protein-protein interaction networks \citep{sato2006partial}, ecological networks in the microbial interactome \citep{dohlman2019mapping}, and gene regulatory networks \citep{emmert2012statistical}. 
Modeling an undirected static network has been studied since last century \cite[see e.g.][]{whittaker1990, lauritzen1996graphical}. More often than not, however, network structures evolve over time in response to both endogenous and exogenous factors; therefore, the assumption of the relational structure being fixed is too restrictive. A motivating example is the scientifically important task of detecting functional connectivity of human brains. The human brain is a complex dynamical system composed of many interacting regions, and brain connectivity can be represented as a network composed of a set of random variables (nodes) interconnected by a set of interactions (edges). As the brain is actively yielding thoughts and ideas, along with changes in arousal, awareness, and vigilance, modelling brain connectivity as a static network where a single snapshot of the network is observed can be misleading. This has catalyzed emerging interest in estimating time-varying networks where the data consist of serial snapshots of the networks that evolve over time. 

The literature concerning dynamic networks has appeared only recently \citep{zhou2010, kolar2010, monti2014estimating,xue2020time,bartlett2021two}, while there is rich literature on estimating a static network \cite[see e.g.][]{meinshausen2006high,yuan2007model,friedman2008sparse,peng2009partial}, among which the Gaussian Graphical Model (GGM) is particularly useful. Consider a $p$-dimensional GGM $(x_1, \cdots, x_p)^{\top} \sim N( \bm{0,\bm{\Sigma}} )$. Denote the precision matrix as $\bm{\Sigma}^{-1}=(\sigma^{ij})_{p \times p}$. Under multivariate normality, 
zero $\sigma^{ij}$ indicates conditional independence between $x_i$ and $x_j$ given the remaining ones. Therefore, a network can be encoded by conditional dependencies in a GGM, where nodes represent random variables and the edge connecting the $i^{th}$ and $j^{th}$ nodes is decided by whether $\sigma^{ij}$ is zero. The partial correlation between $x_i$ and $x_j$, denoted as $\rho_{ij}$, can be expressed as $-\sigma^{ij}/\sqrt{\sigma^{ii}\sigma^{jj}}$, and can be estimated by performing a multiple regression $x_{i} = \sum_{j\neq i} \beta_{ij} x_{j} +\epsilon_{i}$ sequentially for $i=1,\cdots,p$, where the prediction error variance $Var(\epsilon_i)=1/\sigma^{ii}$, and the regression coefficient $\beta_{ij} = -\sigma^{ij}/\sigma^{ii}=\rho_{ij} \sqrt{\sigma^{jj}/\sigma^{ii}}$ \citep{meinshausen2006high}. Accordingly, estimating the partial correlations as regression coefficients can characterize the graphical structures, however, the general regression estimators are never exactly zero due to the high-dimension-low-sample-size setting of the problem and the sampling variation in the data. To obtain a sparse network and make the regression problem well-posed, regularization technique is employed \citep{meinshausen2006high}. \cite{peng2009partial} proposed an efficient method, referred to as the Sparse PArtial Correlation Estimation or \textit{SPACE}. Suppose that $\bm{X}_i$ is the $i$th column of $\bm{X}_{n \times p}$ that consists of $n$ i.i.d. observations from the GGM, $\bm{\sigma}_{p\times 1}=(\sigma^{11},\cdots,\sigma^{pp})^{\top}$, and $\bm{\theta}_{p(p-1)/2\times 1}=(\rho_{12},\cdots,\rho_{p-1,p})^{\top}$. \textit{SPACE} is to estimate the partial correlations by minimizing a penalized likelihood function with LASSO penalty:
\begin{align}
 L_{LASSO}(\bm{X} ,\bm{\theta},\bm{\sigma},n,\lambda_1) 
& =  \frac{1}{n} \sum\limits_{i=1}^{p} \left\| \bm{X}_i - \sum\limits_{j < i } \rho_{ji} \sqrt{\dfrac{\sigma^{jj}}{\sigma^{ii}}} \bm{X}_j - \sum\limits_{j > i } \rho_{ij} \sqrt{\dfrac{\sigma^{jj}}{\sigma^{ii}}} \bm{X}_j  \right\|^2 +\lambda_1 \left\|\bm{\theta}\right\|_1, \label{rhoRegression2}
\end{align}
where $\lambda_1$ is a prespecified regularization parameter to control the strength of shrinkage and variable selection of $\bm{\theta}$. By stacking all columns in $\bm{X}$ into a long response vector $\bm{Y}$ and filling nonzero blocks in an $np\times p(p-1)/2$ sparse matrix $\tilde{\bm{X}}$  by $\sqrt{\sigma^{jj}/\sigma^{ii}} \bm{X}_j$, \eqref{rhoRegression2} is converted to a standard LASSO problem 
\begin{align}
 L_{LASSO}(\bm{X} ,\bm{\theta},\bm{\sigma},n,\lambda_1) 
 =  L_{LASSO}(\bm{Y},\tilde{\bm{X}},\bm{\theta},n,\lambda_1) 
 = \frac{1}{n}  \| \bm{Y} -  \tilde{\bm{X}} \bm{\theta} \|^2 +\lambda_1 \left\|\bm{\theta}\right\|_1, \label{standardLASSO}
\end{align}
which can be solved efficiently \citep{peng2009partial}. 

In this article, we develop statistical methods to identify the associations and their dynamic changes in discrete time-varying networks based on \textit{SPACE}. This goal is achieved under the assumption that the changes in the temporal network from one time point to the next are smooth, which encourages the regularization on the difference of partial correlations between adjacent time points. The regularization techniques we use in this paper include both $l_1$  \citep{Tibshirani1996lasso, Tibshirani2005fuse} and $l_2$ regularization \citep{zou2005regularization}, leading to two different algorithms. Our approach of modelling time-varying networks differs from those proposed by \cite{xue2020time} and by \cite{bartlett2021two}. Though partial correlations are also employed by \cite{xue2020time} to encode network structure at each time point, they are treated as functions of time via regression splines to capture time-varying network structures. Due to B-spline bases having local support, sparse networks are obtained by imposing a group LASSO penalty on the coefficient vectors in the regression splines. In contrast to \cite{xue2020time} and our method, \cite{bartlett2021two} propose a Bayesian framework to separate two types of sparsity---sparsity across time and sparsity across variables---when modelling time-varying networks.

We structure the remainder of this paper as follows. In Section \ref{sec:meth}, we present our time-varying network models with two different penalties on temporal dissimilarity. In Section \ref{sec:algo}, we describe high-level computational details. Our main contributions are: first, we use the alternating direction method of multipliers (ADMM) to come up with computationally efficient algorithms by parameterizing $l_1$ and $l_2$ penalties differently (Section~\ref{subsec:ADMM}); second, by generalizing existing results in the literature, we derive approximate degree-of-freedom formulae to characterize the effective complexity of our solutions and to facilitate the selection of tuning parameters (Section~\ref{sec:ModelSelection}); third, both when implementing the ADMM iterations and when computing the degrees of freedom, we use a few specific tricks to handle the inversion of some potentially large matrices. In Section \ref{sec:simulation}, we illustrate the performance of our methods in two different simulation scenarios. In Section \ref{sec:application}, we apply our methods to fMRI data of human brains in both healthy individuals and those suffering from the attention deficit hyperactivity disorder (ADHD); the most marked differences between the two groups are noted. 

\section{Methodology}
\label{sec:meth}

We consider a time-varying GGM defined on a set of $T$ equidistant discrete time points indexed by $\{ t_1,\cdots, t_T \} $: $(x_{1} (t_k),\cdots,x_p (t_k))^{\top}\sim N(\bm{\mu}(t_k),\bm{\Sigma}(t_k) ),\ k\in\{ 1,\cdots,T\}$. Without loss of generality, we assume $\bm{\mu}(t_k)=\bm{0}$, which can be achieved in practice by centering the data set at each time point. 
The notation in Section \ref{sec:intro} is inherited at every discrete time point. Then we have $T$ temporal datasets $\bm{X}(t_1),\cdots,\bm{X}(t_T)$ and diagonals in temporal precision matrices $\bm{\sigma}^{\top}(t_1),\cdots,\bm{\sigma}^{\top}(t_T)$ by stacking which leads to a vector $\bm{\sigma}$ with length $pT$. 
We have the temporal response vector $\bm{Y}(t_k)$ and temporal predictor matrix $\tilde{\bm{X}}(t_k)$ formed as in Section \ref{sec:intro}. A vector $\mathbfcal{Y}$ of length $npT$ is formed by stacking all temporal reponse vector $\bm{Y}(t_k)$. Let $\mathbfcal{X}$ denote a $Tnp\times Tp(p-1)/2$ block diagonal matrix, where each diagonal block is the temporal predictor matrix $\tilde{\bm{X}}(t_k), \ k\in\{ 1,\cdots,T\} $. 
Our objective is to estimate a vector $\bm{\theta}$ of length $Tp(p-1)/2$ composed of all temporal partial correlations, i.e., $\bm{\theta}=(\bm{\theta}^{\top}(t_1),\cdots,\bm{\theta}^{\top}(t_T))^{\top}$. Throughout the rest of the paper, we use $\bm{\theta}$ and $\bm{\sigma}$ to denote these two long vectors containing temporal parameters over all time points.

Na\"{i}vely, one can minimize \eqref{standardLASSO} at each time point independently to estimate the temporal partial correlations, and this baseline method is referred to as \textit{LASSO in the time-varying network}, where the loss function can be written in the matrix form as:
\begin{align}
\mathcal{L}_{TVN\_LASSO}( \mathbfcal{Y},\mathbfcal{X} ,\bm{\theta},n,\lambda_1) =
\sum\limits_{k=1}^{T} L_{LASSO}(\bm{Y}(t_k) ,\tilde{\bm{X}}(t_k) ,\bm{\theta},n,\lambda_1)=\frac{1}{n}  \| \mathbfcal{Y}-\mathbfcal{X}\bm{\theta}\| ^2  + \lambda_1 \|\bm{\theta} \|_1 .
\label{TVlasso}
\end{align}
But it is natural to assume that the covariance matrix $\bm{\Sigma}(t)$ are element-wise smooth over $t$. Then, by Cramer's rule, the entries in the precision matrix and thus the corresponding partial correlations should also be smooth over $t$. Therefore, we propose a regularization method with an extra penalty term $\lambda_2 \cdot P (\bm{\theta})$ to encourage the partial correlations to be similar for neighbouring time points. The objective function $\mathcal{L}_{TVN}$ of our time-varying network problem is
\begin{align}
\mathcal{L}_{TVN}( \mathbfcal{Y},\mathbfcal{X} ,\bm{\theta},n,\lambda_1,\lambda_2) =\mathcal{L}_{TVN\_LASSO}(\mathbfcal{Y},\mathbfcal{X} ,\bm{\theta},n,\lambda_1) + \lambda_2 \cdot P (\bm{\theta}).  \label{tvnet}
\end{align}
Here $P(\bm{\theta})$ denotes a penalty function measuring the total distance between neighbouring coefficients and $\lambda_2$ is another tuning parameter. We consider two  different penalty functions for $P(\bm{\theta})$. 
\paragraph{Generalized elastic net (GEN)}
Our first penalty generalizes the work of  \cite{zou2005regularization}. To achieve smoothness of partial correlations over time, the GEN applies $l_2$ penalties to the differences of partial correlations along the time sequences in $P (\bm{\theta})$, taking the form
\begin{align}
P(\bm{\theta}) = \sum_{k=2}^{T} \sum_{1\leq i<j\leq p } [ \rho_{ij} (t_k)-\rho_{ij} (t_{k-1}) ]^2 .  \label{gen1}
\end{align}
\paragraph{Generalized fused lasso (GFL)}
Our second penalty generalizes the work of \cite{Tibshirani2005fuse} by penalizing the absolute difference of the partial correlations at adjacent time points. In particular, the penalty function takes the following form:
 \begin{align}
P(\bm{\theta}) = \sum_{k=2}^{T} \sum_{1\leq i<j\leq p } | \rho_{ij} (t_k)-\rho_{ij} (t_{k-1})  | .  \label{gfl1}
\end{align}
A large tuning parameter $\lambda_2$ in the GFL not only yields smoothness in the changes between neighboring coefficients, but also shrinks some of those changes to be exactly zero.

When $\lambda_2=0$, both the GEN and the GFL are reduced to the na\"{i}ve LASSO problem \eqref{TVlasso}, where there is no regularization on the changes in coefficients at adjacent time points. The key difference between the two penalties is that the GFL is able to force the partial correlations at adjacent time points to be identical if their difference is sufficiently small, while GEN cannot.

\section{Algorithm for the Time-varying Network Estimation}
 \label{sec:algo}

In this section, we describe some high-level computational details, while specific technicalities are described in the supplementary materials. 
 
The loss function \eqref{tvnet} with  GEN \eqref{gen1} and GFL \eqref{gfl1} penalties can be respectively written as:
\begin{align}
\mathcal{L}_{GEN}(\mathbfcal{Y}, \mathbfcal{X} , \bm{\theta}, n,\lambda_1,\lambda_2) & =\frac{1}{n}  \| \mathbfcal{Y}-\mathbfcal{X}\bm{\theta}\| ^2  + \lambda_1 \|\bm{\theta}\|_1 + \lambda_2 \| \bm{D} \bm{\theta} \|^2   \label{gen2} \quad \text{and} \\
\mathcal{L}_{GFL}(\mathbfcal{Y}, \mathbfcal{X} , \bm{\theta}, n,\lambda_1,\lambda_2) & =\frac{1}{n}  \| \mathbfcal{Y}-\mathbfcal{X}\bm{\theta}\| ^2  + \lambda_1 \|\bm{\theta} \|_1 + \lambda_2 \| \bm{D} \bm{\theta} \|_1 ,  \label{gfl2}
\end{align}
where $\bm{D}$ is a block difference matrix composed of many $p(p-1)/2-$by$-p(p-1)/2$ square matrices:
\begin{align}
\bm{D}_{(T-1)p(p-1)/2 \times T p(p-1)/2}=
\begin{bmatrix}
I & -I & 0 & 0 & 0  \\
0 & I & -I & 0 & 0   \\
 &  & \ddots  & \ddots &    \\
0 & 0 & 0 & I & -I \\
\end{bmatrix}.\label{Dmat}
\end{align} 
If $\mathbfcal{X}$ is completely known, the minimizations of \eqref{gen2} and \eqref{gfl2} over $\bm{\theta}$ are standard GEN and GFL optimization problems.
However, the predictor matrix $\mathbfcal{X}$ involves an unknown parameter $\bm{\sigma}$, which requires us to leverage coordinate descent techniques and update $\bm{\sigma}$ and $\bm{\theta}$ iteratively. We extend the two-step iterative procedure developed by \cite{peng2009partial}. 
The detailed algorithm is summarized in Algorithm~\ref{alg:space}. Step 2 and 3 are iterated to update $\bm{\sigma}$ and $\bm{\theta}$ until convergence. Given $\bm{\sigma}$, both minimization problems---of \eqref{gen2} and of \eqref{gfl2}---are convex; details are given in Section \ref{subsec:ADMM}. 
\begin{algorithm}[ht]
\caption{Two-step iterative procedure}  
\label{alg:space}
    \begin{algorithmic}[1]
        \Statex \textbf{Input:} The centered original data $\{ \bm{X} \}$.
        \Statex \textbf{Output}: Estimated $ \bm{\sigma} $ and $ \bm{\theta} $
        \Statex  \textbf{Initialization}:
        \Statex Start with the initial estimate $(\sigma^{ii})^{(0)}(t_k) =1/\hat{\sigma}_{ii}(t_k)$, where $\hat{\sigma}_{ii} (t_k) = (n-1)^{-1} \sum_{j=1}^n [x_i^j (t_k) -\bar{x}_i (t_k)]^2.$  
        Form the initial predictor matrix $\mathbfcal{X}^{(0)}$ with the data and $\bm{\sigma}^{(0)}$.
        \While {$\| \bm{\sigma}^{(l)} - \bm{\sigma}^{(l-1)} \|_2 > tol_1 $ and $ \| \bm{\theta}^{(l)} - \bm{\theta}^{(l-1)} \|_2 >tol_2 $}
        
        \State Estimate $\bm{\theta}^{(l+1)}$ by solving \eqref{gen2} or \eqref{gfl2} with the given $\bm{\sigma}^{(l)} $.
        \State Update $\bm{\sigma}^{(l+1)}$, where $1/\hat{\sigma}^{ii} (t_k)= n^{-1} \| X_i  (t_k) - \sum_{j \neq i} \hat{\beta}_{ij} (t_k) X_j (t_k) \|^2 $ and $\hat{\beta}_{ij}(t_k) = (\rho_{ij})^{(l+1)}(t_k)\sqrt{(\sigma^{jj})^{(l)} (t_k)/ (\sigma^{ii})^{(l)} (t_k)}$.
        \State Update the predictor matrix $\mathbfcal{X}^{(l+1)}$ with the data and $ \bm{\sigma}^{(l+1)} $.
          \EndWhile
    \end{algorithmic}
\end{algorithm}

\subsection{Fast ADMM algorithms for GEN and GFL}
\label{subsec:ADMM}
Given $\bm{\sigma}$, 
we minimize \eqref{gen2} and \eqref{gfl2} using the ADMM. The ADMM algorithm and its convergence properties are illustrated in detail in \cite{boyd2011distributed}. The key trick to use the technique in our context is that, by adding a new constraint $\bm{\theta}-\bm{z}=\bm{0},$ we can freely re-express our objective functions \eqref{gen2} and \eqref{gfl2} in either of the following ways, 
\begin{equation*}
\mathcal{L}(\bm{\theta,\bm{z}}) = 
\begin{cases}
\frac{1}{n}  \| \mathbfcal{Y} -    \mathbfcal{X}  \bm{\theta} \| ^2 + \lambda_1 \| \bm{z} \|_1 + \lambda_2 P( \bm{\theta} ),\\[1mm]
\frac{1}{n}  \| \mathbfcal{Y} -    \mathbfcal{X}  \bm{\theta} \| ^2 + \lambda_1 \| \bm{z} \|_1 + \lambda_2 P( \bm{z} ), 
\end{cases}
\end{equation*}
depending on the specific form of the penalty function $P(\cdot)$.

Define $\bm{u} \in \mathbb{R}^{Tp(p-1)/2 \times 1}$ as the dual variable; and let $a \in \mathbb{R}^+$ be a penalty parameter. The augmented Lagrangian $L_a$ for minimizing $\mathcal{L}(\bm{\theta},\bm{z})$, subject to $\bm{\theta}-\bm{z}=\bm{0}$, is
 \begin{align}
 L_a (\bm{\theta} ,\bm{z} ,\bm{u} ) = \mathcal{L}(\bm{\theta},\bm{z}) + a \cdot \bm{u}^{\top} (\bm{\theta}-\bm{z})+\frac{a}{2} \| \bm{\theta}-\bm{z} \|^2, \label{LagGEN}
\end{align}
where we have scaled the dual variable $\bm{u}$ by $a$ itself, so that
the ADMM algorithm iterates over the following three steps:
\begin{flalign}
\text{(i) } & \bm{\theta}_{(l)}=\arg\min\limits_{\bm{\theta}} L_a (\bm{\theta} ,\bm{z}_{(l-1)} ,\bm{u}_{(l-1)} ) \label{rhoADMM}\\
\text{(ii) } & \bm{z}_{(l)}=\arg\min\limits_{\bm{z}} L_a (\bm{\theta}_{(l)} ,\bm{z} ,\bm{u}_{(l-1)} )\label{zADMM} \\
\text{(iii) } & \bm{u}_{(l)}=\bm{u}_{(l-1)} + {\bm{\theta}_{(l)} - \bm{z}_{(l)}} \label{uADMM}
\end{flalign}
over $l=0,1,2,\dots$ until convergence, with typical initialization $\bm{\theta}_{(0)}=\bm{0}$, $\bm{z}_{(0)}=\bm{0}$ and $\bm{u}_{(0)}=\bm{0}$. 

The separation of the underlying optimization problem into two subproblems---namely, \eqref{rhoADMM} and \eqref{zADMM}---allows us to obtain the key insight that, for the GEN penality $P(\bm{\theta})=\|\bm{D}\bm{\theta}\|^2$, it is more advantageous to parameterize the penalty as $P(\bm{\theta})$; whereas, for the GFL penalty $P(\bm{\theta})=\|\bm{D}\bm{\theta}\|_1$, it is more advantageous to parameterize it as $P(\bm{z})$. More details are presented below, where, for clarity, we shall suppress the step index $l$ in all formulae. 


\subsubsection{ADMM for GEN}
\label{subsubsec:admm_gen}

As stated above, for the GEN penalty we parameterize it as $P(\bm{\theta})$ in the ADMM iterations, so that \eqref{rhoADMM} merely minimizes over a quadratic function of $\bm{\theta}$,
\begin{align}
\bm{\theta} & =\arg\min\limits_{\bm{\theta}} \frac{1}{n}  \| \mathbfcal{Y} -    \mathbfcal{X}  \bm{\theta} \| ^2 + \lambda_2 \| \bm{D} \bm{\theta} \|^2  + a \cdot \bm{u}^{\top} (\bm{\theta}-\bm{z})+\frac{a}{2} \| \bm{\theta}-\bm{z} \|^2, \label{rhoGEN}
\end{align}
which has closed form solution,
\begin{align}
\bm{\theta} =\left( \dfrac{2}{n}  \mathbfcal{X}^{\top}  \mathbfcal{X} + 2\lambda_2 \bm{D}^{\top} \bm{D} + a I \right)^{-1} \left( \frac{2}{n} \mathbfcal{Y}^{\top} \mathbfcal{X} + a (\bm{z}- \bm{u} ) \right). \label{rhoClose}
\end{align}
While this may appear easy, it is worth emphasizing that, for us, the matrix that must be inverted in \eqref{rhoClose} can be very large. Fortunately, the inversion can be pre-calculated outside the ADMM iterations, for  it remains constant from one iteration to another. Furthermore, \eqref{rhoClose} can be computed efficiently by exploiting the fact that $\left( 2 \mathbfcal{X}^{\top}  \mathbfcal{X} /n + 2\lambda_2 \bm{D}^{\top} \bm{D} + a I \right)$ is a symmetric block tri-diagonal matrix whose off-diagonal blocks are $-2\lambda_2 I$. Technical details for efficiently inverting such a matrix are given in \ref{subsec:supplement1} in the supplementary materials.

The minimization \eqref{zADMM} over $\bm{z}$,
\begin{align*}
\bm{z} & =\arg\min\limits_{\bm{z}} \lambda_1 \| \bm{z}  \|_1 + a \cdot \bm{u}^{\top} (\bm{\theta}-\bm{z})+\frac{a}{2} \| \bm{\theta}-\bm{z} \|^2, 
\end{align*}
is simply a LASSO-type problem. 
As $ \| \bm{z}  \|_1$ is not differentiable everywhere, we leverage its sub-differential and obtain the solution as
\begin{equation}\label{GEN_zupdate}
\bm{z} =
\begin{cases}  
      \bm{u}+\bm{\theta}-\dfrac{\lambda_1}{a}, & \text{if $\bm{u}+\bm{\theta}>\dfrac{\lambda_1}{a}$}, \\[2mm]
      \bm{u}+\bm{\theta}+\dfrac{\lambda_1}{a}, & \text{if $\bm{u}+\bm{\theta}<-\dfrac{\lambda_1}{a}$}, \\
      \bm{0}, & \text{otherwise}.
   \end{cases}
\end{equation}

\subsubsection{ADMM for GFL}
\label{subsubsec:admm_gfl}
The GFL problem is in itself important for a wide range of scientific procedures including signal processing and machine learning, especially when the matrix $\bm{D}$ in \eqref{gfl2} takes on more general forms. Even though \eqref{gfl2} is convex and there exists a global optimal solution, minimizing it is still computationally challenging. A large body of literature exists on solving the GFL problem \citep[e.g.,][]{tibshirani2011solution,ye2011split,xin2016efficient}, but many methods still suffer from high computational cost or have difficulties with achieving sparsity in both $\bm{\theta}$ and $\bm{D}\bm{\theta}$ simultaneously. To get around these bottlenecks, we design a specific ADMM algorithm by exploiting the special block structure in our problem. 

Again, as stated earlier, for the GFL penalty we parameterize it as $P(\bm{z})$ in the ADMM iterations, so \eqref{rhoADMM} still merely minimizes over a quadratic function of $\bm{\theta}$,
\begin{align}
\bm{\theta} & =\arg\min\limits_{\bm{\theta}} \frac{1}{n}  \| \mathbfcal{Y} -  \mathbfcal{X}  \bm{\theta} \| ^2  + a \cdot \bm{u}^{\top} (\bm{\theta}-\bm{z})+\frac{a}{2} \| \bm{\theta}-\bm{z} \|^2, \nonumber
\end{align}
with closed-form solution,
\begin{align}
\bm{\theta} =\left( \dfrac{2}{n}  \mathbfcal{X}^{\top}  \mathbfcal{X}  + a I \right)^{-1} \left( \frac{2}{n} \mathbfcal{Y}^{\top} \mathbfcal{X} + a (\bm{z}- \bm{u} ) \right). \label{rhoupdateGFL2}
\end{align}

The minimization \eqref{zADMM} over $\bm{z}$ now becomes
\begin{align}
\bm{z} & =\underset{\bm{z}}{\arg\min}  \ \lambda_1 \| \bm{z}  \|_1 + \lambda_2 \| \bm{D} \bm{z } \|_1 + a \cdot \bm{u}^{\top} (\bm{\theta}-\bm{z})+\frac{a}{2} \| \bm{\theta}-\bm{z} \|^2. \nonumber \\
& = \underset{\bm{z}}{\arg\min} \  \frac{1}{2} \| \bm{\theta}+\bm{u}-\bm{z} \|^2 +\dfrac{ \lambda_1}{a} \| \bm{z}  \|_1 + \dfrac{\lambda_2}{a} \| \bm{D} \bm{z } \|_1 . \label{zupdateGFL}
\end{align}
Let $\bm{z}(t_k)=(z_{12}(t_k),\cdots,z_{p-1,p}(t_k))^{\top}$, and $\bm{u}$ be defined in a similar way. Then \eqref{zupdateGFL} can be decomposed into $p(p-1)/2$ independent optimization problems,
\begin{multline*}
 \{ z_{ij} (t_k)\}_{k=1}^{T}   = 
 \underset{\{ z_{ij}(t_k)\}_{k=1}^{T}}{\arg\min} \ \sum\limits_{k=1}^{T} \left[ z_{ij} (t_k) -\rho_{ij}(t_k) -u_{ij}(t_k) \right ]^2 + \\
 \dfrac{\lambda_1}{a} \sum\limits_{k=1}^{T} | z_{ij} (t_k) |  + \dfrac{\lambda_2}{a} \sum\limits_{k=2}^{T} | z_{ij} (t_k)- z_{ij} (t_{k-1}) |,
 \quad 1\leq i < j \leq p,
\end{multline*}
a collection of fused LASSO signal approximator (FLSA) problems, and we use the algorithm in \cite{hoefling2010path} to solve them.


\subsection{Tuning parameter selection}
\label{sec:ModelSelection}

In this section, we discuss how to choose the tuning parameters $(\lambda_1,\lambda_2)$.
We adopt the Bayesian Information Criterion (BIC), 
\begin{equation}
BIC(\lambda_1,\lambda_2) = n \times \sum\limits_{k=1}^{T} \left[ -\log|\hat{\bm{\Sigma}}^{-1} (t_k) | + \text{tr}\left( \hat{\bm{\Sigma}}^{-1} (t_k) \cdot \bm{S} (t_k) \right) \right] + \log (n) \times \hat{df}(\lambda_1,\lambda_2), \label{bic}
\end{equation}
where $\hat{\bm{\Sigma}}^{-1}$ is the estimated precision matrix based on the estimated partial correlations, and $\bm{S}$ is the sample covariance matrix. For the degree of freedom $\hat{df}(\lambda_1,\lambda_2)$ in \eqref{bic}, we use existing results in the literature to derive specific formulae for both \eqref{gen2} and \eqref{gfl2}. 

\cite{zou2007degrees} derived an explicit  degree-of-freedom formula for the LASSO. Their approach can easily be adopted to derive the degree of freedom for \eqref{gen2}---by combing the $l_2$ penalty $\|\bm{D}\bm{\theta}\|^2$ with the least-squares objective $\|\mathbfcal{Y}-\mathbfcal{X}\bm{\theta}\|^2$; see \ref{subsec:supplement2} in the supplementary materials. We obtain
\begin{align}
\hat{df}_{GEN} & = \text{tr}  \left[  \left(  \mathbfcal{X}_{\mathcal{A}}^{\top} \mathbfcal{X}_{\mathcal{A}}  + n\lambda_2 \bm{D}_{\mathcal{A}}^{\top} \bm{D}_{\mathcal{A}}   \right)^{-1} \mathbfcal{X}_{\mathcal{A}}^{\top} \mathbfcal{X}_{\mathcal{A}} \right] \label{GENdfH} \\
&\approx \text{tr}  \left[  \left(  I + n\lambda_2 \left( \mathbfcal{X}_{\mathcal{A}}^{\top} \mathbfcal{X}_{\mathcal{A}} +\eta I \right)^{-1}  \bm{D}_{\mathcal{A}}^{\top} \bm{D}_{\mathcal{A}}   \right)^{-1}  \right] ,
\label{GENdfSimplify}
\end{align}
where $\mathcal{A}=\{ i: \hat{\bm{\theta}}_i \neq 0 \} $ denotes the active set, and $\mathbfcal{X}_{\mathcal{A}}$ (or $\bm{D}_{\mathcal{A}}$) denotes the corresponding submatrix containing only the columns indexed by $\mathcal{A}$. Note that the matrix $\mathbfcal{D}$, having fewer rows than columns, is not full-rank, and that,
for large $p$, the matrix $\mathbfcal{X}$ is often not full-rank, either. Therefore, to compute \eqref{GENdfH} for any $\mathcal{A}$, it is necessary to first add a small perturbation matrix $\eta \bm{I}$ to $\mathbfcal{X}_{\mathcal{A}}^{\top} \mathbfcal{X}_{\mathcal{A}}$---we set $\eta=10^{-5}$. The final step \eqref{GENdfSimplify} is due to the identity $(\bm{A}+\bm{B})^{-1}=(I+\bm{A}^{-1}\bm{B})^{-1}\bm{A}^{-1}$; it has the additional advantage over \eqref{GENdfH} that the trace of such a matrix inverse can be approximated by Chebyshev interpolation  \citep{han2017approximating}. 


\cite{tibshirani2012degrees} worked out how to compute the degree of freedom for a generalized LASSO problem, into which \eqref{gfl2} can be transformed---see \ref{subsec:supplement3} in the supplementary materials. Applying their result, we conclude that the degree of freedom for the GFL problem \eqref{gfl2} is equal to dimension of the null space of $[\bm{D}^{\top},\bm{I}]^{\top}_{-\mathcal{A}}$, where $ \mathcal{A}=\{ i: [\bm{D}^{\top},\bm{I}]^{\top} \hat{\bm{\theta}}_i \neq 0 \} $. It turns out this somewhat abstract conclusion can be further characterized (again, see \ref{subsec:supplement3}) by something more interpretable---namely,
\begin{align}
\hat{df}_{GFL}=  \sum\limits_{1\leq i < j \leq p}  \left(   \mathbbm{1}\{\hat{\rho}_{ij}(1)\neq 0\} + \sum\limits_{k=2}^{\top} \mathbbm{1}\{ \hat{\rho}_{ij}(k)\neq \hat{\rho}_{ij}(k-1),\ \hat{\rho}_{ij}(k)\neq0\} \right), \label{GFLdf}
\end{align}
where $\mathbbm{1}(\cdot)$ denotes a binary indicator function. That is, the degree of freedom here is simply the total number of nonzero fused groups over all $\hat{\rho}_{ij}$.

\paragraph{Remark} Strictly speaking, the degree-of-freedom formulae derived by  \cite{zou2007degrees} and \cite{tibshirani2012degrees} both require the regression of $\mathbfcal{Y}$ onto $\mathbfcal{X}$ to be homoscedastic, which is not the case for us, but we apply their results nonetheless because deriving similar results without the homescedastic assumption is currently an unsolved problem on its own. Hence, our degree-of-freedom formulae \eqref{GENdfSimplify} and \eqref{GFLdf} are necessarily ad-hoc approximations, but they are still useful in facilitating the choice of tuning parameters through the BIC, as our empirical results below will demonstrate.  

\section{Simulation}
\label{sec:simulation}

In this section, we perform simulation studies to assess the performance of the two proposed approaches, GEN and GFL, and compare them with the results from na\"{i}ve LASSO and sample estimates. The first objective is to evaluate the ability of our methods in uncovering the underlying networks of the synthetic data, and the second objective is to investigate the accuracy of the estimated partial correlations. 

We generate $p$ random functions $x_{i} (t)$, $i=1,\cdots,p$ on $[0,1]$ identically and independently for $n$ subjects. Let $x_{i} (t)=\mu(t)+e_{i} (t)$, where $\mu(t) = t + \sin(t)$. The random vector $\left( e_{1} (t),\cdots,e_p (t) \right)^\top$ is drawn from a centered multivariate Gaussian distribution with covariance $\Sigma (t)$ that gives rise to true partial correlation $\rho_{ij} (t)$. Zero partial correlations indicate absence of the connection, while nonzero $\rho_{ij} (t)$ indicates an edge, between nodes $x_i$ and $x_j$ at the time $t$. The magnitude of $\rho_{ij} (t)$ represents the strength of connectivity. To generate a sparse network, a sparse precision matrix is required at each time point. As we also would like it to change smoothly over time, extra care must be taken to ensure it is positive-definite at all time points as well. We generate two different scenarios (see Sections~\ref{subsec:s1} and \ref{subsec:s2} below) that satisfy these requirements. In both scenarios, $p=10$-dimensional normal random variables are simulated at each of 30 equally spaced time points on $[0,1]$. To investigate the effects of sample size on the performance, we consider $n=50$ and $n=200$ in both scenarios. Each simulation is repeated 100 times.

\subsection{Scenario 1}
\label{subsec:s1}

In Scenario 1, we characterize the random vector $\left( e_{1} (t),\cdots,e_p (t) \right)^\top$ as a linear combination from a set of $S$ uncorrelated $p$-dimensional Gaussian random vectors whose coefficients are smooth functions. More specifically, $\left( e_{1} (t),\cdots,e_p (t) \right)^\top=\sum_{s=1}^{S} B_s (t) \left( \xi_{1,s},\cdots,\xi_{p,s} \right)^\top$, where $\{ B_s (t), 1\leq s \leq S \}$ denote $S$ cubic B-spline basis functions defined on $[0,1]$. For each $s$, $(\xi_{1,s},\cdots,\xi_{p,s})^\top$ follows a centered multivariate Gaussian distribution with covariance matrix $\Sigma_s$. Employing the B-spline method to generate random components makes $e_i (t) =\sum_{s=1}^{S} B_s (t) \xi_{i,s} $ smooth over time. Furthermore, the linear combination of uncorrelated multivariate Gaussian random vectors allows us to construct time-varying graphical structures, as the true precision matrix of $(x_{1} (t),\cdots,x_{p} (t))^\top$ is given by $[\sum_{s=1}^{S} B_s^2(t) \Sigma_s]^{-1}$. Another advantage is the sparse graphic structure brought by the locally compact support property of the B-spline basis functions. Each basis function is non-zero over a small subinterval (see Figure \ref{fig:bspline} in the supplementary materials), so the graphic structure at one time point only involves a small number of B-spline basis functions and the corresponding Gaussian random vectors. Thus we can easily achieve a sparse graphical structure at each time point by carefully choosing sparse $\Sigma_s$'s. We take $S=13$, and the details of $\Sigma_s$'s are provided in \ref{subsec:supplement0}. Under this setting, there are only five true connections: $1-6$, $2-7$, $3-8$, $4-9$, and $5-10$, and the profiles of the corresponding partial correlations are depicted as the red lines in Figure \ref{fig:s1lines}.

\subsection{Scenario 2}
\label{subsec:s2}

In the second scenario, we follow \cite{peng2009partial} to generate temporal precision matrices directly, whose non-vanishing entries are functions of time. We first define the initial precision matrix $\bm{\Omega}(t)$ as a symmetric matrix whose diagonal entries are one. If there ever exists an edge between node $i$ and $j$, then $\bm{\Omega}[i,j] (t)$ takes values of $\pm f(t,T_s,T_e)$ or $\pm g(t,T_s,T_e)$ with probability 1/4, otherwise 0, where $[T_s,T_e]$ is a pre-defined active interval for the connection $i-j$. Let $f(t,T_s,T_e)$ and $g(t,T_s,T_e)$ be zero everywhere except on their active intervals $[T_s,T_e]$ where
$$f(t,T_s,T_e) =\frac{1}{2}\left[0.1+0.8\cdot \sin \left( \frac{t-T_s}{T_e-T_s}\pi \right)\right] \text{ and } g(t,T_s,T_e) =  \frac{1}{2}\left[0.1+0.8\cdot \left( \frac{t-T_s}{T_e-T_s} \right)\right] .$$
We take the average of $\bm{\Omega}$ and its transpose $\bm{\Omega}^{\top}$ as the true precision matrix to assure the symmetry, and then the covariance matrix $\Sigma(t)$ of the random components $\left( e_{1} (t),\cdots,e_p (t) \right)^\top$ is given by
$\bm{\Sigma}(t) [i,j]=\bm{\Omega}(t)^{-1}[i,j]/\sqrt{\bm{\Omega}(t)^{-1}[i,i]\cdot \bm{\Omega}(t)^{-1}[j,j]}.$

In this scenario, there are six true connections, $1-5$, $1-8$, $2-4$, $2-6$, $3-9$ and $7-10$, and the corresponding partial correlations (as functions of $t$) are displayed as the red lines in Figure \ref{fig:s2lines}.

\subsection{Results}
\label{subsec:simulationResult}



We evaluate different methods with two metrics: 
    (i)
    the estimation error $ \sum_{t=1}^{T} [\sum_{1\leq i , j \leq p} ( \hat{\rho}_{ij}(t) -  \rho_{ij}(t) )^2 ]^{1/2} $; and 
    (ii)
    the area under the ROC curve ({AUC}) 
    which, in our context, is equal to the  frequency that $|\hat{\rho}_{ij}(t)| > |\hat{\rho}_{i'j'}(t')|$ over all $(i,j,t)$-$(i',j',t')$ pairings such that 
    $\rho_{ij}(t)\neq 0$ and $\rho_{i'j'}(t')=0$.
While the first metric measures estimation quality, the second is simply an empirical estimate of the conditional probability that, given a truly-existing edge and a non-existing one, the estimated parameters would rank the true edge ahead of the non-existing one; thus, it measures the ability of different methods to detect the underlying network structure. 

Tables \ref{tabSim:difference} and \ref{tabSim:AUC} summarize the estimation errors and the estimated AUCs of the four methods over 100 simulation replicates, respectively. 
We can see that the performances of all methods improve when the sample size is increased from 50 to 200, as expected, and that our proposed methods offer substantial improvements over the sample estimate and na\"{i}ve LASSO. To gain more insights, we also selectively showcase some specific results below, all of which are based on one simulation rather than over 100 repetitions.


\begin{table}[ht] 
\centering
\begin{tabular}{|c cc cc|}
\hline
\multirow{2}{*}{Method} & \multicolumn{2}{c}{Scenario 1} & \multicolumn{2}{c|}{Scenario 2} \\
 & \multicolumn{1}{c}{$n=50$} & \multicolumn{1}{c }{$n=200$} & \multicolumn{1}{c}{$n=50$} & \multicolumn{1}{c|}{$n=200$} \\
\hline 
Sample & 24.73 (0.13) & 11.37 (0.05) & 43.77 (0.11) & 20.36 (0.04) \\ 
LASSO & 19.62 (0.11) & 4.86 (0.04) & 17.02 (0.06) & 10.05 (0.04) \\ 
GEN & 15.04 (0.06) & 6.10 (0.03) & 11.60 (0.06) & 7.62 (0.03) \\  
GFL & 6.78 (0.08) & 4.44 (0.04) & 11.18 (0.08) & 5.35 (0.04) \\ \hline
\end{tabular}
\caption{Mean estimation error across 100 replicates with standard error in parentheses. }
\label{tabSim:difference}
\end{table}

\begin{table}[ht] 
\centering
\begin{tabular}{|c|cc|cc|}
\hline
\multirow{2}{*}{Method} & \multicolumn{2}{c|}{Scenario 1} & \multicolumn{2}{c|}{Scenario 2} \\
 & \multicolumn{1}{c}{$n=50$} & \multicolumn{1}{c|}{$n=200$} & \multicolumn{1}{c}{$n=50$} & \multicolumn{1}{c|}{$n=200$} \\
\hline 
Sample & 0.896 (0.0016) & 0.915 (0.0014) & 0.776 (0.0024) & 0.910 (0.0016) \\ 
LASSO &0.881 (0.0005)  & 0.917 (0.0012) & 0.734 (0.0021) & 0.887 (0.0013) \\ 
GEN & 0.991 (0.0009) & 0.950 (0.0011) & 0.991 (0.0010) & 0.998 (0.0001) \\  
GFL & 0.938 (0.0019) & 0.966 (0.0016) & 0.903 (0.0028) & 0.993 (0.0006) \\ \hline
\end{tabular}
\caption{Mean and standard error of the estimated AUCs over 100 repeated simulations. }
\label{tabSim:AUC}
\end{table}

First, for the set $\{(i,j): \exists \ t \text{ s.t. } \rho_{ij}(t) \neq 0 \}$, Figures \ref{fig:s1lines} and \ref{fig:s2lines} show the estimated profiles $\hat{\rho}_{ij}(t)$ as a function of time from one simulation instance, respectively for Scenarios 1 and 2. Here, we see more clearly the improvement from $n=50$ to $n=200$. Not surprisingly, we also see that GEN produces smooth functions while GFL produces staircase-shaped functions. 


Next, Figure~\ref{fig:s1_gfl_network} displays the GFL-estimated network (with $n=200$) over every other time point in Scenario 1, with false-positive and false-negative edges clearly indicated at these time points as well. We can see that the estimated network structure does not show rapid bursts of change from time to time. The GEN-estimated network for this scenario and all estimated networks in Scenario 2 are displayed in Figures \ref{fig:s1_gen_network}, \ref{fig:s2_gen_network}, and \ref{fig:s2_gfl_network} in the supplementary materials. 

Finally, Figure~\ref{fig:s2_gen_contour} depicts the contours of the BIC, of the estimation error, and of the AUC in Scenario 2, over a grid of $(\lambda_1,\lambda_2)$ with the GEN penalty. It demonstrates that the tuning parameters selected by the BIC indeed lead to good solutions in terms of both metrics. The na\"{i}ve LASSO solution, with its tuning parameter also selected by the BIC, is indicated as well, whereas the sample estimate is, of course, at the origin $(0,0)$. We can see that, while the LASSO solution is clearly better than the sample solution, incorporating the additional GEN penalty provides substantial further improvements.  
Similar contour plots for the GFL penalty in this scenario and those in Scenario 1 are given in Figures~\ref{fig:s2_gfl_contour}, \ref{fig:s1_gen_contour}, and \ref{fig:s1_gfl_contour} in the supplementary materials.


\begin{figure}[htp]
\centering
\begin{subfigure}[b]{\textwidth}
\centering
\includegraphics[angle=270,width=0.3\linewidth]{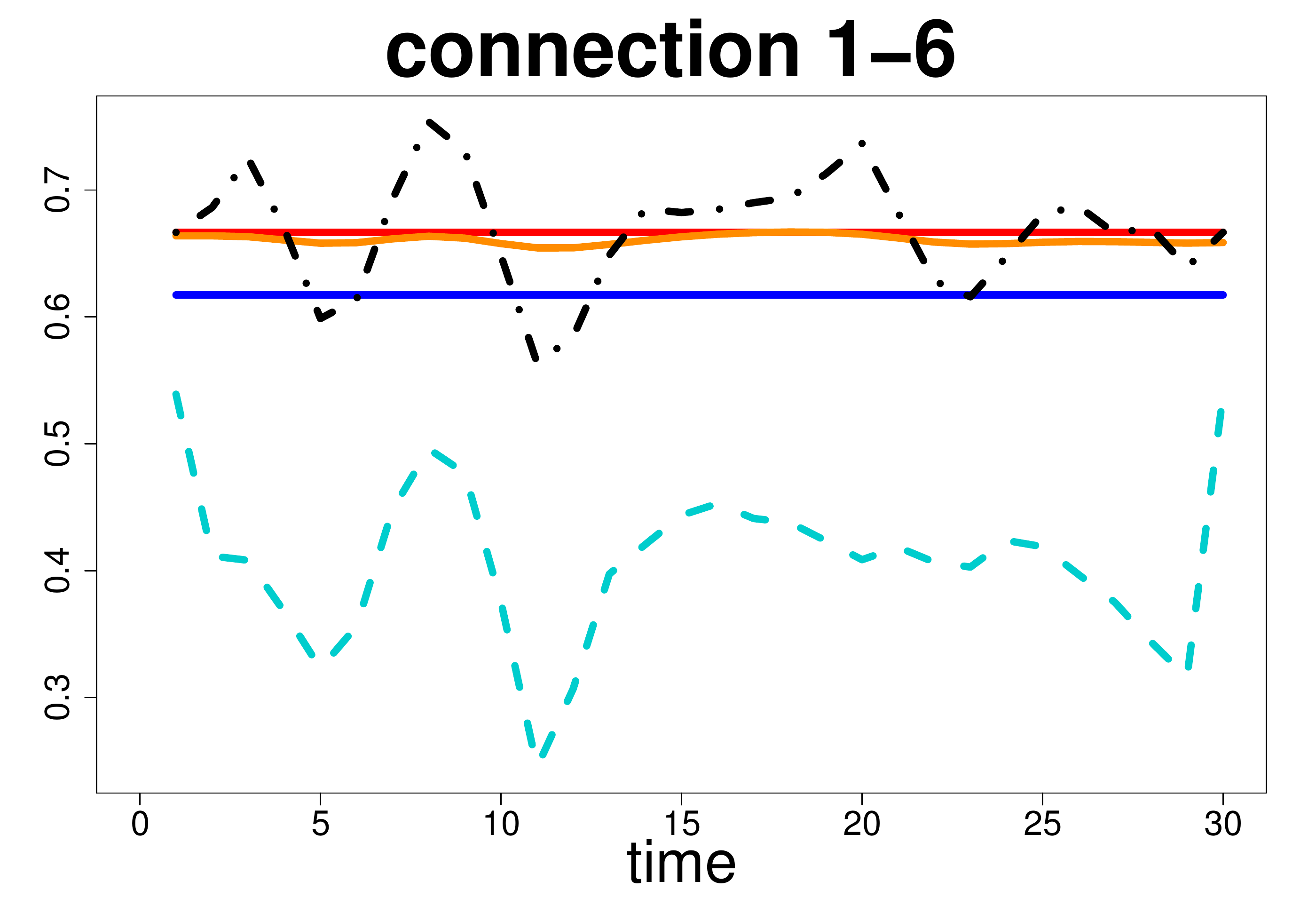}
\includegraphics[angle=270,width=0.3\linewidth]{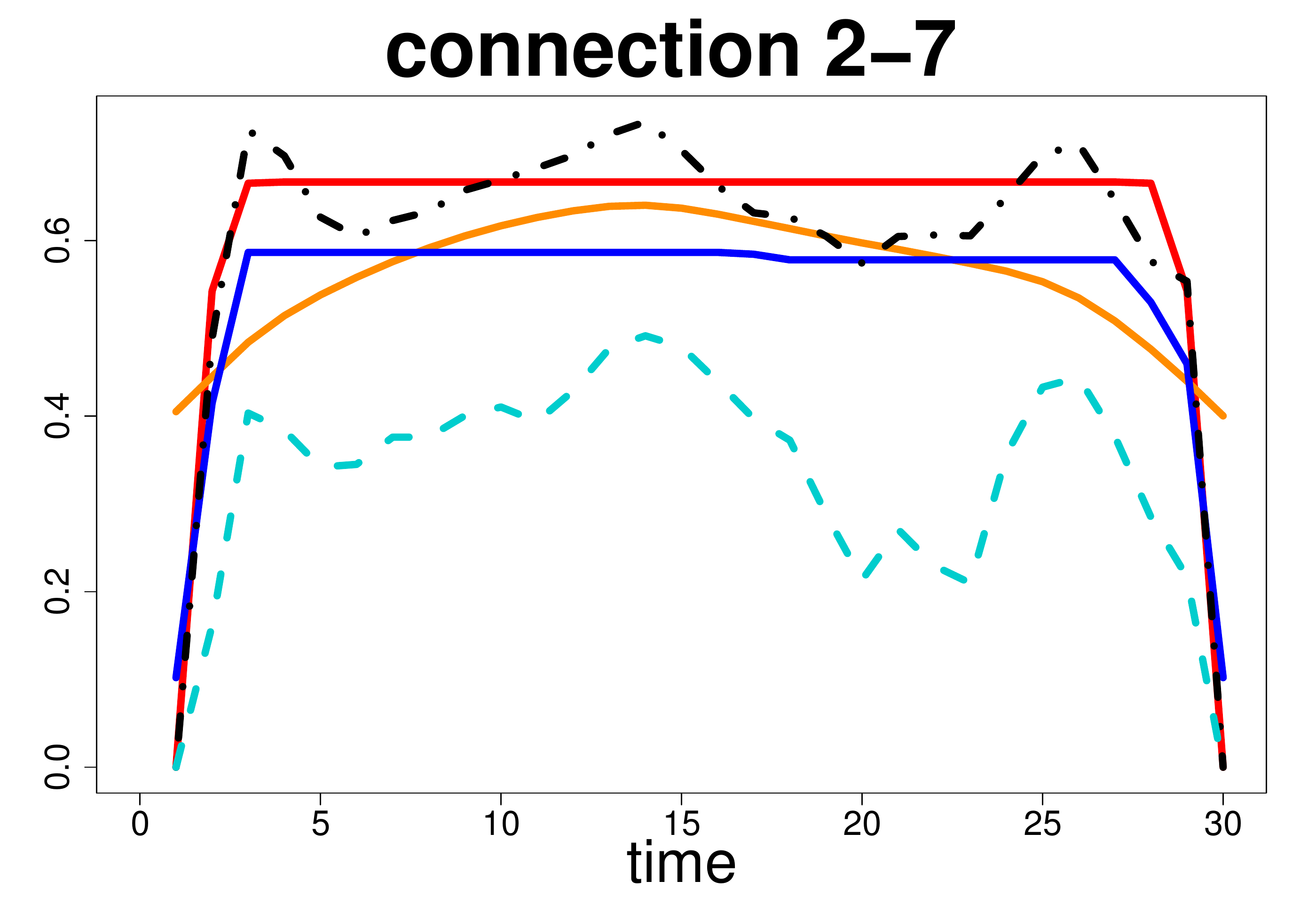}
\includegraphics[angle=270,width=0.3\linewidth]{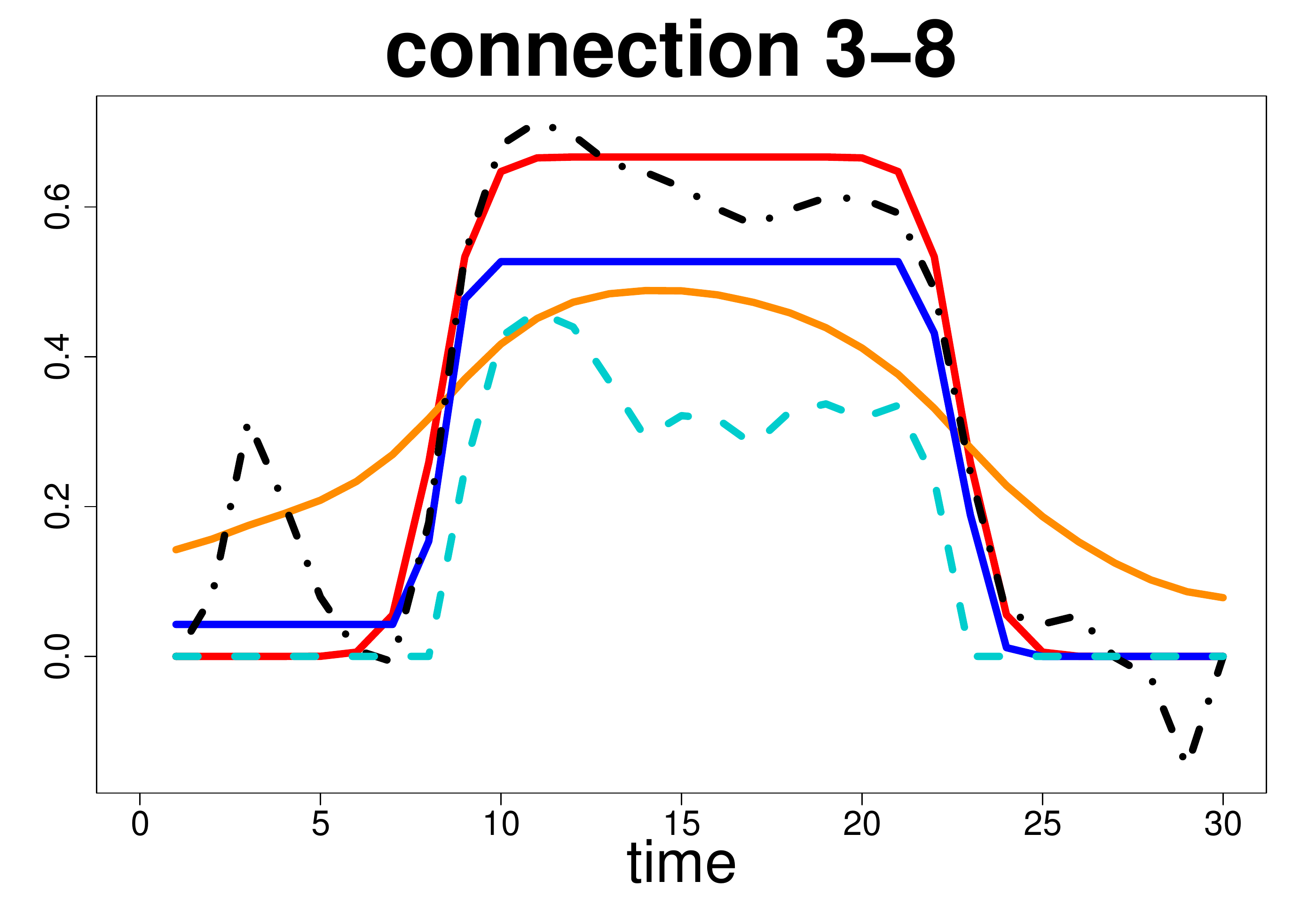}
\includegraphics[angle=270,width=0.3\linewidth]{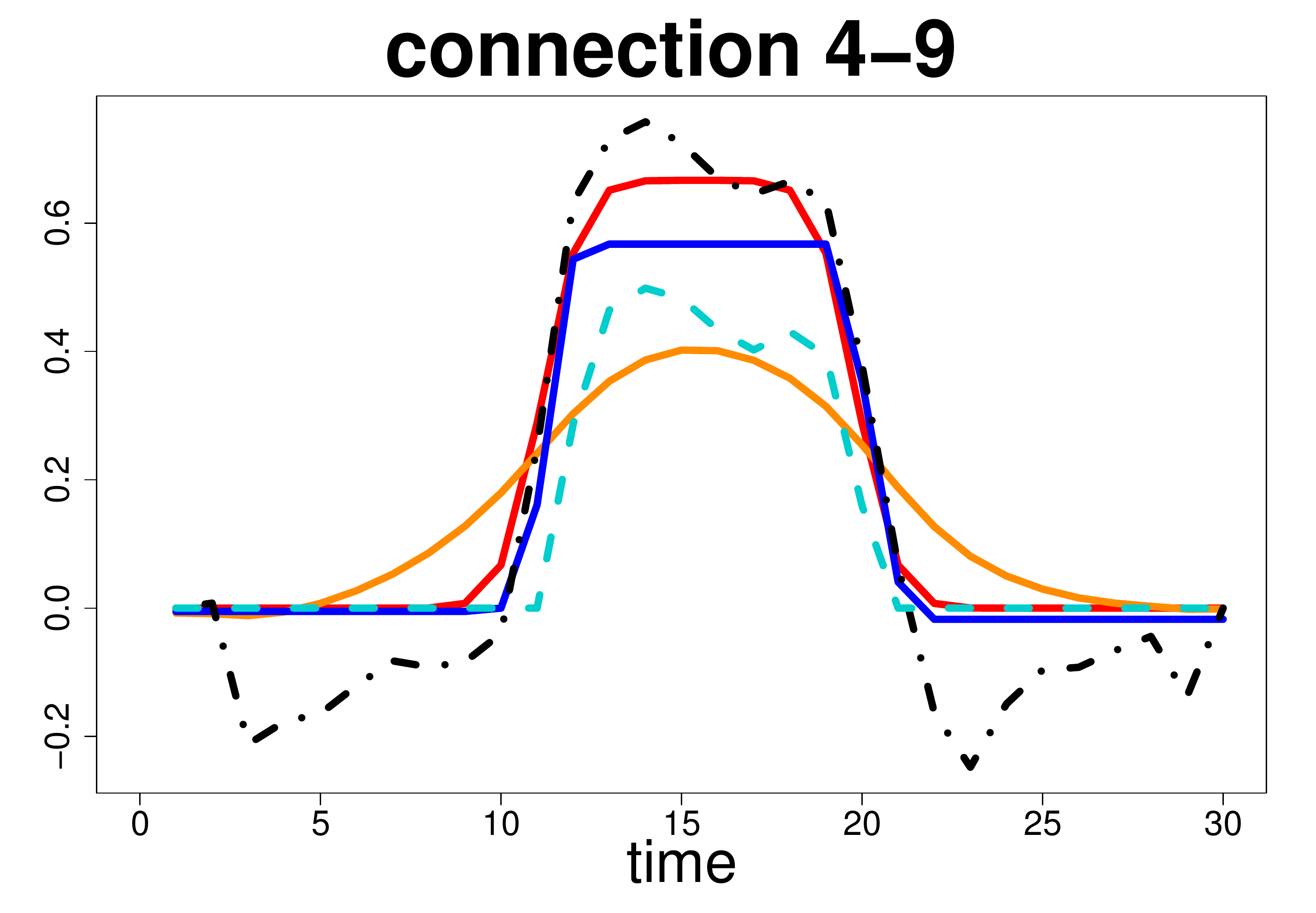}
\includegraphics[angle=270,width=0.3\linewidth]{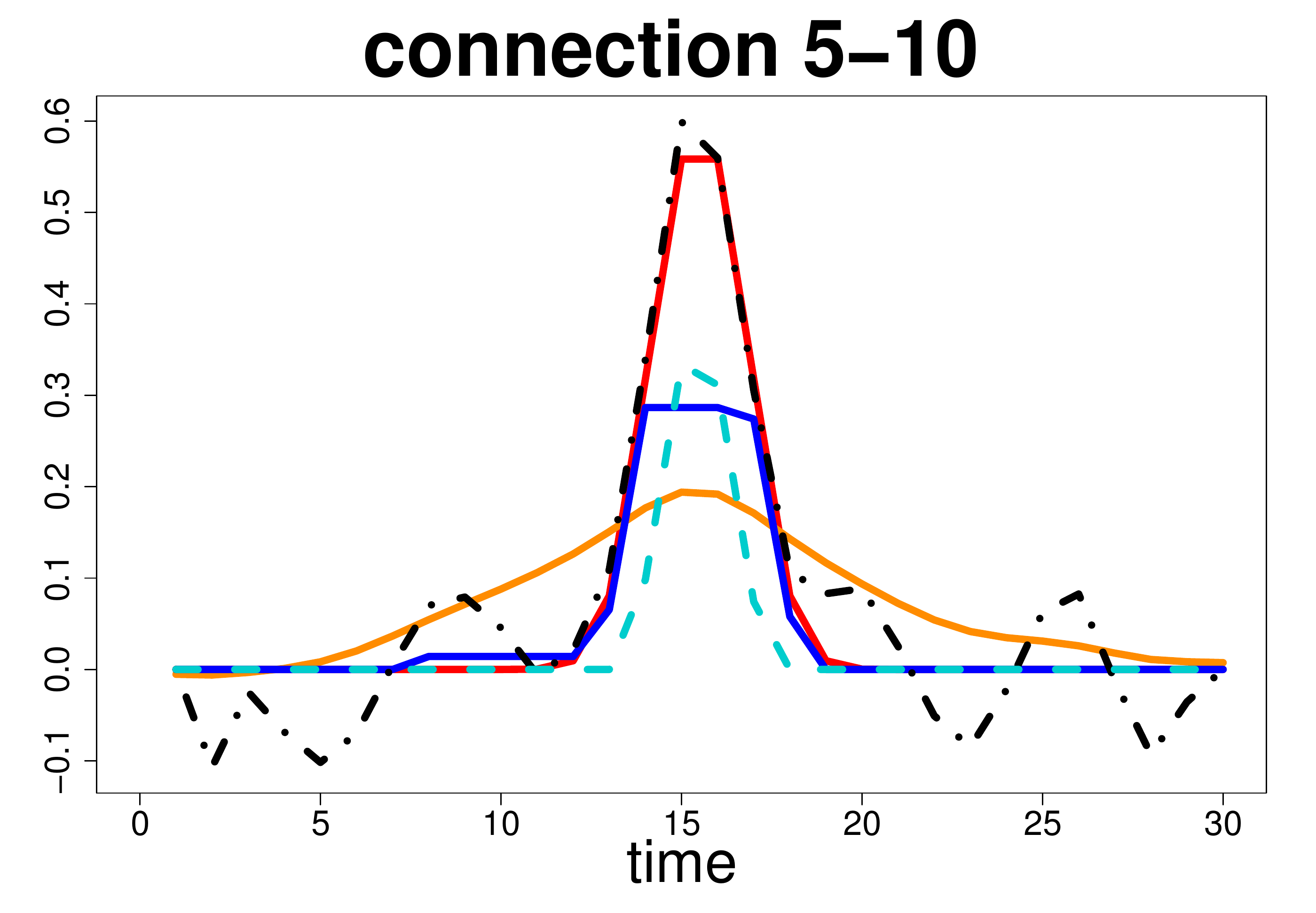}
\includegraphics[angle=270,width=0.3\linewidth]{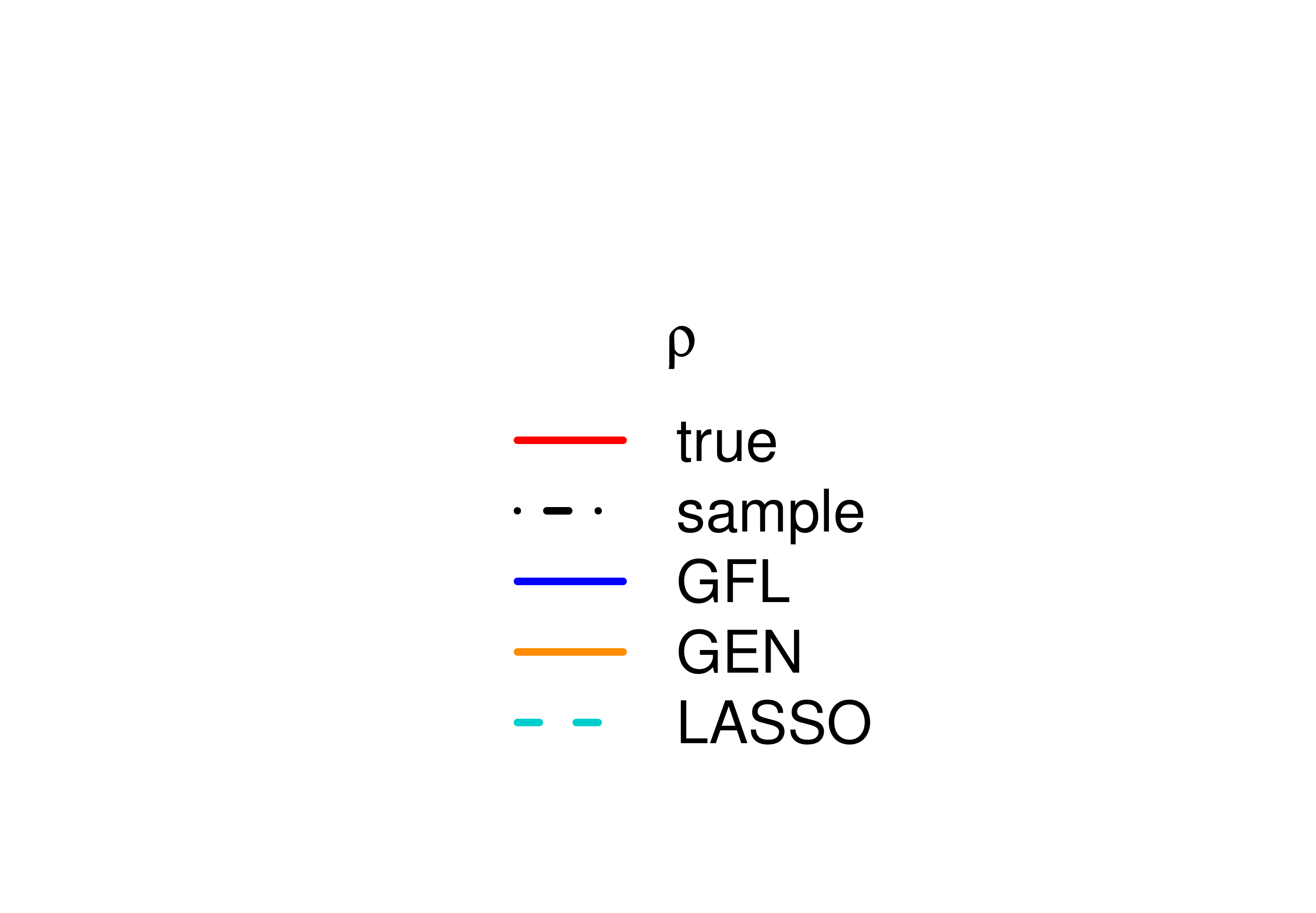}
\caption{Sample size 50}
\end{subfigure}
\begin{subfigure}[b]{\textwidth}
\centering
\includegraphics[angle=270,width=0.3\linewidth]{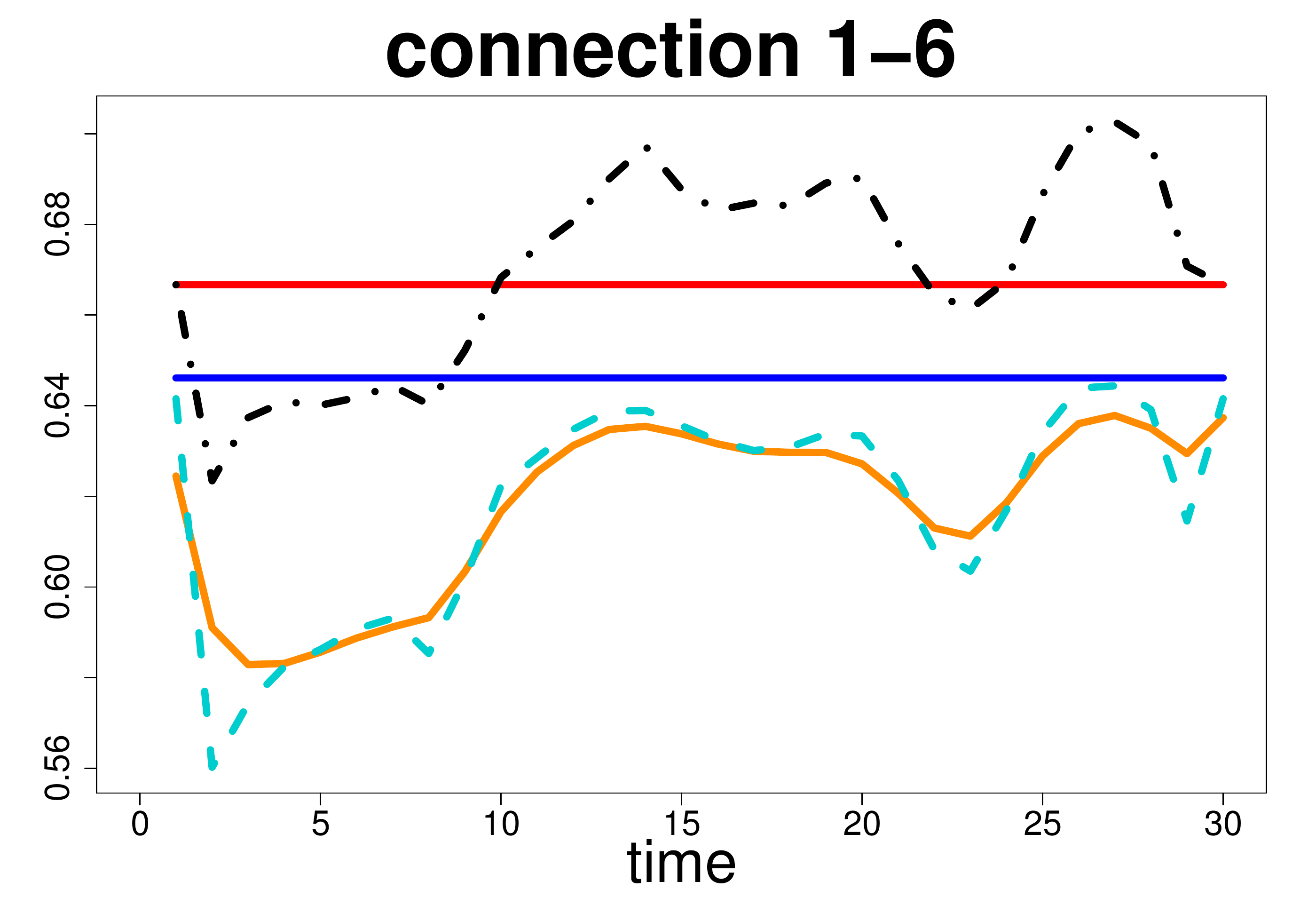}
\includegraphics[angle=270,width=0.3\linewidth]{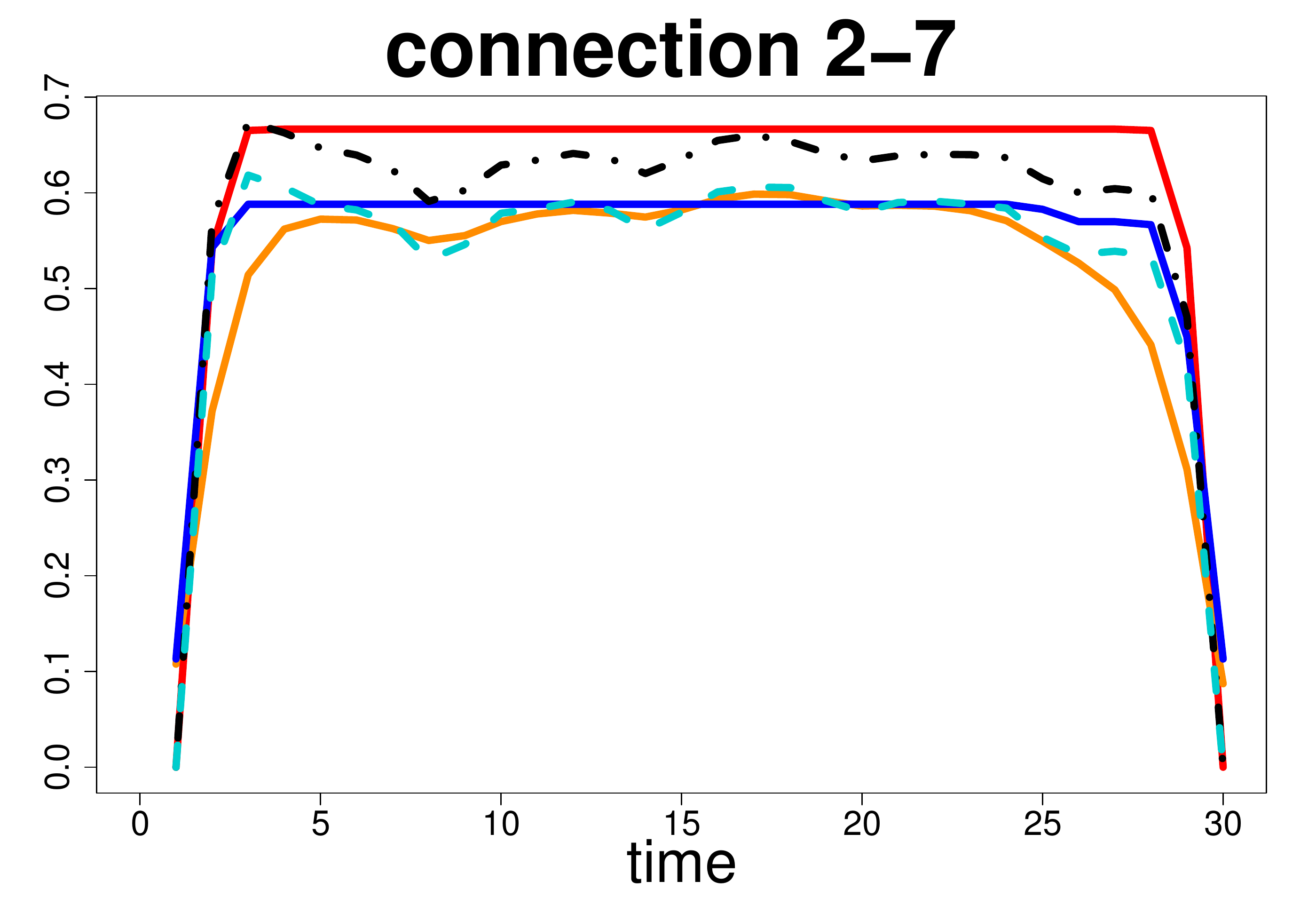}
\includegraphics[angle=270,width=0.3\linewidth]{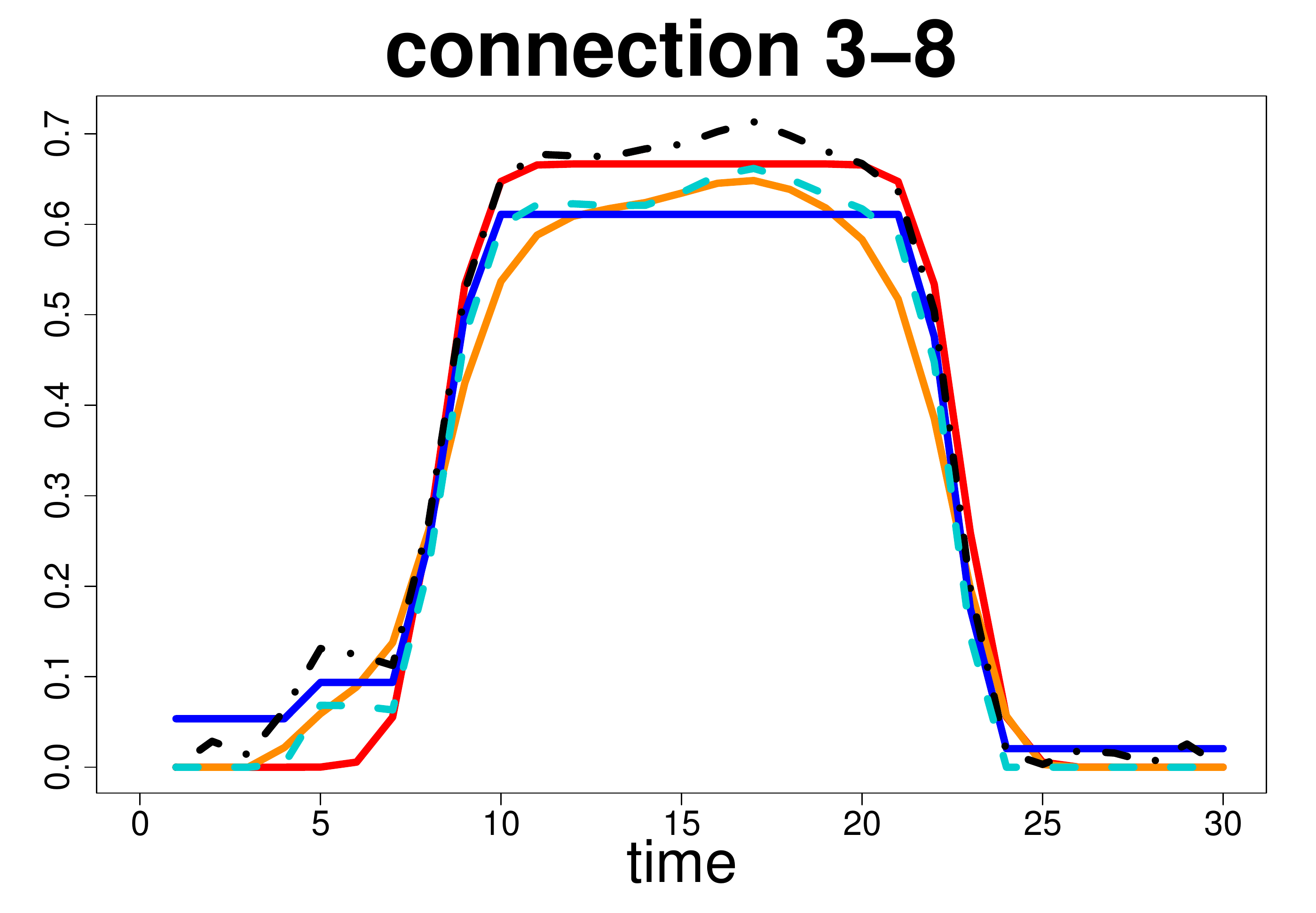}
\includegraphics[angle=270,width=0.3\linewidth]{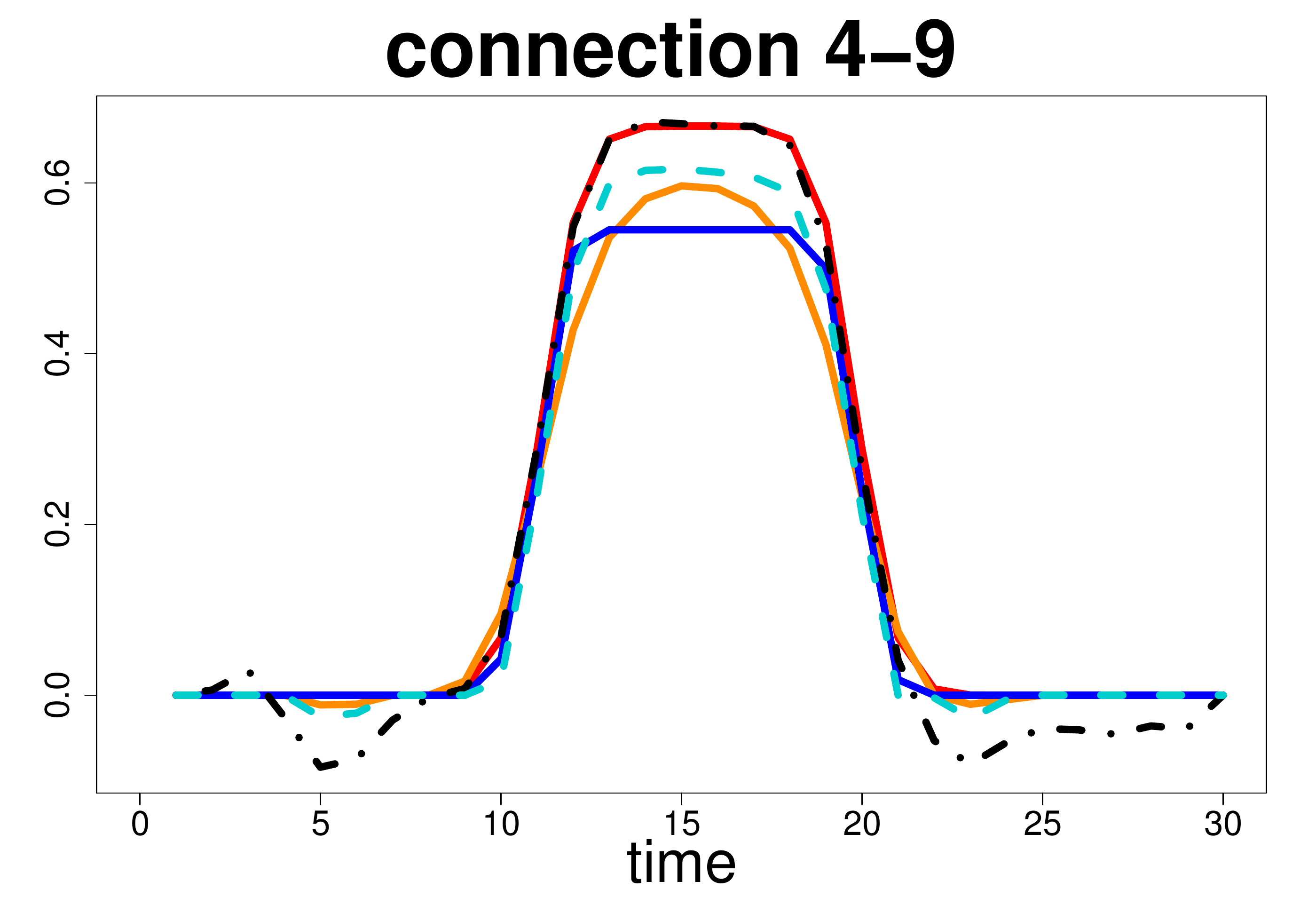}
\includegraphics[angle=270,width=0.3\linewidth]{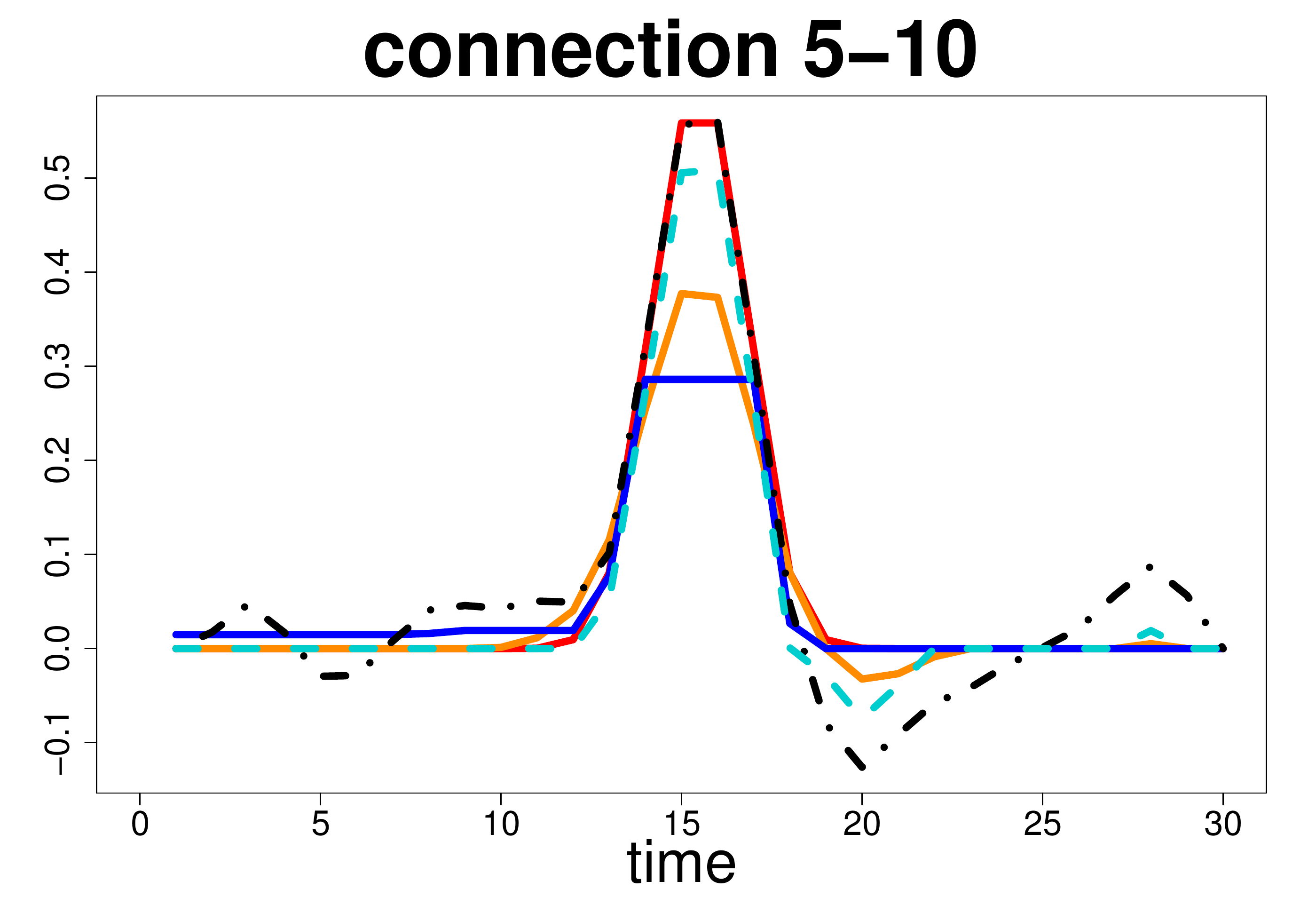}
\includegraphics[angle=270,width=0.3\linewidth]{profilelegend-eps-converted-to.pdf}
\caption{Sample size 200}
\end{subfigure}
\caption{Estimated partial correlations for true non-vanishing edges in Scenario 1.}
\label{fig:s1lines}
\end{figure}

\begin{figure}[htp]
\centering
\begin{subfigure}[b]{\textwidth}
\centering
\includegraphics[angle=270,width=0.3\linewidth]{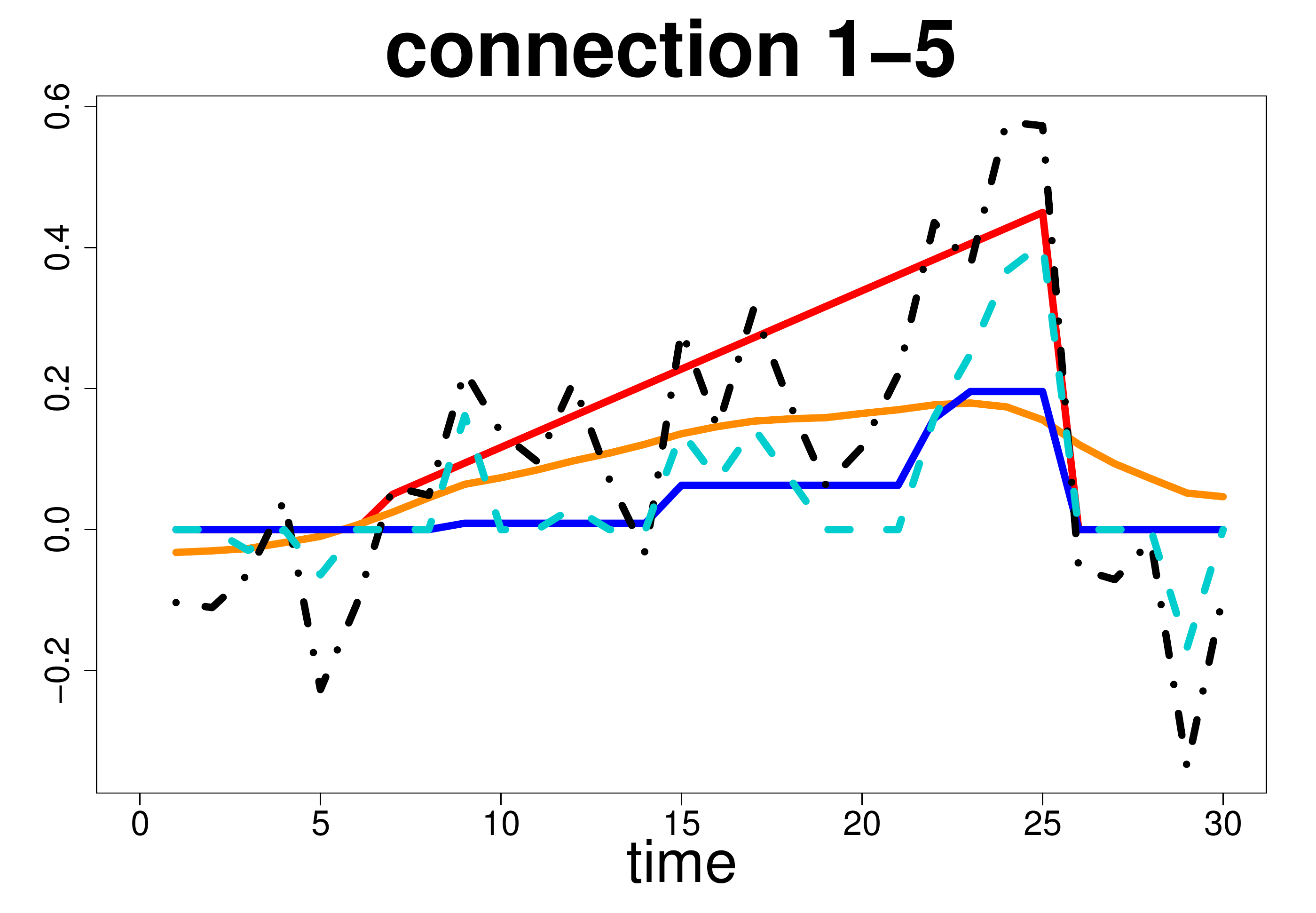}
\includegraphics[angle=270,width=0.3\linewidth]{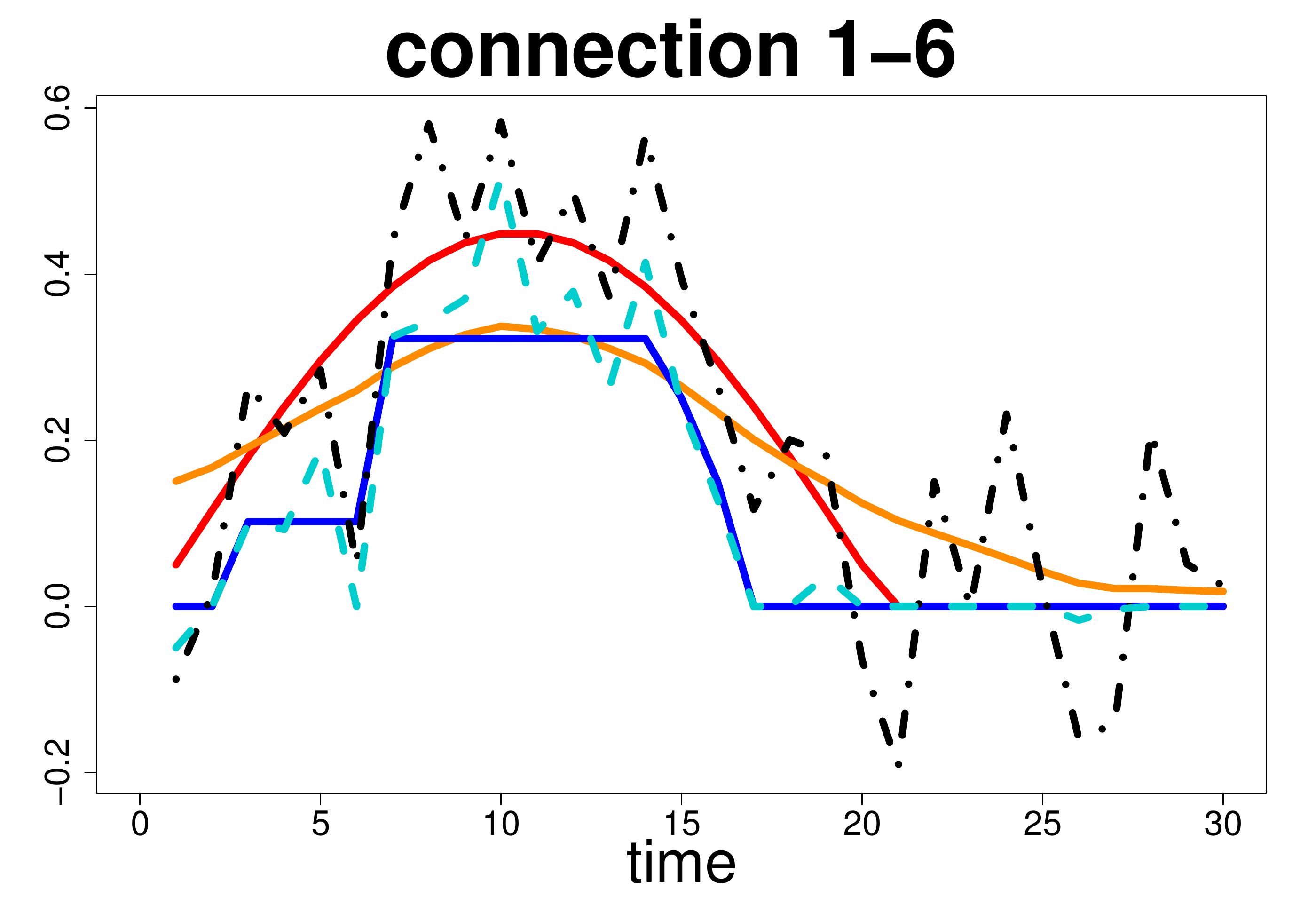}
\includegraphics[angle=270,width=0.3\linewidth]{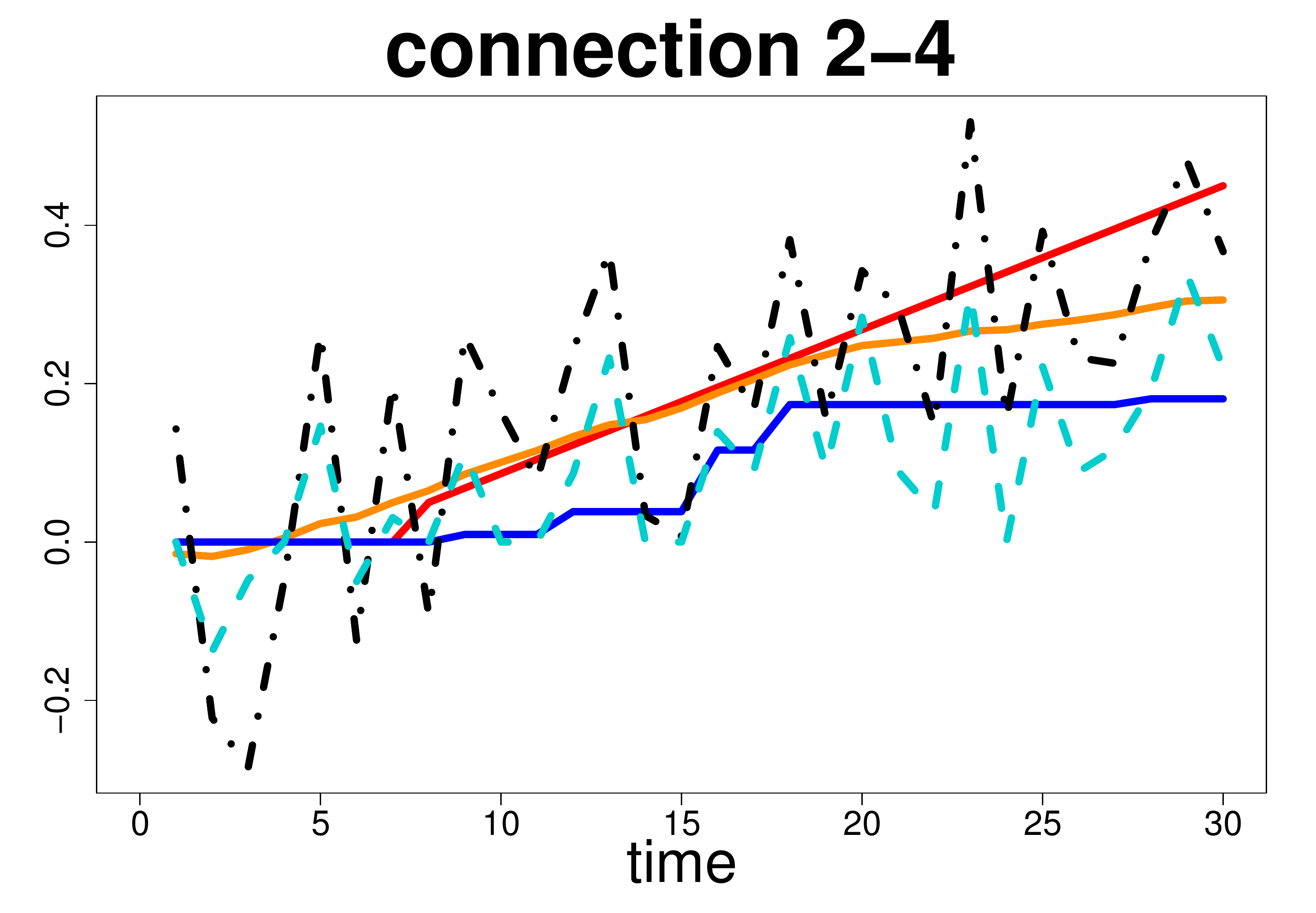}
\includegraphics[angle=270,width=0.3\linewidth]{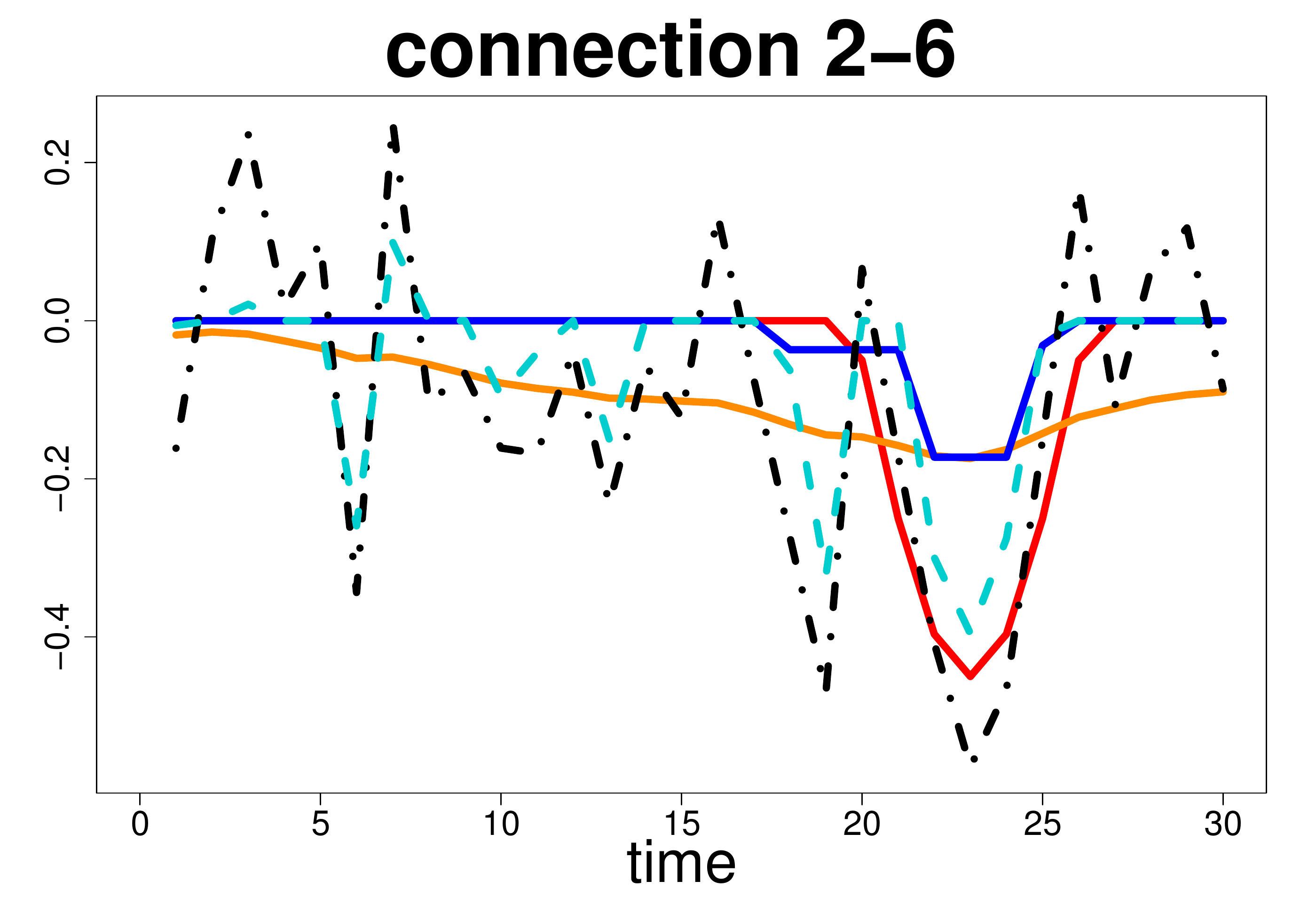}
\includegraphics[angle=270,width=0.3\linewidth]{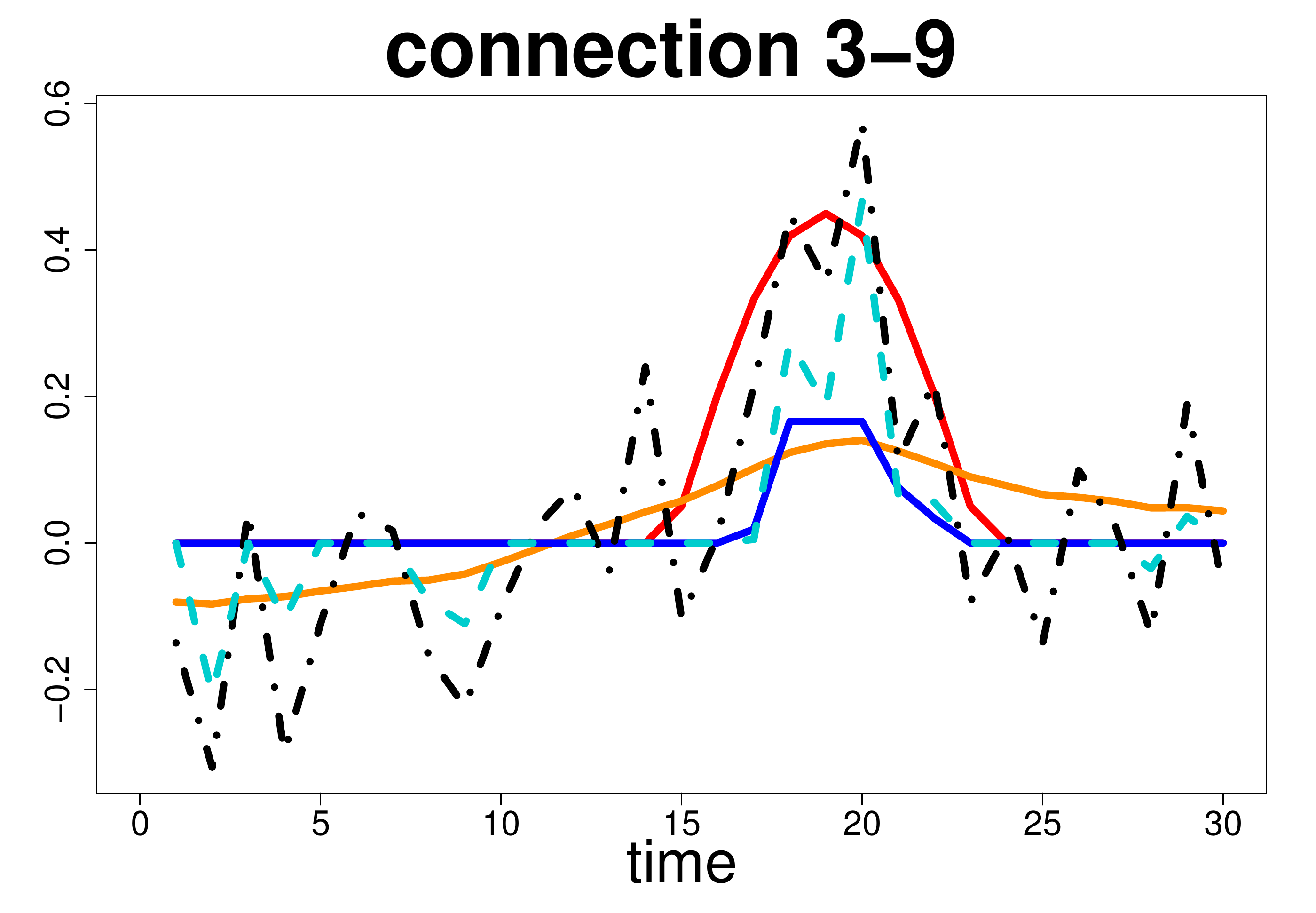}
\includegraphics[angle=270,width=0.3\linewidth]{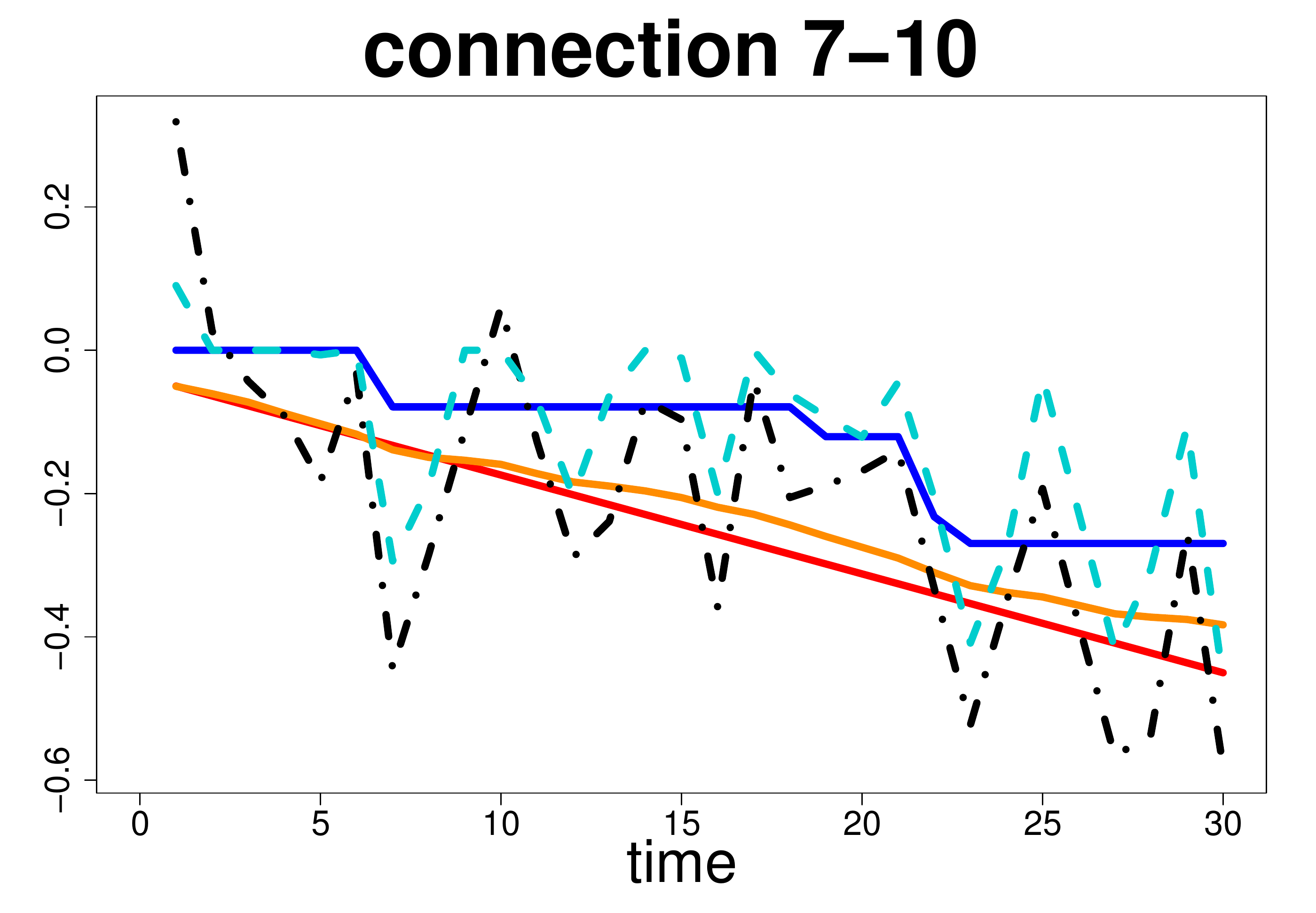}
\includegraphics[width=0.7\linewidth]{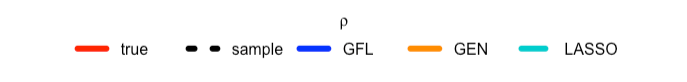}
\caption{Sample size 50}
\end{subfigure}
\begin{subfigure}[b]{\textwidth}
\centering
\includegraphics[angle=270,width=0.3\linewidth]{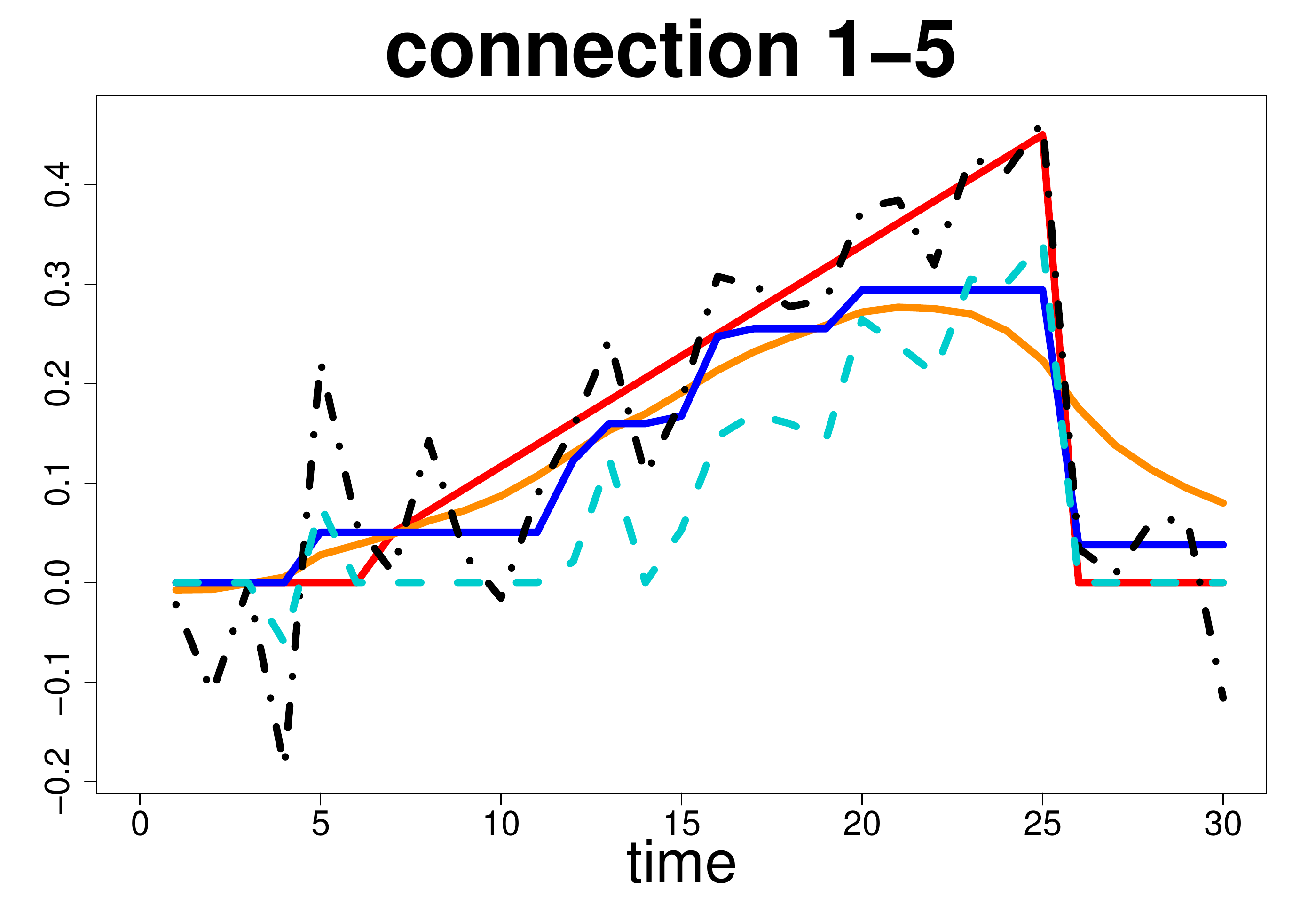}
\includegraphics[angle=270,width=0.3\linewidth]{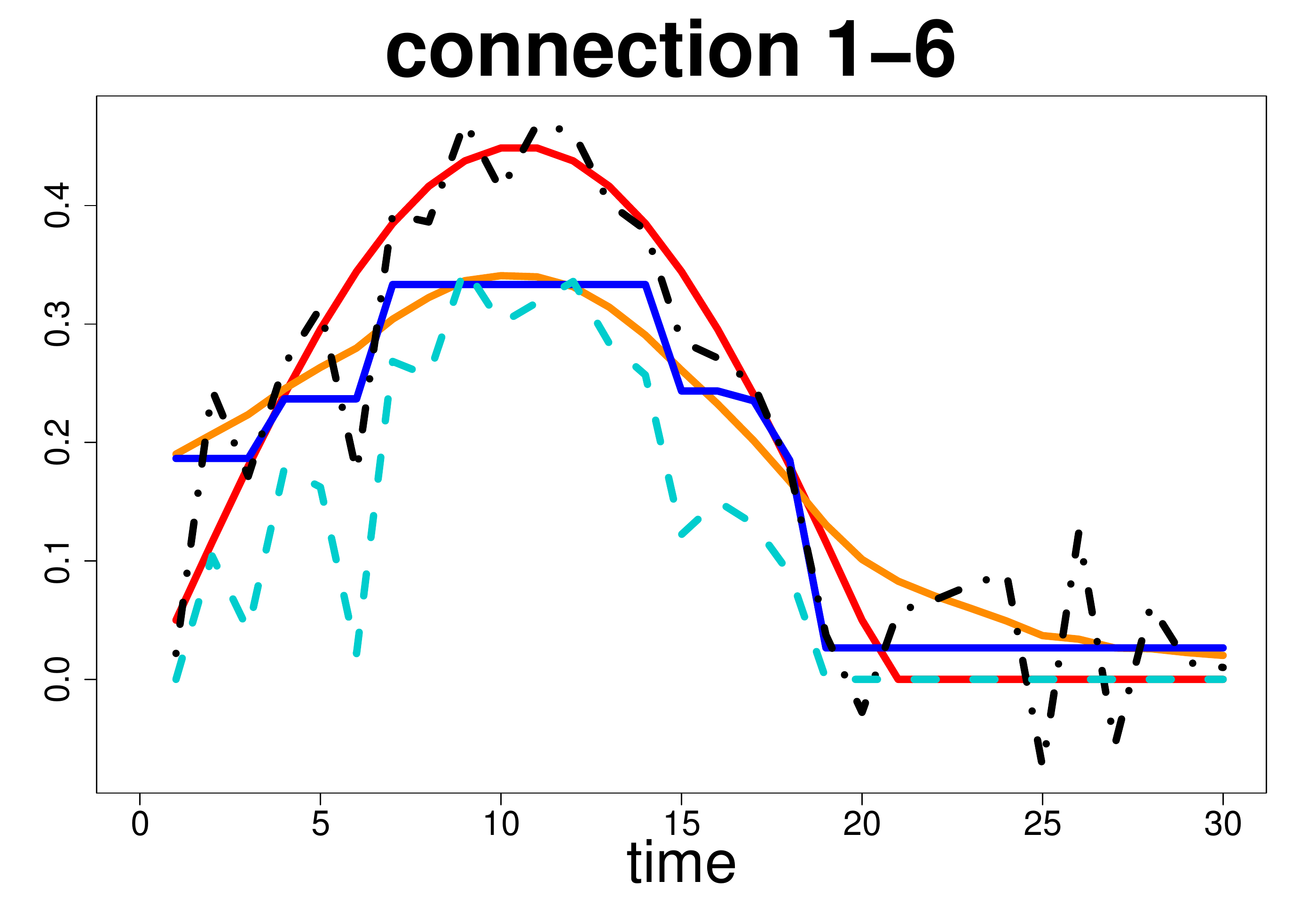}
\includegraphics[angle=270,width=0.3\linewidth]{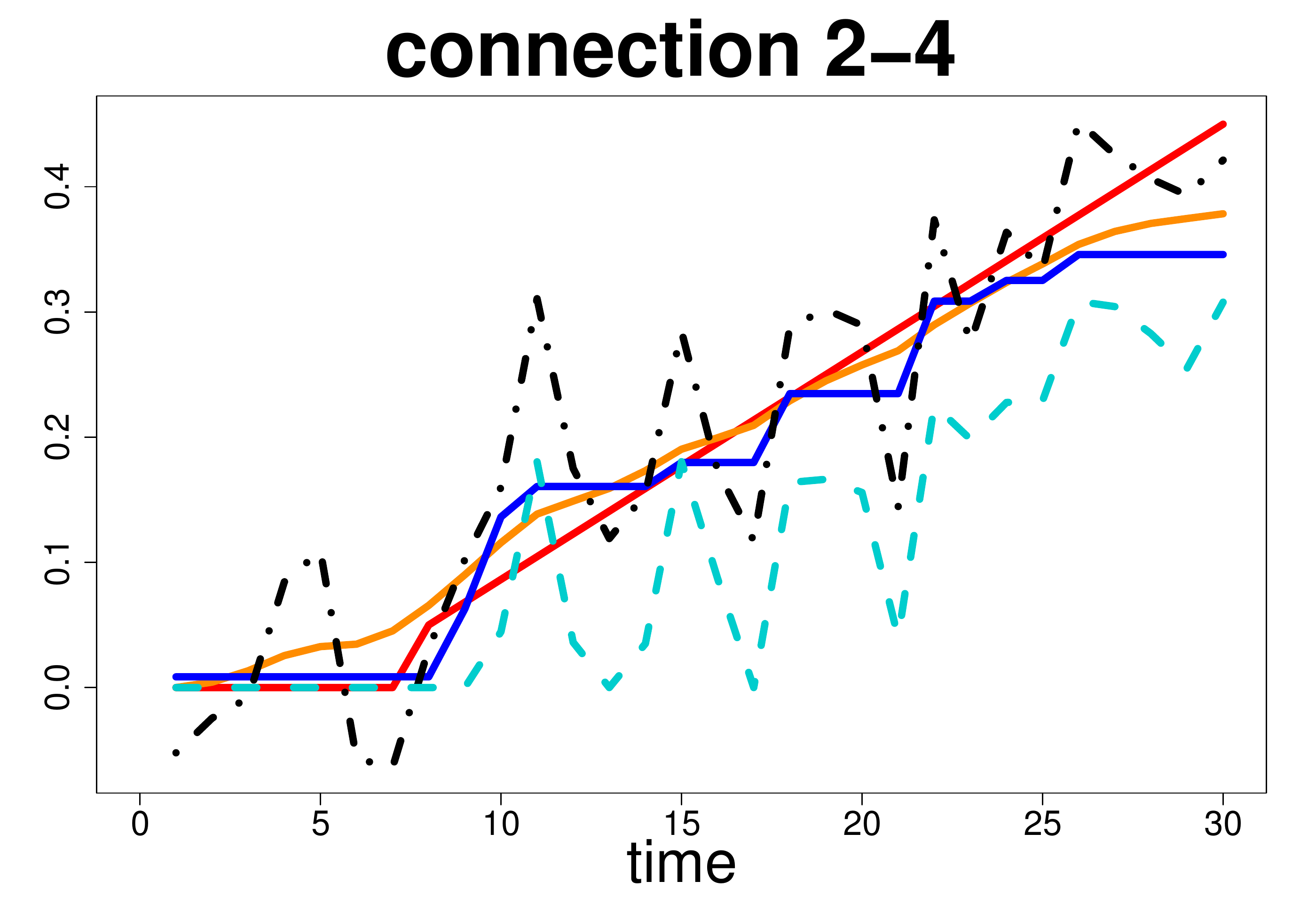}
\includegraphics[angle=270,width=0.3\linewidth]{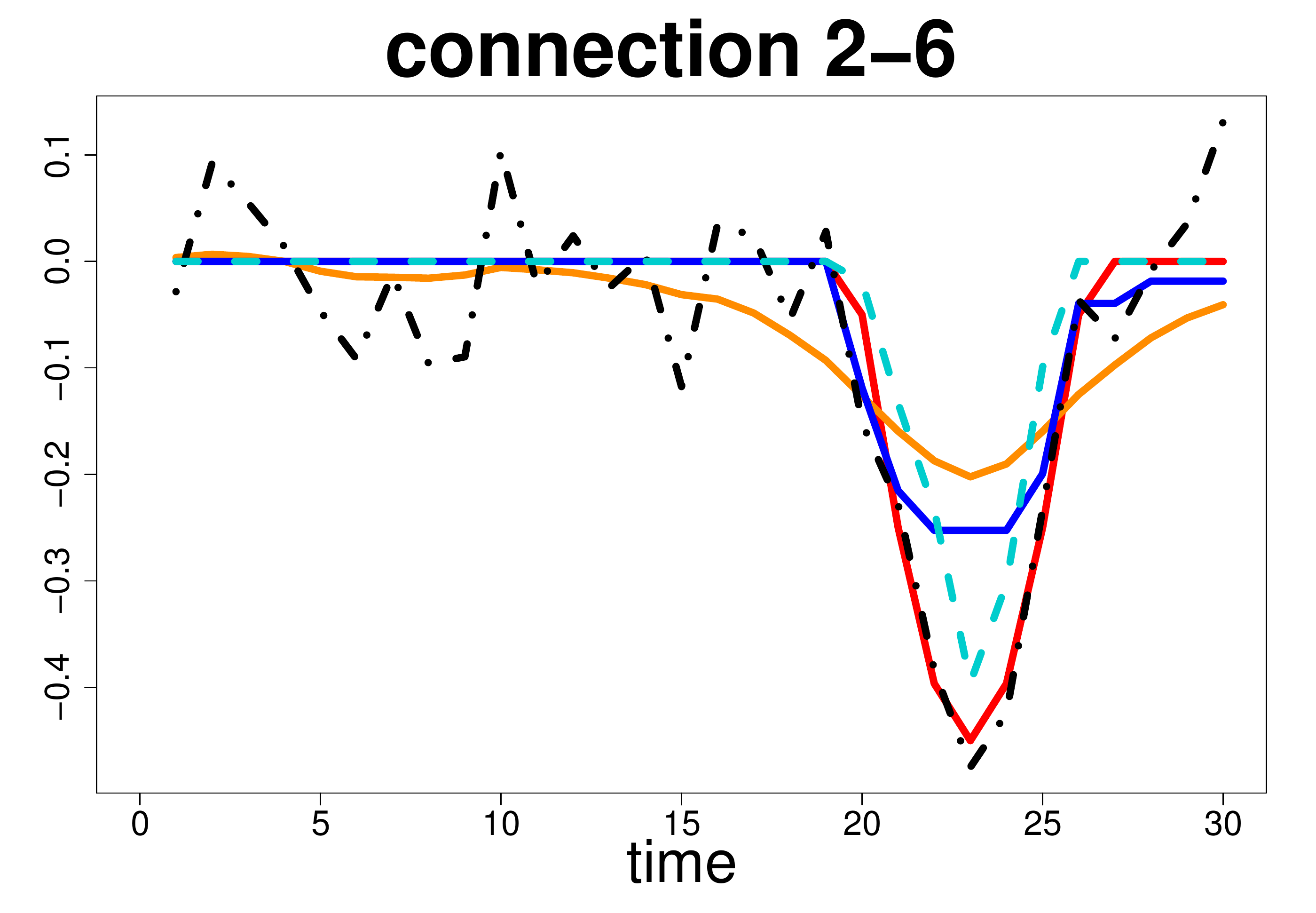}
\includegraphics[angle=270,width=0.3\linewidth]{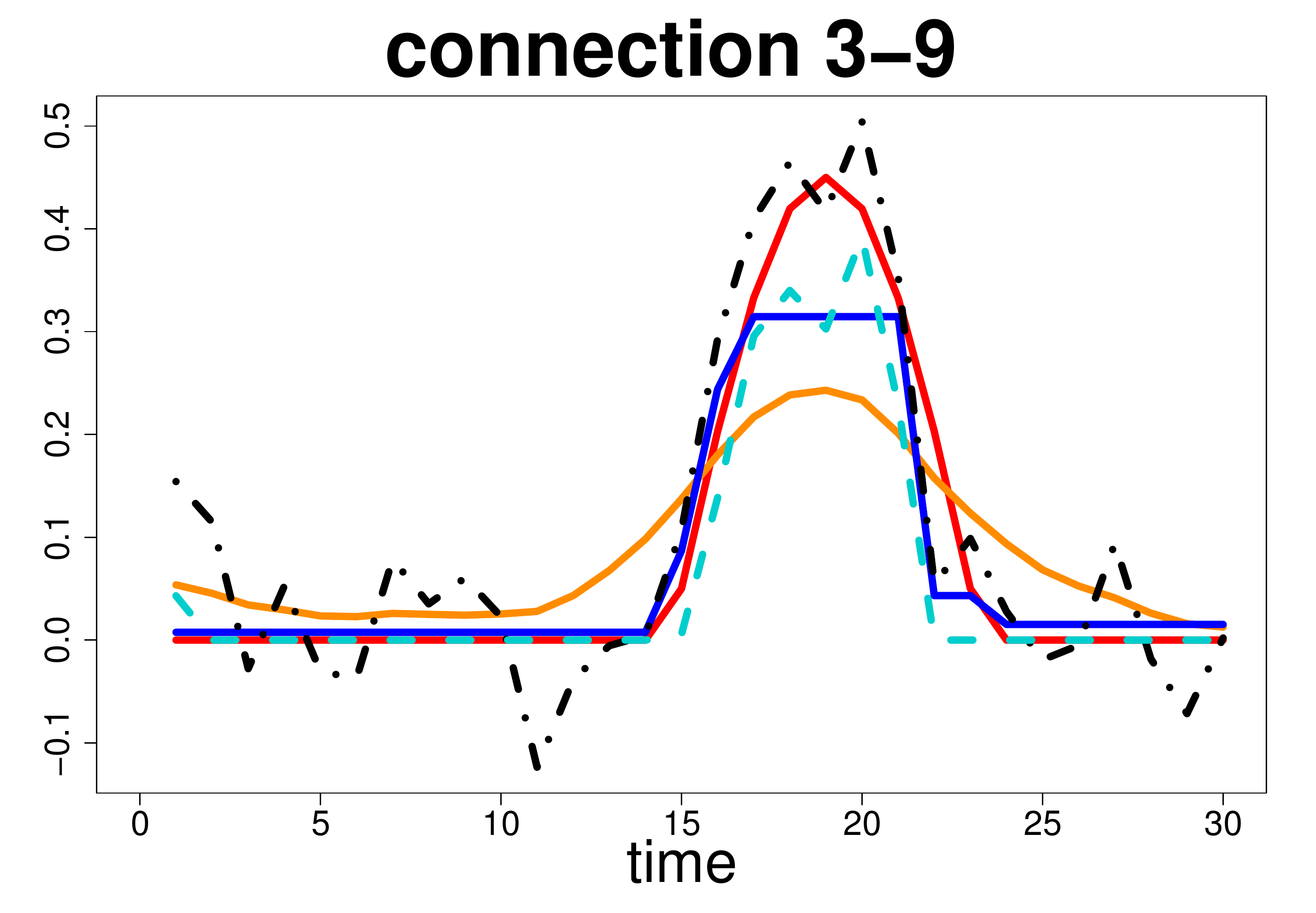}
\includegraphics[angle=270,width=0.3\linewidth]{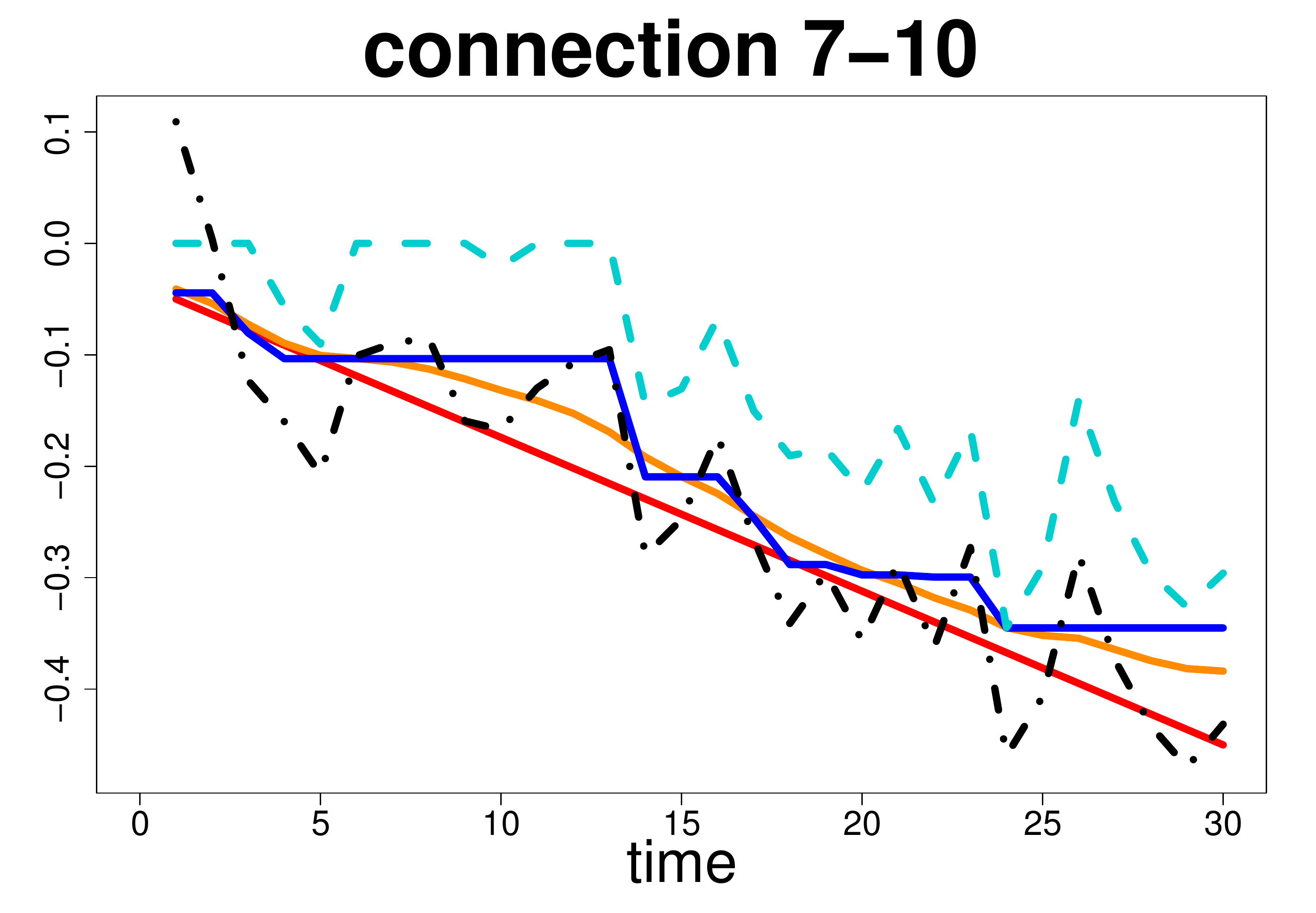}
\includegraphics[width=0.7\linewidth]{s2profilelegend}
\caption{Sample size 200}
\end{subfigure}
\caption{Estimated partial correlations for true non-vanishing edges in Scenario 2.}
\label{fig:s2lines}
\end{figure}

\begin{figure}[htp]
\centering
\begin{subfigure}[b]{\textwidth}
\centering
\includegraphics[width=0.18\linewidth,valign=t]{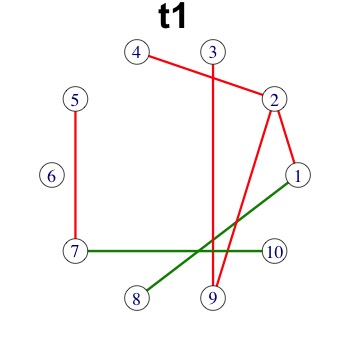}
\includegraphics[width=0.18\linewidth,valign=t]{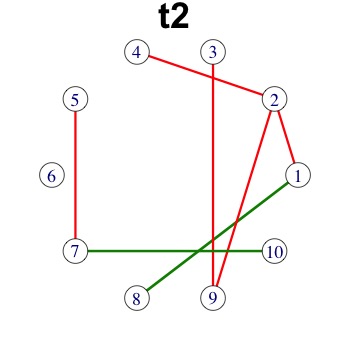}
\includegraphics[width=0.18\linewidth,valign=t]{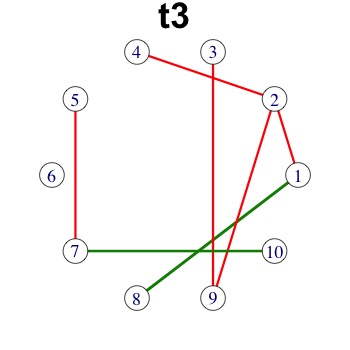}
\includegraphics[width=0.18\linewidth,valign=t]{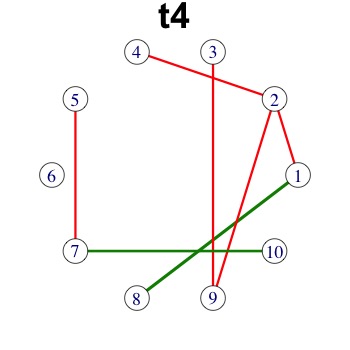}
\includegraphics[width=0.18\linewidth,valign=t]{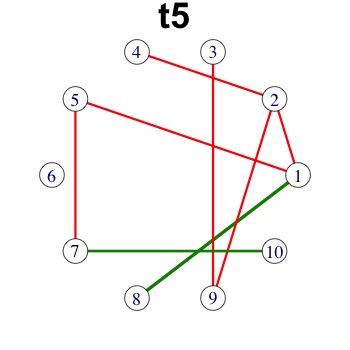}
\includegraphics[width=0.18\linewidth,valign=t]{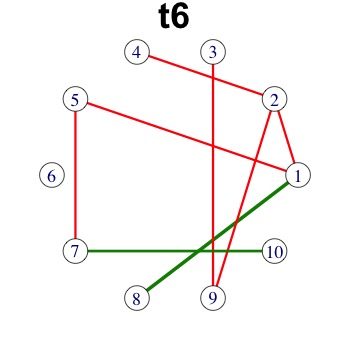}
\includegraphics[width=0.18\linewidth,valign=t]{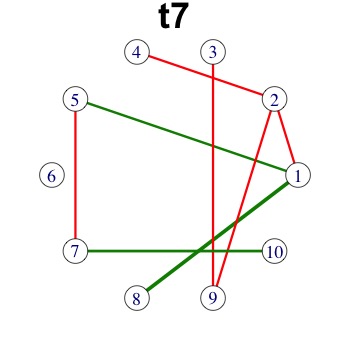}
\includegraphics[width=0.18\linewidth,valign=t]{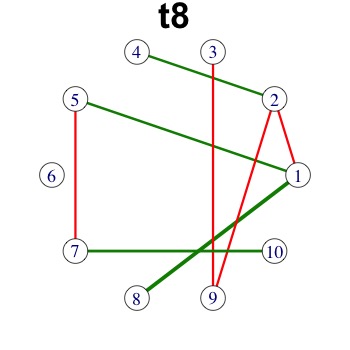}
\includegraphics[width=0.18\linewidth,valign=t]{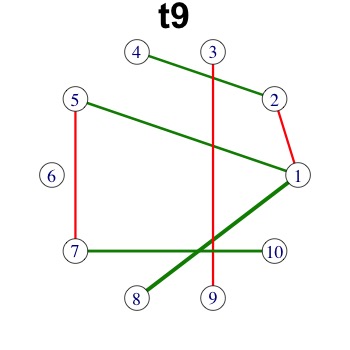}
\includegraphics[width=0.18\linewidth,valign=t]{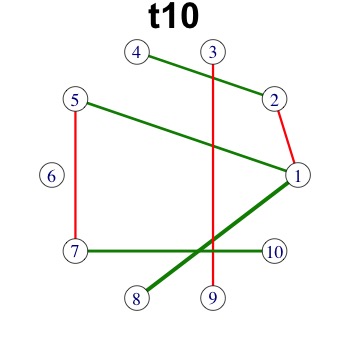}
\includegraphics[width=0.18\linewidth,valign=t]{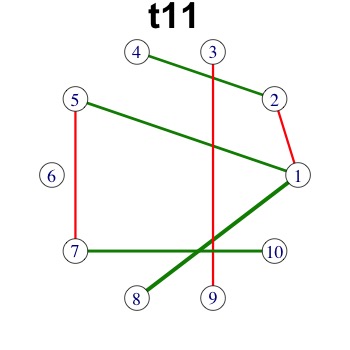}
\includegraphics[width=0.18\linewidth,valign=t]{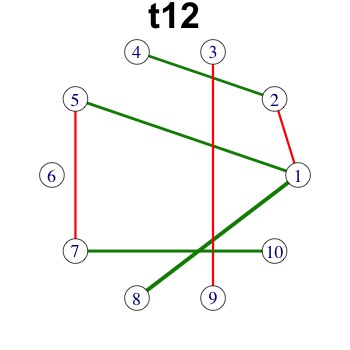}
\includegraphics[width=0.18\linewidth,valign=t]{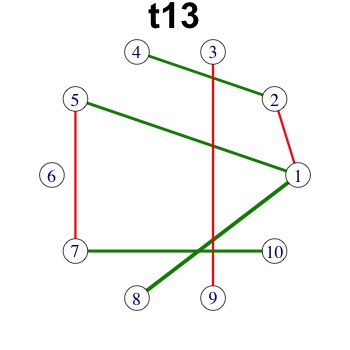}
\includegraphics[width=0.18\linewidth,valign=t]{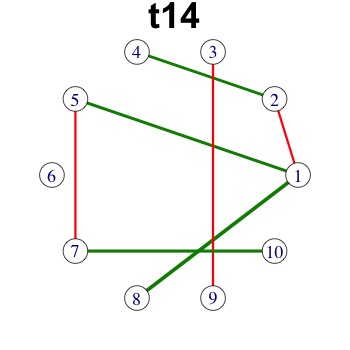}
\includegraphics[width=0.18\linewidth,valign=t]{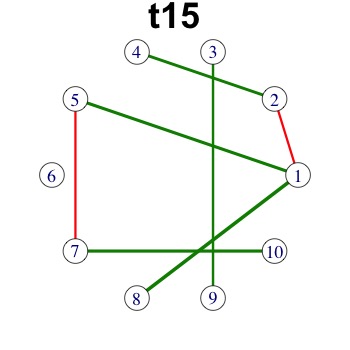}
\includegraphics[width=0.18\linewidth,valign=t]{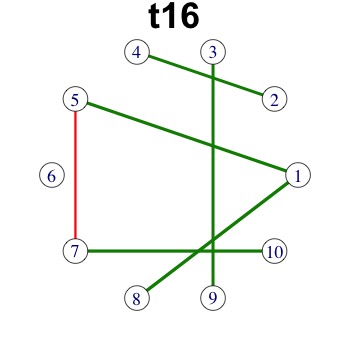}
\includegraphics[width=0.18\linewidth,valign=t]{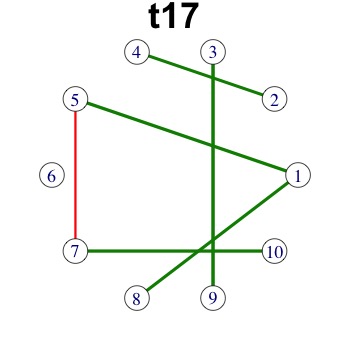}
\includegraphics[width=0.18\linewidth,valign=t]{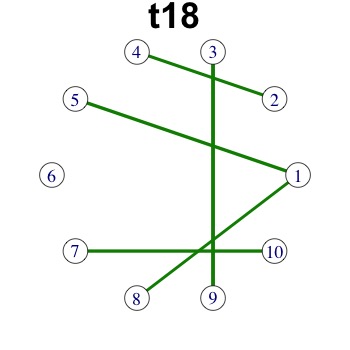}
\includegraphics[width=0.18\linewidth,valign=t]{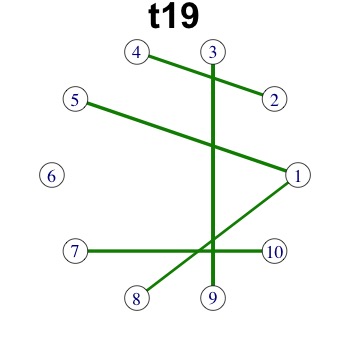}
\includegraphics[width=0.18\linewidth,valign=t]{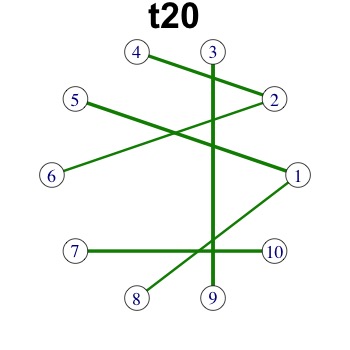}
\includegraphics[width=0.18\linewidth,valign=t]{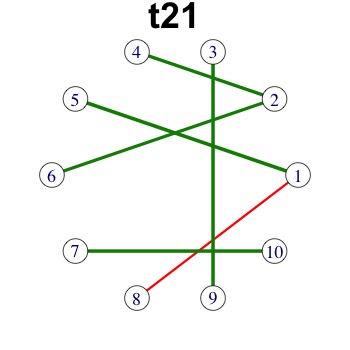}
\includegraphics[width=0.18\linewidth,valign=t]{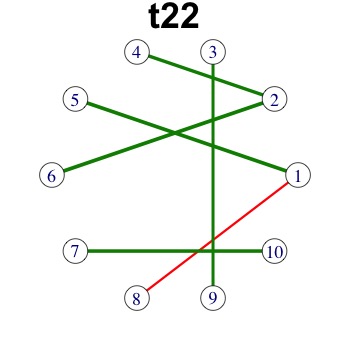}
\includegraphics[width=0.18\linewidth,valign=t]{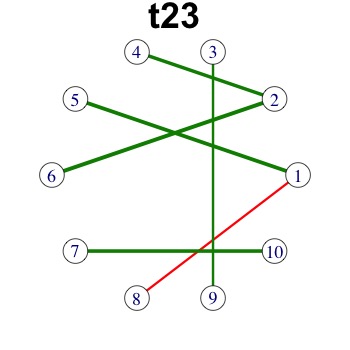}
\includegraphics[width=0.18\linewidth,valign=t]{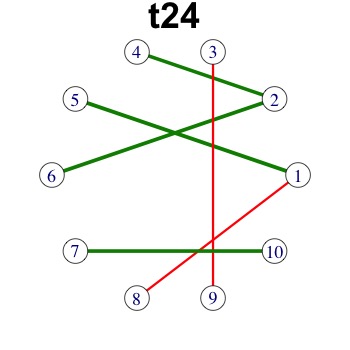}
\includegraphics[width=0.18\linewidth,valign=t]{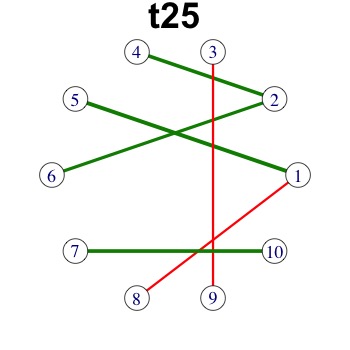}
\includegraphics[width=0.18\linewidth,valign=t]{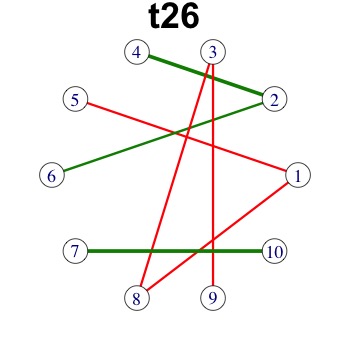}
\includegraphics[width=0.18\linewidth,valign=t]{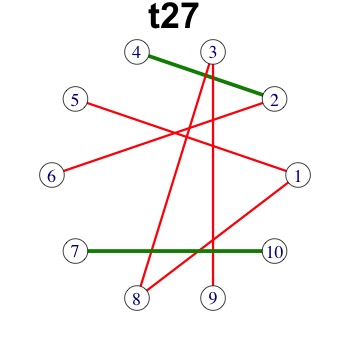}
\includegraphics[width=0.18\linewidth,valign=t]{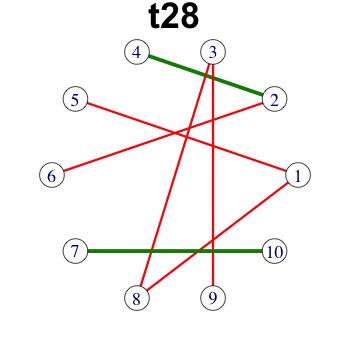}
\includegraphics[width=0.18\linewidth,valign=t]{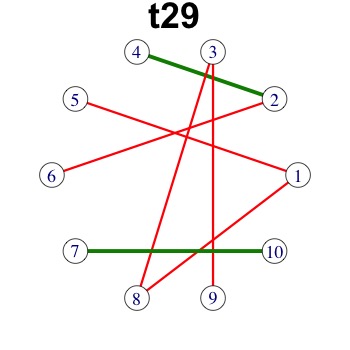}
\includegraphics[width=0.18\linewidth,valign=t]{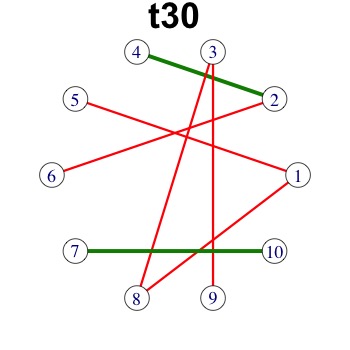}
\end{subfigure}
\caption{The GFL-based partial correlation networks at different time points in Scenario 1 with sample size 200. The green solid lines, the green dashed lines and the red solid lines represent the true positive, false negative and false positive connections, respectively. Thickness of each green solid line represents the magnitude of its underlying true partial correlation.}
\label{fig:s1_gfl_network}
\end{figure}

\begin{figure}[htp]
\centering
\begin{subfigure}[b]{\textwidth}
\centering
\includegraphics[angle=270,width=0.3\linewidth,valign=t]{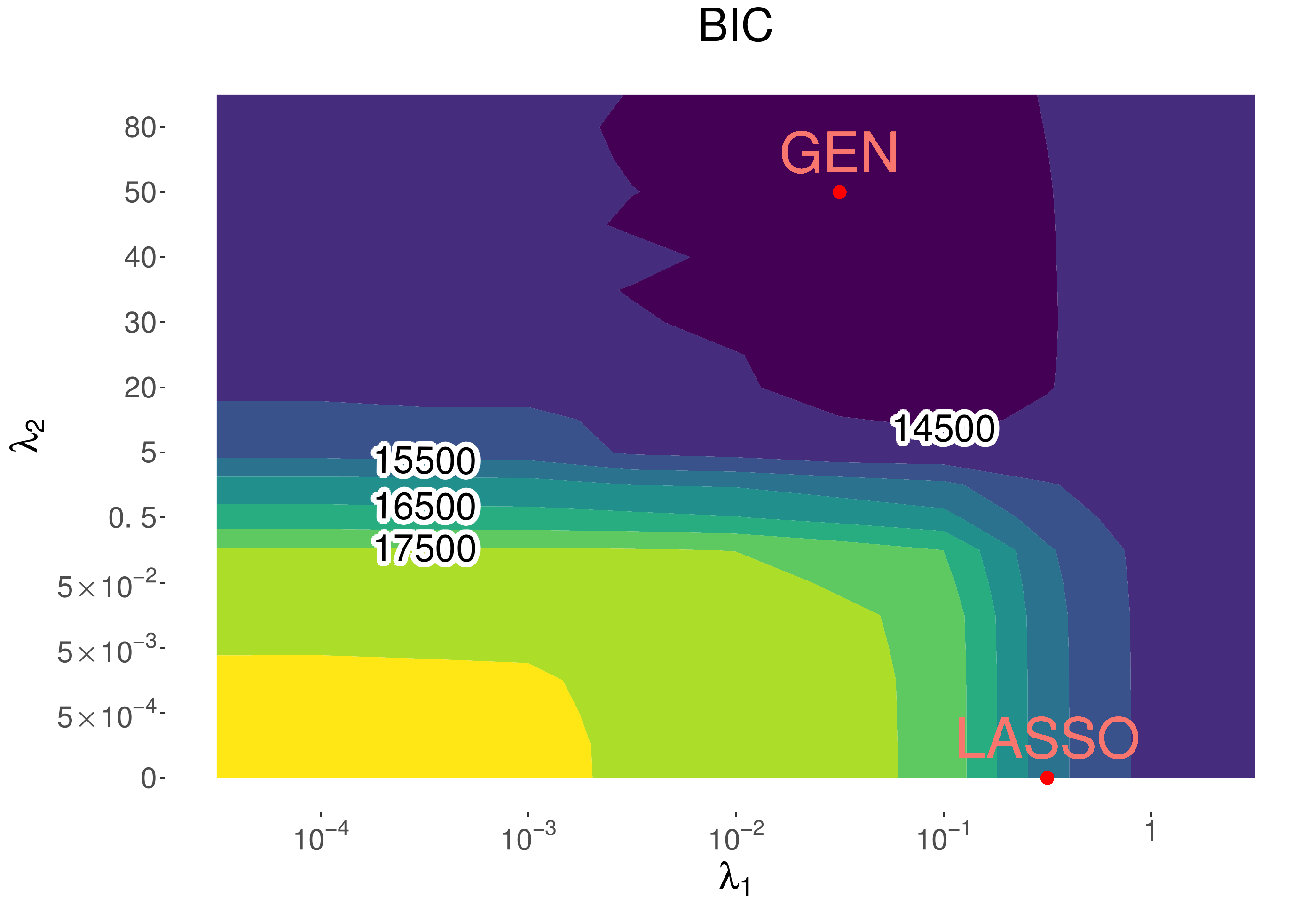}
\includegraphics[angle=270,width=0.3\linewidth,valign=t]{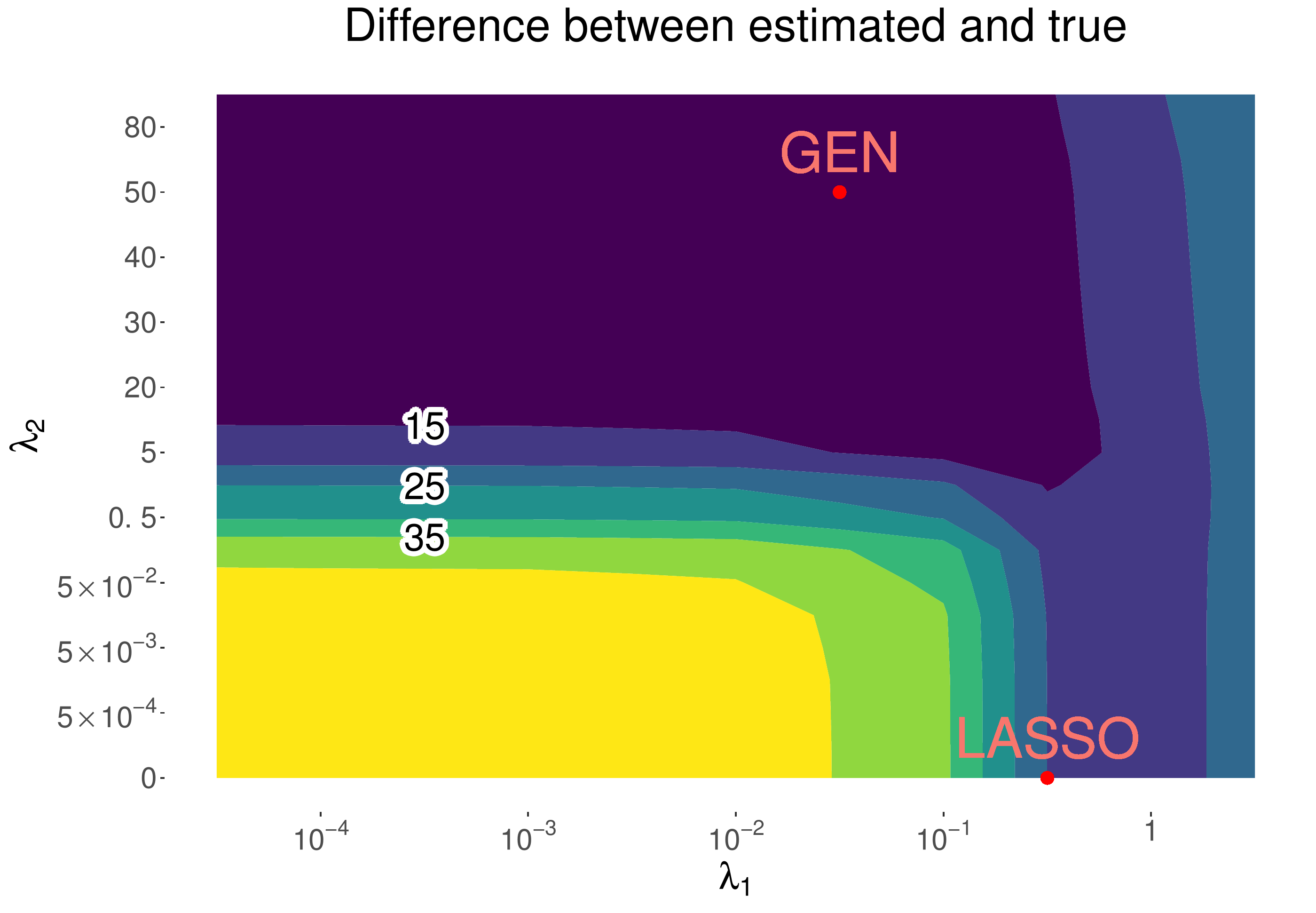}
\includegraphics[angle=270,width=0.3\linewidth,valign=t]{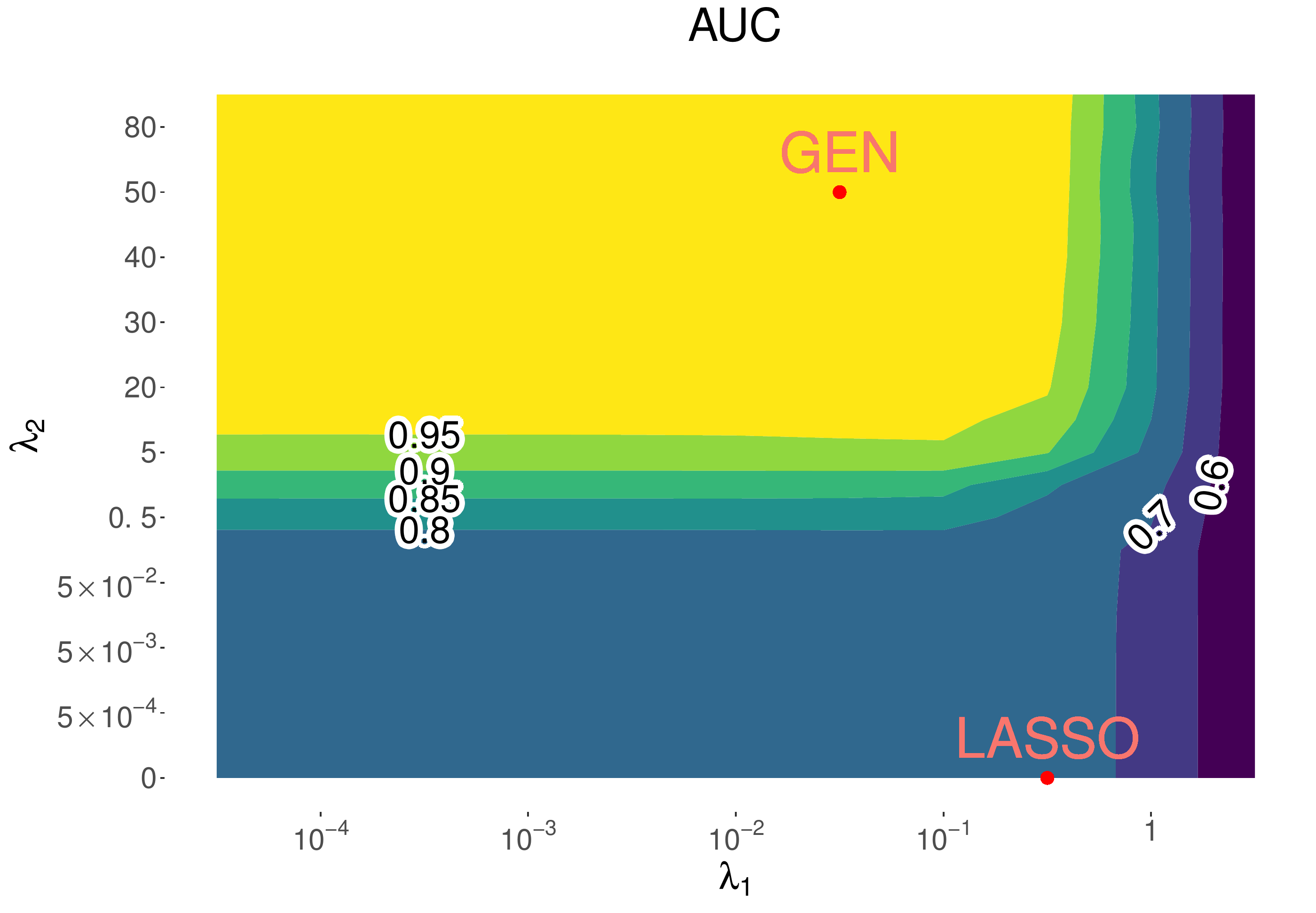}
\caption{sample size 50}
\end{subfigure}
\begin{subfigure}[b]{\textwidth}
\centering
\includegraphics[angle=270,width=0.3\linewidth,valign=t]{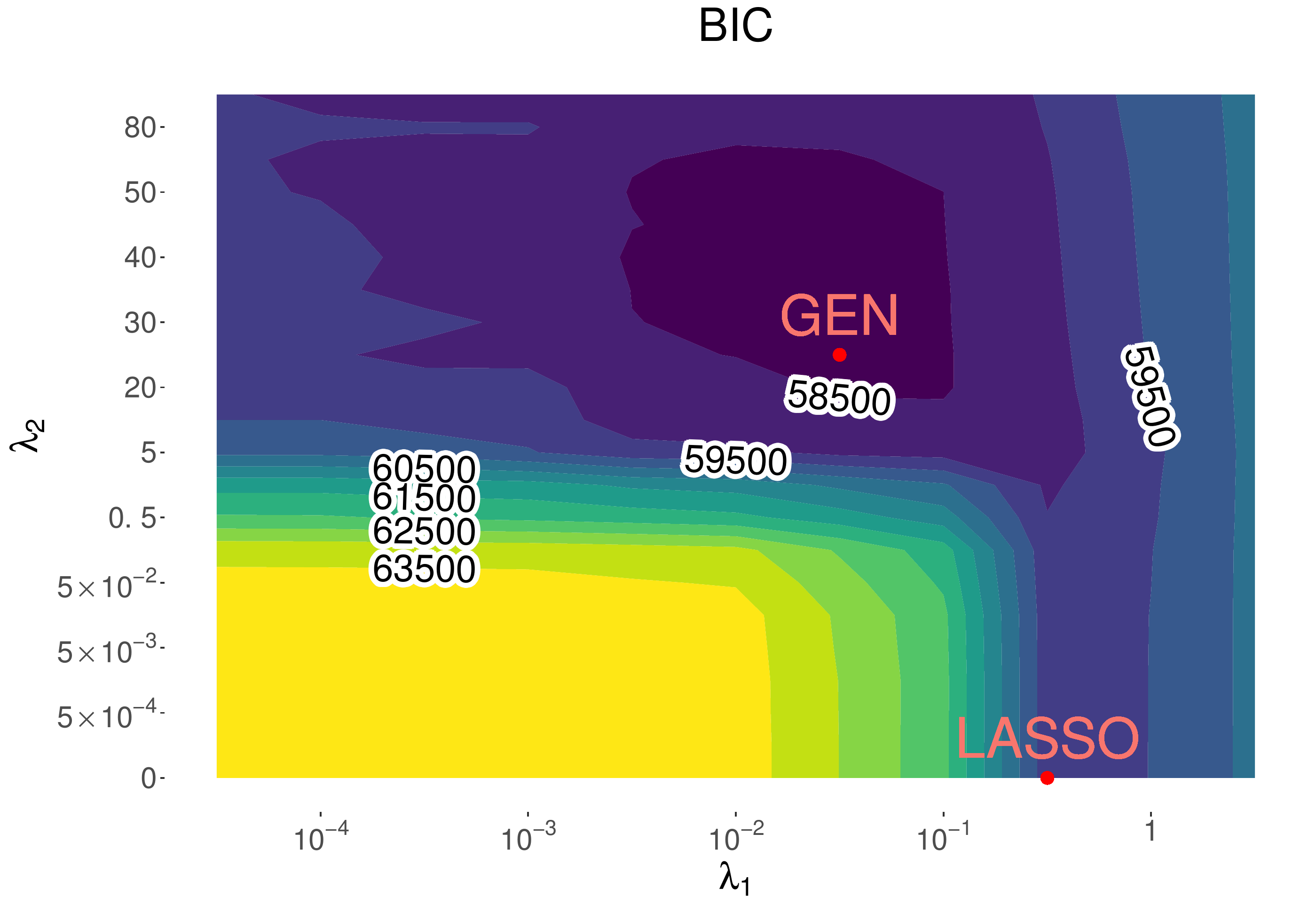}
\includegraphics[angle=270,width=0.3\linewidth,valign=t]{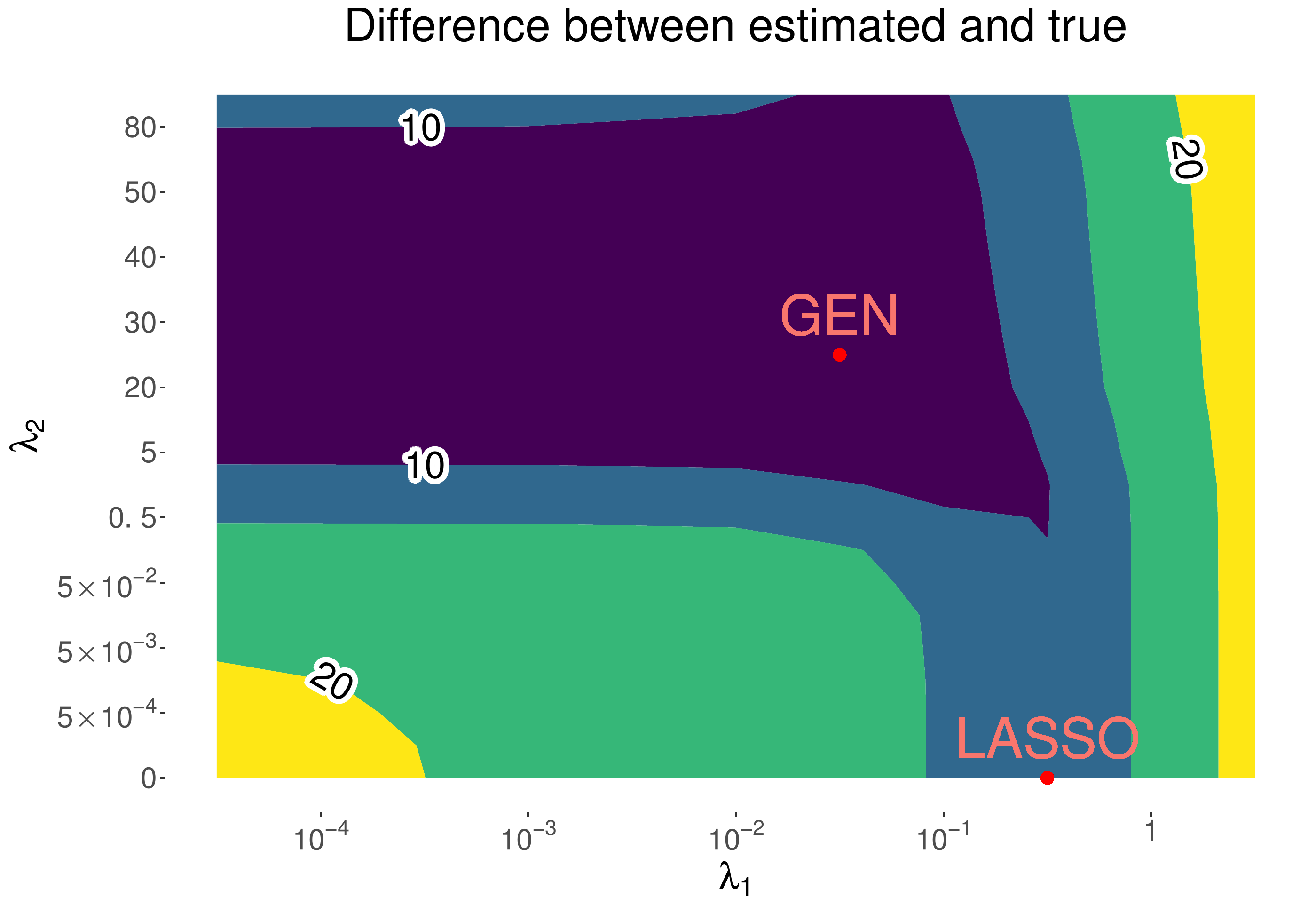}
\includegraphics[angle=270,width=0.3\linewidth,valign=t]{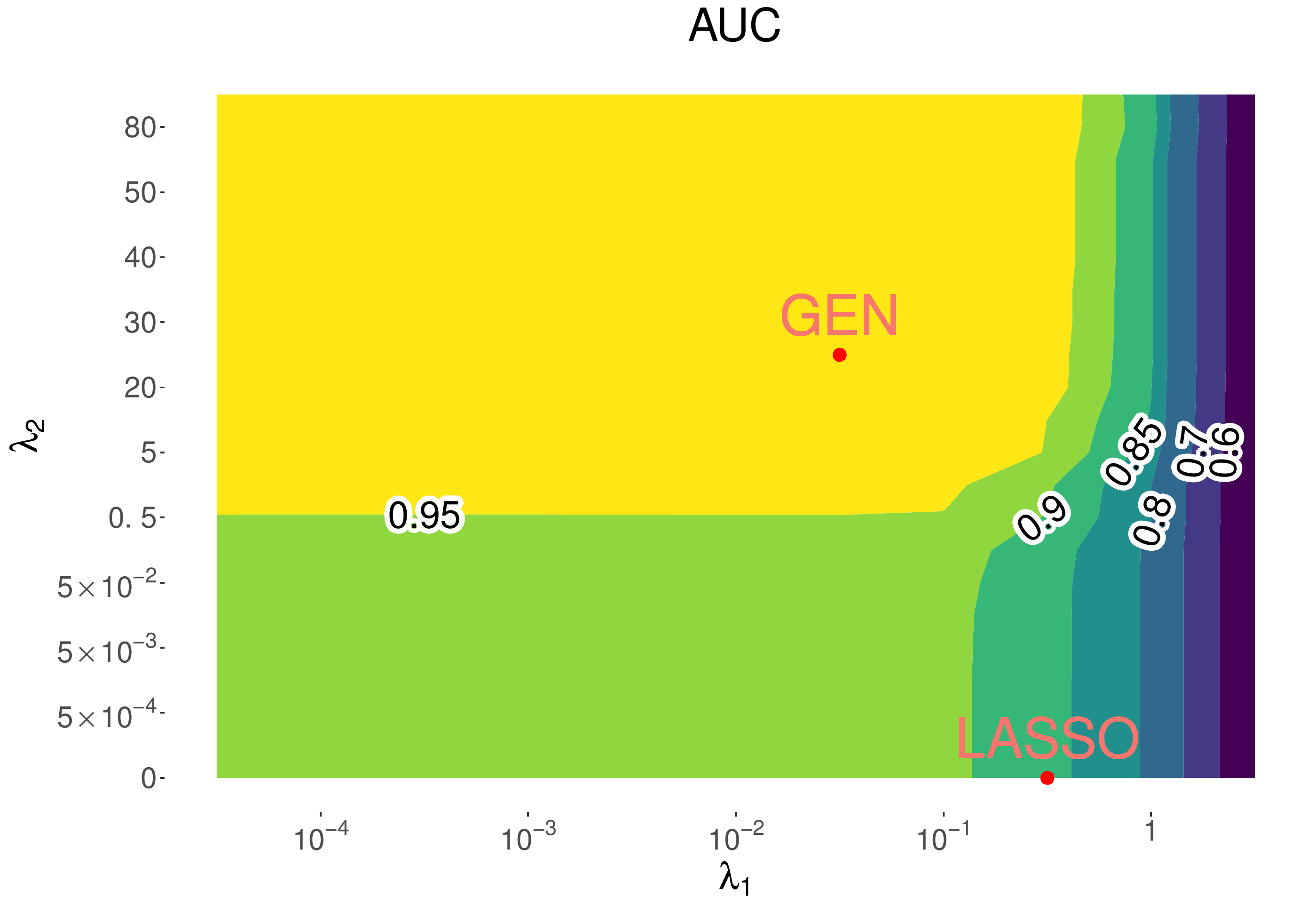}
\caption{sample size 200}
\end{subfigure}
\caption{Three performance metrics based on the GEN in Scenario 2. Darker areas represent lower values. 
}
\label{fig:s2_gen_contour}
\end{figure}

\section{Application: Analysis of ADHD data}
\label{sec:application}

In this section, we apply the proposed methods (GEN and GFL) to estimate time-varying brain connectivity with a real data set about attention deficit hyperactive disorder (ADHD).

ADHD is a mental health disorder characterized by impulsivity, motoric hyperactivity, and especially, attention deficits. With a global community prevalence of 5\% - 10\%, however, its causes are unknown, although gene damage may be a contributing factor. Diagnosis is mainly supported by clinical assessment based on long-term observations and on identifying a range of symptoms. 
Recently, the cerebellum that contains more than 50\% of the neurons in the brain has been thought of as having vital involvement in ADHD. MRI studies \citep{berquin1998cerebellum} suggest that the cerebellar hemispheric volumes in ADHD sufferers are up to 6\% smaller than healthy subjects, and ADHD children have less vermal volume than healthy ones. Subsequent researches consistently find significant differences between ADHD brains and healthy ones in posterior inferior lobe of the cerebellum (lobules VIII–X) and the posterior–inferior cerebellar vermis \citep{mostofsky1998evaluation}. Therefore, exploring the topological structures in the cerebellum region and their dynamic changes under ADHD is critical to a better understanding of the mechanisms underlying the disorder. Moreover, knowing the differences in brain connectivity between healthy and ADHD groups can contribute to the development of diagnosis methods. 

We use the resting-state fMRI data set collected at New York University Medical Center (NYU), one of the eight imaging sites contributing to the ADHD-200 Global Competition that was held to gather neuroimaging data for the classification of ADHD subjects. The data set consists of filtered and preprocessed resting-state data for 116 brain regions of interest (ROIs) segmented by the Automated Anatomical Labeling (AAL) atlas. We extracted data only from the $91$st to the $108$th ROIs, corresponding to the cerebellum region. For each particular region, the mean blood-oxygen-level dependent (BOLD) signal was recorded at 172 equally spaced time points. There are 98 healthy subjects and 118 ADHD patients. 

To characterize how time-varying associations in cerebellum regions differ for the ADHD and healthy groups, we apply both GEN and GFL to fit time-varying networks for the two groups separately. First, we center the mean BOLD signal at each ROI to zero at each time point. Then we estimate partial correlation networks using a series of $\lambda_1$ and $\lambda_2$ values. The BIC surfaces---i.e., \eqref{bic} for both methods---turn out to be quite flat, suggesting that there are no substantial differences among different solutions. For each method, we therefore examine two specific solutions: one, which we refer to as ``Result 1'', is given by the ``first'' strictly positive $(\lambda_1,\lambda_2)>(0,0)$ in our grid; and another, which we refer to as ``Result 2'', is given by a pair of $(\lambda_1,\lambda_2)$ in our grid such that the corresponding degree of freedom $\hat{df}(\lambda_1,\lambda_2)$ is closest to half of that from ``Result 1''. 

Table~\ref{tab1:application} summarizes some key features of the four solutions, including 
the degree of freedom, the total number of edges/connections over all time points, and the respective number of edges/connections during the first ($1 \leq t \leq 86$) and second ($87 \leq t \leq 172$) halves of the scanning period. Note that, for each set of solutions, the degrees of freedom and total number of edges/connections are similar between the healthy and ADHD groups, so it is reasonable and meaningful to compare them.

\begin{table}[htp]
\centering
\begin{tabular}{|c|cc|cc|}
\hline
\multirow{2}{*}{Generalized elastic net} & \multicolumn{2}{c|}{Result 1} & \multicolumn{2}{c|}{Result 2} \\ \cline{2-5} 
                  &    Healthy       &    ADHD       &    Healthy   &   ADHD    \\ \hline
degrees of freedom     &  685.59 & 748.21  &  292.53  &   312.04  \\ \hline
\# of connections   &  990  &   1073  &   1098   &  1171  \\ \hline
\shortstack{\# of connections during\\$(1^{st}, 2^{nd})$ half period}  &  (490,500)   & (269,804)    &  (541,557) &  (292,879) \\ \hline \hline
\multirow{2}{*}{Generalized fused LASSO} &   \multicolumn{2}{c|}{Result 1}          &    \multicolumn{2}{c|}{Result 2}          \\ \cline{2-5} 
 &  Healthy   &  ADHD  &  Healthy   & ADHD \\ \hline
degrees of freedom  &  937   & 853  &  563  &   504  \\ \hline
\# of connections   & 1606   & 1766 & 855  &  941 \\ \hline
\shortstack{\# of connections during\\$(1^{st}, 2^{nd})$ half period}   &  (780,826)   &  (509,1257)  &  (408,447) & (172,769)  \\ \hline
\end{tabular}
\caption{Summary of degrees of freedom and number of detected edges based on the GEN and the GFL with two pairs of selected tuning parameters. }
\label{tab1:application}
\end{table}

The main conclusions we can draw from all four sets of results---i.e., (Result 1, Result 2)$\times$(GEN, GFL)---turn out to be identical, which give us confidence in their scientific validity, to the extent justified by the quality of the data set itself. To reduce redundancy, therefore, we present only Result 1 from GFL in the main text; the other three sets of results are provided in \ref{subsec:supplement6} in the supplementary materials. 

First, Figure~\ref{fig:app_gfl9_connection} and, similarly, Figures~\ref{fig:app_gen11_connection}, \ref{fig:app_gen13_connection}, \ref{fig:app_gfl16_connection} in \ref{subsec:supplement6} show the frequency of connections between any two regions over all 172 time points of the entire scanning period. The left panel contains matrices where each entry represents the total number of occurrences for the corresponding connection, and these matrices are visualized as networks of eighteen cerebellar ROIs in the right panel, where the thickness of each edge is proportional to the number of occurrences for that connection. The most prominent observations here---from all four sets of results---are that (i) the connection between 7b\_L and 8\_L occurred only in the ADHD group but never appeared in the healthy group, and that (ii) the connection between 9\_L and 9\_R occurred a lot more often in the ADHD group than it did in the healthy group. 

Next, graphically displayed in Figure~\ref{fig:app_gfl9_table} here and, similarly, in Figures~\ref{fig:app_gen11_table}, \ref{fig:app_gen13_table}, \ref{fig:app_gfl16_table} in \ref{subsec:supplement6} are the estimated partial correlations at each time point. They provide further information as to when different interactions occur and how they change over time. 
A consistent observation here---again, from all four sets of results (also see Table~\ref{tab1:application})---is that, other than connections that persistently show up during the entire scanning period, the time points at which specific connections occur between ROIs are markedly different for the two groups. In particular, for the healthy group, there were more or less equal number of connections during the first and second halves of the scanning period; whereas, for the ADHD group, many more connections occurred during the second half than the first.




\begin{figure}[htp]
\centering
\begin{subfigure}[b]{\textwidth}
\centering
\includegraphics[angle=270,width=0.45\linewidth,valign=t]{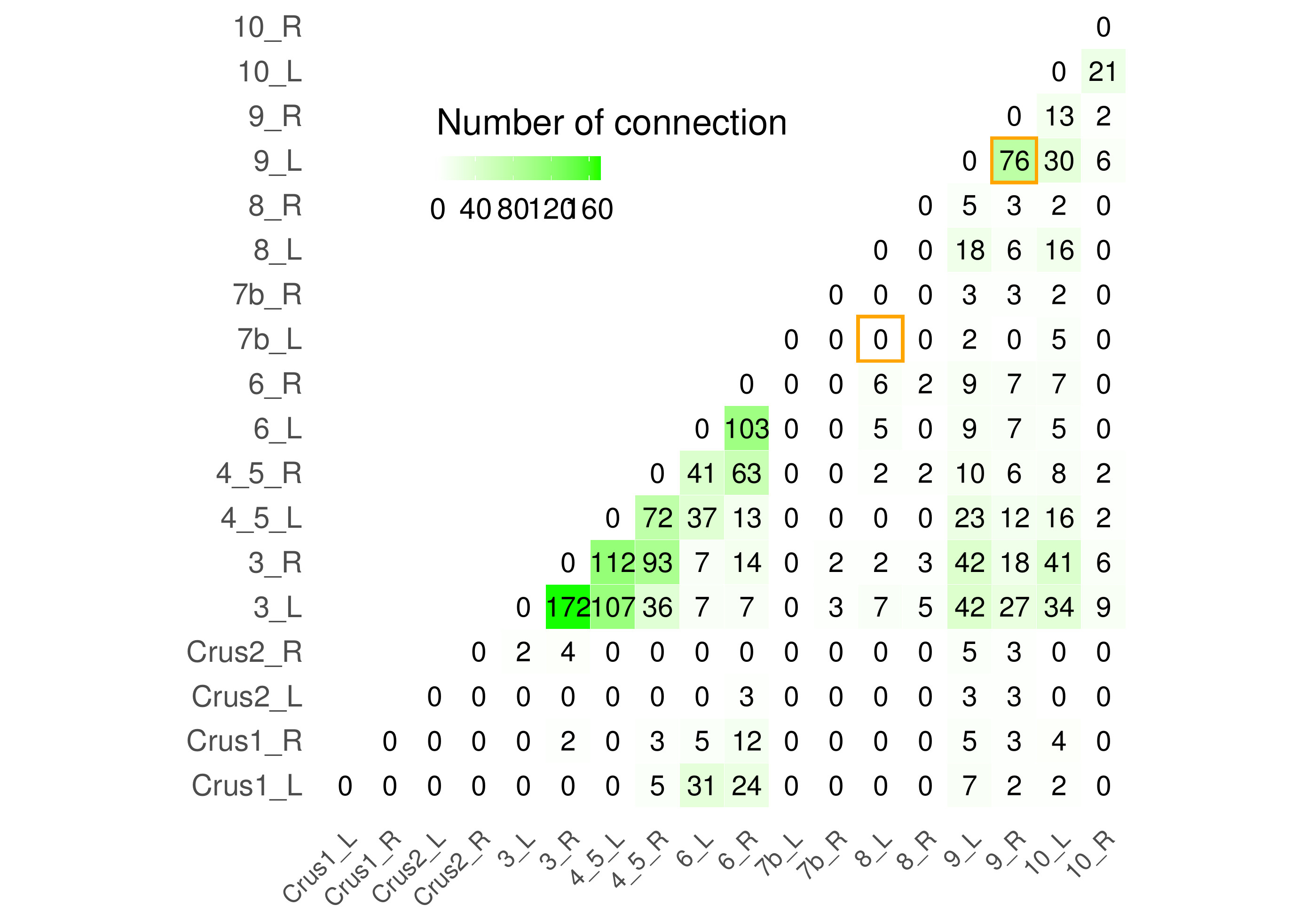}
\includegraphics[width=0.32\linewidth,valign=t]{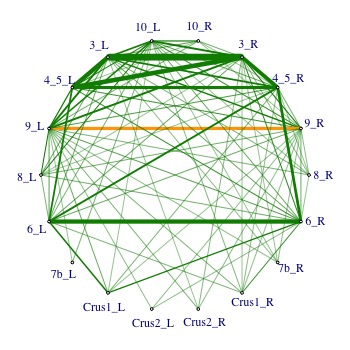}
\caption{Healthy group}
\end{subfigure}
\begin{subfigure}[b]{\textwidth}
\centering
\includegraphics[angle=270,width=0.45\linewidth,valign=t]{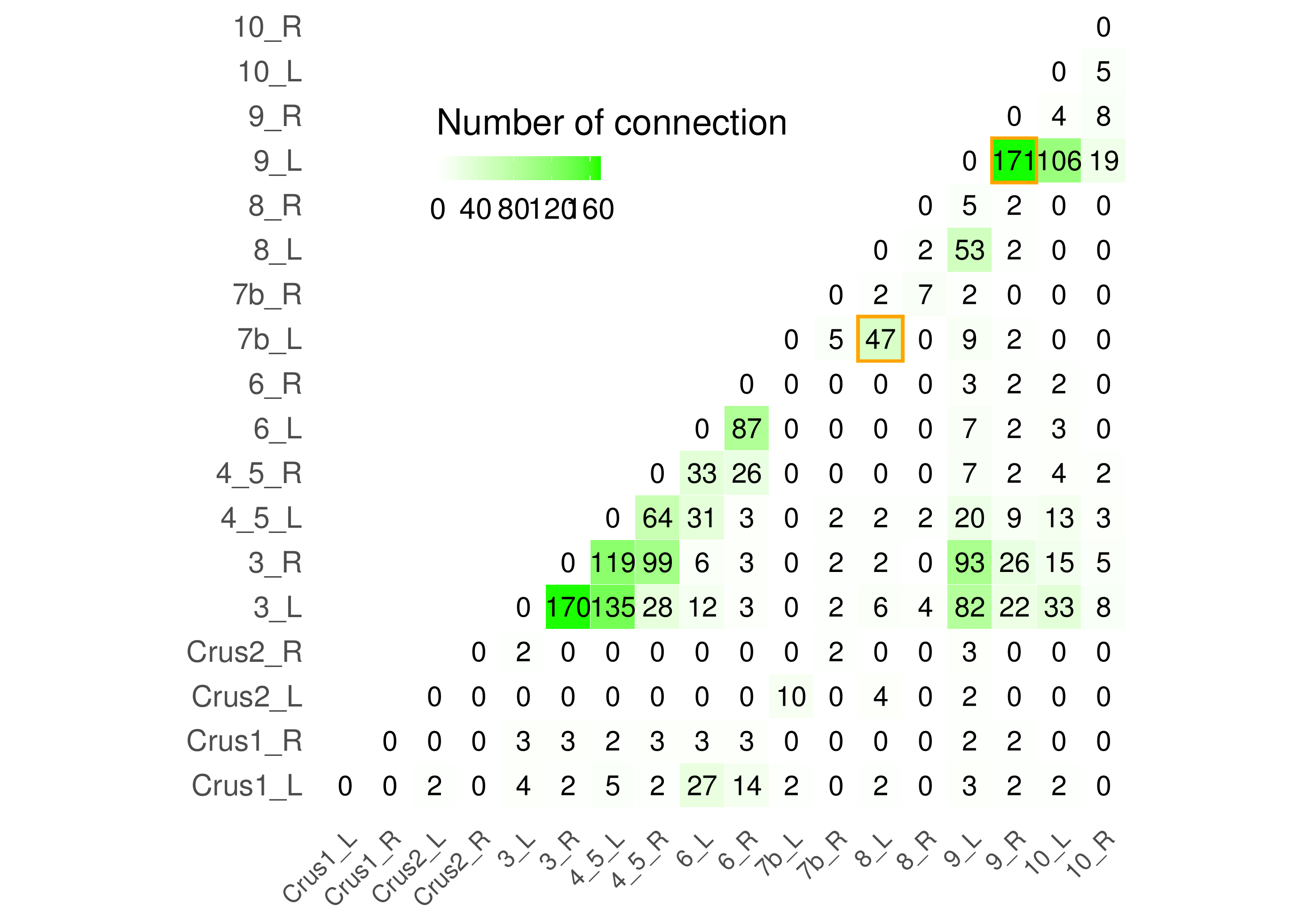}
\includegraphics[width=0.32\linewidth,valign=t]{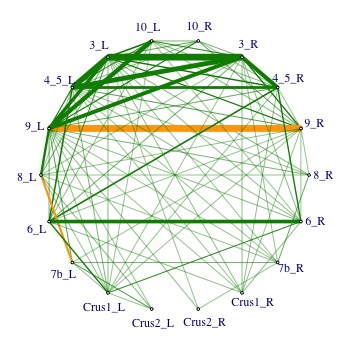}
\caption{ADHD group}
\end{subfigure}
\caption{[Result 1,GFL] Aggregated connections between different cerebellum regions over 172 time points based on the GFL. The yellow squares on the left highlight the number of the edge 7b\_L - 8\_L and the edge 9\_L - 9\_R.}
\label{fig:app_gfl9_connection}
\end{figure}

\begin{figure}[htp]%
\centering
\begin{subfigure}[b]{0.47\textwidth}
\centering
\includegraphics[width=\linewidth,valign=t]{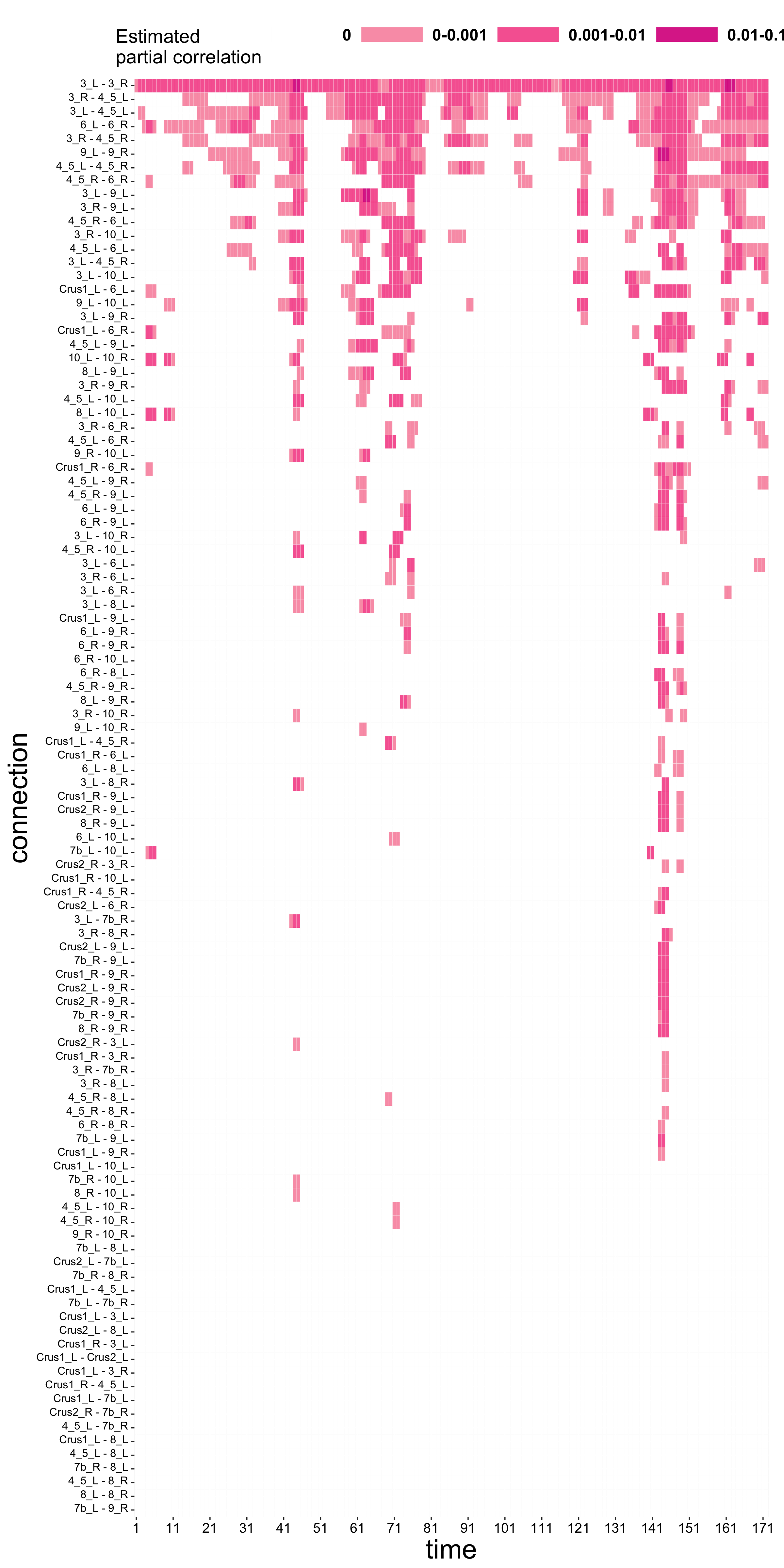}
\caption{Healthy group}
\end{subfigure}
~
\begin{subfigure}[b]{0.47\textwidth}
\centering
\includegraphics[width=\linewidth,valign=t]{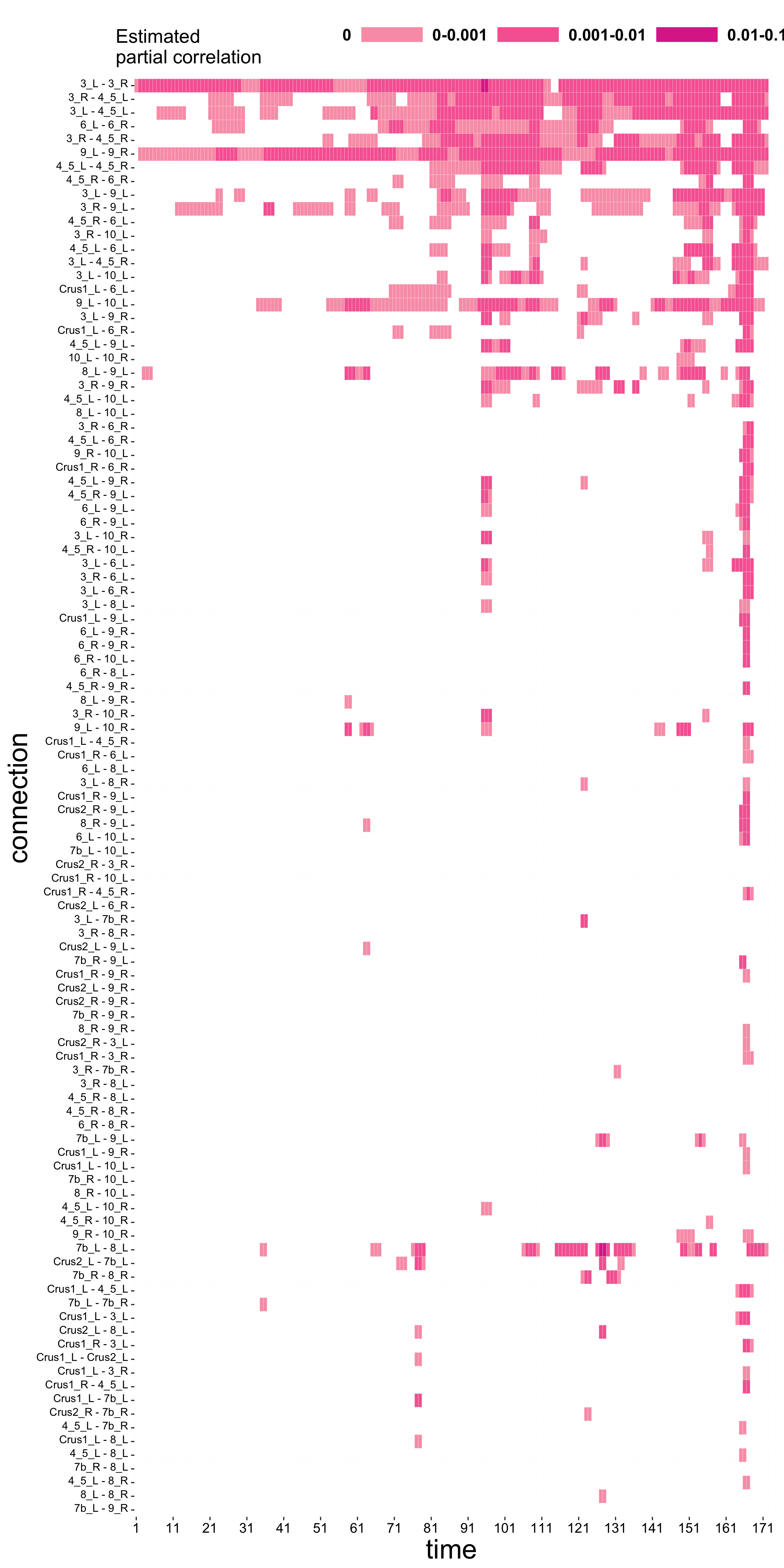}
\caption{ADHD group}
\end{subfigure}
\caption{[Result 1, GFL] Estimated partial correlations between different cerebellum regions over 172 time points based on the GFL. Each cell corresponds to a connection at a given time point, and the color represents the magnitude of the estimated partial correlation. }
\label{fig:app_gfl9_table}
\end{figure}

\section{Conclusion}
\label{sec:conc}

Although generalizing the elastic net and the fused LASSO as we have done in this paper are not the only ways to model time-varying network data, they are useful additions to the existing toolbox. While the idea of imposing $l_1$ and $l_2$ penalties on $\bm{D}\bm{\theta}$ may be quite straight-forward, the resulting optimization problems are not exactly trivial to solve. Some valuable lessons from our work are: first, the ADMM provides a unifying framework for solving both the GEN and the GLF problems, and one ``only'' has to parameterize the penalty functions differently for the two approaches; second, ``tricks'' that exploit special structures in otherwise large matrices are always useful.

The approximate degree-of-freedom formulae we derived are useful---and perhaps even sufficient as we have demonstrated---for practical purposes, but the correct degrees of freedom remain elusive and an open problem.

Finally, our analysis of the fMRI data set also leads to some interesting speculations. On the one hand, that the BIC-surfaces are quite flat over a wide range of $(\lambda_1,\lambda_2)$ values may be an indication that there is limited information in this data set. Indeed, scientists have questioned the usefulness of resting-state fMRI scans for studying ADHD \citep{lurie2020questions}. On the other hand, that the same conclusions can be drawn from four different solutions, with remarkably different degrees of freedom, is in itself a strong testament that these conclusions are probably not false discoveries. This is in line with the basic philosophy behind stability selection \citep{meinshausen2010stability}. If no scientific explanation is immediately available, we think they are at least genuine artefacts of this particular data set.

\bibliographystyle{chicago}
\bibliography{TVnet}

\newpage
\begin{center}	
\Large{\textbf{
Supplement to ``Two Gaussian regularization methods for time-varying networks"}} \\
Jie Jian, Peijun Sang, and Mu Zhu \\
Department of Statistics and Actuarial Science, University of Waterloo
\end{center}

	\setcounter{equation}{0}
	\setcounter{figure}{0}
	\setcounter{table}{0}
	\setcounter{page}{1}
	\makeatletter
	\renewcommand{\theequation}{S\arabic{equation}}
	\renewcommand{\thefigure}{S\arabic{figure}}
	\renewcommand{\thetable}{S\arabic{table}}

	\setcounter{section}{0}
	\renewcommand{\thesection}{S.\arabic{section}}

The supplementary material contains some technical details and additional results of numerical studies in the main manuscript.

\section{Fast computation of $\bm{\theta}-$update in ADMM }
\label{subsec:supplement1}
This section introduces the details of fast updating $\bm{\theta}$ in the ADMM algorithm discussed in Section 3.1. In the $\bm{\theta}-$update identified at Equation (15), $\mathbfcal{X}^{\top}  \mathbfcal{X}$ is a diagonal block matrix with the $k^{th}$ diagonal block $\mathbfcal{X}^{\top} (t_k)  \mathbfcal{X} (t_k)$,
and $\bm{D}^{\top} \bm{D} $ is a ${Tp(p-1)/2\times Tp(p-1)/2}$ block tridiagonal matrix given by
\begin{align*}
\bm{D}^{\top} \bm{D}=
\begin{bmatrix}
I & -I & 0 & 0 & 0 & 0 \\
-I & 2I & -I & 0 & 0  & 0 \\
0 & -I & 2I & -I & 0 & 0 \\
 &  &  & \ddots  &  &  \\
0 & 0 & 0 & -I & 2I & -I \\
0 & 0 & 0 & 0 & -I & I \\
\end{bmatrix},
\end{align*} 
where each block matrix has dimension $p(p-1)/2 \times p(p-1)/2$.

It follows that 
\begin{align}
 \dfrac{2}{n}  \mathbfcal{X}^{\top}  \mathbfcal{X} + 2\lambda_2 \bm{D}^{\top} \bm{D} + a I 
 &=  2\lambda_2
\begin{bmatrix}
A_1 & -I & 0 & 0 & 0 & 0 \\
-I & A_2 & -I & 0 & 0  & 0 \\
0 & -I & A_3 & -I & 0 & 0 \\
 &  &  & \ddots  &  &  \\
0 & 0 & 0 & -I & A_{T-1} & -I \\
0 & 0 & 0 & 0 & -I & A_T \\
\end{bmatrix}, \label{eq:app1}
\end{align} 
where $$
A_i=
\begin{cases}
\frac{\mathbfcal{X}(t_i)^{\top} \mathbfcal{X}(t_i)}{n\lambda_2}+(1+\frac{a}{2\lambda_2})I, \text{ if } i=1 \text{ and } T\\
\frac{\mathbfcal{X}(t_i)^{\top} \mathbfcal{X}(t_i)}{n\lambda_2}+(2+\frac{a}{2\lambda_2})I.~\text{otherwise}
\end{cases}.
$$
Denote by $H$ the matrix on the right-hand side of \eqref{eq:app1}. Updating $\bm{\theta}$ in (15) is to solve the linear system $H\bm{\theta} = \left( 2n^{-1}\mathbfcal{Y}^{\top} \mathbfcal{X} + a (\bm{z}- \bm{u} ) \right) $. We provide an efficient approach to find the inverse of $H$.

We first multiply $H$ by a sequence of lower triangular matrices, denoted by $L_i$'s, on its left to convert $H$ to an upper triangular matrix. In particular, we take $T = 4$, i.e., four time points, as an example to illustrate this procedure. Simple algebra yields the following three steps:
\begin{align*}
(1)\quad L_1 H &=
\begin{bmatrix}
I & 0 & 0 & 0 \\
A_1^{-1} & I & 0 & 0 \\
0 &0 & I & 0 \\
 0 & 0 & 0 & I\\
\end{bmatrix} 
\begin{bmatrix}
A_1 & -I & 0 & 0 \\
-I & A_2 & -I & 0 \\
0 & -I & A_3 & -I \\
 0 & 0 & -I & A_4 \\
\end{bmatrix} 
=
\begin{bmatrix}
A_1 & -I & 0 & 0 \\
0 & A_2-A_1^{-1} & -I & 0 \\
0 & -I & A_3 & -I \\
 0 & 0 & -I & A_4 \\
\end{bmatrix} .
\end{align*}

\begin{align*}
(2) \quad L_2 L_1 H & =
\begin{bmatrix}
I & 0 & 0 & 0 \\
0 & I & 0 & 0 \\
0 & (A_2-A_1^{-1})^{-1} & I & 0 \\
 0 & 0 & 0 & I\\
\end{bmatrix} 
\begin{bmatrix}
A_1 & -I & 0 & 0 \\
0 & A_2-A_1^{-1} & -I & 0 \\
0 & -I & A_3 & -I \\
 0 & 0 & -I & A_4 \\
\end{bmatrix} \\
& =
\begin{bmatrix}
A_1 & -I & 0 & 0 \\
0 & A_2-A_1^{-1} & -I & 0 \\
0 & 0 & A_3-(A_2-A_1^{-1})^{-1} & -I \\
 0 & 0 & -I & A_4 \\
\end{bmatrix} .
\end{align*}

\begin{align*}
(3) \quad L_3 L_2 L_1 H & =
\begin{bmatrix}
I & 0 & 0 & 0 \\
0 & I & 0 & 0 \\
0 & 0 & I & 0 \\
 0 & 0 & (A_3-(A_2-A_1^{-1})^{-1})^{-1} & I\\
\end{bmatrix} 
\begin{bmatrix}
A_1 & -I & 0 & 0 \\
0 & A_2-A_1^{-1} & -I & 0 \\
0 & 0 & A_3-(A_2-A_1^{-1})^{-1} & -I \\
 0 & 0 & -I & A_4 \\
\end{bmatrix}  \\
& =
\begin{bmatrix}
A_1 & -I & 0 & 0 \\
0 & A_2-A_1^{-1} & -I & 0 \\
0 & 0 & A_3-(A_2-A_1^{-1})^{-1} & -I \\
 0 & 0 & 0 & A_4-(A_3-(A_2-A_1^{-1})^{-1})^{-1} \\
\end{bmatrix} .
\end{align*}
Let $B_1=A_1^{-1}$ and $B_i=(A_i-B_{i-1})^{-1}$. The formula above can be rewritten as:
\begin{align*}
L_3 L_2 L_1 H & = 
\begin{bmatrix}
I & 0 & 0 & 0 \\
B_1 & I & 0 & 0 \\
0 &0 & I & 0 \\
 0 & 0 & 0 & I\\
\end{bmatrix} 
\begin{bmatrix}
I & 0 & 0 & 0 \\
0 & I & 0 & 0 \\
0 & B_2 & I & 0 \\
 0 & 0 & 0 & I\\
\end{bmatrix}
\begin{bmatrix}
I & 0 & 0 & 0 \\
0 & I & 0 & 0 \\
0 & 0 & I & 0 \\
 0 & 0 & B_3 & I\\
\end{bmatrix} H\\
&= 
\begin{bmatrix}
B_1^{-1} & -I & 0 & 0 \\
0 & B_2^{-1} & -I & 0 \\
0 & 0 & B_3^{-1} & -I \\
 0 & 0 & 0 & B_4^{-1}\\
\end{bmatrix}.
\end{align*}
To eliminate upper off-diagonal blocks, we define a sequence of upper triangular matrices: 
\begin{align*}
U_1  =
 \begin{bmatrix}
I & 0 & 0 & 0 \\
0 & I & 0 & 0 \\
0 & 0 & I &  B_4 \\
 0 & 0 & 0 & I \\
\end{bmatrix} ,
U_2  =
 \begin{bmatrix}
I & 0 & 0 & 0 \\
0 & I & B_3 & 0 \\
0 & 0 & I &  0 \\
 0 & 0 & 0 & I \\
\end{bmatrix} ,
U_3 =
 \begin{bmatrix}
I & B_2 & 0 & 0 \\
0 & I & 0 & 0 \\
0 & 0 & I &  0 \\
 0 & 0 & 0 & I \\
\end{bmatrix}.
\end{align*}
Sequentially multiplying by $U_i$'s on the left of $L_3 L_2 L_1 H $ yields
\begin{align*}
U_3 U_2 U_1 L_3 L_2 L_1 H 
 =
\begin{bmatrix}
B_1^{-1} & 0 & 0 & 0 \\
0 & B_2^{-1} & 0 & 0 \\
0 & 0 & B_3^{-1} & 0 \\
 0 & 0 & 0 & B_4^{-1}\\
\end{bmatrix}.
\end{align*}
Lastly, define $$\tilde{B}_i=\begin{bmatrix}
I & 0 & 0 & 0 & 0\\
0 & I & 0 & 0 & 0\\
0 & 0 & B_i & 0 & 0 \\
0 & 0 & 0 & I & 0 \\
 0 & 0 & 0 & 0 & I\\
\end{bmatrix}
$$ 
for $i = 1, 2, 3$. 
Then $\tilde{B}_4 \tilde{B}_3 \tilde{B}_2 \tilde{B}_1 U_3 U_2 U_1 L_3 L_2 L_1 H $ is an identity matrix. 

In summary, an updated $\boldsymbol{\theta}$ can be obtained through
\begin{align*}
\bm{\theta^{k+1}} &= \left(\frac{2}{n} \mathbfcal{X}^{\top} \mathbfcal{X} + \frac{2\lambda_2}{n} \bm{D}^{\top} \bm{D} + a I\right)^{-1} \left[\frac{2}{n}\mathbfcal{X}^{\top} \mathbfcal{Y}+\alpha (z^{k}-u^{k})\right]\\
& =\frac{n}{2\lambda_2} \cdot H^{-1} \cdot \left[\frac{2}{n} \mathbfcal{X}^{\top} \mathbfcal{Y} +a (z^{k}-u^{k})\right] \\
&=\tilde{B}_4 \tilde{B}_3 \tilde{B}_2 \tilde{B}_1 U_3 U_2 U_1 L_3 L_2 L_1\frac{n}{2\lambda_2} \cdot \left[\frac{2}{n}\mathbfcal{X}^{\top} \mathbfcal{Y}+a (z^{k}-u^{k})\right].
\end{align*}
This sequence of operations enables us to quickly find the inverse of $2n^{-1} \mathbfcal{X}^{\top} \mathbfcal{X} + 2\lambda_2n^{-1} \bm{D}^{\top} \bm{D} + a I$, thus the computational efficiency of the ADMM algorithm is greatly enhanced. 

\section{Degrees of freedom in GEN}
\label{subsec:supplement2}
In this section, we derive the degrees of freedom in GEN under the framework of Stein's unbiased risk estimation (SURE) \citep{stein1981estimation}. \cite{zou2007degrees} provides a theoretical justification of the degrees of freedom in the standard LASSO problem, and we will use it to derive an unbiased estimate of the degrees of freedom in the GEN problem with a homoscedastic assumption. We state the main theorem as the following.
\begin{theorem} \label{propGENdf}
Suppose $\bm{y}_{n \times 1} \sim N(\bm{\mu},\sigma^2 \bm{I})$, where $\bm{\mu} \in \real^{n}$ denotes the mean vector and $\sigma^2$ denotes the common variance of each component. Given a design matrix $\bm{X}\in \mathbb{R}^{n\times p}$ and two tuning parameters, $\lambda_1$ and $\lambda_2$, we consider the generalized elastic net problem
\begin{align}
\hat{\bm{\beta}} = \argmin\limits_{\bm{\beta}\in  \mathbb{R}^{ p}} \left\{\frac{1}{n}  \| \bm{y} -    \bm{X} \bm{\beta} \| ^2
 + \lambda_1 \| \bm{\beta}  \|_1 + \lambda_2 \| \bm{D} \bm{\beta } \|_2^2\right\},  \label{generalGEN}
\end{align}
where $\bm{D}_{m\times p}$ is defined as in Equation \eqref{Dmat}. If definition of degrees of freedom is given by Equation \eqref{df}, then
\begin{align*}
df( \bm{X}\hat{\bm{\beta}})=Tr\left( \bm{X}_{\mathcal{A}}  \left(  \bm{X}_{\mathcal{A}}^{\top}  \bm{X}_{\mathcal{A}} + n\lambda_2 \bm{D}_{\mathcal{A}}^{\top} \bm{D}_{\mathcal{A}}\right)^{-1} \bm{X}_{\mathcal{A}}^{\top}\right) . 
\end{align*}
Here $\mathcal{A}=\{j: \hat{\bm{\beta}}_j \neq 0 \} $ denotes the collection of column indices of $\bm{X}$ corresponding to the active features, and  $B_{\mathcal{A}}$ denotes a submatrix of $B$ that contains only the columns indexed by $\mathcal{A}$.
\end{theorem}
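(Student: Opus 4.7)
The proof plan is to follow the Stein-unbiased-risk (SURE) approach of \cite{zou2007degrees}, adapted to accommodate the extra quadratic penalty $\lambda_2 \|\bm{D}\bm{\beta}\|_2^2$. By the definition of degrees of freedom and Stein's identity, if $\hat{\bm{\mu}}(\bm{y}) = \bm{X}\hat{\bm{\beta}}(\bm{y})$ is almost differentiable with respect to $\bm{y}$, one has $df(\hat{\bm{\mu}}) = \mathbb{E}\bigl[\mathrm{tr}(\partial \hat{\bm{\mu}}/\partial \bm{y})\bigr]$, so the task reduces to (i) verifying the regularity of the map $\bm{y}\mapsto\hat{\bm{\mu}}(\bm{y})$ and (ii) computing its Jacobian in closed form on the full-measure set where the active set $\mathcal{A}$ and the sign pattern of $\hat{\bm{\beta}}$ are locally constant.

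For step (ii), I would write the KKT conditions of the problem in the theorem statement. Restricted to the active coordinates, stationarity reads $\tfrac{2}{n}\bm{X}_\mathcal{A}^\top(\bm{y}-\bm{X}_\mathcal{A}\hat{\bm{\beta}}_\mathcal{A}) = \lambda_1\,\mathrm{sign}(\hat{\bm{\beta}}_\mathcal{A}) + 2\lambda_2 \bm{D}_\mathcal{A}^\top\bm{D}_\mathcal{A}\hat{\bm{\beta}}_\mathcal{A}$, together with $\hat{\beta}_j = 0$ for $j\notin\mathcal{A}$; note that the submatrix identity $(\bm{D}^\top\bm{D})_{\mathcal{A}\mathcal{A}}=\bm{D}_\mathcal{A}^\top\bm{D}_\mathcal{A}$ lets me keep the $\bm{D}_\mathcal{A}$ notation throughout. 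Solving for $\hat{\bm{\beta}}_\mathcal{A}$ yields $\hat{\bm{\beta}}_\mathcal{A} = \bigl(\bm{X}_\mathcal{A}^\top\bm{X}_\mathcal{A} + n\lambda_2 \bm{D}_\mathcal{A}^\top\bm{D}_\mathcal{A}\bigr)^{-1}\!\left(\bm{X}_\mathcal{A}^\top\bm{y} - \tfrac{n\lambda_1}{2}\mathrm{sign}(\hat{\bm{\beta}}_\mathcal{A})\right)$, so on each cell of constant $(\mathcal{A},\mathrm{sign}(\hat{\bm{\beta}}_\mathcal{A}))$, the fit $\hat{\bm{\mu}} = \bm{X}_\mathcal{A}(\bm{X}_\mathcal{A}^\top\bm{X}_\mathcal{A} + n\lambda_2 \bm{D}_\mathcal{A}^\top\bm{D}_\mathcal{A})^{-1}\bm{X}_\mathcal{A}^\top\bm{y}$ plus a piece that is constant in $\bm{y}$. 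Differentiating in $\bm{y}$ and taking the trace immediately delivers the formula in the theorem.

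The hard part, step (i), is justifying the application of Stein's identity despite the non-smoothness produced by $\|\bm{\beta}\|_1$. My plan mirrors \cite{zou2007degrees}: first, show that $\bm{y}\mapsto\hat{\bm{\beta}}(\bm{y})$ is uniformly Lipschitz continuous, exploiting the strong convexity of the restricted objective --- the matrix $\bm{X}_\mathcal{A}^\top\bm{X}_\mathcal{A} + n\lambda_2 \bm{D}_\mathcal{A}^\top\bm{D}_\mathcal{A}$ is positive definite under standard genericity on $(\bm{X},\bm{D})$, so the smooth part of the objective is coercive on the active subspace; second, argue that the set of $\bm{y}$ at which either $\mathcal{A}$ changes or a sign of some $\hat{\beta}_j$ flips is a finite union of lower-dimensional pieces cut out by KKT ``boundary'' equalities, and therefore has Lebesgue measure zero. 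Lipschitz continuity together with a null non-differentiability locus gives almost differentiability in the sense required by Stein's identity. Combining the two steps then yields $df(\bm{X}\hat{\bm{\beta}}) = \mathrm{tr}\bigl(\bm{X}_\mathcal{A}(\bm{X}_\mathcal{A}^\top\bm{X}_\mathcal{A} + n\lambda_2 \bm{D}_\mathcal{A}^\top\bm{D}_\mathcal{A})^{-1}\bm{X}_\mathcal{A}^\top\bigr)$, as claimed. The closed-form Jacobian computation is essentially a line of linear algebra; essentially all of the real work is in the Lipschitz-plus-null-set regularity bookkeeping that licenses the divergence formula.
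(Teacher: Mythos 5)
Your proposal is correct and follows essentially the same route as the paper's proof: both rest on the Efron/Stein definition of degrees of freedom, the closed-form expression for $\hat{\bm{\beta}}_{\mathcal{A}}$ on regions where the active set and sign pattern are locally constant, and a Lipschitz-continuity argument to license Stein's divergence formula. The only tactical difference is that the paper obtains the closed form and a constant-$1$ Lipschitz bound by rewriting the GEN as a LASSO on the augmented data $\tilde{\bm{y}}=(\bm{y}^{\top},\bm{0}^{\top})^{\top}$ and $\tilde{\bm{X}}=(\bm{X}^{\top},\sqrt{n\lambda_2}\,\bm{D}^{\top})^{\top}$, exploiting that the augmented hat matrix is a projection, whereas you work directly from the KKT conditions and strong convexity of the restricted objective---both yield the same fitted map and the same trace.
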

To prove Theorem \ref{propGENdf}, we first introduce  definition of effective degrees of freedom for a general fitted function as described in  Lemma~\ref{theoEfrom}. Furthermore, Lemma~\ref{steinLemma} indicates that if the fitted function is almost differentiable, the effective degrees of freedom can be simplified as the gradient of the fitted function. In Lemma~\ref{genSolution}, we give an explicit form of the fitted function in GEN. Lemma~\ref{ADiff} shows that the fitted function of GEN is uniformly Lipschitz, and thus by Lemma~\ref{LCandDiff} the fitted function in GEN is almost differentiable.      

One natural definition of effective degrees of freedom comes from the well-known identity of \textit{optimism} in \cite{efron1986biased}.

\begin{lemma}[Optimism theorem \citep{efron1986biased}] \label{theoEfrom}
Suppose $\bm{y}_{n\times 1} \sim (\bm{\mu},\sigma^2 \bm{I})$, where $\bm{\mu}$ is the true mean vector and $\sigma^2$ is the common variance of each component. Let $\hat{\bm{\mu}}=\delta(\bm{y})$ denote the fitted function of some fitting technique $\delta$, and $\bm{y}^{new}$ be the new response vector generated from the distribution $(\bm{\mu},\sigma^2 \bm{I})$. Then
\begin{align}
\E\left\{  \|  \bm{y}^{new} - \hat{\bm{\mu}} \|^2 \right\} - E\left\{  \| \bm{y}- \hat{\bm{\mu}}  \|^2 \right\}=2\sum\limits_{i=1}^n \cov\left( y_i,\hat{\mu}_i \right). \label{efron}
\end{align}
\end{lemma}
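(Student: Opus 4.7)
The plan is to prove the optimism identity by a direct bias-variance style expansion of the two expected squared errors on the left-hand side, exploiting the fact that $\hat{\bm{\mu}}=\delta(\bm{y})$ depends only on $\bm{y}$ while $\bm{y}^{new}$ is an independent copy with the same distribution. First I would expand componentwise, writing $\|\bm{y}^{new}-\hat{\bm{\mu}}\|^2=\sum_{i=1}^n\left[(y_i^{new})^2-2y_i^{new}\hat{\mu}_i+\hat{\mu}_i^2\right]$ and likewise $\|\bm{y}-\hat{\bm{\mu}}\|^2=\sum_{i=1}^n\left[y_i^2-2y_i\hat{\mu}_i+\hat{\mu}_i^2\right]$, then subtract. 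The quadratic $\hat{\mu}_i^2$ term cancels immediately, so only the remaining terms matter for the difference.

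Next I would take expectations term by term. Because $\bm{y}^{new}$ and $\bm{y}$ share the common distribution with mean $\bm{\mu}$ and variance $\sigma^2\bm{I}$, we have $\E[(y_i^{new})^2]=\E[y_i^2]$, so these contributions cancel as well. The only surviving pieces are the cross terms $-2\E[y_i^{new}\hat{\mu}_i]$ and $+2\E[y_i\hat{\mu}_i]$. Here I would invoke independence: since $\hat{\mu}_i$ is a measurable function of $\bm{y}$ alone and $\bm{y}^{new}\perp\bm{y}$, we get $\E[y_i^{new}\hat{\mu}_i]=\E[y_i^{new}]\,\E[\hat{\mu}_i]=\mu_i\,\E[\hat{\mu}_i]=\E[y_i]\,\E[\hat{\mu}_i]$.

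Combining, the difference becomes
\begin{align*}
\E\|\bm{y}^{new}-\hat{\bm{\mu}}\|^2-\E\|\bm{y}-\hat{\bm{\mu}}\|^2
&= 2\sum_{i=1}^n\left(\E[y_i\hat{\mu}_i]-\E[y_i]\,\E[\hat{\mu}_i]\right)
 = 2\sum_{i=1}^n\cov(y_i,\hat{\mu}_i),
\end{align*}
which is the claimed identity.

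There is no real obstacle here beyond bookkeeping; the only point that requires care is the independence step. One must ensure that $\hat{\bm{\mu}}$ is treated strictly as a function of the training response $\bm{y}$ (not of $\bm{y}^{new}$), so that the product expectation factorizes. No differentiability or distributional assumption beyond finite second moments is needed, which is why the statement holds for an arbitrary fitting technique $\delta$ and underlies the subsequent use of Stein's lemma to upgrade the covariance sum to a trace formula in the Gaussian setting.
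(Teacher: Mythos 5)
Your proof is correct. The paper itself gives no proof of this lemma---it is stated as a known result with a citation to Efron (1986)---so there is no in-paper argument to compare against; your direct expansion is the standard derivation of the optimism identity and it is carried out properly here. The two points that matter are both handled: the cancellation of the $\E[(y_i^{new})^2]$ and $\E[y_i^2]$ terms uses only that $\bm{y}^{new}$ and $\bm{y}$ share the same first two moments, and the factorization $\E[y_i^{new}\hat{\mu}_i]=\E[y_i^{new}]\,\E[\hat{\mu}_i]$ uses that $\hat{\bm{\mu}}$ is a function of $\bm{y}$ alone and that $\bm{y}^{new}$ is an independent copy (an assumption implicit in the lemma's phrasing, which you rightly make explicit). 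Your closing observation is also accurate: only finite second moments are needed here, and the Gaussian assumption enters later, through Stein's lemma, when the covariance sum is converted into a divergence.
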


The right-hand side of \eqref{efron} is referred to as {\it the optimism} of the estimator $\hat{\bm{\mu}}$. Based on \eqref{efron}, the degrees of freedom can be defined as
\begin{align}
df(\hat{\bm{\mu}})=\dfrac{1}{\sigma^2} \sum\limits_{i=1}^n \cov\left( y_i,\hat{\mu}_i \right). \label{df}
\end{align}
Stein's Lemma \citep{stein1981estimation} can further simplify the right-hand side of \eqref{df}.

\begin{lemma}[Stein's Lemma] \label{steinLemma}
Suppose that $\hat{\bm{\mu}}: \mathbb{R}^n \rightarrow \mathbb{R}^n $ is almost differentiable and let $\nabla \cdot \hat{\bm{\mu}} = \sum_{i=1}^n \dfrac{\partial \hat{\mu}_i}{\partial y_i} $. If $\bm{y} \sim N(\bm{\mu},\sigma^2 I )$, then
\begin{align}
\dfrac{1}{\sigma^2} \sum\limits_{i=1}^n \cov\left( y_i,\hat{\mu}_i \right)= \E(\nabla \cdot \hat{\bm{\mu}}). \label{steinDf}
\end{align}
\end{lemma}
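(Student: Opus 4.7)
The plan is to prove the identity coordinate by coordinate and then sum. By linearity of covariance and of the divergence, it suffices to establish, for each fixed $i$, the one-dimensional relation $\cov(y_i,\hat{\mu}_i) = \sigma^2\,\E[\partial \hat{\mu}_i/\partial y_i]$. Since $\E(y_i - \mu_i) = 0$, I would first rewrite the covariance as $\cov(y_i,\hat{\mu}_i) = \E[(y_i - \mu_i)\hat{\mu}_i(\bm{y})]$ and then express this expectation as an integral against the joint Gaussian density $\phi(\bm{y};\bm{\mu},\sigma^2 I)$ on $\mathbb{R}^n$.

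The key observation to exploit is the Gaussian score identity $(y_i - \mu_i)\phi(\bm{y}) = -\sigma^2\,\partial \phi/\partial y_i$, which converts the ``extra'' $y_i - \mu_i$ factor into a derivative of the density. Substituting yields
\begin{align*}
\cov(y_i,\hat{\mu}_i) \;=\; -\sigma^2 \int_{\mathbb{R}^n} \hat{\mu}_i(\bm{y})\, \frac{\partial \phi}{\partial y_i}(\bm{y})\, d\bm{y}.
\end{align*}
Using Fubini to peel off the $y_i$ variable, I would apply integration by parts on the inner one-dimensional integral. The almost-differentiable hypothesis is precisely tailored to make this legal: it guarantees that $\hat{\mu}_i$, restricted to almost every line parallel to the $i$-th coordinate axis, is absolutely continuous with a locally integrable derivative that agrees a.e.\ with $\partial \hat{\mu}_i/\partial y_i$, so the one-dimensional fundamental theorem of calculus applies on those slices. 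The Gaussian tail decay of $\phi$ forces the boundary contribution $[\hat{\mu}_i\phi]_{y_i=-\infty}^{y_i=\infty}$ to vanish, leaving
\begin{align*}
\cov(y_i,\hat{\mu}_i) \;=\; \sigma^2 \int_{\mathbb{R}^n} \frac{\partial \hat{\mu}_i}{\partial y_i}(\bm{y})\,\phi(\bm{y})\,d\bm{y} \;=\; \sigma^2\,\E\!\left[\frac{\partial \hat{\mu}_i}{\partial y_i}\right].
\end{align*}
Summing over $i = 1,\dots,n$ and dividing by $\sigma^2$ then delivers the stated identity \eqref{steinDf}.

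The principal obstacle is carrying out this integration by parts rigorously under only the almost-differentiable hypothesis rather than classical $C^1$ smoothness. My plan is to handle this by the standard Sobolev-style argument: almost differentiability is defined so that $\hat{\mu}_i$ can be recovered as the line integral of its weak partial along each coordinate axis from a reference point, up to a measure-zero exceptional set; this permits the one-dimensional integration by parts on those slices where the property holds, and Fubini reassembles the full $n$-dimensional integral. The vanishing of the boundary terms requires some mild tempered growth of $\hat{\mu}_i$ relative to the Gaussian tails, which is implicit in the finiteness of $\E(\nabla \cdot \hat{\bm{\mu}})$ and of the covariance on the left-hand side. Once these integrability issues are disposed of, the computation reduces to the clean score-identity manipulation sketched above.
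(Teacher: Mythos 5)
The paper does not prove this lemma at all --- it is quoted verbatim as a known result of \cite{stein1981estimation}, so there is no internal proof to compare against. Your sketch is the standard argument for that classical result and is correct in outline: reduce to $\cov(y_i,\hat{\mu}_i)=\E[(y_i-\mu_i)\hat{\mu}_i(\bm{y})]$, invoke the Gaussian score identity $(y_i-\mu_i)\phi=-\sigma^2\,\partial\phi/\partial y_i$, and integrate by parts along each coordinate line, with almost differentiability supplying exactly the absolute continuity on almost every line needed to make the one-dimensional fundamental theorem of calculus available.

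The one place where your plan is looser than the rigorous version is the disposal of the boundary terms. Finiteness of $\E(\nabla\cdot\hat{\bm{\mu}})$ and of the covariance does not by itself force $\hat{\mu}_i(\bm{y})\phi(\bm{y})\to 0$ along almost every line as $y_i\to\pm\infty$; an integrable function times a Gaussian can still fail to vanish along a sparse sequence, and ``mild tempered growth'' is not among the stated hypotheses. Stein's own argument avoids boundary terms entirely: writing $\hat{\mu}_i(\bm{y})-\hat{\mu}_i(\bm{y}_0)=\int \partial_i\hat{\mu}_i$ along the $i$-th coordinate line, applying Fubini to the resulting double integral, and using the tail identity $\int_z^{\infty}(t-\mu_i)\phi_i(t)\,dt=\sigma^2\phi_i(z)$ to reassemble the score factor. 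If you replace the integration-by-parts step with this Fubini computation (under the integrability hypothesis $\E|\partial_i\hat{\mu}_i|<\infty$ for each $i$), your proof becomes complete; as written it has a small but genuine gap at the boundary-term step.
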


Therefore, \eqref{df} and \eqref{steinDf} imply that
\begin{align}
\hat{df}(\bm{\mu})=\nabla \cdot \hat{\bm{\mu}}
\label{steinDf2}
\end{align}
is an unbiased estimate of the degrees of freedom if $\hat{\bm{\mu}}$ is almost differentiable. Next, we will first find the fitted function $\hat{\bm{\mu}}$ in the GEN problem \eqref{generalGEN}, and then show it is almost differentiable. Lastly, we find an explicit form of $\nabla \cdot \hat{\bm{\mu}}$ as the unbiased estimator of the (effective) degrees of freedom.

\begin{lemma}[Solution to GEN] \label{genSolution}
The GEN problem \eqref{generalGEN} can be written as a lasso-type problem with the augmented dataset $
\tilde{\bm{y}}=\begin{bmatrix}
\bm{y}_{n\times1}\\
\bm{0}_{m\times 1}
\end{bmatrix}
$
and 
$
\tilde{\bm{X}}=\begin{bmatrix}
\bm{X}_{n\times p}\\
\sqrt{n\lambda_2}\cdot \bm{D}_{m\times p}
\end{bmatrix}
$:
\begin{align*}
\hat{\bm{\beta}} = \min\limits_{\bm{\beta}} \frac{1}{n}  \| \tilde{\bm{y}} - \tilde{\bm{X}} \bm{\beta} \| ^2
 + \lambda_1 \| \bm{\beta}  \|_1 . 
\end{align*}
Suppose that $\lambda_1$ and $\lambda_2$ are not the transition points where the active set $\mathcal{A}$ changes. The coefficient estimate is given by
\begin{align}
\hat{\bm{\beta}}_{\lambda_1,\lambda_2} & =  \left( \tilde{\bm{X}}^{\top}_{\mathcal{A}}  \tilde{\bm{X}}_{\mathcal{A}} \right)^{-1} \left( \tilde{\bm{X}}^{\top}_{\mathcal{A}} \tilde{\bm{y}} -\dfrac{\lambda_1}{2} \sign(\hat{\bm{\beta}}_{\mathcal{A}})  \right) \nonumber \\
&= \left( \bm{X}^{\top}_{\mathcal{A}} \bm{X}_{\mathcal{A}} + n\lambda_2 \bm{D}^{\top}_{\mathcal{A}} \bm{D}_{\mathcal{A}} \right)^{-1} \left( \bm{X}^{\top}_{\mathcal{A}} \bm{y} -\dfrac{\lambda_1}{2} \sign(\hat{\bm{\beta}}_{\mathcal{A}})    \right),
\label{GGENsolution}
\end{align}
and the fitted function is
\begin{align}
\hat{\bm{\mu}} ( \bm{y}) & =\bm{X}_{\mathcal{A}} \hat{\bm{\beta}}_{\lambda_1,\lambda_2} \nonumber \\
&=\bm{X}_{\mathcal{A}} \left( \bm{X}^{\top}_{\mathcal{A}} \bm{X}_{\mathcal{A}} + n\lambda_2 \bm{D}^{\top}_{\mathcal{A}} \bm{D}_{\mathcal{A}} \right)^{-1} \left( \bm{X}^{\top}_{\mathcal{A}} \bm{y} -\dfrac{\lambda_1}{2} \sign(\hat{\bm{\beta}}_{\mathcal{A}})    \right).  \label{GGENfit}
\end{align}
\end{lemma}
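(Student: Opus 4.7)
The plan is to prove the two claims of the lemma in sequence: first, verify by direct algebraic manipulation that the GEN objective equals the LASSO objective on the augmented data $(\tilde{\bm{y}},\tilde{\bm{X}})$; second, derive the closed-form expression on the active set by applying the subgradient (KKT) optimality conditions to this LASSO problem. The explicit fitted function \eqref{GGENfit} then falls out of \eqref{GGENsolution} by applying $\hat{\bm{\mu}}(\bm{y})=\bm{X}_{\mathcal{A}}\hat{\bm{\beta}}_{\mathcal{A}}$.

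For the equivalence, I would exploit the block structure
\[
\tilde{\bm{y}}=\begin{bmatrix}\bm{y}\\\bm{0}\end{bmatrix},\qquad
\tilde{\bm{X}}=\begin{bmatrix}\bm{X}\\ \sqrt{n\lambda_2}\,\bm{D}\end{bmatrix},
\]
and simply expand
\[
\|\tilde{\bm{y}}-\tilde{\bm{X}}\bm{\beta}\|^2 = \|\bm{y}-\bm{X}\bm{\beta}\|^2 + n\lambda_2\|\bm{D}\bm{\beta}\|^2.
\]
Dividing by $n$ and adding $\lambda_1\|\bm{\beta}\|_1$ reproduces the GEN objective in \eqref{generalGEN}, so the two minimization problems share the same minimizer.

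For the closed-form on the active set, I would write the subgradient optimality condition for the LASSO problem: at a minimizer $\hat{\bm{\beta}}$, there exists a subgradient $\bm{s}\in\partial\|\hat{\bm{\beta}}\|_1$ with
\[
-\tfrac{2}{n}\tilde{\bm{X}}^{\top}\!\bigl(\tilde{\bm{y}}-\tilde{\bm{X}}\hat{\bm{\beta}}\bigr) + \lambda_1\bm{s} = \bm{0}.
\]
Since $(\lambda_1,\lambda_2)$ is assumed not to lie at a transition point, the active set $\mathcal{A}$ is locally constant and $\bm{s}_{\mathcal{A}}=\sign(\hat{\bm{\beta}}_{\mathcal{A}})$ is uniquely determined. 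Restricting to the active coordinates, noting that $\hat{\bm{\beta}}_{\mathcal{A}^c}=\bm{0}$, and solving the resulting linear system yields the first line of \eqref{GGENsolution} (up to a constant absorbed into $\lambda_1$). Substituting $\tilde{\bm{X}}^{\top}_{\mathcal{A}}\tilde{\bm{X}}_{\mathcal{A}}=\bm{X}^{\top}_{\mathcal{A}}\bm{X}_{\mathcal{A}}+n\lambda_2\bm{D}^{\top}_{\mathcal{A}}\bm{D}_{\mathcal{A}}$ and $\tilde{\bm{X}}^{\top}_{\mathcal{A}}\tilde{\bm{y}}=\bm{X}^{\top}_{\mathcal{A}}\bm{y}$ gives the second line, and left-multiplying by $\bm{X}_{\mathcal{A}}$ delivers \eqref{GGENfit}.

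The main subtlety, rather than obstacle, is justifying that $\bm{X}^{\top}_{\mathcal{A}}\bm{X}_{\mathcal{A}}+n\lambda_2\bm{D}^{\top}_{\mathcal{A}}\bm{D}_{\mathcal{A}}$ is invertible so that the system has a unique solution. I would argue this by noting that $\bm{X}^{\top}_{\mathcal{A}}\bm{X}_{\mathcal{A}}$ and $\bm{D}^{\top}_{\mathcal{A}}\bm{D}_{\mathcal{A}}$ are both positive semidefinite and that the ``non-transition-point'' assumption on $(\lambda_1,\lambda_2)$ is exactly what prevents the sum from being singular on the active coordinates; otherwise the active set could not be locally stable. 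A secondary book-keeping concern is the constant in front of $\sign(\hat{\bm{\beta}}_{\mathcal{A}})$, which depends on whether the $1/n$ in the loss is absorbed into $\lambda_1$; I would state this convention explicitly to match the form displayed in \eqref{GGENsolution}.
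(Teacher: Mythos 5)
Your proposal is correct and follows exactly the route the paper intends: the lemma is stated in the supplement without proof, the authors relying on the augmented-data reformulation plus the standard LASSO/KKT characterization of \cite{zou2007degrees}, which is precisely what you spell out. Your flag about the constant in front of $\sign(\hat{\bm{\beta}}_{\mathcal{A}})$ is well taken---with the $1/n$ in the loss the KKT system actually yields $n\lambda_1/2$ rather than $\lambda_1/2$, so the displayed formula implicitly absorbs the $1/n$ into $\lambda_1$; since that term does not depend on $\bm{y}$ it is immaterial for the degrees-of-freedom computation the lemma feeds into, and your only slight overreach is attributing invertibility of $\bm{X}^{\top}_{\mathcal{A}}\bm{X}_{\mathcal{A}}+n\lambda_2\bm{D}^{\top}_{\mathcal{A}}\bm{D}_{\mathcal{A}}$ to the non-transition-point assumption, whereas the paper simply assumes it (and regularizes with $\eta\bm{I}$ in practice when it fails).
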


\begin{lemma}[Lipschitz continuity of $\hat{\bm{\mu}}$] \label{ADiff}
The GEN fitted function $\hat{\bm{\mu}}$ in \eqref{GGENfit} is 1-Lipschitz i.e., $\| \hat{\bm{\mu}}(\bm{y}+\Delta \bm{y}) - \hat{\bm{\mu}}(\bm{y})  \| \leq  \|  \Delta \bm{y} \| $ for sufficiently small $\Delta \bm{y}$.
\end{lemma}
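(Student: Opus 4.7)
The plan is to reduce the Lipschitz claim to a simple fact about an orthogonal projection. Away from transition points of $(\lambda_1,\lambda_2)$, the active set $\mathcal{A}$ and the sign pattern $\sign(\hat{\bm{\beta}}_{\mathcal{A}})$ appearing in the closed form \eqref{GGENfit} are locally constant in $\bm{y}$, so $\hat{\bm{\mu}}$ becomes locally affine with a fixed ``slope'' matrix $\bm{H}$. It then suffices to show $\|\bm{H}\|_{\mathrm{op}}\le 1$, which will follow by embedding $\bm{H}$ as a principal block of the projector built from the augmented design $\tilde{\bm{X}}$ of Lemma~\ref{genSolution}.

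First, I would make the local affinity precise. The augmented squared-error term in Lemma~\ref{genSolution} makes the GEN objective strictly convex on $\mathrm{col}(\tilde{\bm{X}}_{\mathcal{A}})$ (indeed, invertibility of $\tilde{\bm{X}}_{\mathcal{A}}^{\top}\tilde{\bm{X}}_{\mathcal{A}}$ is already built into \eqref{GGENsolution}), so the minimizer is unique and depends continuously on $\bm{y}$. Writing the KKT conditions for the augmented LASSO formulation, every inactive coordinate satisfies a strict subgradient inequality while each active coordinate carries a definite sign. Both conditions are preserved under a small perturbation $\bm{y}\mapsto\bm{y}+\Delta\bm{y}$, so for $\|\Delta\bm{y}\|$ small enough, $\mathcal{A}$ and $\sign(\hat{\bm{\beta}}_{\mathcal{A}})$ remain unchanged. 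Substituting into \eqref{GGENfit} and differencing, the sign-dependent additive constant cancels and leaves
\begin{equation*}
\hat{\bm{\mu}}(\bm{y}+\Delta\bm{y}) - \hat{\bm{\mu}}(\bm{y}) \;=\; \bm{H}\,\Delta\bm{y}, \qquad \bm{H} \;:=\; \bm{X}_{\mathcal{A}}\bigl(\bm{X}^{\top}_{\mathcal{A}}\bm{X}_{\mathcal{A}} + n\lambda_2\bm{D}^{\top}_{\mathcal{A}}\bm{D}_{\mathcal{A}}\bigr)^{-1}\bm{X}^{\top}_{\mathcal{A}}.
\end{equation*}

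Next, I would bound $\|\bm{H}\|_{\mathrm{op}}$ using the augmented design $\tilde{\bm{X}}_{\mathcal{A}} = \bigl(\bm{X}_{\mathcal{A}}^{\top},\,\sqrt{n\lambda_2}\,\bm{D}_{\mathcal{A}}^{\top}\bigr)^{\top}$ from Lemma~\ref{genSolution}. Since $\tilde{\bm{X}}_{\mathcal{A}}^{\top}\tilde{\bm{X}}_{\mathcal{A}} = \bm{X}^{\top}_{\mathcal{A}}\bm{X}_{\mathcal{A}} + n\lambda_2\bm{D}^{\top}_{\mathcal{A}}\bm{D}_{\mathcal{A}}$, the matrix $\bm{H}$ is exactly the upper-left $n\times n$ block of the orthogonal projector $\bm{P} = \tilde{\bm{X}}_{\mathcal{A}}(\tilde{\bm{X}}_{\mathcal{A}}^{\top}\tilde{\bm{X}}_{\mathcal{A}})^{-1}\tilde{\bm{X}}_{\mathcal{A}}^{\top}$ onto $\mathrm{col}(\tilde{\bm{X}}_{\mathcal{A}})$. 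For any $\bm{v}\in\mathbb{R}^{n}$, padding to $\tilde{\bm{v}} = (\bm{v}^{\top},\bm{0}^{\top})^{\top}\in\mathbb{R}^{n+m}$ gives $\tilde{\bm{X}}_{\mathcal{A}}^{\top}\tilde{\bm{v}} = \bm{X}^{\top}_{\mathcal{A}}\bm{v}$, so the top $n$ components of $\bm{P}\tilde{\bm{v}}$ are precisely $\bm{H}\bm{v}$. Hence
\begin{equation*}
\|\bm{H}\bm{v}\|^{2} \;\le\; \|\bm{P}\tilde{\bm{v}}\|^{2} \;\le\; \|\tilde{\bm{v}}\|^{2} \;=\; \|\bm{v}\|^{2},
\end{equation*}
using $\|\bm{P}\|_{\mathrm{op}}=1$. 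Combining with the first step gives $\|\hat{\bm{\mu}}(\bm{y}+\Delta\bm{y})-\hat{\bm{\mu}}(\bm{y})\|=\|\bm{H}\Delta\bm{y}\|\le\|\Delta\bm{y}\|$, which is the 1-Lipschitz claim.

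The main obstacle I anticipate is the first step rather than the second. Rigorously verifying local constancy of $(\mathcal{A},\sign(\hat{\bm{\beta}}_{\mathcal{A}}))$ requires a careful continuity/perturbation argument: uniqueness of the GEN minimizer, continuity of $\bm{y}\mapsto\hat{\bm{\beta}}(\bm{y})$, and a check that the set of ``transition'' values of $\bm{y}$ (where some inactive KKT inequality becomes tight, or some active coefficient crosses zero) is a finite union of hyperplanes and hence negligible. These are standard but slightly technical facts from the LASSO / generalized-LASSO literature, and are exactly what the qualifier ``$\lambda_1$ and $\lambda_2$ are not transition points'' in Lemma~\ref{genSolution} is designed to absorb. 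The projection-norm inequality of the second step, by contrast, is elementary once the augmented reformulation is in place.
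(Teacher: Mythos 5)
Your proposal is correct and follows essentially the same route as the paper's proof: both pass to the augmented design $\tilde{\bm{X}}_{\mathcal{A}}$ of Lemma~\ref{genSolution}, observe that the relevant linear map is the restriction to the first $n$ coordinates of the orthogonal projector $\tilde{\bm{X}}_{\mathcal{A}}(\tilde{\bm{X}}_{\mathcal{A}}^{\top}\tilde{\bm{X}}_{\mathcal{A}})^{-1}\tilde{\bm{X}}_{\mathcal{A}}^{\top}$ applied to the zero-padded perturbation, and conclude via $\|\bm{P}\|_{\mathrm{op}}\le 1$ together with local constancy of $\mathcal{A}$ away from transition points. Your extra KKT discussion of why $(\mathcal{A},\sign(\hat{\bm{\beta}}_{\mathcal{A}}))$ is locally constant only elaborates on what the paper simply asserts.
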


\begin{proof}
Define a mapping $\tau: \mathbb{R}^{n} \rightarrow  \mathbb{R}^{n+m}$ such that 
\begin{align}
\tau (\bm{y}) 
&= \tilde{\bm{X}}_{\mathcal{A}} \hat{\bm{\beta}}_{\lambda_1,\lambda_2}  \nonumber \\
&=  \tilde{\bm{X}}_{\mathcal{A}}  \left( \tilde{\bm{X}}^{\top}_{\mathcal{A}}  \tilde{\bm{X}}_{\mathcal{A}} \right)^{-1} \left( \tilde{\bm{X}}^{\top}_{\mathcal{A}} \tilde{\bm{y}} -\dfrac{\lambda_1}{2} \sign(\hat{\bm{\beta}}_{\mathcal{A}})  \right) ,
\end{align}
where $\tilde{\bm{X}}_{\mathcal{A}}$ and $\tilde{\bm{y}}$ have been defined in Lemma~\ref{genSolution}. Then $\hat{\bm{\mu}} ( \bm{y})$ is the first $n$ components in $\tau (\bm{y}) $, i.e., $\hat{\bm{\mu}} ( \bm{y}) = T \cdot \tau (\bm{y})$ where $T:=[\bm{I}_{n\times n}, \bm{0}_{n\times m}]$.

Since $\lambda_1$ and $\lambda_2$ are not the transition points, the active set $\mathcal{A}$ stays constant for sufficient small $\Delta y$. Then we have
\begin{align}
\| \tau(\bm{y}+\Delta \bm{y}) - \tau(\bm{y})  \| & =\| \tilde{\bm{X}}_{\mathcal{A}}  \left( \tilde{\bm{X}}^{\top}_{\mathcal{A}}  \tilde{\bm{X}}_{\mathcal{A}} \right)^{-1} \left( \tilde{\bm{X}}^{\top}_{\mathcal{A}} \tilde{\bm{y}} -\dfrac{\lambda_1}{2} \sign(\hat{\bm{\beta}}_{\mathcal{A}})  \right) \nonumber \\ 
&-\tilde{\bm{X}}_{\mathcal{A}}  \left( \tilde{\bm{X}}^{\top}_{\mathcal{A}}  \tilde{\bm{X}}_{\mathcal{A}} \right)^{-1} \left( \tilde{\bm{X}}^{\top}_{\mathcal{A}} (\bm{y}+\Delta \bm{y}) -\dfrac{\lambda_1}{2} \sign(\hat{\bm{\beta}}_{\mathcal{A}})  \right)  \| \nonumber \\
&= \| \tilde{\bm{X}}_{\mathcal{A}}  \left( \tilde{\bm{X}}^{\top}_{\mathcal{A}}  \tilde{\bm{X}}_{\mathcal{A}} \right)^{-1} \tilde{\bm{X}}^{\top}_{\mathcal{A}} \Delta \bm{y} \| \nonumber \\
&\leq \| \Delta \bm{y} \| \label{CL2},
\end{align}
where the last relation holds since $ \tilde{\bm{X}}_{\mathcal{A}}  \left( \tilde{\bm{X}}^{\top}_{\mathcal{A}}  \tilde{\bm{X}}_{\mathcal{A}} \right)^{-1} \tilde{\bm{X}}^{\top}_{\mathcal{A}}$ is a projection matrix.
Therefore, $\tau$ is Lipschitz continuous. 

 Now, we can show that $\hat{\bm{\mu}}$ is also Lipschitz continuous:
\begin{align*}
\| \hat{\mu}(\bm{y}+\Delta \bm{y}) - \hat{\mu}(\bm{y})  \| & = \| T \cdot \left( \hat{\tau}(\bm{y}+\Delta \bm{y}) - \hat{\tau}(\bm{y}) \right) \| \\
&\leq \| T  \| \cdot \| \Delta \bm{y} \|  \\
& =  \| \Delta \bm{y} \|.
\end{align*}
\end{proof}

\begin{lemma}[Lipschitz continuity and differentiability \citep{meyer2000degrees}] \label{LCandDiff}
\textit{Any Lipschitz continuous function is almost differentiable. }
\end{lemma}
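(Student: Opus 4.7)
The plan is to unpack Stein's notion of ``almost differentiable'' and observe that it is essentially automatic for any Lipschitz map. Recall that a function $g : \mathbb{R}^n \to \mathbb{R}$ is almost differentiable (in the sense used in Stein's Lemma, Lemma~\ref{steinLemma}) if, for each coordinate direction $\bm{e}_i$, the one-dimensional slice $t \mapsto g(\bm{x} + t\bm{e}_i)$ is absolutely continuous on every compact interval for almost every $\bm{x} \in \mathbb{R}^n$; for vector-valued maps, the requirement is imposed component by component.

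First I would note that Lipschitz continuity descends directly to every one-dimensional slice: if $g$ is $L$-Lipschitz on $\mathbb{R}^n$, then for every $\bm{x}$ and every coordinate direction $\bm{e}_i$,
\[
|g(\bm{x} + s\bm{e}_i) - g(\bm{x} + t\bm{e}_i)| \le L|s - t|,
\]
so the slice is itself $L$-Lipschitz as a function of $t \in \mathbb{R}$. Next I would invoke the classical one-variable fact that every Lipschitz function on $\mathbb{R}$ is absolutely continuous on compact intervals (an immediate $\varepsilon$--$\delta$ check with $\delta = \varepsilon/L$ applied to the definition of absolute continuity). Combining these two steps, the absolute-continuity condition actually holds for \emph{every} $\bm{x}$ and every coordinate direction, which is strictly stronger than the ``almost every $\bm{x}$'' clause. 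Applying this argument coordinate by coordinate to the vector-valued Lipschitz map $\hat{\bm{\mu}} : \mathbb{R}^n \to \mathbb{R}^n$ from Lemma~\ref{ADiff} (each component inheriting the same Lipschitz constant from the whole) then yields the lemma.

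The only real obstacle here is definitional housekeeping: one must make sure that the notion of ``almost differentiable'' used by Stein in Lemma~\ref{steinLemma} matches the line-by-line absolute-continuity formulation above, rather than some stronger almost-everywhere differentiability statement (\`a la Rademacher). Since this equivalence is classical and is spelled out explicitly in \cite{meyer2000degrees}, the cleanest write-up is either a direct citation of their statement or a short parenthetical reconciliation followed by the two-line verification sketched above.
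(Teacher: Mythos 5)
The paper does not actually prove this lemma: it is imported verbatim from \cite{meyer2000degrees} as a black box, so there is no in-paper argument to compare against. Your sketch is the standard justification and is essentially sound: Lipschitz continuity passes to every one-dimensional slice, Lipschitz functions of one variable are absolutely continuous, and the ``almost every line'' requirement is therefore met with room to spare. Two small points are worth tightening if you write this out. First, Stein's definition of almost differentiability is the integral representation $h(\bm{x}+\bm{z})-h(\bm{x})=\int_0^1 \bm{z}^{\top}\nabla h(\bm{x}+t\bm{z})\,dt$ for a.e.\ $\bm{x}$ and \emph{every} direction $\bm{z}$, which is equivalent to $h\in W^{1,1}_{\mathrm{loc}}$; recovering it from coordinate-direction absolute continuity (the ACL characterization) also requires that the a.e.-defined partial derivatives be locally integrable. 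For an $L$-Lipschitz map this is immediate---the partials are bounded by $L$ wherever they exist---but the sentence should be there, since absolute continuity along coordinate lines alone does not characterize $W^{1,1}_{\mathrm{loc}}$. Second, you correctly flag the definitional housekeeping yourself; given that the paper's stated purpose for the lemma is only to feed Lemma~\ref{steinLemma} via the Lipschitz bound of Lemma~\ref{ADiff}, either your two-line argument (with the integrability remark added) or the bare citation the authors chose is adequate, and your version has the advantage of making the logical dependence on Rademacher-type facts explicit rather than hidden in a reference.
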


By local constancy of the active set $\mathcal{A}$ and Equation \eqref{GGENfit}, it is straighforward to show
\begin{align}
\nabla \cdot \hat{\bm{\mu}} = \sum\limits_{i=1}^n \dfrac{\partial \hat{\mu}_i(\bm{y})}{\partial \bm{y}_i} = Tr[\bm{X}_{\mathcal{A}} \left( \bm{X}^{\top}_{\mathcal{A}} \bm{X}_{\mathcal{A}} + n\lambda_2 \bm{D}^{\top}_{\mathcal{A}} \bm{D}_{\mathcal{A}} \right)^{-1} \bm{X}_{\mathcal{A}}^{\top} ]. \label{gradient}
\end{align}
The above results suggest that it is an unbiased estimator of degrees of freedom of $\hat{\bm{\mu}}$.
Therefore, Theorem~\ref{propGENdf} is established. 

\section{Degrees of freedom in GFL}
\label{subsec:supplement3}

\cite{tibshirani2012degrees} shows that the nullity of a particular penalty matrix is an unbiased estimator
of the degrees of freedom defined in Equation \eqref{df}; 
see Lemma  \ref{rtib_generallasso}. In this section, we provide an explicit form for this quantity in the setting of GFL. 

\begin{lemma}[\cite{tibshirani2012degrees}] \label{rtib_generallasso}
Suppose $\bm{y}_{n \times 1} \sim N(\bm{0},\sigma^2 \bm{I})$, where $\bm{\mu}$ is the mean vector and $\sigma^2$ is the common variance of each component. Given a design matrix $\bm{X}\in \mathbb{R}^{n\times p}$ of full column rank and a tuning parameter $\lambda$, we consider the generalized lasso problem
\begin{align}
\hat{\bm{\beta}} = \argmin\limits_{\bm{\beta}\in  \mathbb{R}^{ p}} \left\{\frac{1}{n}  \| \bm{y} -    \bm{X} \bm{\beta} \| ^2
 + \lambda \| \bm{F} \bm{\beta}  \|_1\right\},  \label{generallasso}
\end{align}
where $\bm{F}_{m\times p}$ is an arbitrary penalty matrix. Then
\begin{align}
df( \bm{X}\hat{\bm{\beta}})= \E[ nullity(\bm{F}_{-\mathcal{A}}) ], \label{Glassodf}.
\end{align}
\textit{where $\mathcal{A}=\{ i\in \{ 1,\cdots, m \}: (\bm{F} \hat{\bm{\beta}} )_i \neq 0 \}$.}
\end{lemma}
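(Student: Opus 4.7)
The plan is to mirror the Stein-optimism strategy used for Theorem~\ref{propGENdf}, but with the quadratic penalty $n\lambda_2\|\bm{D}\bm{\beta}\|^2$ replaced by the polyhedral penalty $\lambda\|\bm{F}\bm{\beta}\|_1$. First I would invoke Lemmas~\ref{theoEfrom} and~\ref{steinLemma} to write $df(\bm{X}\hat{\bm{\beta}}) = \E[\nabla\cdot\hat{\bm{\mu}}]$ for $\hat{\bm{\mu}}(\bm{y}) = \bm{X}\hat{\bm{\beta}}(\bm{y})$, provided $\hat{\bm{\mu}}$ is almost differentiable. The KKT system of the generalized lasso partitions $\bm{y}$-space into polyhedral regions on each of which $\hat{\bm{\beta}}$ (and hence $\hat{\bm{\mu}}$) is affine, and the solution map is continuous across region boundaries, so $\hat{\bm{\mu}}$ is piecewise linear and Lipschitz on compact sets; Lemma~\ref{LCandDiff} then supplies almost differentiability.

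Next I would carry out the local computation of $\nabla\cdot\hat{\bm{\mu}}$. Fix $\bm{y}$ in the interior of a region where the active set $\mathcal{A} = \{i:(\bm{F}\hat{\bm{\beta}})_i\neq 0\}$ and its sign pattern $\bm{s}=\text{sign}((\bm{F}\hat{\bm{\beta}})_{\mathcal{A}})$ are constant. By the definition of $\mathcal{A}$ we have $\bm{F}_{-\mathcal{A}}\hat{\bm{\beta}}=\bm{0}$, i.e., $\hat{\bm{\beta}}\in N:=\text{null}(\bm{F}_{-\mathcal{A}})$. Let $\bm{Z}$ have orthonormal columns spanning $N$ and write $\bm{\beta}=\bm{Z}\bm{\alpha}$; the subgradient optimality condition then reduces to an unconstrained quadratic in $\bm{\alpha}$ whose minimizer is
\begin{align*}
\hat{\bm{\alpha}} = \left(\bm{Z}^{\top}\bm{X}^{\top}\bm{X}\bm{Z}\right)^{-1}\left(\bm{Z}^{\top}\bm{X}^{\top}\bm{y} - \tfrac{n\lambda}{2}\bm{Z}^{\top}\bm{F}_{\mathcal{A}}^{\top}\bm{s}\right),
\end{align*}
so $\hat{\bm{\mu}}(\bm{y})=\bm{X}\bm{Z}\hat{\bm{\alpha}}$ is affine in $\bm{y}$ on this region with Jacobian $\bm{X}\bm{Z}(\bm{Z}^{\top}\bm{X}^{\top}\bm{X}\bm{Z})^{-1}\bm{Z}^{\top}\bm{X}^{\top}$. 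Invertibility of $\bm{Z}^{\top}\bm{X}^{\top}\bm{X}\bm{Z}$ is guaranteed by the full-column-rank assumption on $\bm{X}$.

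The divergence is then the trace of this Jacobian, which by the cyclic property of the trace equals $\text{tr}(\bm{I}_{\dim N})=\dim N=\text{nullity}(\bm{F}_{-\mathcal{A}})$. Since this local identity holds on a set of full Lebesgue measure, Stein's Lemma yields $df(\bm{X}\hat{\bm{\beta}}) = \E[\text{nullity}(\bm{F}_{-\mathcal{A}})]$, as claimed.

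The main obstacle I expect is rigorously justifying the measure-zero nature of the boundary set where $\mathcal{A}$ or $\bm{s}$ jumps, so that the pointwise divergence formula integrates correctly. This amounts to a transversality argument showing that the polyhedral decomposition of $\bm{y}$-space induced by the KKT system has only lower-dimensional interfaces---standard for piecewise-linear lasso-type solution maps, but tedious to spell out in the general-$\bm{F}$ setting; in a fuller write-up one would adapt the corresponding result from the lasso or generalized-lasso literature rather than re-derive it from scratch.
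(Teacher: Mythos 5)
This lemma is not proved in the paper at all: it is imported verbatim from \cite{tibshirani2012degrees} (their Theorem 3, specialized to a full-column-rank design), and the supplementary material only proves the downstream Theorem~\ref{gflDfT} that converts $nullity(\bm{F}_{-\mathcal{A}})$ into a count of fused groups. So there is no in-paper argument to compare against; what you have written is a reconstruction of the cited reference's own proof, and as such it is correct in outline and follows the standard route. Your three ingredients---(i) the optimism/Stein reduction $df = \E[\nabla\cdot\hat{\bm{\mu}}]$, (ii) the local affine representation $\hat{\bm{\beta}}=\bm{Z}\hat{\bm{\alpha}}$ with $\bm{Z}$ spanning $\mathrm{null}(\bm{F}_{-\mathcal{A}})$ and Jacobian $\bm{X}\bm{Z}(\bm{Z}^{\top}\bm{X}^{\top}\bm{X}\bm{Z})^{-1}\bm{Z}^{\top}\bm{X}^{\top}$, and (iii) the trace identity giving $\dim N$---are exactly the mechanism in Tibshirani and Taylor, and the full-column-rank hypothesis is used in the right place (it makes $\bm{Z}^{\top}\bm{X}^{\top}\bm{X}\bm{Z}$ invertible and collapses their general answer $\E[\dim(\bm{X}N)]$ to $\E[nullity(\bm{F}_{-\mathcal{A}})]$). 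The two gaps you flag at the end are real and are precisely where the cited paper spends its effort: establishing Lipschitz continuity of the fit for arbitrary $\bm{F}$ (they do this via the dual projection characterization rather than by inspecting the primal KKT pieces), and showing that the primal-defined active set $\mathcal{A}=\mathrm{supp}(\bm{F}\hat{\bm{\beta}})$ and its sign vector are locally constant off a Lebesgue-null union of lower-dimensional polyhedral faces, so that the pointwise divergence identity holds almost everywhere. Since you correctly identify these as the points to be borrowed rather than re-derived, the proposal is a faithful and honest sketch; it just should be presented as a proof of the cited theorem rather than something the present paper establishes.
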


In GFL, the penalty matrix $\lambda\bm{F}$ is composed of two parts: the top part is a difference matrix defined in \eqref{Dmat} multiplied by the first tuning parameter $\lambda_1$ while the lower part is an identity matrix multiplied by another tuning parameter $\lambda_2$. Theorem \ref{gflDfT} establishes a simple expression of \eqref{Glassodf} in GFL. As row operations preserve the null space, in the following calculations the tuning parameters are not considered for the penalty matrix for simplicity. Let $\beta=\left( \beta(1)^{\top},\cdots,\beta(T)^{\top} \right)^{\top}$ where each $\beta(k)$ is a column vector of length $p$, and $\bm{F}$ be a $p(2T-1)$-by-$pT$ matrix, where the first $p(T-1)$ rows constitute a difference matrix and the second $pT$ rows constitute an identity matrix. We rewrite $\bm{F}$ as a block matrix using the $p\times p$ identity matrix $I$:  
\begin{align*}
\bm{F}=
\begin{bmatrix}
I & -I & 0 & 0 & 0 & 0 \\
0 & I & -I & 0 & 0  & 0 \\
 &  & \ddots  & \ddots  &  &  \\
0 & 0 & 0 & I & -I & 0 \\
0 & 0 & 0 & 0 & I & -I \\ \hdashline
I & 0 & 0 & 0 & 0 & 0 \\
0 & I & 0 & 0 & 0  & 0 \\
 &  &  & \ddots  &  &  \\
0 & 0 & 0 & 0 & I & 0 \\
0 & 0 & 0 & 0 & 0 & I \\
\end{bmatrix}.
\end{align*}

\begin{theorem}\label{gflDfT}

Let $\beta_j(k)$ denote the $j$th element of $\beta(k)$, $j = 1, \ldots, p, k = 1, \ldots, T$. If $\bm{F}_{-\mathcal{A}}$ denotes the submatrix of $\bm{F}$ after removing the rows indexed by $\mathcal{A}=\{ i\in \{ 1,\cdots, p(2T-1) \}: (\bm{F} \hat{\bm{\beta}} )_i \neq 0 \}$, then 
\begin{align}
nullity(\bm{F}_{-\mathcal{A}}) =\# \text{fused group}, \label{nullityFG}
\end{align}
where 
\begin{align}
\# \text{fused group}=  \sum\limits_{j=1}^{p}  \left[   \mathbbm{1}\{\hat{\beta}_{j}(1)\neq 0\} + \sum\limits_{k=2}^T \mathbbm{1}\{ \hat{\beta}_{j}(k)\neq \hat{\beta}_{j}(k-1),\ \hat{\beta}_{j}(k)\neq0\} \right]. \label{GFLdfF}
\end{align}
\end{theorem}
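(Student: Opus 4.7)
The plan is to exploit the block structure of $\bm{F}$ to decouple the nullity computation into $p$ independent one-dimensional subproblems, and then carry out a short combinatorial count on each. After an appropriate permutation of rows and columns---namely, reorder the entries of $\beta$ by grouping $(\beta_j(1),\ldots,\beta_j(T))$ for each $j$ and reorder the rows of $\bm{F}$ analogously---$\bm{F}$ becomes block diagonal with $p$ identical $(2T-1)\times T$ blocks $\bm{F}^{(j)}$, where each block has the scalar first-difference operator on top of an identity matrix. Since removing rows from $\bm{F}$ corresponds to removing rows independently within each block, we obtain
$$\mathrm{nullity}\bigl(\bm{F}_{-\mathcal{A}}\bigr) \;=\; \sum_{j=1}^{p}\mathrm{nullity}\bigl(\bm{F}^{(j)}_{-\mathcal{A}_j}\bigr),$$
where $\mathcal{A}_j$ is the per-coordinate active set inherited from $\mathcal{A}$.

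Next, for each fixed $j$ I would identify the surviving rows of $\bm{F}^{(j)}$. From the definition of $\mathcal{A}$, the $k$-th difference row is kept iff $\hat{\beta}_j(k)=\hat{\beta}_j(k-1)$, and the $k$-th identity row is kept iff $\hat{\beta}_j(k)=0$. Hence a vector $v=(v_1,\ldots,v_T)^\top\in\mathbb{R}^T$ lies in the null space of $\bm{F}^{(j)}_{-\mathcal{A}_j}$ exactly when $v_k=v_{k-1}$ whenever $\hat{\beta}_j(k)=\hat{\beta}_j(k-1)$, and $v_k=0$ whenever $\hat{\beta}_j(k)=0$. Now partition $\{1,\ldots,T\}$ into the maximal runs on which $\hat{\beta}_j$ is constant---the fused groups for coordinate $j$. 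The first family of constraints forces $v$ to be constant on each fused group; the second forces $v$ to vanish on any fused group whose common value is zero. Therefore $v$ is parametrized by one free scalar per \emph{nonzero} fused group, and $\mathrm{nullity}(\bm{F}^{(j)}_{-\mathcal{A}_j})$ equals the number of nonzero fused groups for coordinate $j$.

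Finally, I would translate this count into the indicator sum in \eqref{GFLdfF}: a nonzero fused group starts at $k=1$ iff $\hat{\beta}_j(1)\neq 0$, and starts at $k\geq 2$ iff $\hat{\beta}_j(k)\neq \hat{\beta}_j(k-1)$ and $\hat{\beta}_j(k)\neq 0$. Summing the corresponding indicators over $k$ and then over $j$ yields exactly \eqref{GFLdfF}, completing the proof. The main obstacle is not conceptual but organizational: carefully verifying the block-diagonal reduction---that the row/column permutation sends $\mathcal{A}$ to $\bigsqcup_j \mathcal{A}_j$ with no coupling between coordinates---and ensuring that the indicator formula counts each nonzero fused group exactly once, with the $k=1$ boundary case covered by the separate first indicator.
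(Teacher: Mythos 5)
Your proof is correct, but it takes the dual route to the one in the paper. The paper works with the row space: it counts a maximal set of independent rows of $\bm{F}_{-\mathcal{A}}$ to obtain $\mathrm{rank}(\bm{F}_{-\mathcal{A}})$ --- one independent row for each $k$ with $\hat{\beta}_j(k)=0$ and one for each $k\ge 2$ with $\hat{\beta}_j(k)=\hat{\beta}_j(k-1)\neq 0$ --- and then sets $\mathrm{nullity}=pT-\mathrm{rank}$. You instead characterize the null space directly: after the coordinate-wise block-diagonal reduction, a null vector must be constant on each maximal fused run of $\hat{\beta}_j$ and must vanish on any run whose common value is zero, so the null space is parametrized by one free scalar per nonzero fused group. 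Both arguments hinge on the same observation about which difference rows and identity rows survive the removal of $\mathcal{A}$, and both implicitly or explicitly decouple the problem over the $p$ coordinates. What your version buys is that the count of \emph{nonzero fused groups} appears immediately and each group contributes exactly one dimension, with no bookkeeping needed to certify linear independence of the retained rows or to avoid double-counting (the step the paper handles with its slightly delicate ``we will count this row as an independent row, because even if\ldots'' argument). The paper's version avoids the explicit permutation/block-diagonalization step, working directly with the stacked matrix. Your one remaining obligation --- verifying that the row and column permutation sends $\mathcal{A}$ to a disjoint union $\bigsqcup_j \mathcal{A}_j$ --- is immediate from the fact that every row of $\bm{F}$ involves only the entries $(j,k)$ and possibly $(j,k+1)$ for a single $j$, so the decoupling is genuine and the proof is complete.
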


\begin{proof}
The right-hand side of \eqref{GFLdfF} can be written as 
\begin{align*}
 \sum\limits_{j=1}^{p}  \left[ T- \sum\limits_{k=1}^T \mathbbm{1}\{\hat{\beta}_{j}(k)=0\} - \sum\limits_{k=2}^T \mathbbm{1}\{ \hat{\beta}_{j}(k)=\hat{\beta}_{j}(k-1),\ \hat{\beta}_{j}(k)\neq0\} \right].
\end{align*}
Since $nullity(\bm{F}_{-\mathcal{A}})= pT-rank(\bm{F}_{-\mathcal{A}})$, we only need to show
\begin{align}
rank(\bm{F}_{-\mathcal{A}}) =  \sum\limits_{j=1}^{p}  \left[  \sum\limits_{k=1}^T \mathbbm{1}\{\hat{\beta}_{j}(k)=0\} + \sum\limits_{k=2}^T \mathbbm{1}\{ \hat{\beta}_{j}(k)=\hat{\beta}_{j}(k-1),\ \hat{\beta}_{j}(k)\neq0\} \right]. \label{rank}
\end{align}
To calculate the rank of $\bm{F}_{-\mathcal{A}}$, we count the maximum number of its independent rows. 

If $\hat{\beta}_{j}(k)=0$, then the $\left( (k-1)p + j \right)$th row in the lower identity matrix of $\bm{F}$ is preserved in $\bm{F}_{-\mathcal{A}}$, which serves as an independent row as this row with only one component $1$ can eliminate any other non-zero entries in the same column. 

When $\hat{\beta}_{j}(k)\neq 0$, the corresponding row in the lower identity matrix of $\bm{F}$ is removed in $\bm{F}_{-\mathcal{A}}$, which also takes away its only non-zero component $1$ in the $\left( (k-1)p +j \right)$th place (column). Whether there exists any other non-zero entries in the $\left( (k-1)p +j \right)$th column depends on the relations of the pairs $\left(\hat{\beta}_{j}(k-1),\hat{\beta}_{j}(k) \right)$ and $\left( \hat{\beta}_{j}(k),  \hat{\beta}_{j}(k+1) \right)$.
If $\hat{\beta}_{j}(k)=\hat{\beta}_{j}(k-1)$, the $\left( (k-2)p +j \right)$th row in the top difference matrix of $\bm{F}$ stays in $\bm{F}_{-\mathcal{A}}$ providing a $-1$ in the $\left( (k-1)p + j \right)$th column. We will count this row as an independent row.
Because even if $\hat{\beta}_{j}(k)=\hat{\beta}_{j}(k+1)$ which keeps the $\left( (k-1)p+j \right)$th row of $\bm{F}$ in $\bm{F}_{-\mathcal{A}}$, this row will be counted as an independent row when we consider the case of $\hat{\beta}_{j}(k+1)$ where $ \hat{\beta}_{j}(k+1)\neq 0 \text{ and }\hat{\beta}_{j}(k+1)=\hat{\beta}_{j}(k)$.

Hence, the maximum number of the independent rows in $\bm{F}_{-\mathcal{A}}$ is 
$$\sum_{i=1}^{p}  \left[  \sum_{k=1}^T \mathbbm{1}\{\hat{\beta}_{j}(k)=0\} + \sum_{k=2}^T \mathbbm{1}\{ \hat{\beta}_{j}(k)=\hat{\beta}_{j}(k-1),\ \beta_{j}(k)\neq0\} \right].
$$ Therefore, \eqref{rank} holds. Proof is completed. 
\end{proof}

\newpage 
\section{Covariance matrices in Scenario 1 }
\label{subsec:supplement0}

In this section, we illustrate how to select covariance matrices in the first scenario of the simulation. We take the number of basis functions as $S=13$. For the sake of identifiability, we design the covariance structure (or precision matrix) of the random coefficient vector corresponding to each B-spline basis function as follows.

The covariance matrix is rewritten as $\bm{\Sigma}_s = ( \bm{\Sigma}_s^{11} , \bm{\Sigma}_s^{12} ; \bm{\Sigma}_s^{21} , \bm{\Sigma}_s^{22} )$, where the four block submatrice are all diagonal. Under this design, the non-zero entries in the true precision are at the same positions as in $\sum_{s=1}^{S} B_s^2(t) \bm{\Sigma}_s$. Figure \ref{fig:heat map} depicts the heat maps of the 13 covariance matrices. Figure \ref{fig:bspline} display thirteen cubic B-spline basis functions, which indicates that each basis function is nonzero in several subintervals of [0, 1]. 

\begin{figure}[h]
     \centering
     \begin{subfigure}[b]{0.4\textwidth}
         \centering
         \includegraphics[width=1.4\textwidth]{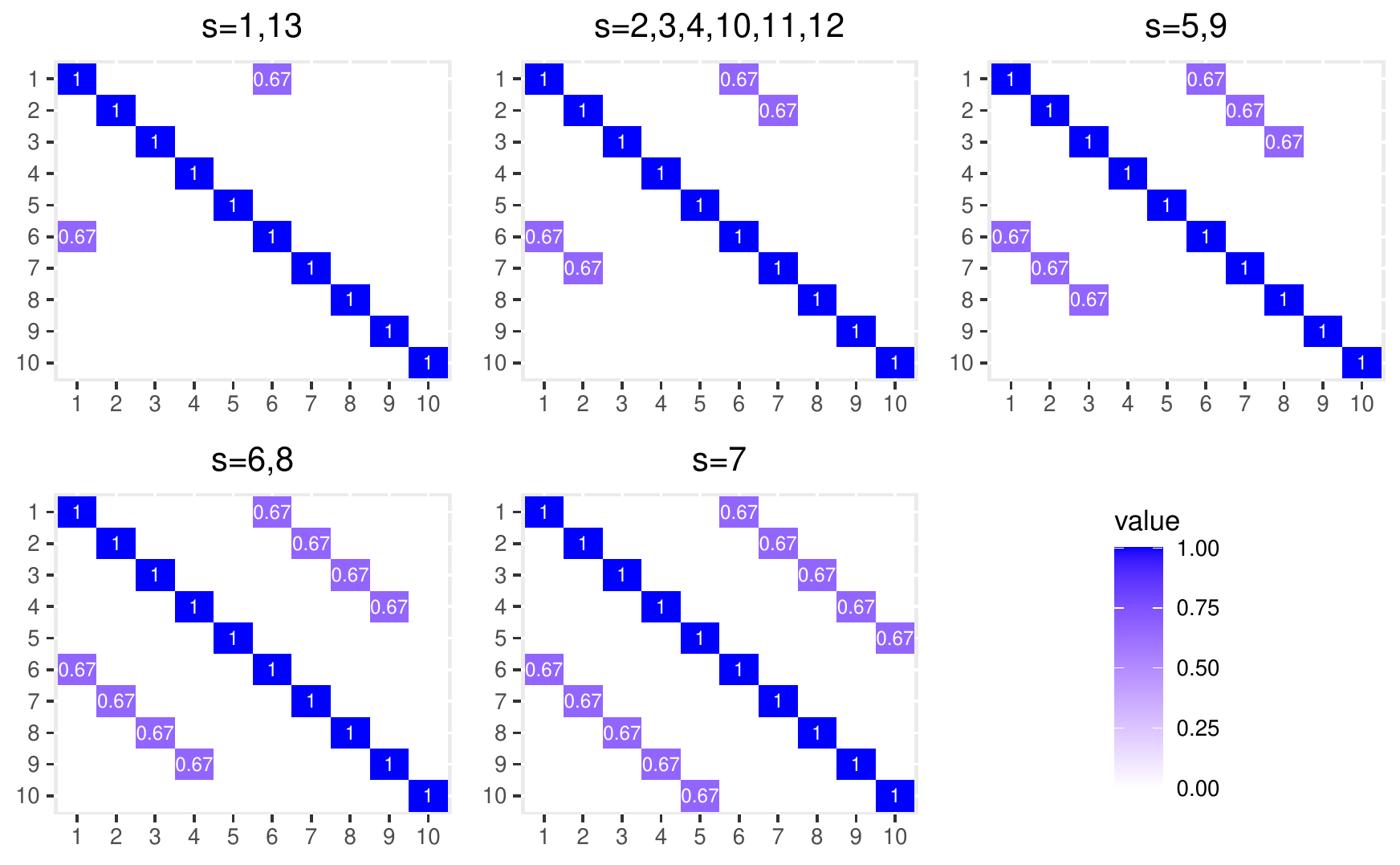}
        \caption{}
         \label{fig:heat map}
     \end{subfigure}
     \hfill
     \begin{subfigure}[b]{0.4\textwidth}
         \centering
         \includegraphics[width=\textwidth]{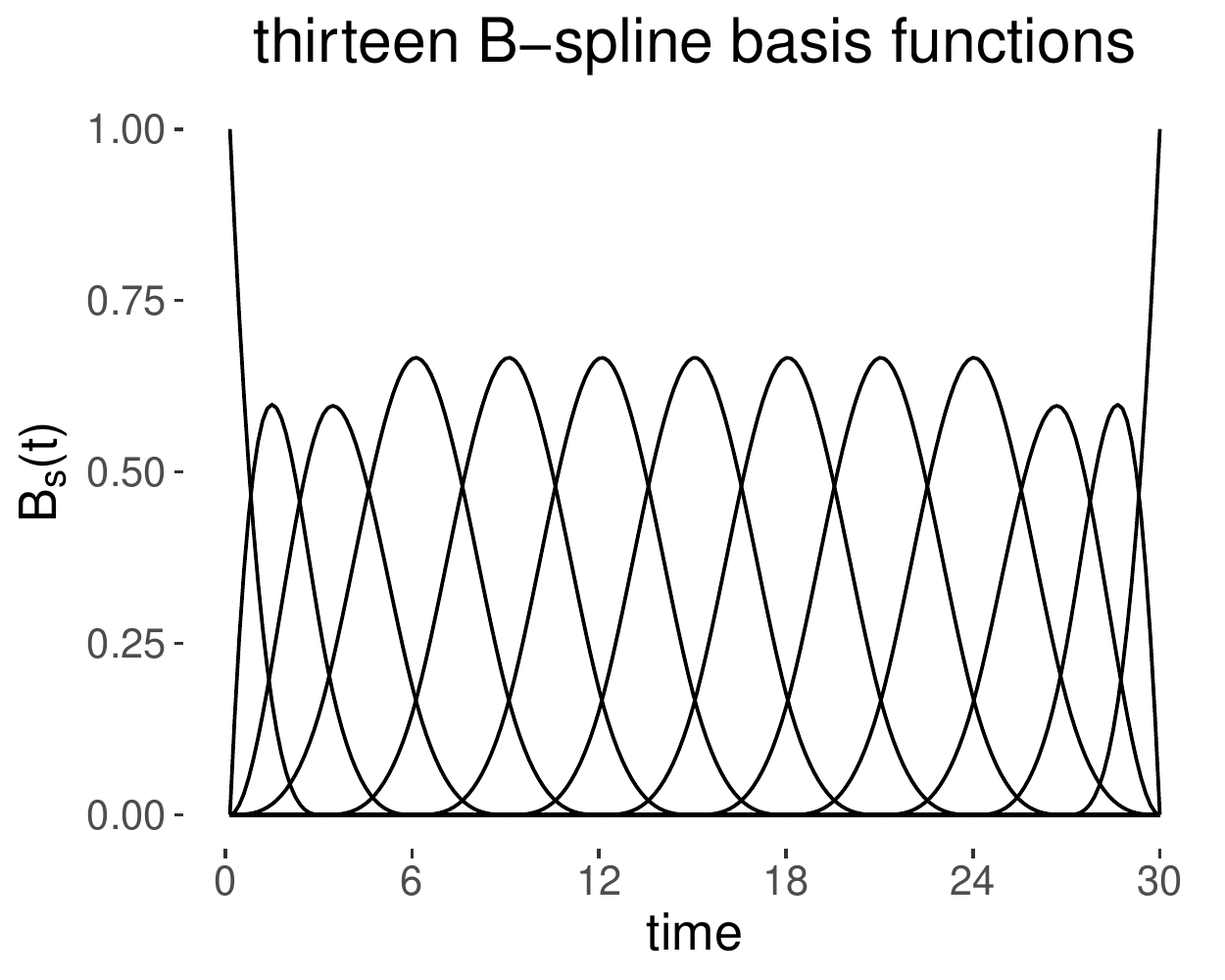}
         \caption{}
         \label{fig:bspline}
     \end{subfigure}
        \caption{(a) Heat maps of the pre-specified covariance matrices $\{ \bm{\Sigma}_s \}_{s=1}^{13}$ corresponding to 13 cubic basis functions.(b) Thirteen B-spline basis functions $\{ B_s (t) \}_{s=1}^{13}$ defined on $ [0,1]$, which is divided into 30 subintervals of equal length.} 
\end{figure}

\newpage

\section{Additional simulation results}
\label{subsec:supplement5}

\begin{figure}[H]
\centering
\begin{subfigure}[b]{\textwidth}
\centering
\includegraphics[width=0.18\linewidth,valign=t]{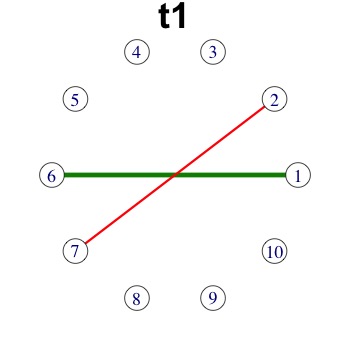}
\includegraphics[width=0.18\linewidth,valign=t]{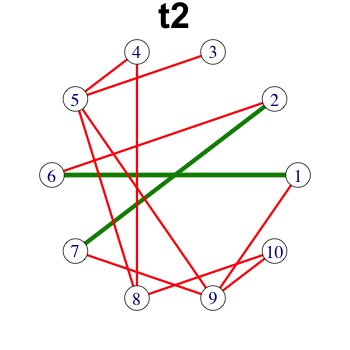}
\includegraphics[width=0.18\linewidth,valign=t]{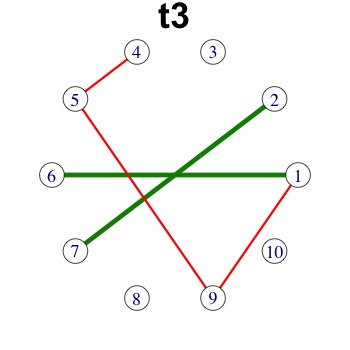}
\includegraphics[width=0.18\linewidth,valign=t]{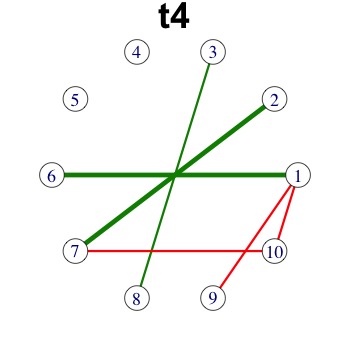}
\includegraphics[width=0.18\linewidth,valign=t]{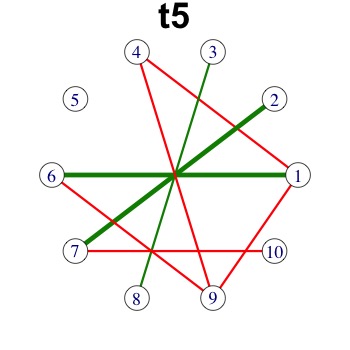}
\includegraphics[width=0.18\linewidth,valign=t]{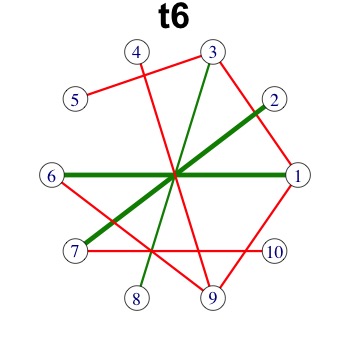}
\includegraphics[width=0.18\linewidth,valign=t]{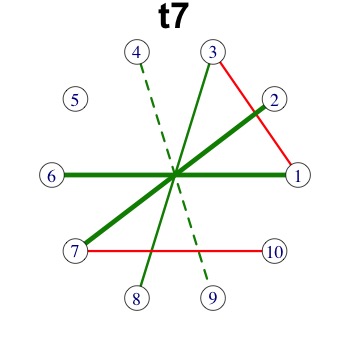}
\includegraphics[width=0.18\linewidth,valign=t]{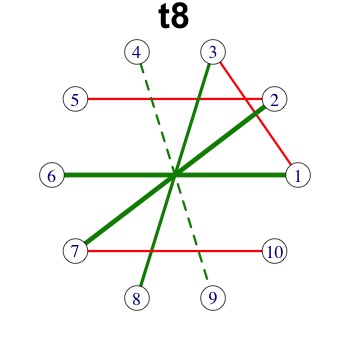}
\includegraphics[width=0.18\linewidth,valign=t]{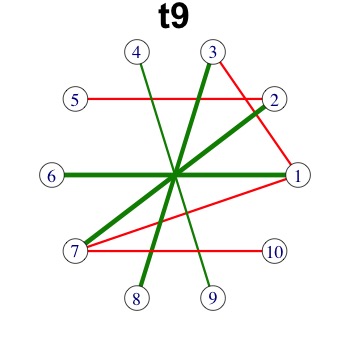}
\includegraphics[width=0.18\linewidth,valign=t]{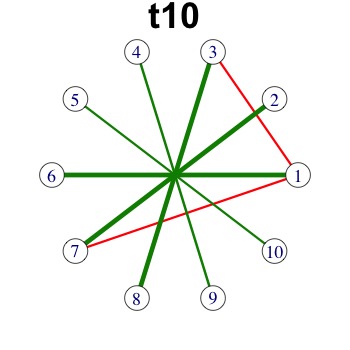}
\includegraphics[width=0.18\linewidth,valign=t]{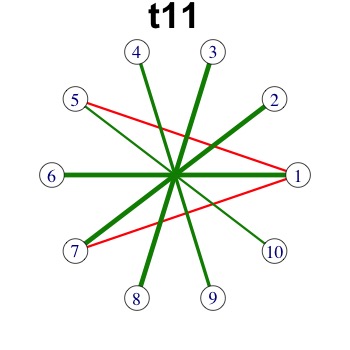}
\includegraphics[width=0.18\linewidth,valign=t]{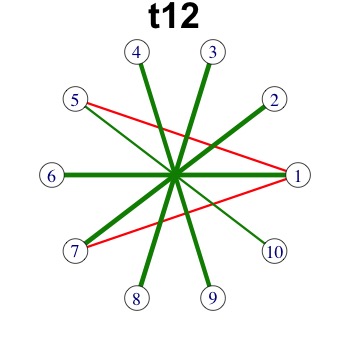}
\includegraphics[width=0.18\linewidth,valign=t]{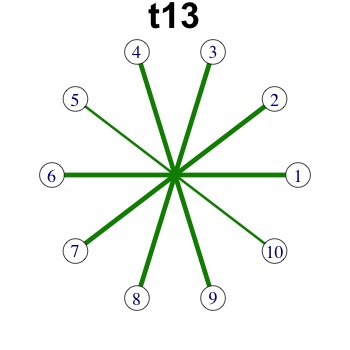}
\includegraphics[width=0.18\linewidth,valign=t]{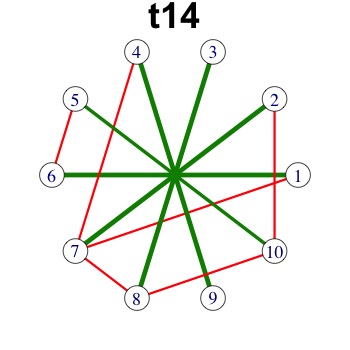}
\includegraphics[width=0.18\linewidth,valign=t]{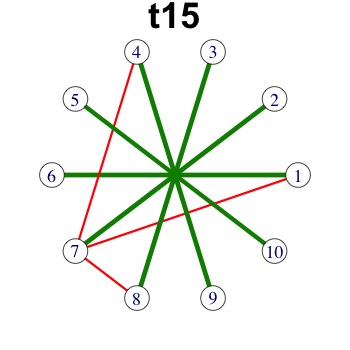}
\includegraphics[width=0.18\linewidth,valign=t]{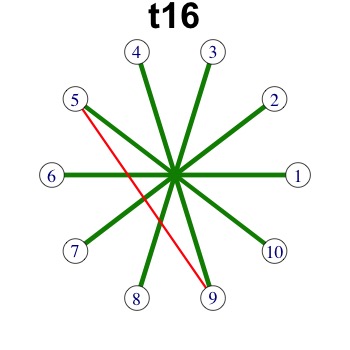}
\includegraphics[width=0.18\linewidth,valign=t]{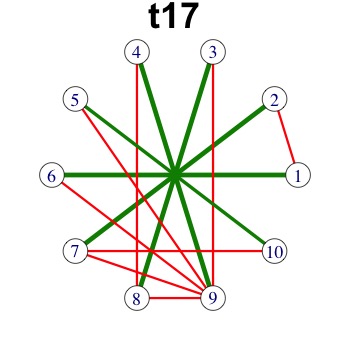}
\includegraphics[width=0.18\linewidth,valign=t]{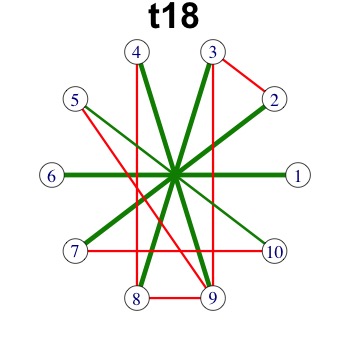}
\includegraphics[width=0.18\linewidth,valign=t]{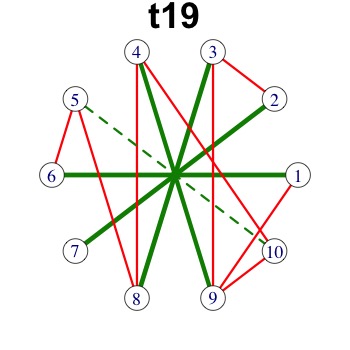}
\includegraphics[width=0.18\linewidth,valign=t]{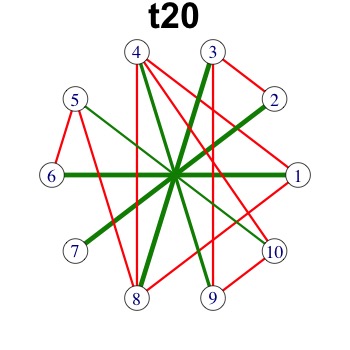}
\includegraphics[width=0.18\linewidth,valign=t]{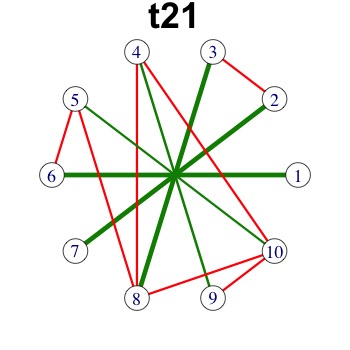}
\includegraphics[width=0.18\linewidth,valign=t]{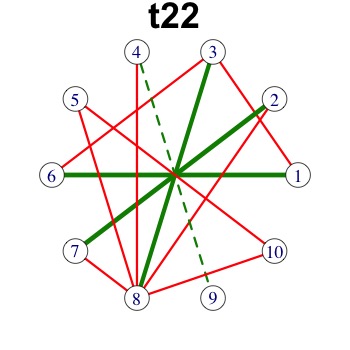}
\includegraphics[width=0.18\linewidth,valign=t]{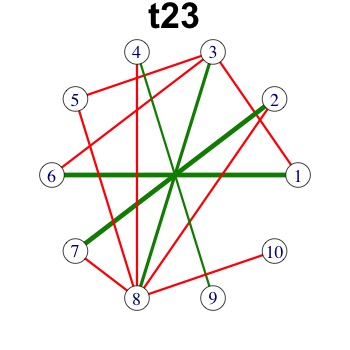}
\includegraphics[width=0.18\linewidth,valign=t]{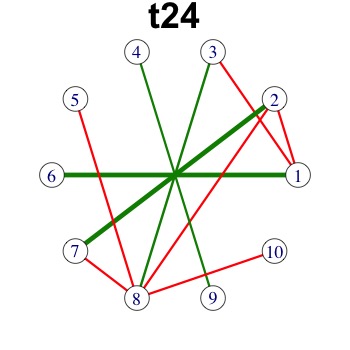}
\includegraphics[width=0.18\linewidth,valign=t]{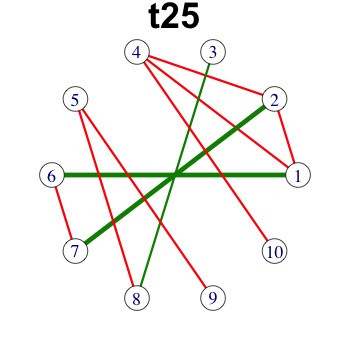}
\includegraphics[width=0.18\linewidth,valign=t]{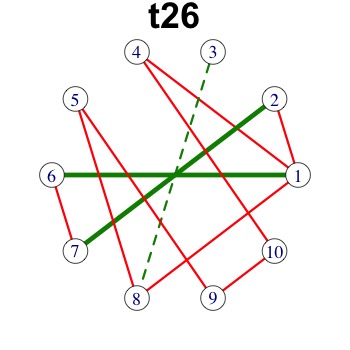}
\includegraphics[width=0.18\linewidth,valign=t]{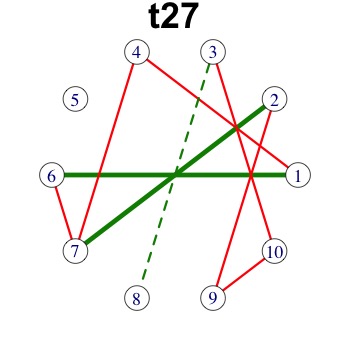}
\includegraphics[width=0.18\linewidth,valign=t]{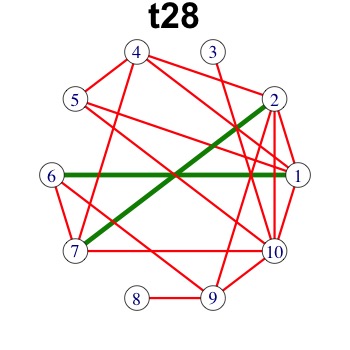}
\includegraphics[width=0.18\linewidth,valign=t]{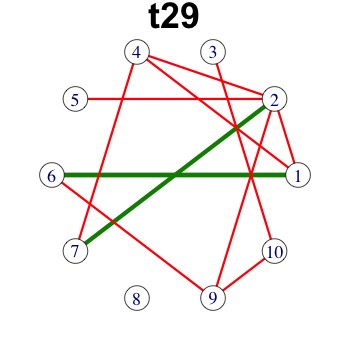}
\includegraphics[width=0.18\linewidth,valign=t]{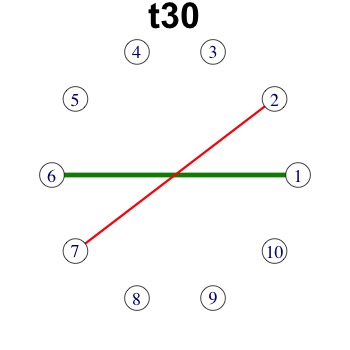}
\end{subfigure}
\caption{Estimated partial correlation networks from generalized elastic net for Scenario 1 with sample size 200. The green solid lines, the green dashed lines and the red solid lines represent the true positive connections, false negative connections and false positive connections, respectively. The thickness of each green solid line represents the magnitude of its underlying partial correlation.}
\label{fig:s1_gen_network}
\end{figure}

\begin{figure}[H]
\centering
\begin{subfigure}[b]{\textwidth}
\centering
\includegraphics[width=0.18\linewidth,valign=t]{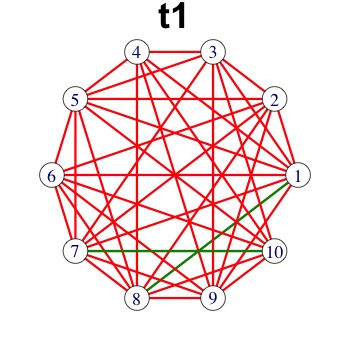}
\includegraphics[width=0.18\linewidth,valign=t]{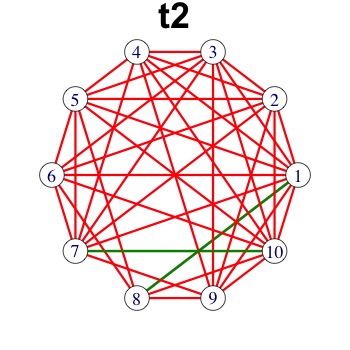}
\includegraphics[width=0.18\linewidth,valign=t]{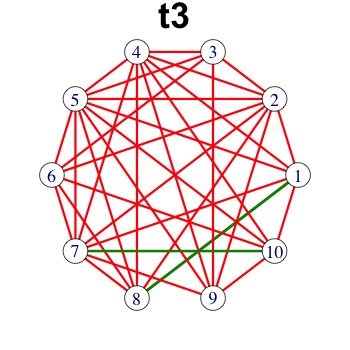}
\includegraphics[width=0.18\linewidth,valign=t]{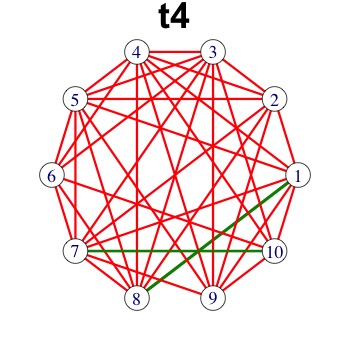}
\includegraphics[width=0.18\linewidth,valign=t]{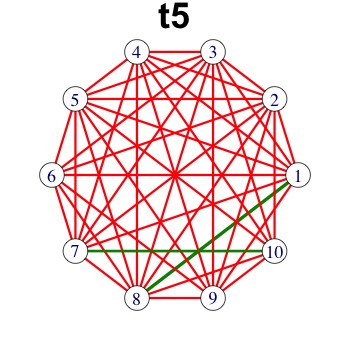}
\includegraphics[width=0.18\linewidth,valign=t]{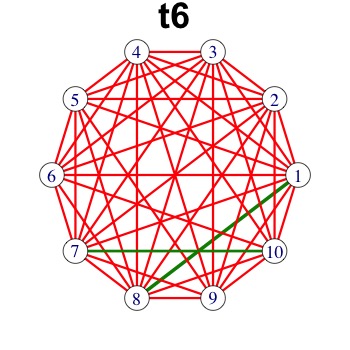}
\includegraphics[width=0.18\linewidth,valign=t]{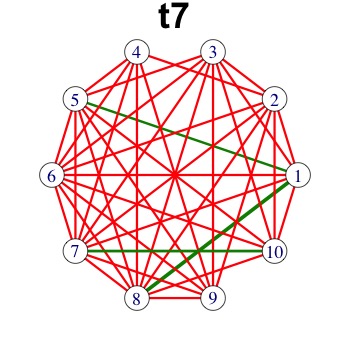}
\includegraphics[width=0.18\linewidth,valign=t]{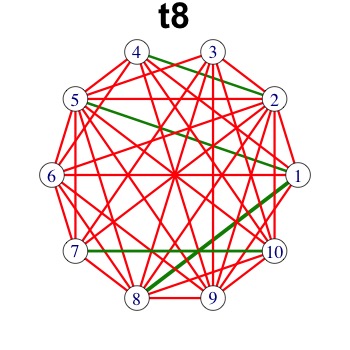}
\includegraphics[width=0.18\linewidth,valign=t]{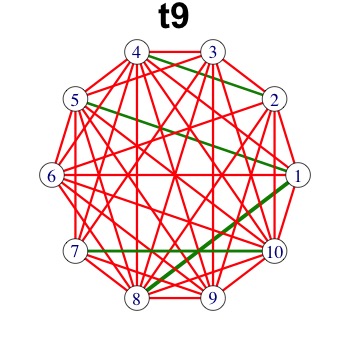}
\includegraphics[width=0.18\linewidth,valign=t]{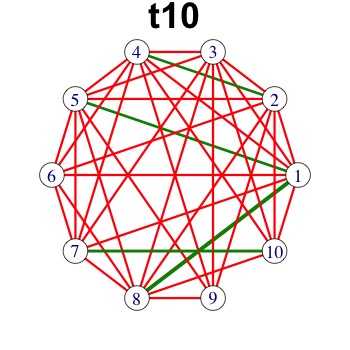}
\includegraphics[width=0.18\linewidth,valign=t]{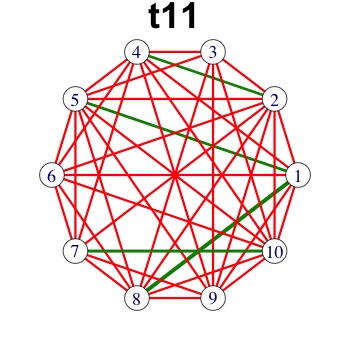}
\includegraphics[width=0.18\linewidth,valign=t]{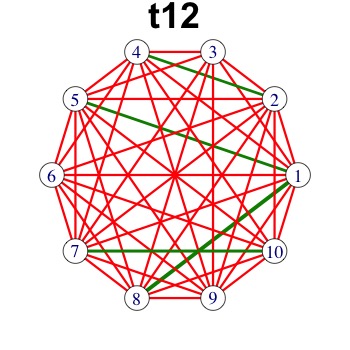}
\includegraphics[width=0.18\linewidth,valign=t]{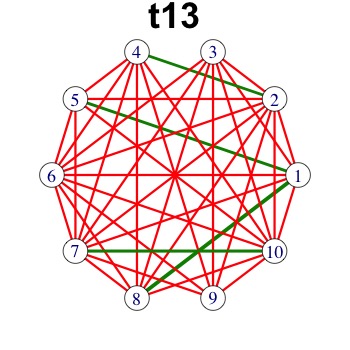}
\includegraphics[width=0.18\linewidth,valign=t]{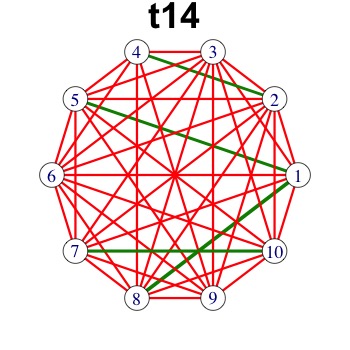}
\includegraphics[width=0.18\linewidth,valign=t]{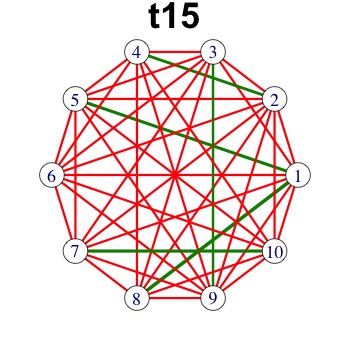}
\includegraphics[width=0.18\linewidth,valign=t]{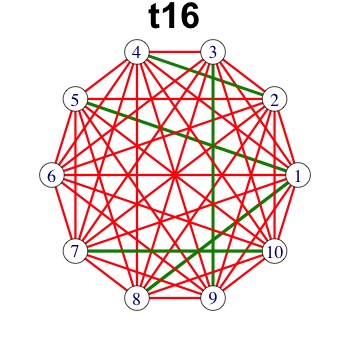}
\includegraphics[width=0.18\linewidth,valign=t]{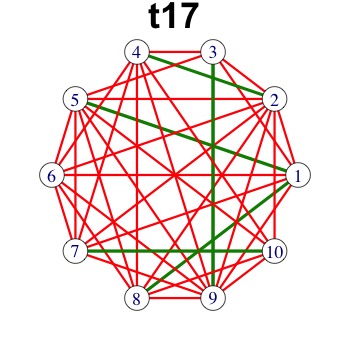}
\includegraphics[width=0.18\linewidth,valign=t]{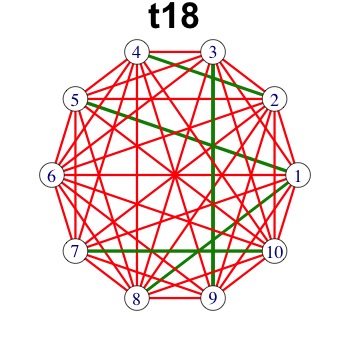}
\includegraphics[width=0.18\linewidth,valign=t]{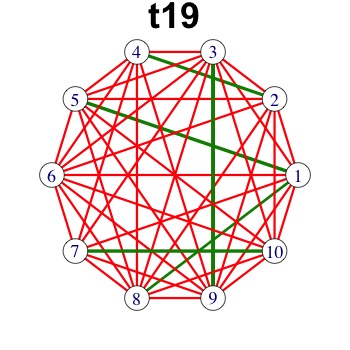}
\includegraphics[width=0.18\linewidth,valign=t]{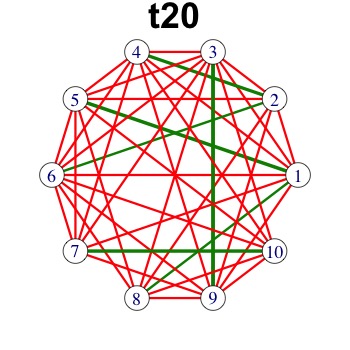}
\includegraphics[width=0.18\linewidth,valign=t]{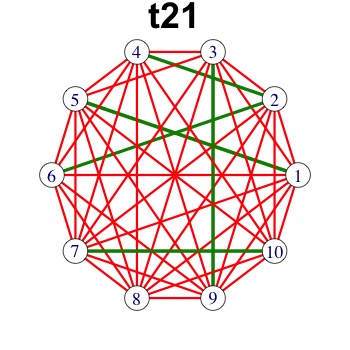}
\includegraphics[width=0.18\linewidth,valign=t]{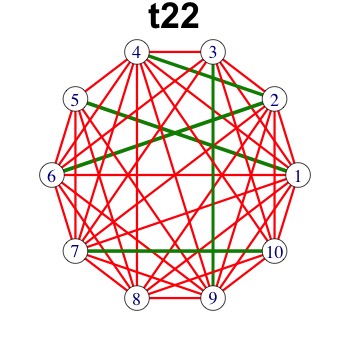}
\includegraphics[width=0.18\linewidth,valign=t]{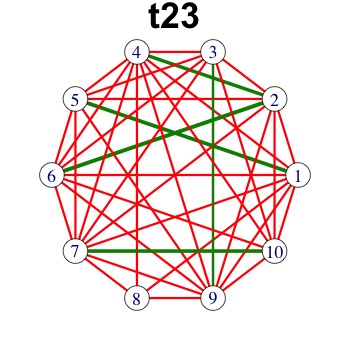}
\includegraphics[width=0.18\linewidth,valign=t]{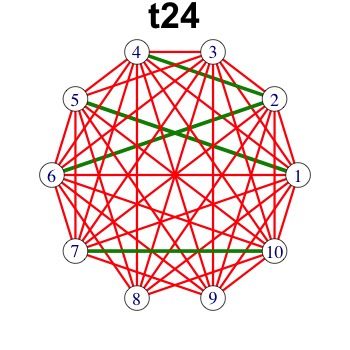}
\includegraphics[width=0.18\linewidth,valign=t]{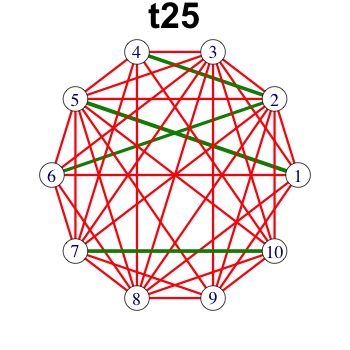}
\includegraphics[width=0.18\linewidth,valign=t]{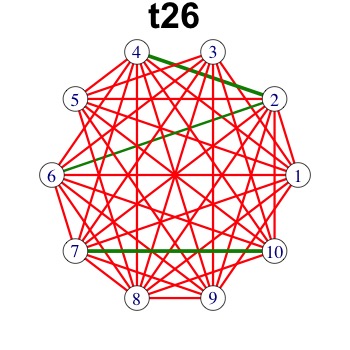}
\includegraphics[width=0.18\linewidth,valign=t]{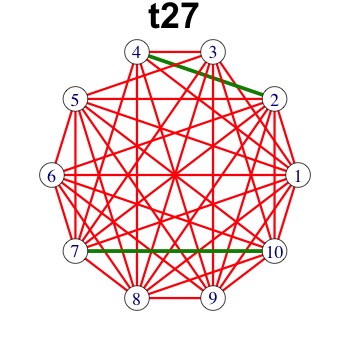}
\includegraphics[width=0.18\linewidth,valign=t]{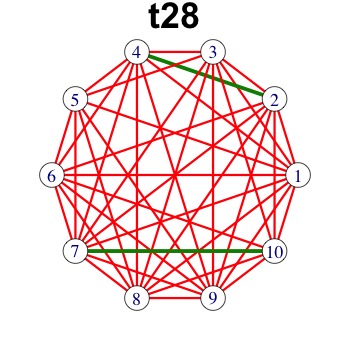}
\includegraphics[width=0.18\linewidth,valign=t]{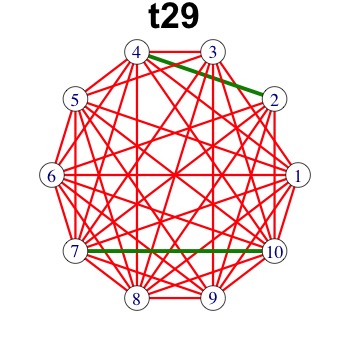}
\includegraphics[width=0.18\linewidth,valign=t]{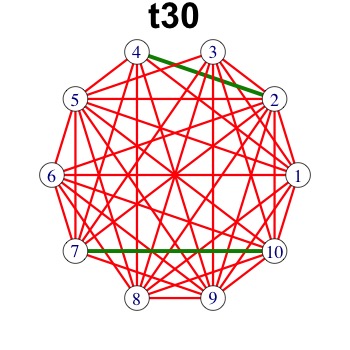}
\end{subfigure}
\caption{Estimated partial correlation networks from generalized elastic net for Scenario 2 with sample size 200. The green solid lines, the green dashed lines and the red solid lines represent the true positive connections, false negative connections and false positive connections, respectively. The thickness of each green solid line represents the magnitude of its underlying partial correlation.}
\label{fig:s2_gen_network}
\end{figure}

\begin{figure}[H]
\centering
\begin{subfigure}[b]{\textwidth}
\centering
\includegraphics[width=0.18\linewidth,valign=t]{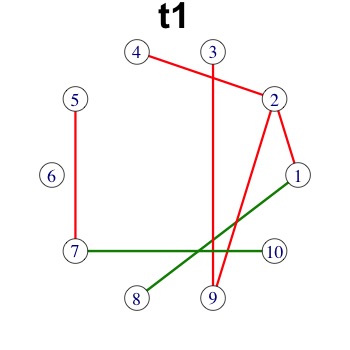}
\includegraphics[width=0.18\linewidth,valign=t]{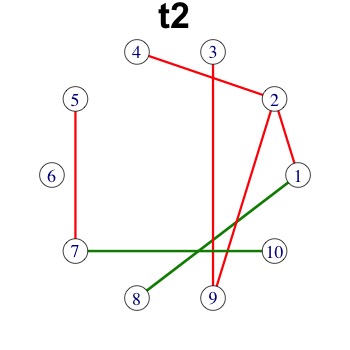}
\includegraphics[width=0.18\linewidth,valign=t]{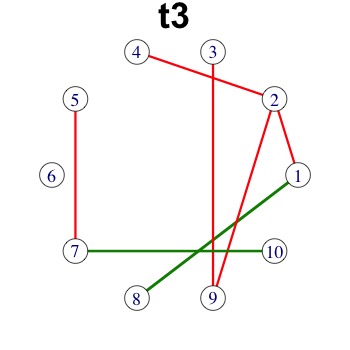}
\includegraphics[width=0.18\linewidth,valign=t]{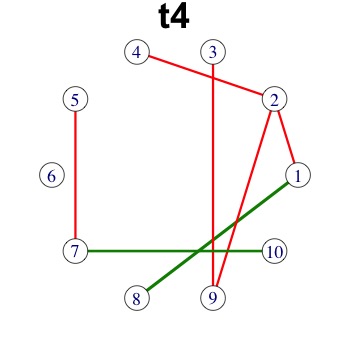}
\includegraphics[width=0.18\linewidth,valign=t]{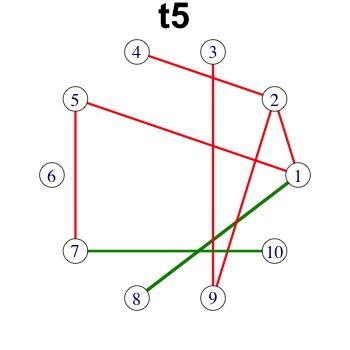}
\includegraphics[width=0.18\linewidth,valign=t]{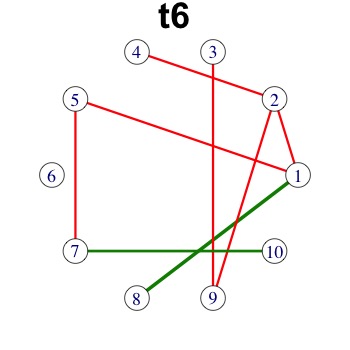}
\includegraphics[width=0.18\linewidth,valign=t]{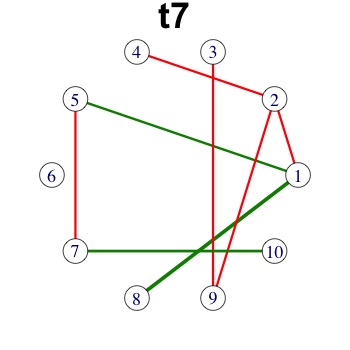}
\includegraphics[width=0.18\linewidth,valign=t]{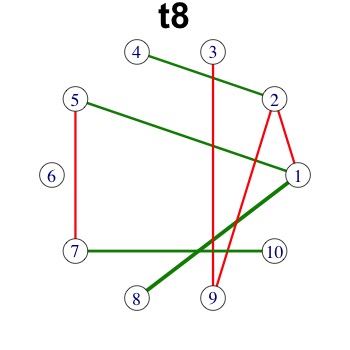}
\includegraphics[width=0.18\linewidth,valign=t]{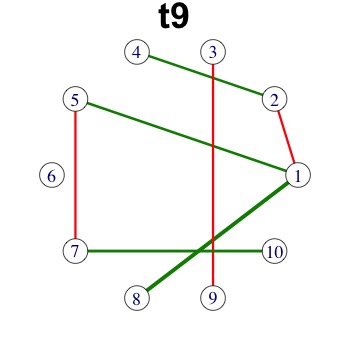}
\includegraphics[width=0.18\linewidth,valign=t]{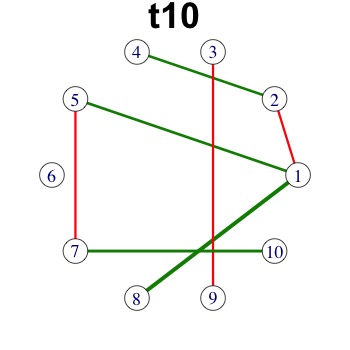}
\includegraphics[width=0.18\linewidth,valign=t]{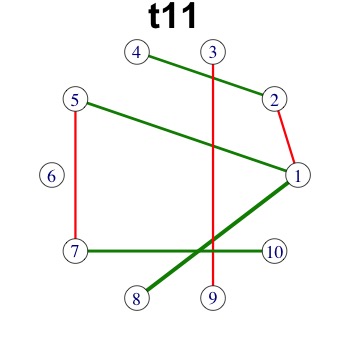}
\includegraphics[width=0.18\linewidth,valign=t]{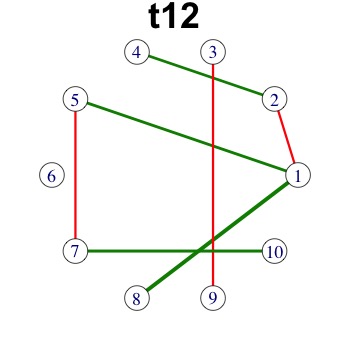}
\includegraphics[width=0.18\linewidth,valign=t]{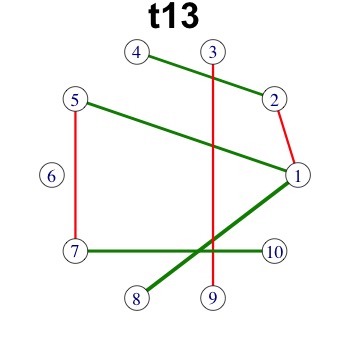}
\includegraphics[width=0.18\linewidth,valign=t]{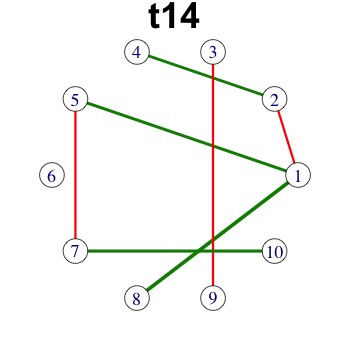}
\includegraphics[width=0.18\linewidth,valign=t]{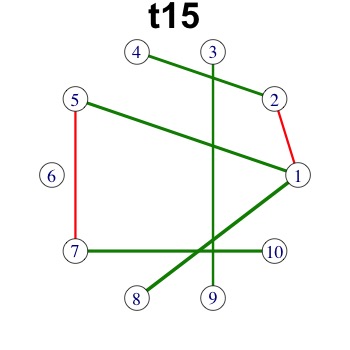}
\includegraphics[width=0.18\linewidth,valign=t]{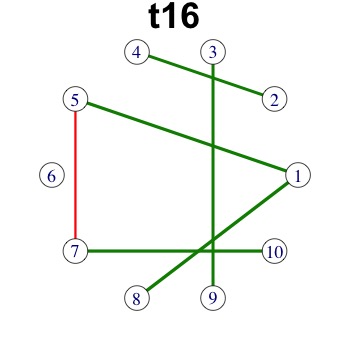}
\includegraphics[width=0.18\linewidth,valign=t]{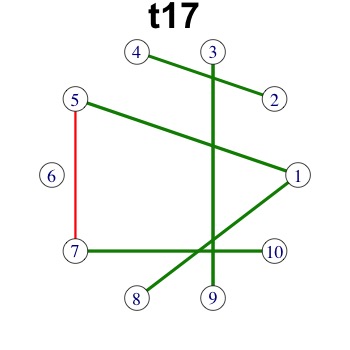}
\includegraphics[width=0.18\linewidth,valign=t]{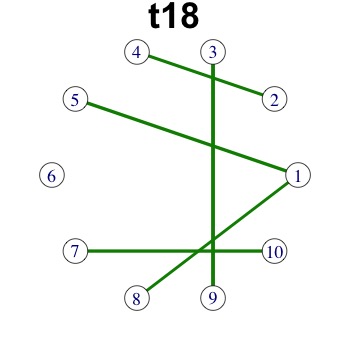}
\includegraphics[width=0.18\linewidth,valign=t]{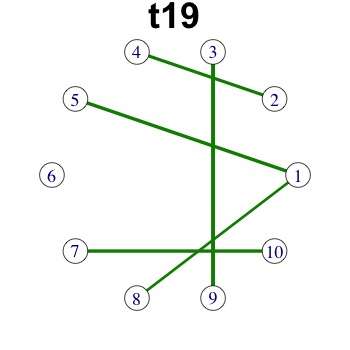}
\includegraphics[width=0.18\linewidth,valign=t]{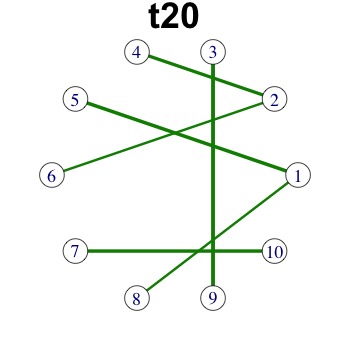}
\includegraphics[width=0.18\linewidth,valign=t]{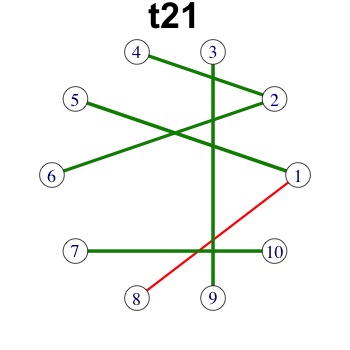}
\includegraphics[width=0.18\linewidth,valign=t]{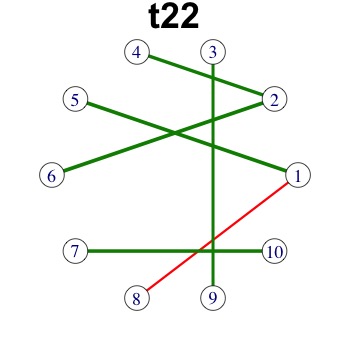}
\includegraphics[width=0.18\linewidth,valign=t]{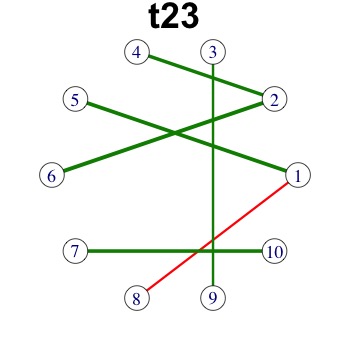}
\includegraphics[width=0.18\linewidth,valign=t]{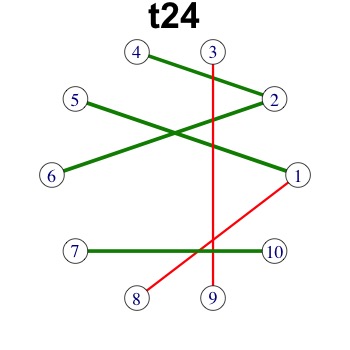}
\includegraphics[width=0.18\linewidth,valign=t]{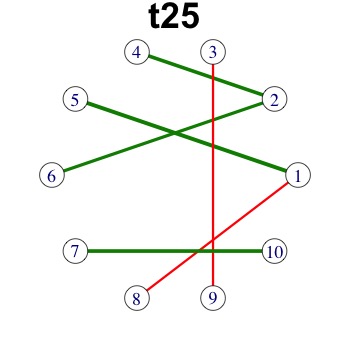}
\includegraphics[width=0.18\linewidth,valign=t]{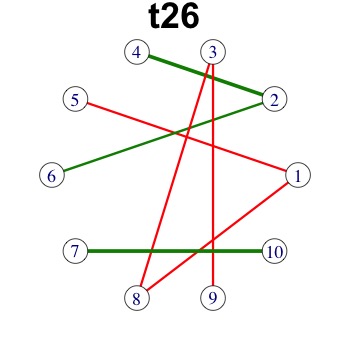}
\includegraphics[width=0.18\linewidth,valign=t]{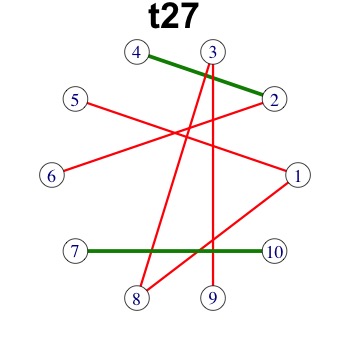}
\includegraphics[width=0.18\linewidth,valign=t]{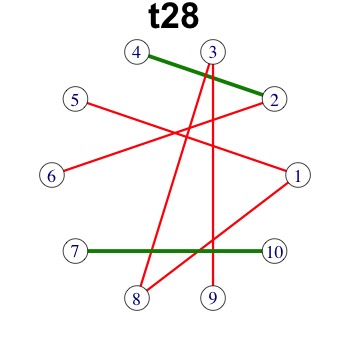}
\includegraphics[width=0.18\linewidth,valign=t]{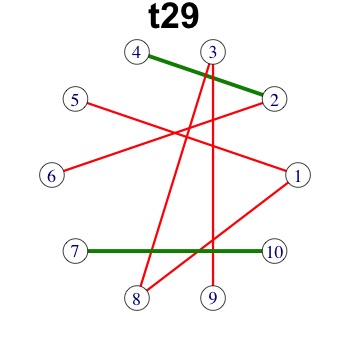}
\includegraphics[width=0.18\linewidth,valign=t]{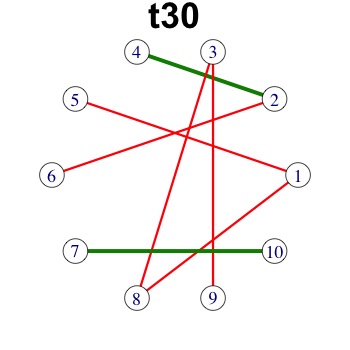}
\end{subfigure}
\caption{Estimated partial correlation networks from generalized fused lasso for Scenario 2 with sample size 200. The green solid lines, the green dashed lines and the red solid lines represent the true positive connections, false negative connections and false positive connections, respectively. The thickness of each green solid line represents the magnitude of its underlying partial correlation.}
\label{fig:s2_gfl_network}
\end{figure}

\begin{figure}[H]
\centering
\begin{subfigure}[b]{\textwidth}
\centering
\includegraphics[angle=270,width=0.3\linewidth,valign=t]{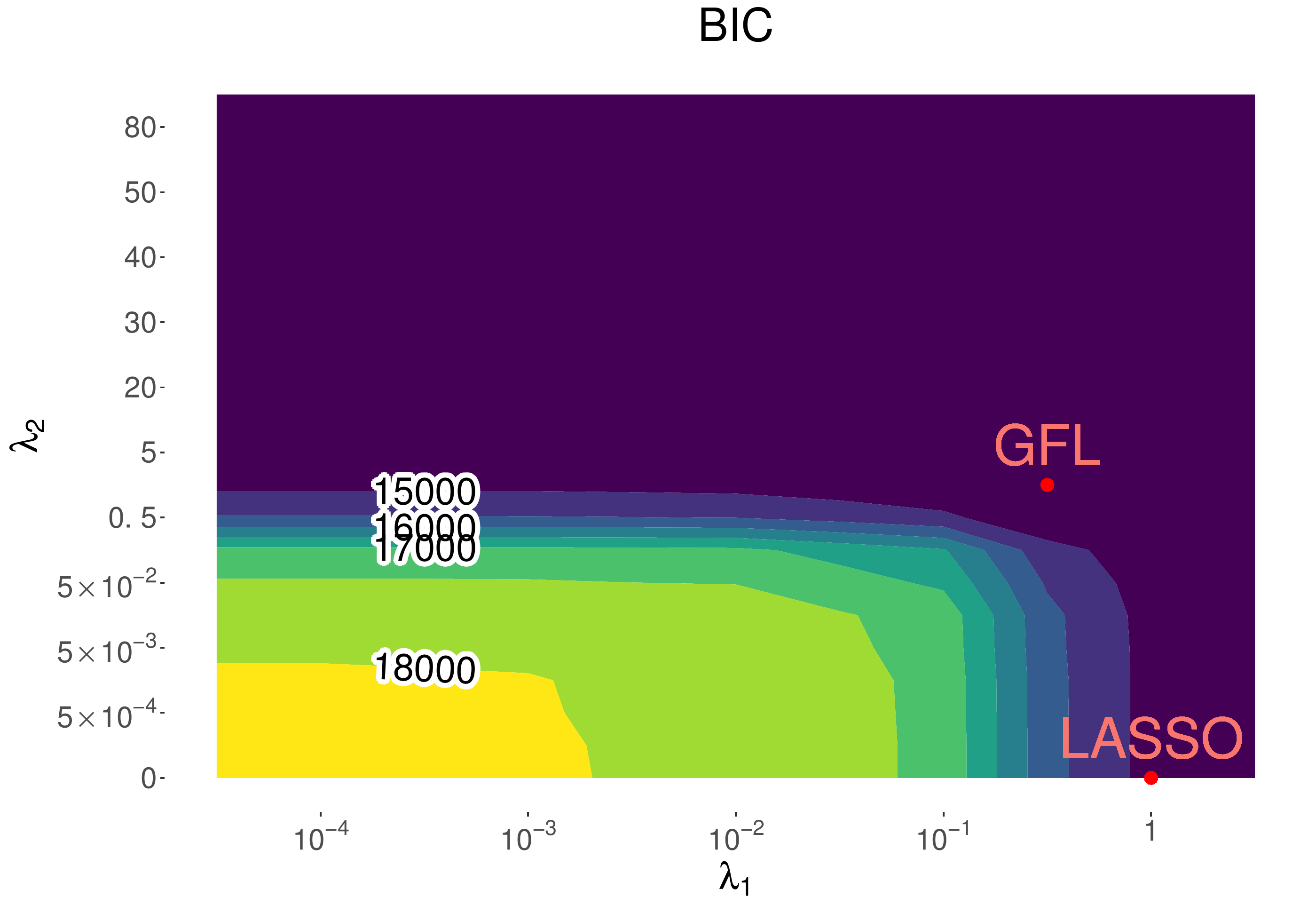}
\includegraphics[angle=270,width=0.3\linewidth,valign=t]{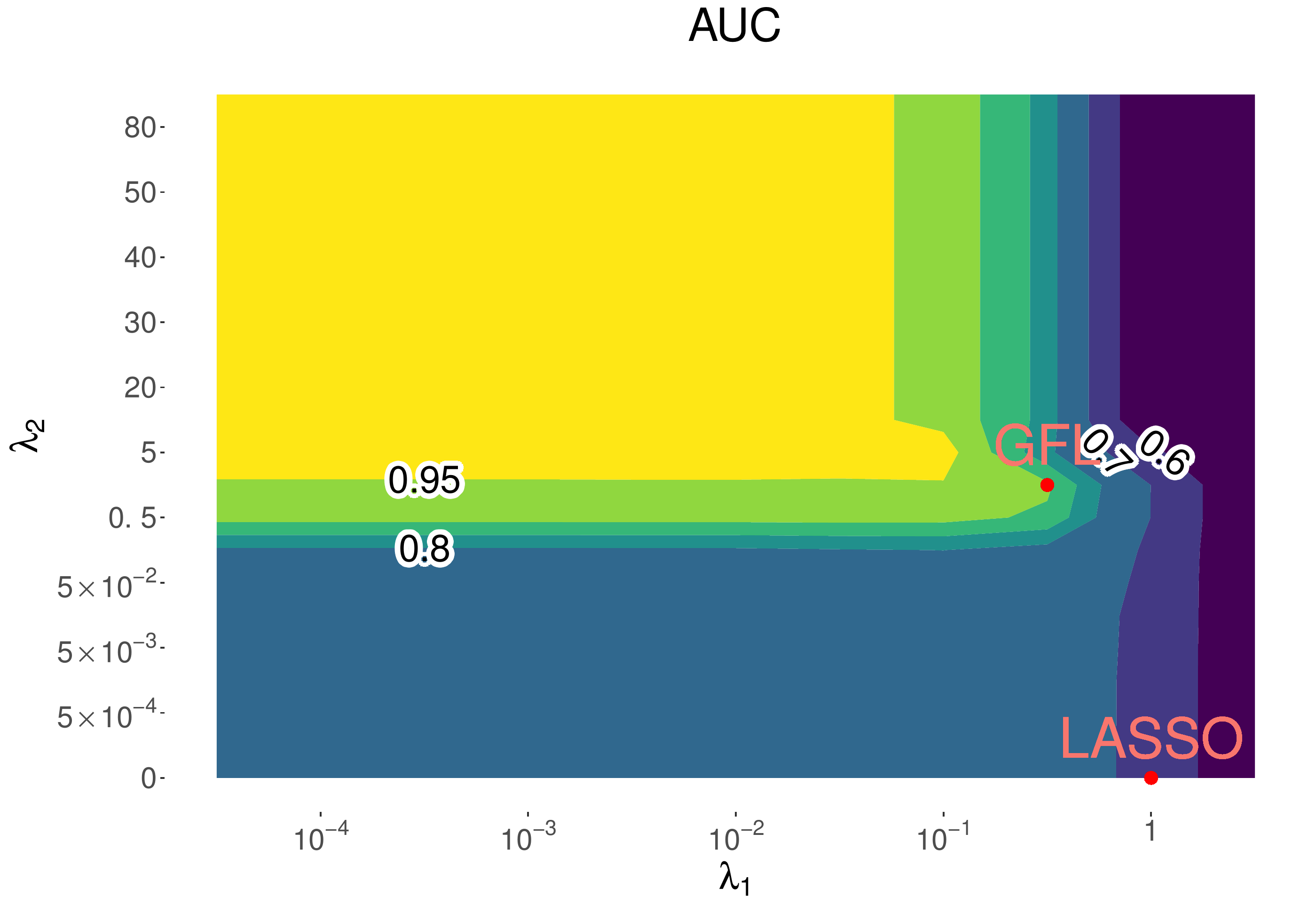}
\includegraphics[angle=270,width=0.3\linewidth,valign=t]{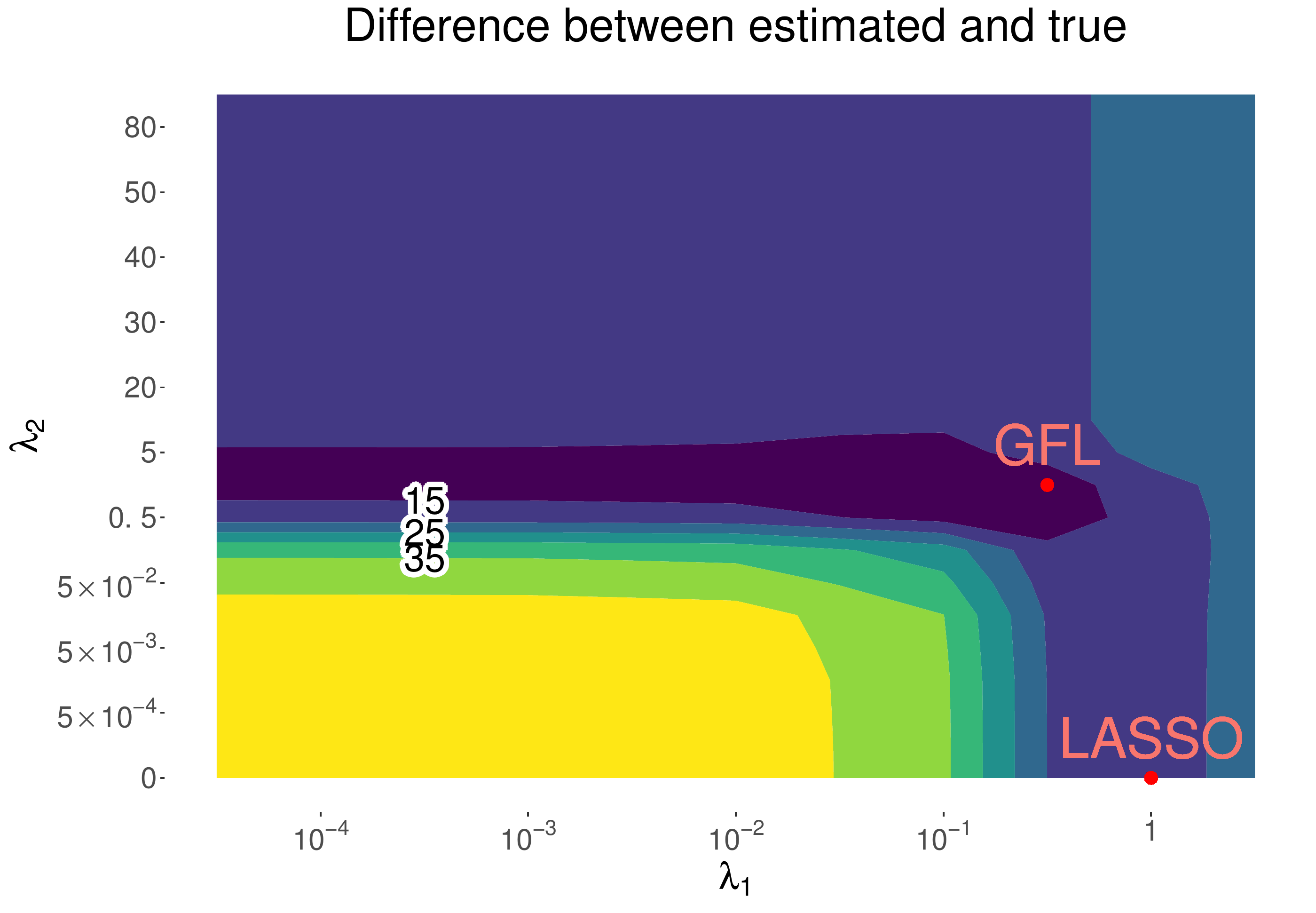}
\caption{sample size 50}
\end{subfigure}
\begin{subfigure}[b]{\textwidth}
\centering
\includegraphics[angle=270,width=0.3\linewidth,valign=t]{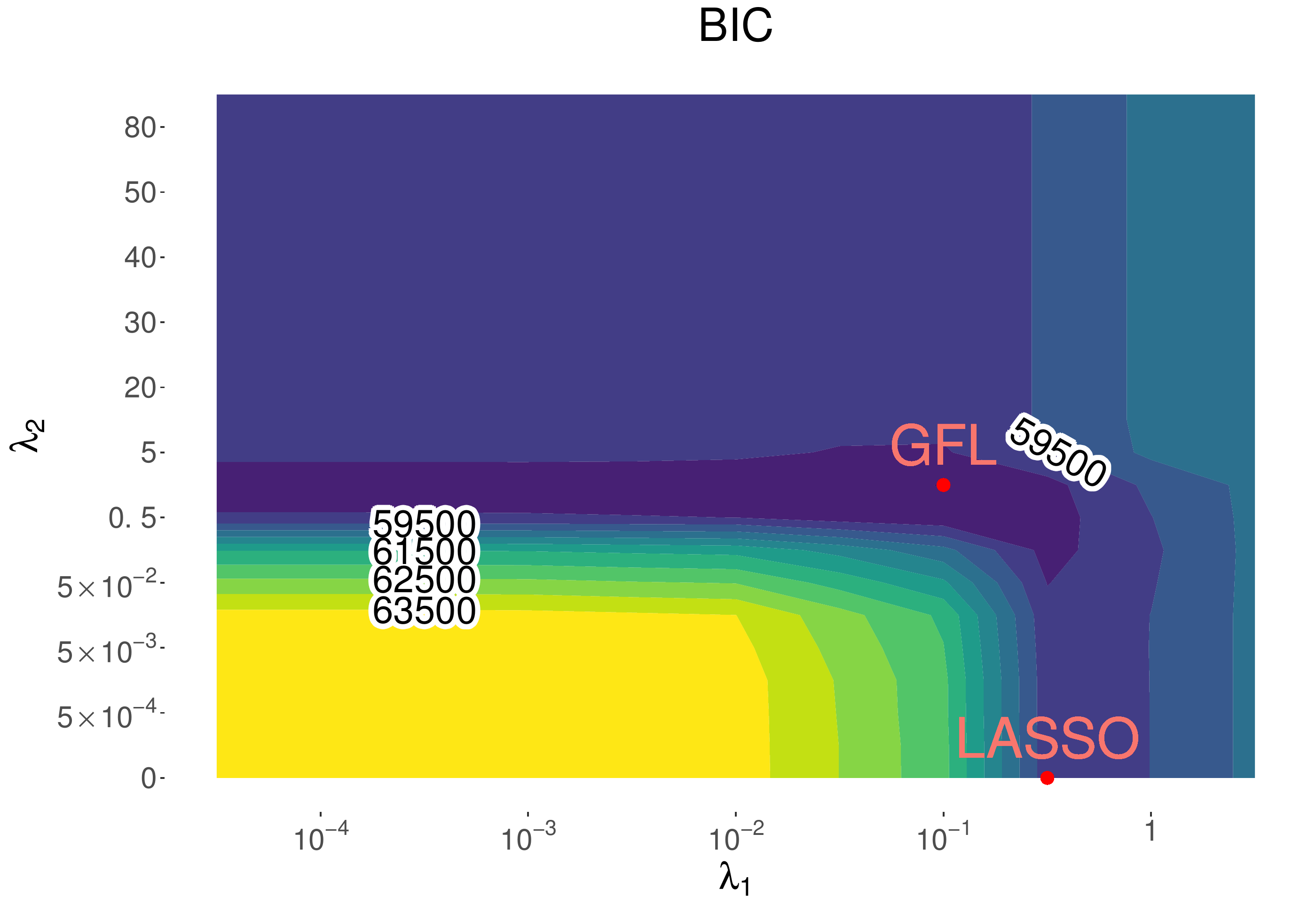}
\includegraphics[angle=270,width=0.3\linewidth,valign=t]{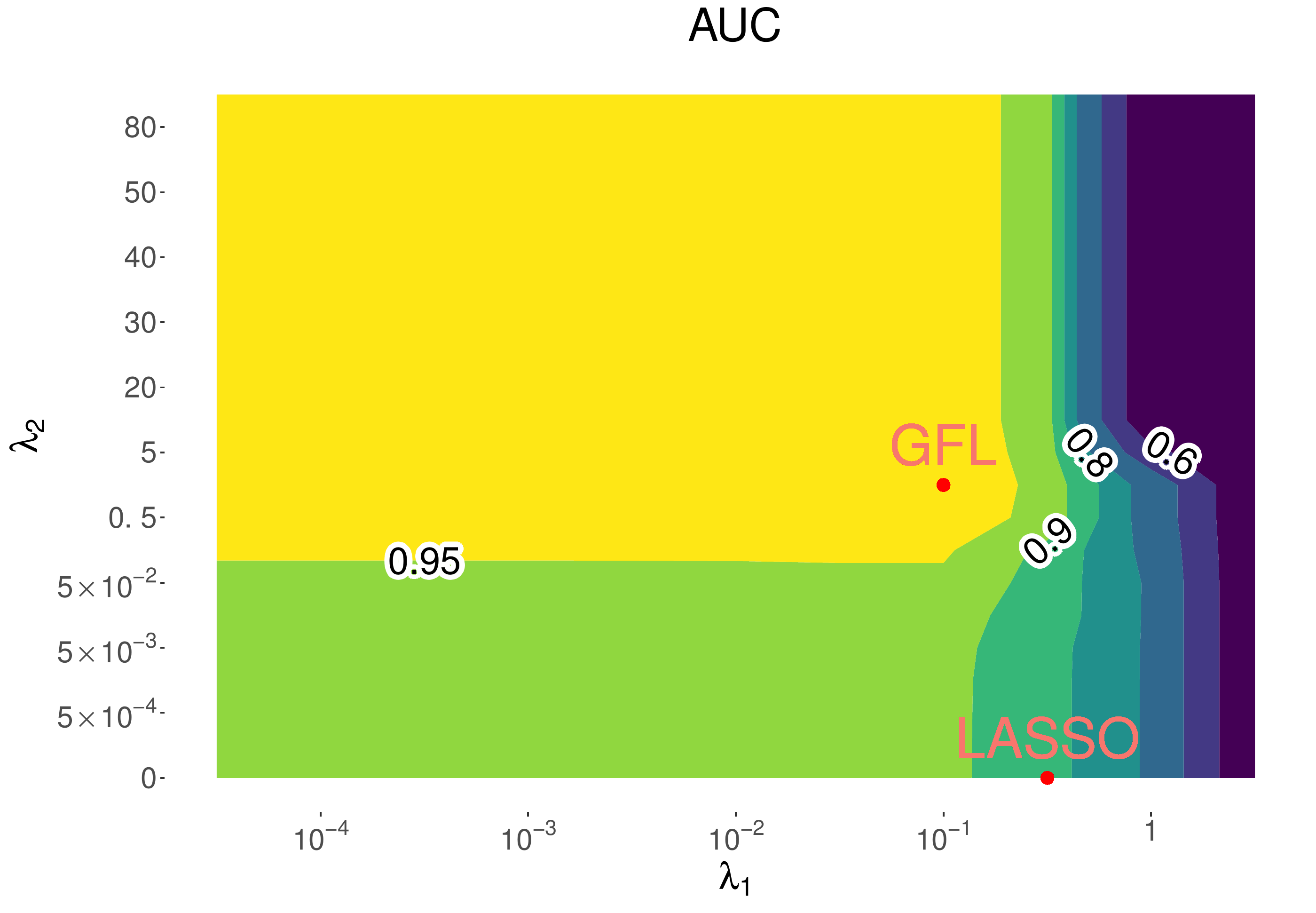}
\includegraphics[angle=270,width=0.3\linewidth,valign=t]{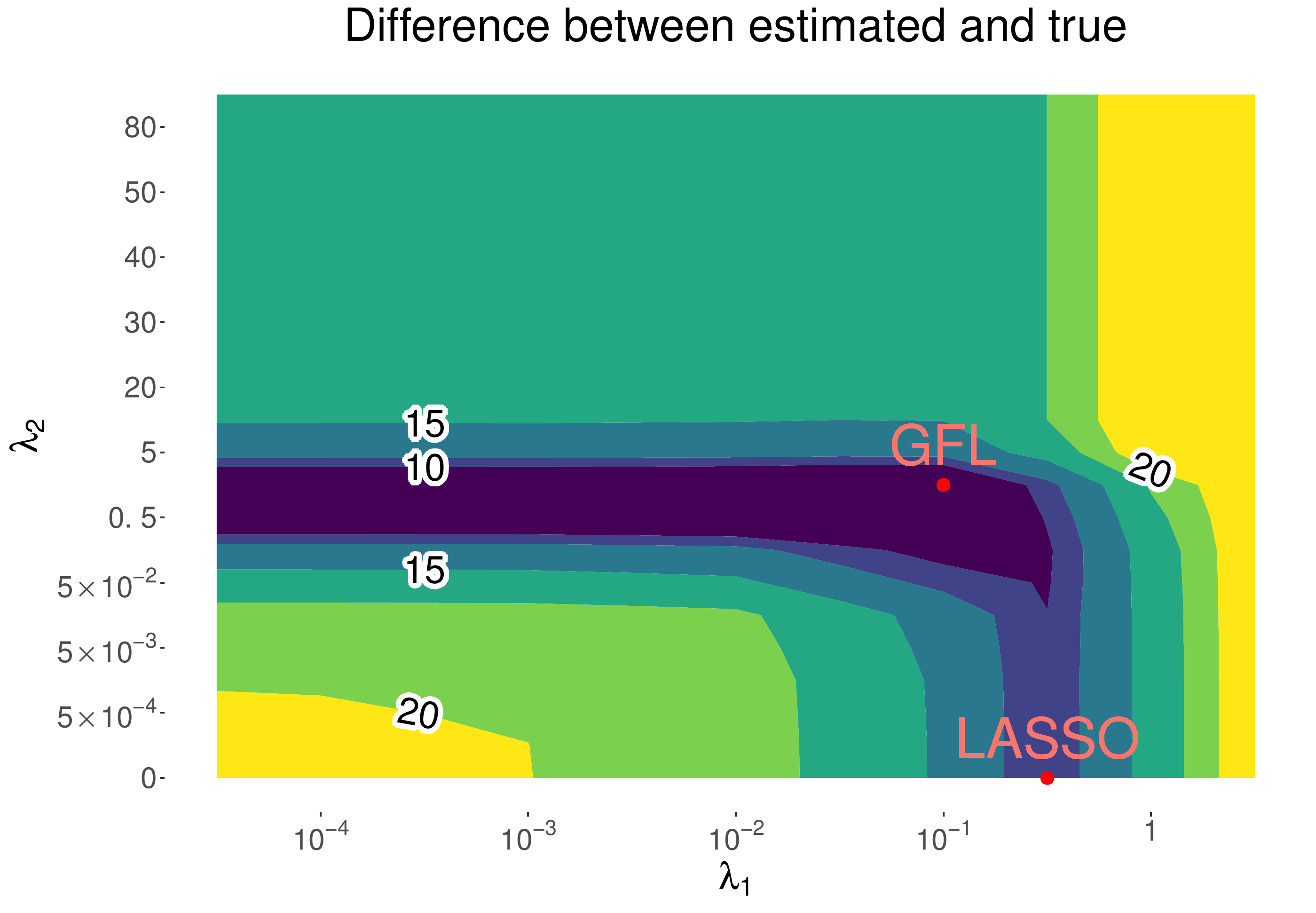}
\caption{sample size 200}
\end{subfigure}
\caption{Simulation performance measurements from generalized fused lasso for Scenario 2. Darker areas represent lower values. For reference, in the simulation with sample size 50, the sample AUC is 0.7236, comparing to the GFL AUC 0.9101 in the figure; the difference between sample and true is 43.92, comparing to the difference between GFL estimate and true as 12.15 in the figure. In the simulation with sample size 200, the sample AUC is 0.9097, comparing to the GFL AUC 0.9994 in the figure; the difference between sample and true is 20.08, comparing to the difference between GFL estimate and true as 5.23 in the figure.}
\label{fig:s2_gfl_contour}
\end{figure}

\begin{figure}[H]
\centering
\begin{subfigure}[b]{\textwidth}
\centering
\includegraphics[angle=270,width=0.3\linewidth,valign=t]{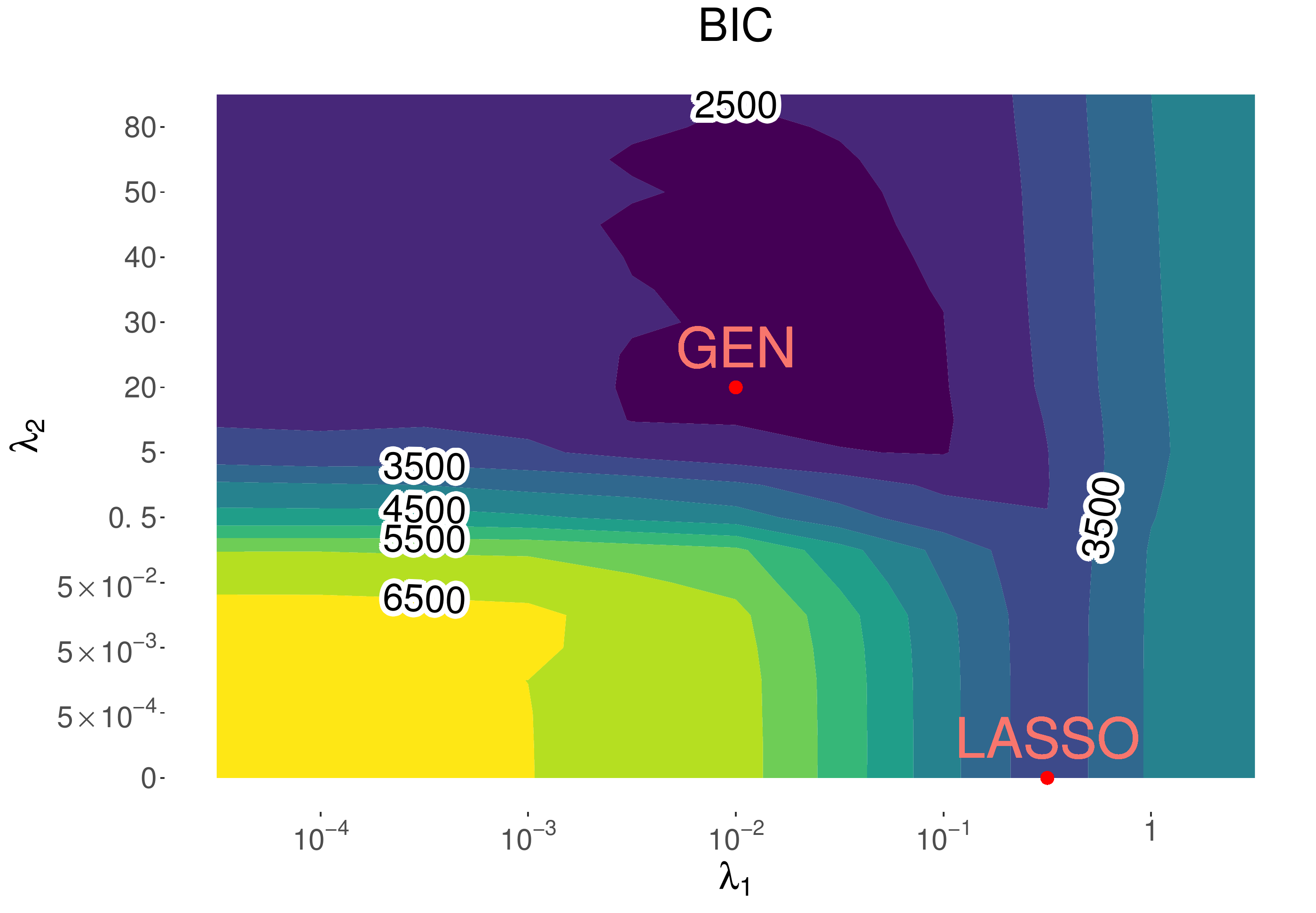}
\includegraphics[angle=270,width=0.3\linewidth,valign=t]{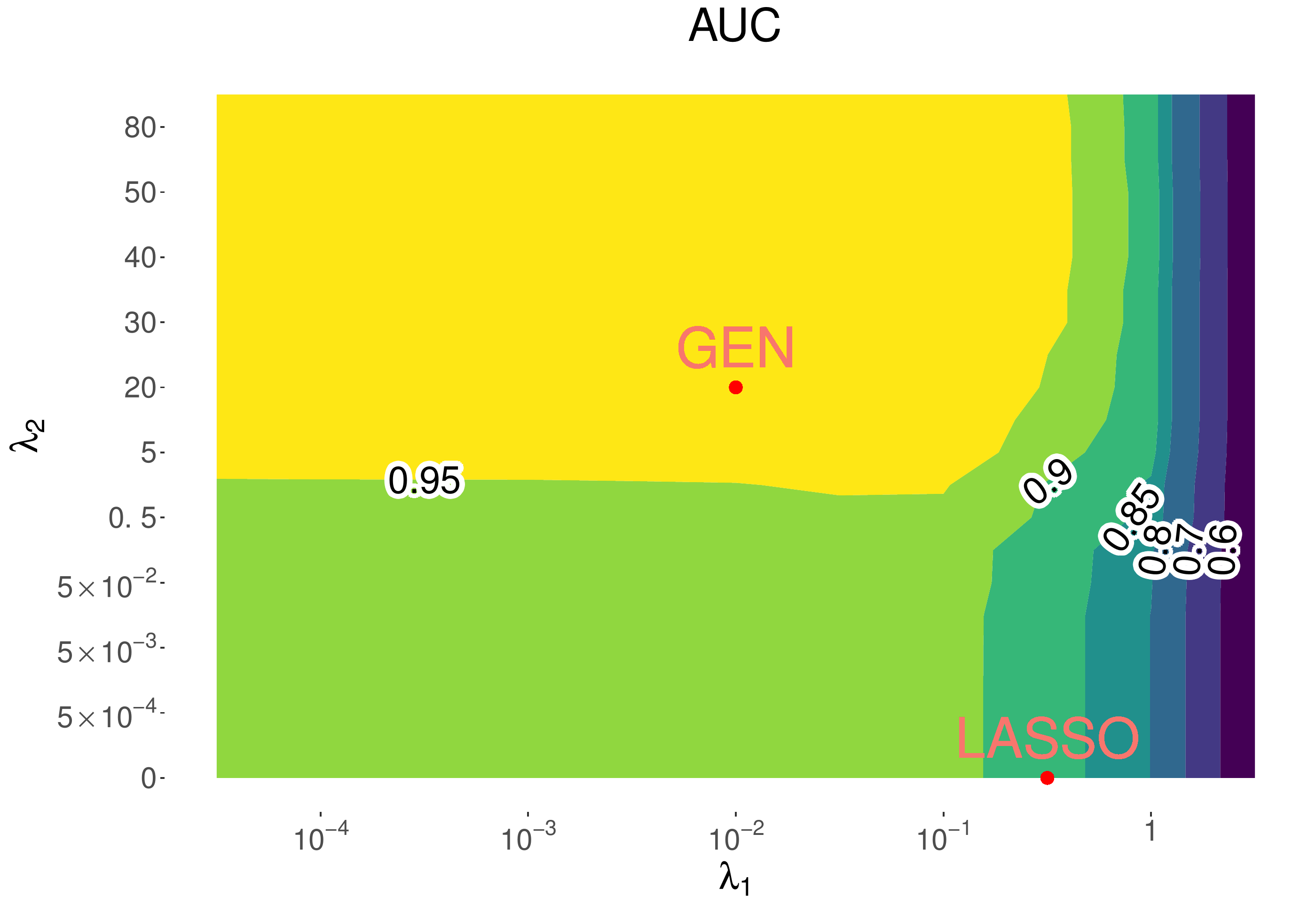}
\includegraphics[angle=270,width=0.3\linewidth,valign=t]{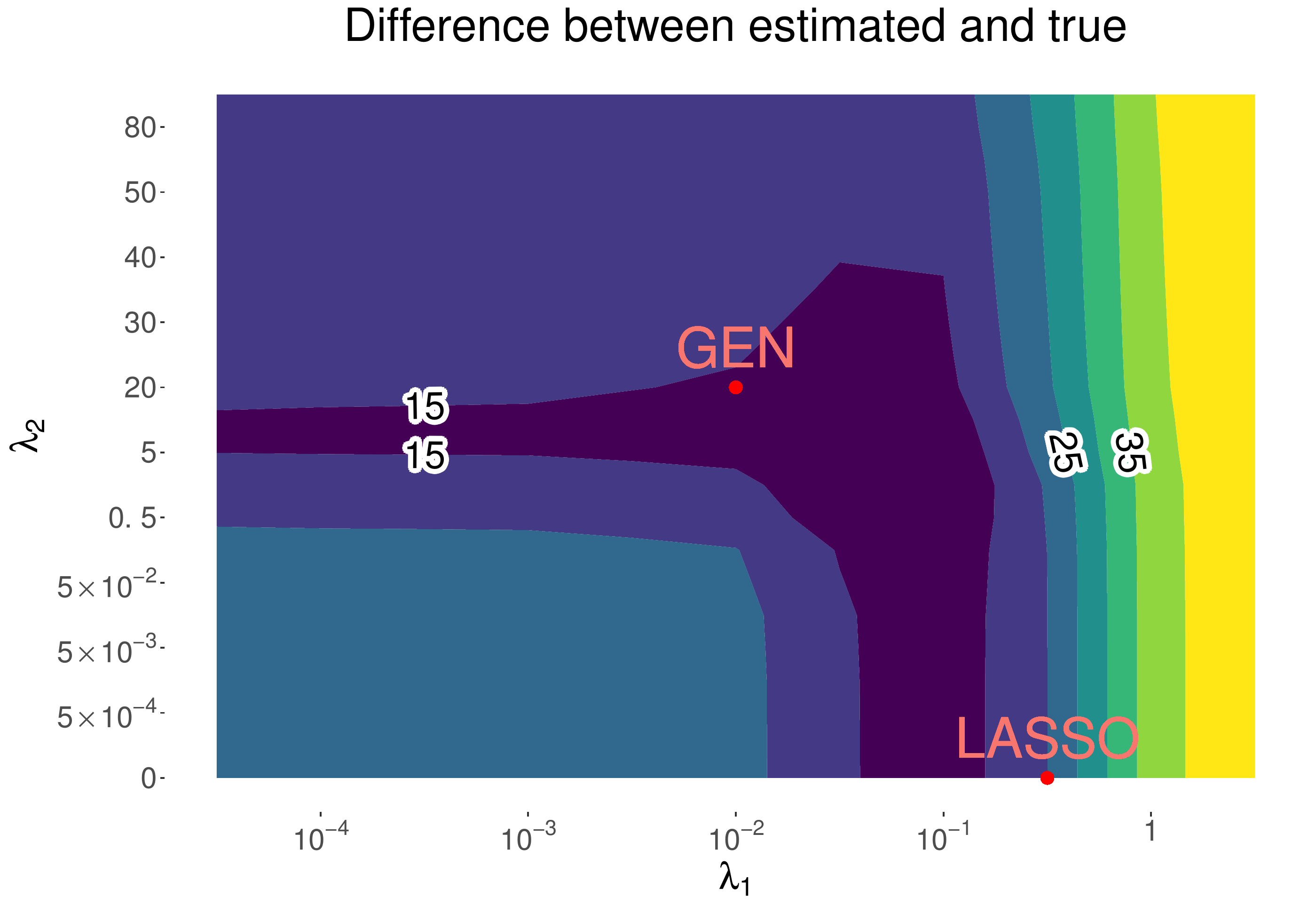}
\caption{sample size 50}
\end{subfigure}
\begin{subfigure}[b]{\textwidth}
\centering
\includegraphics[angle=270,width=0.3\linewidth,valign=t]{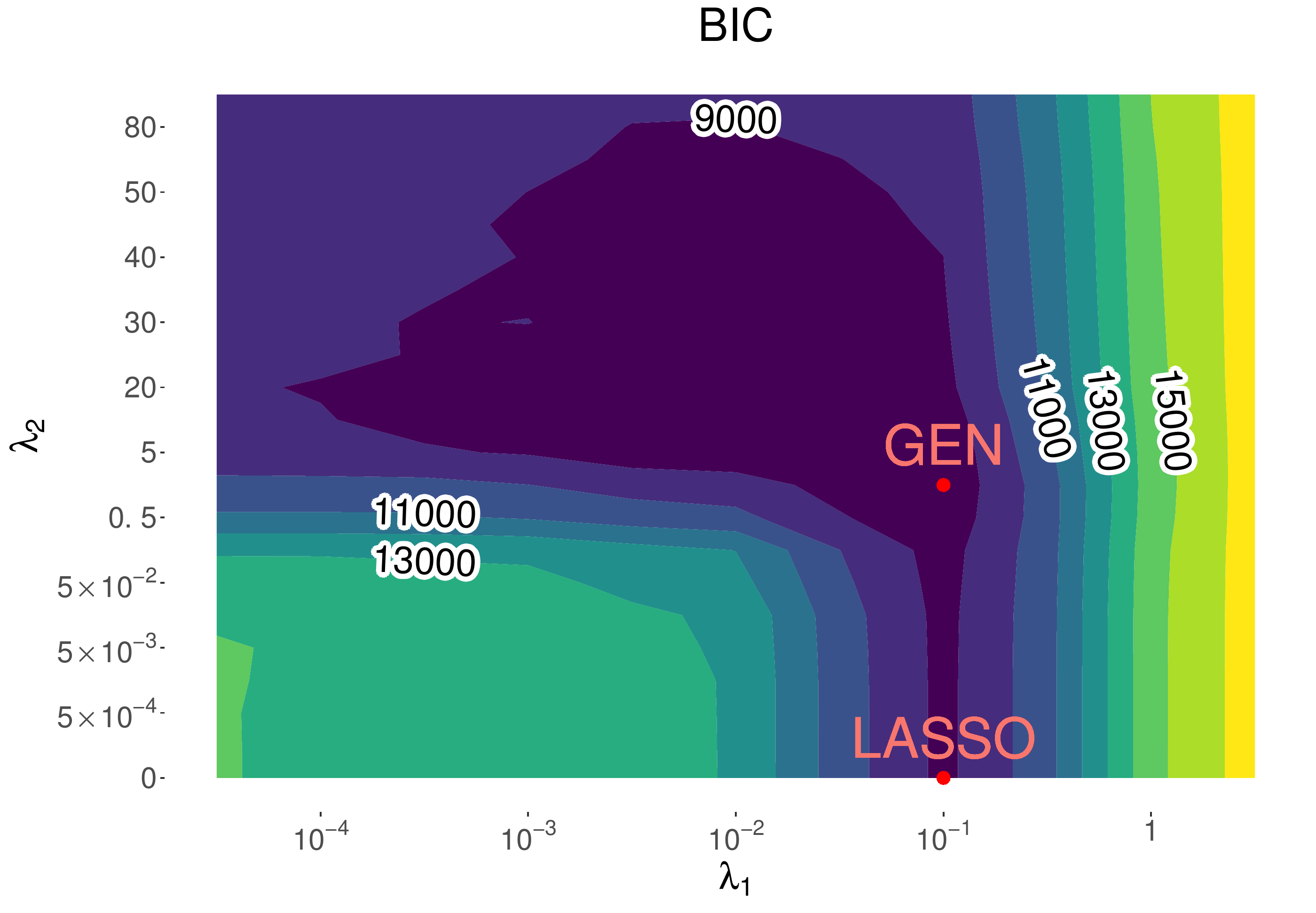}
\includegraphics[angle=270,width=0.3\linewidth,valign=t]{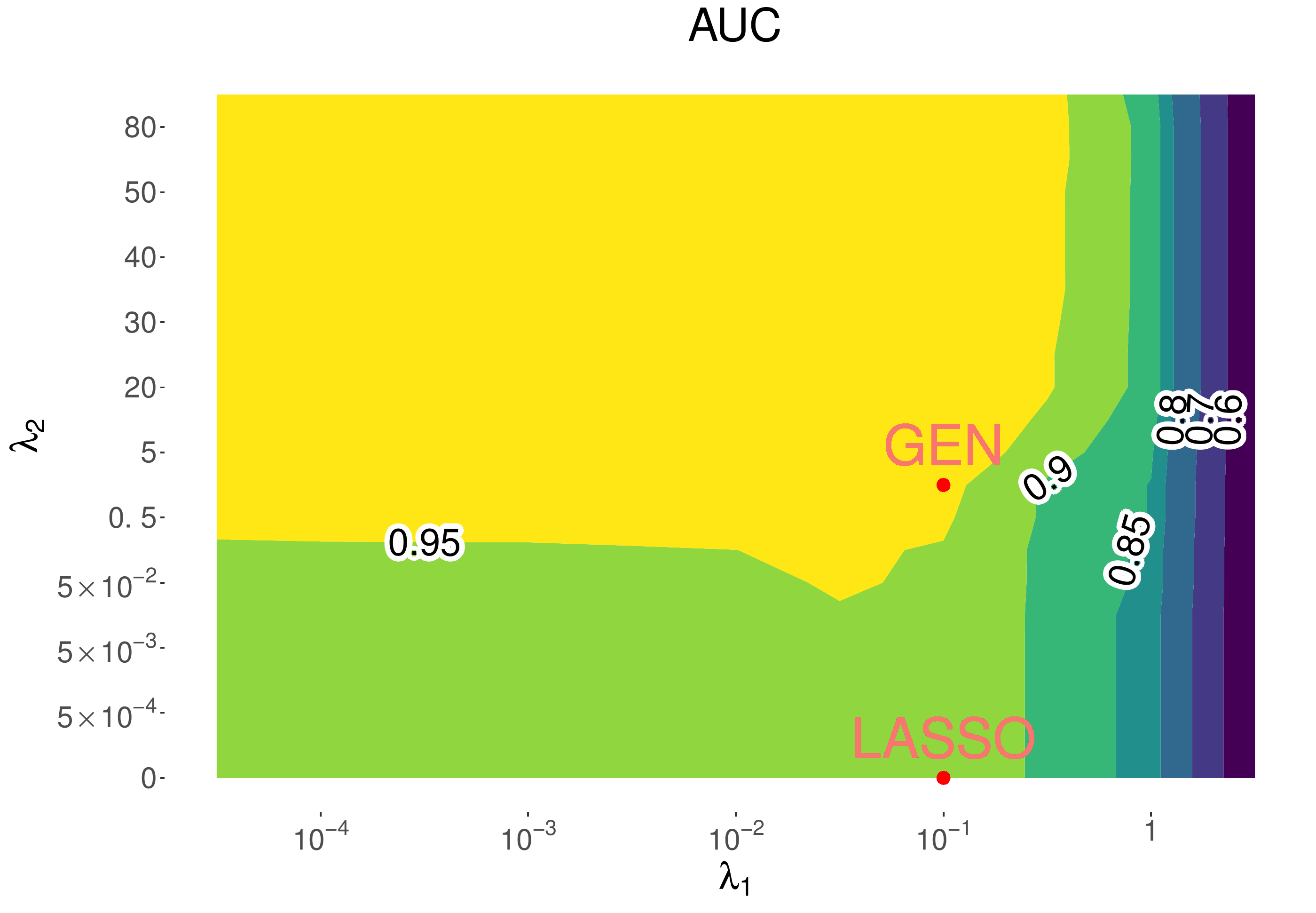}
\includegraphics[angle=270,width=0.3\linewidth,valign=t]{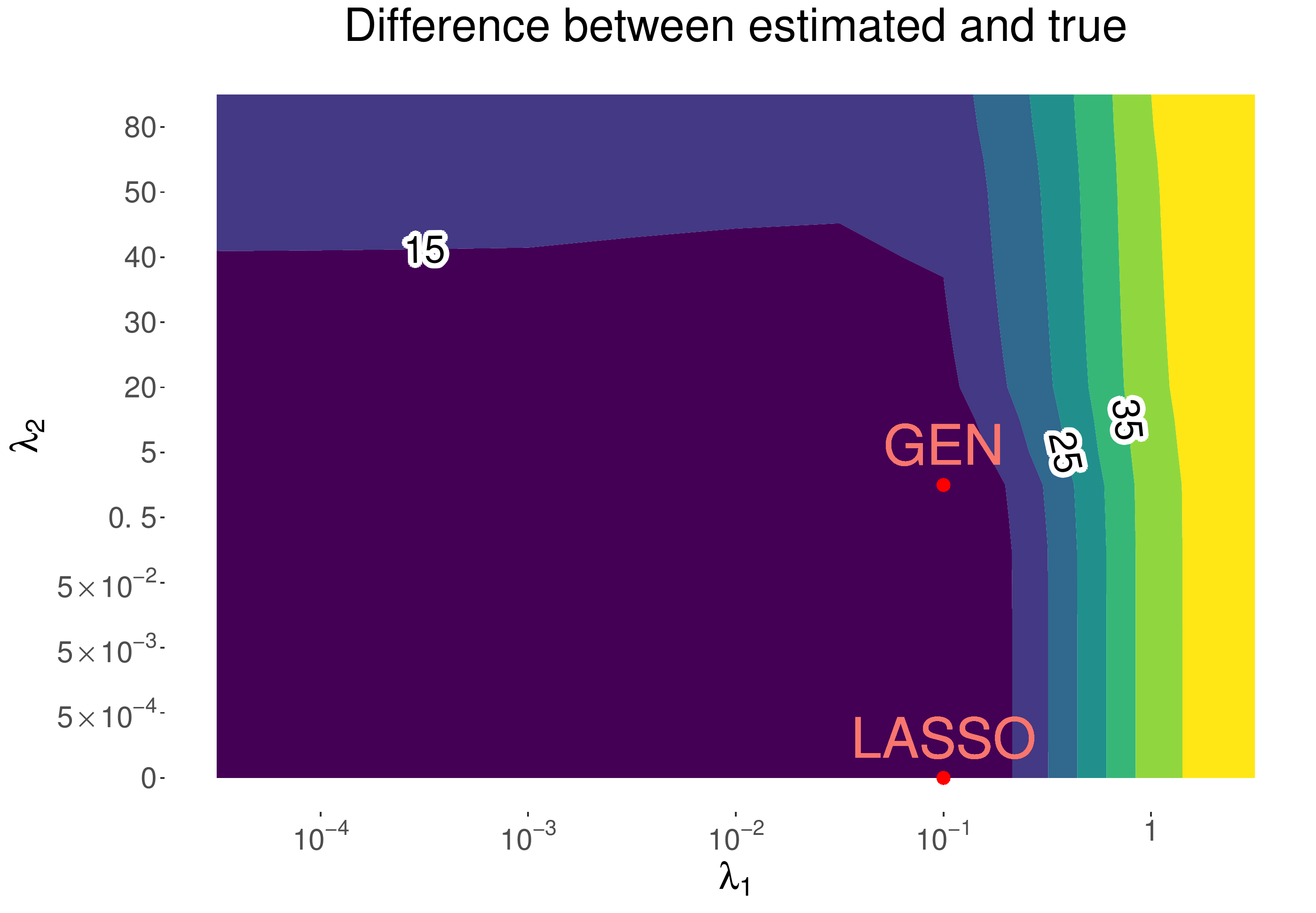}
\caption{sample size 200}
\end{subfigure}
\caption{Simulation performance measurements from generalized elastic net for Scenario 1. Darker areas represent lower values. For reference, in the simulation with sample size 50, the sample AUC is 0.9017, comparing to the GEN AUC 0.9967 in the figure; the difference between sample and true is 23.94, comparing to the difference between GEN estimate and true as 14.76 in the figure. In the simulation with sample size 200, the sample AUC is 0.9365, comparing to the GEN AUC 0.9661 in the figure; the difference between sample and true is 11.28, comparing to the difference between GEN estimate and true as 6.79 in the figure.}
\label{fig:s1_gen_contour}
\end{figure}

\begin{figure}[H]
\centering
\begin{subfigure}[b]{\textwidth}
\centering
\includegraphics[angle=270,width=0.3\linewidth,valign=t]{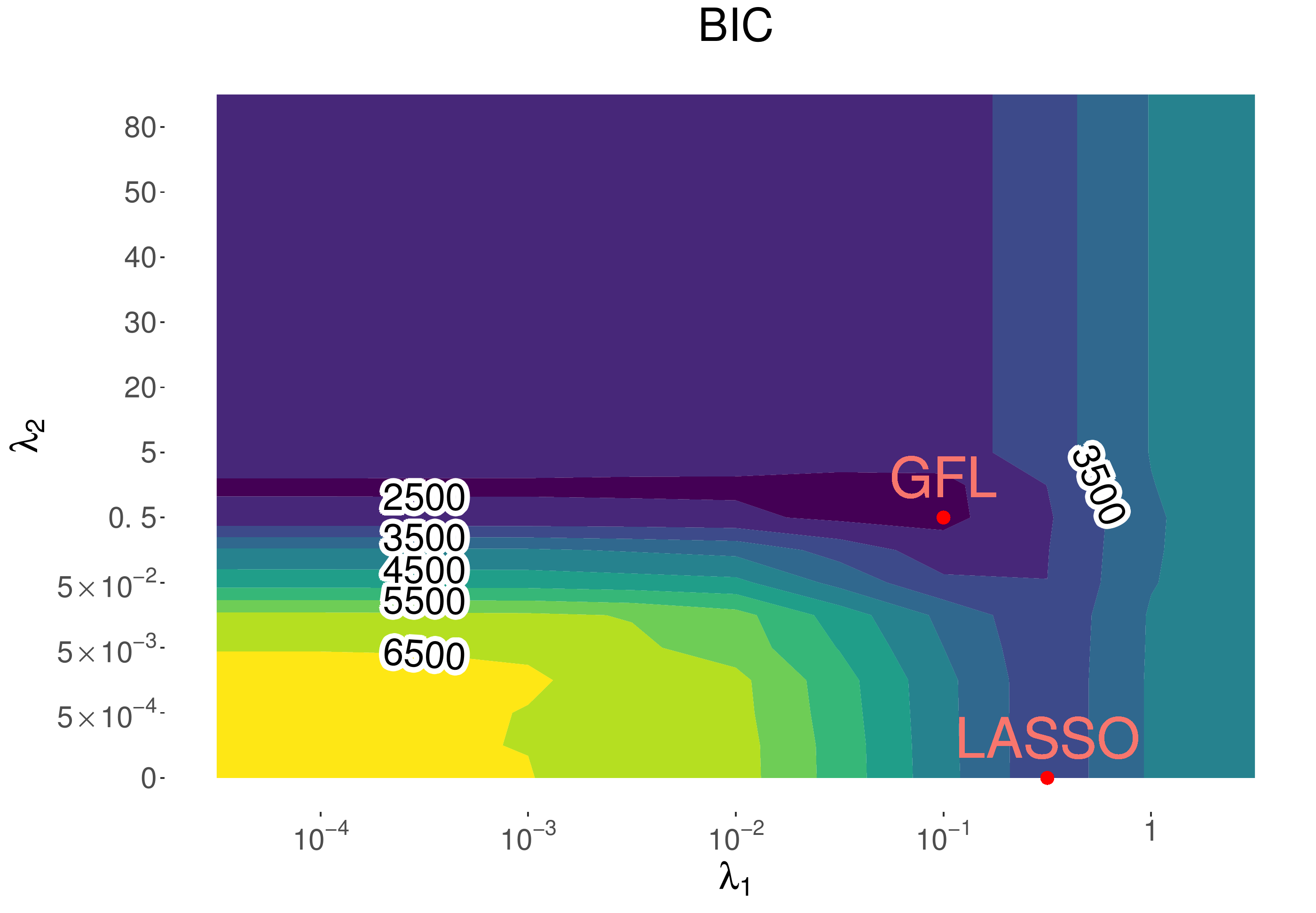}
\includegraphics[angle=270,width=0.3\linewidth,valign=t]{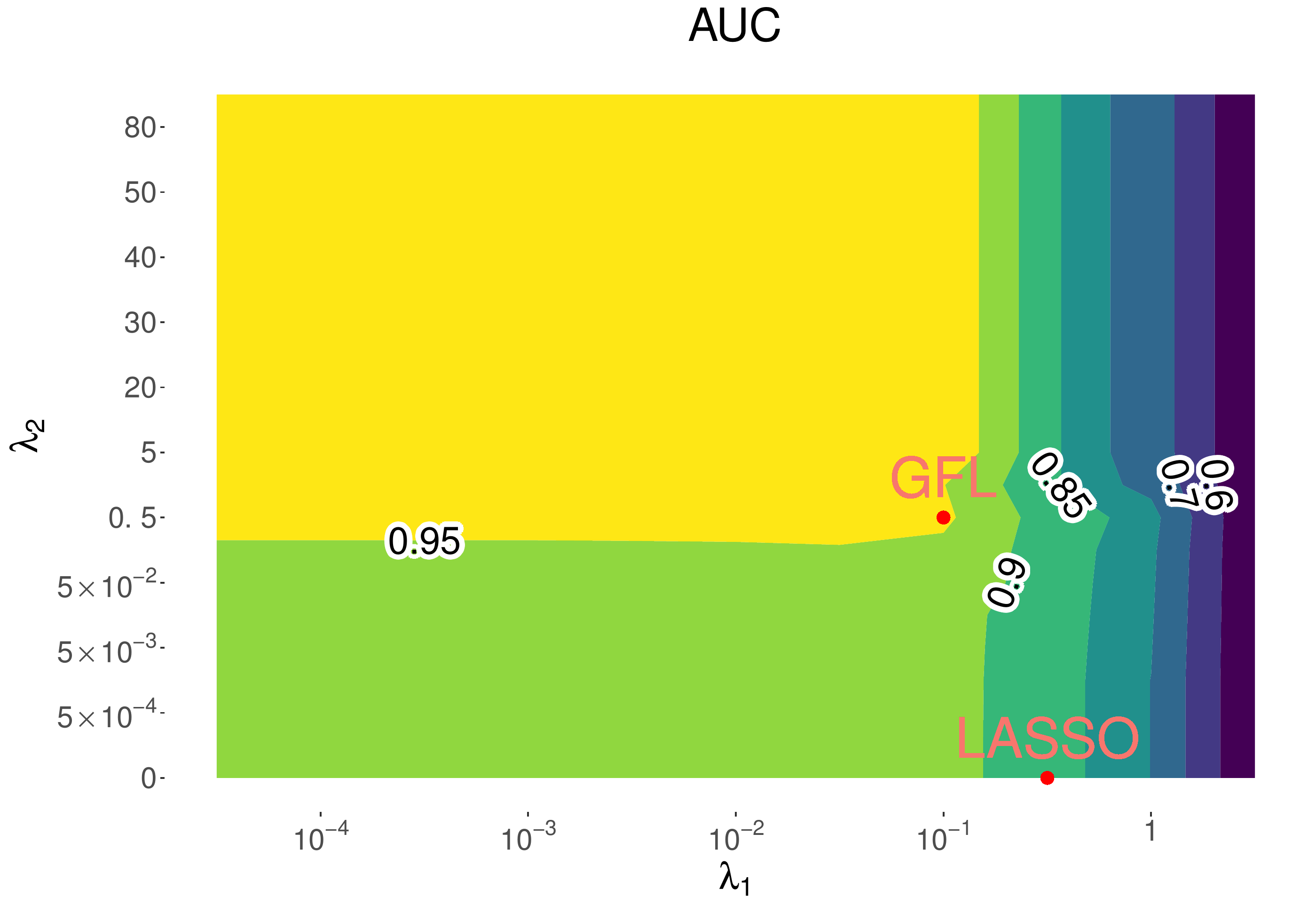}
\includegraphics[angle=270,width=0.3\linewidth,valign=t]{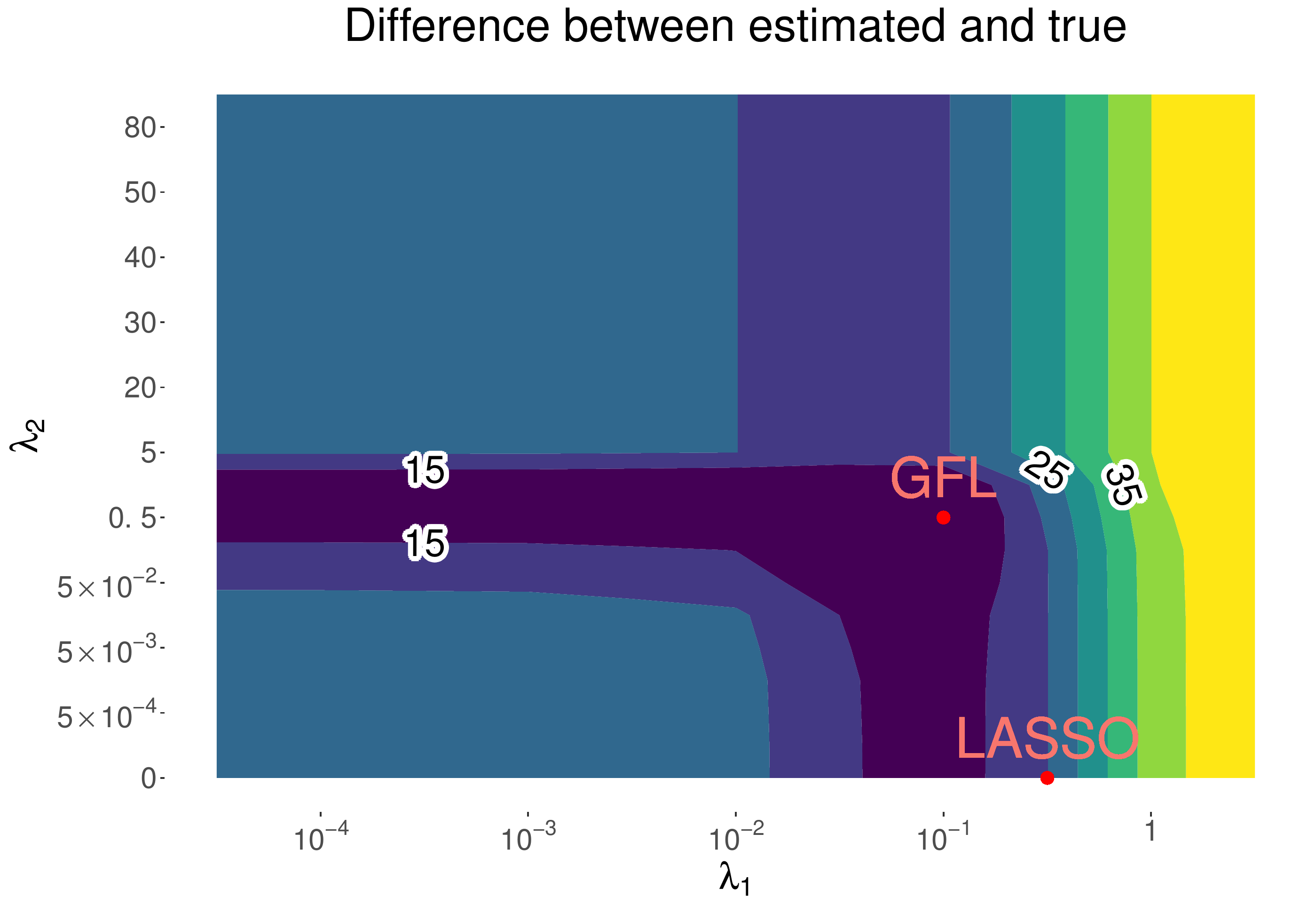}
\caption{sample size 50}
\end{subfigure}
\begin{subfigure}[b]{\textwidth}
\centering
\includegraphics[angle=270,width=0.3\linewidth,valign=t]{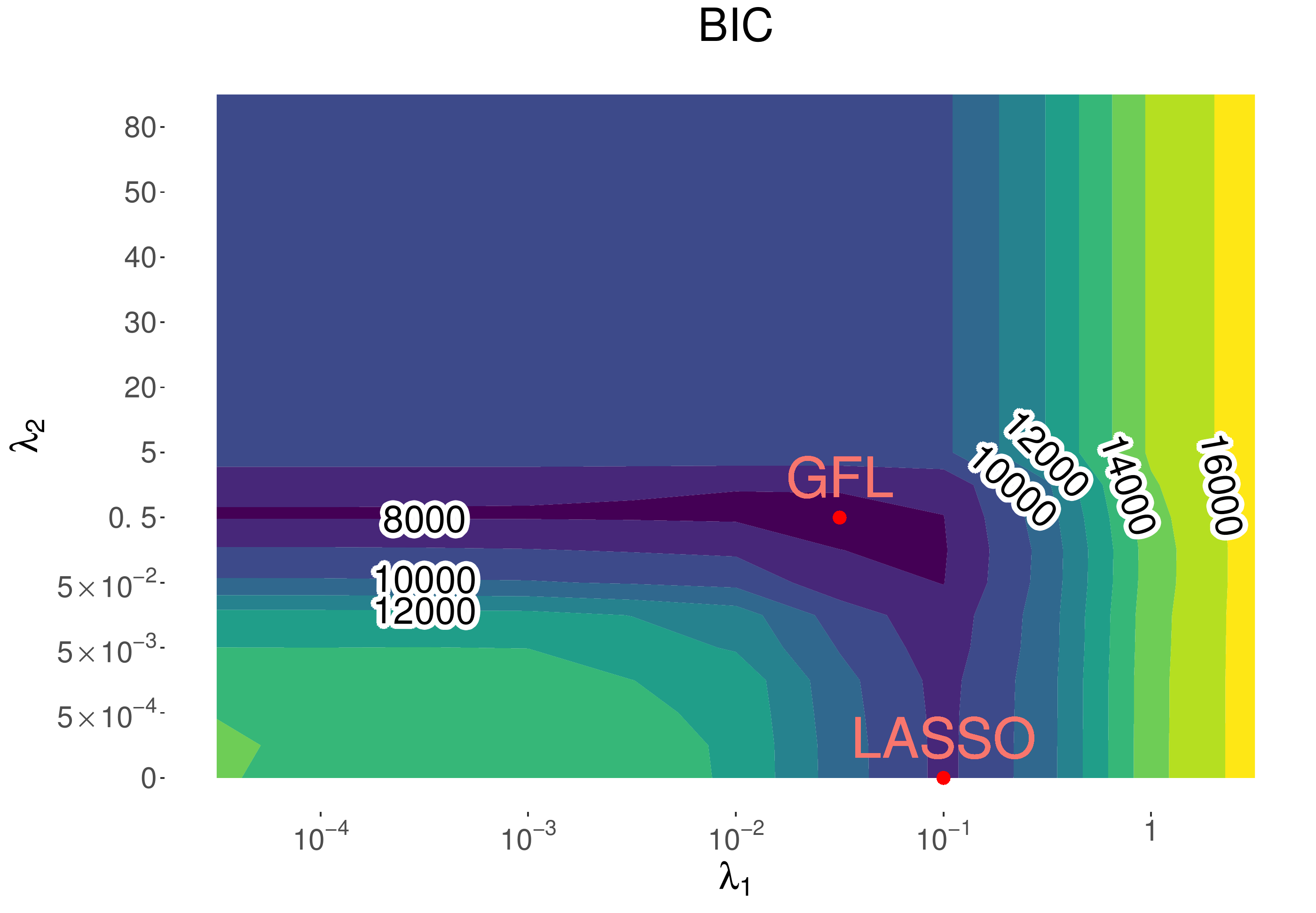}
\includegraphics[angle=270,width=0.3\linewidth,valign=t]{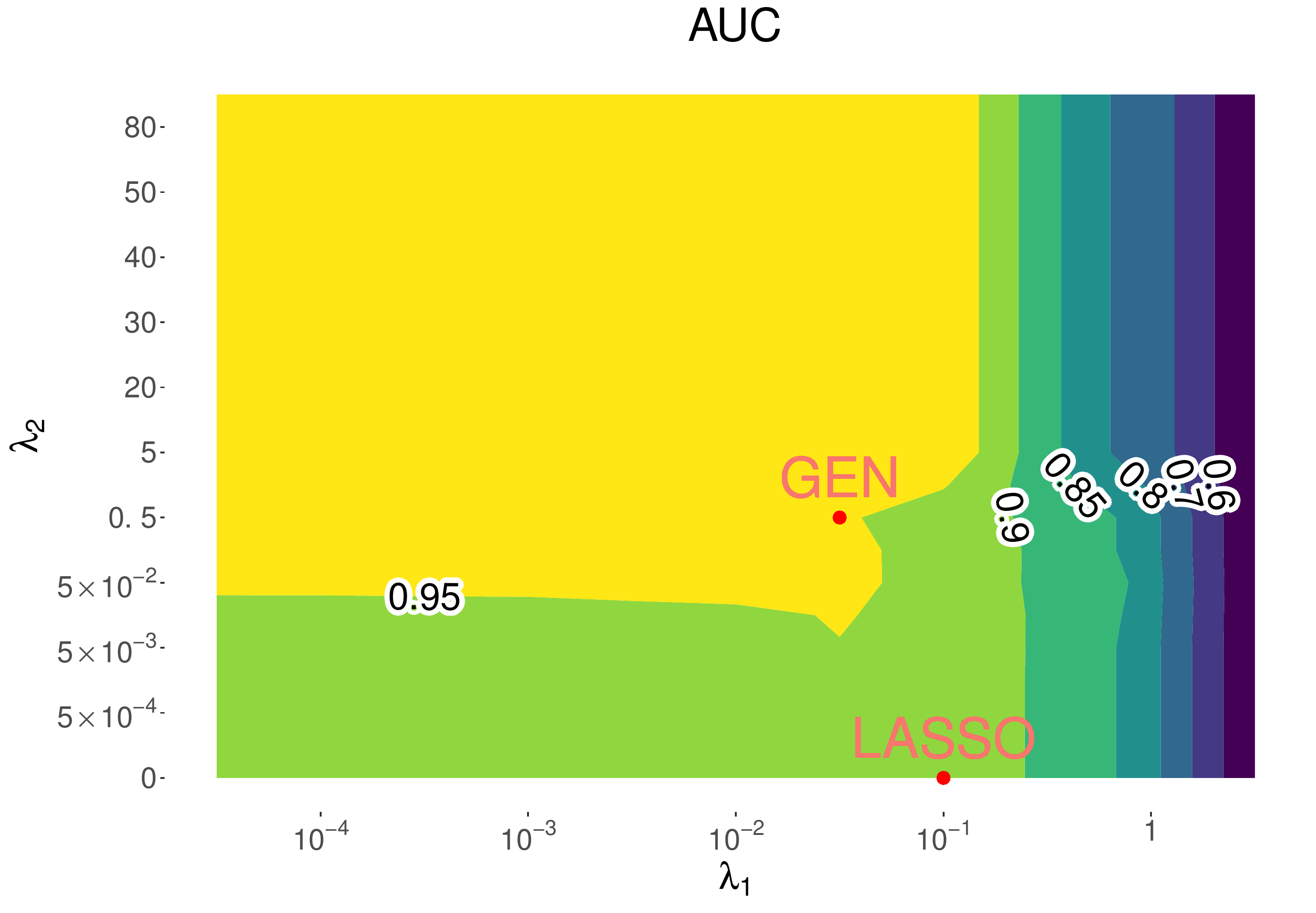}
\includegraphics[angle=270,width=0.3\linewidth,valign=t]{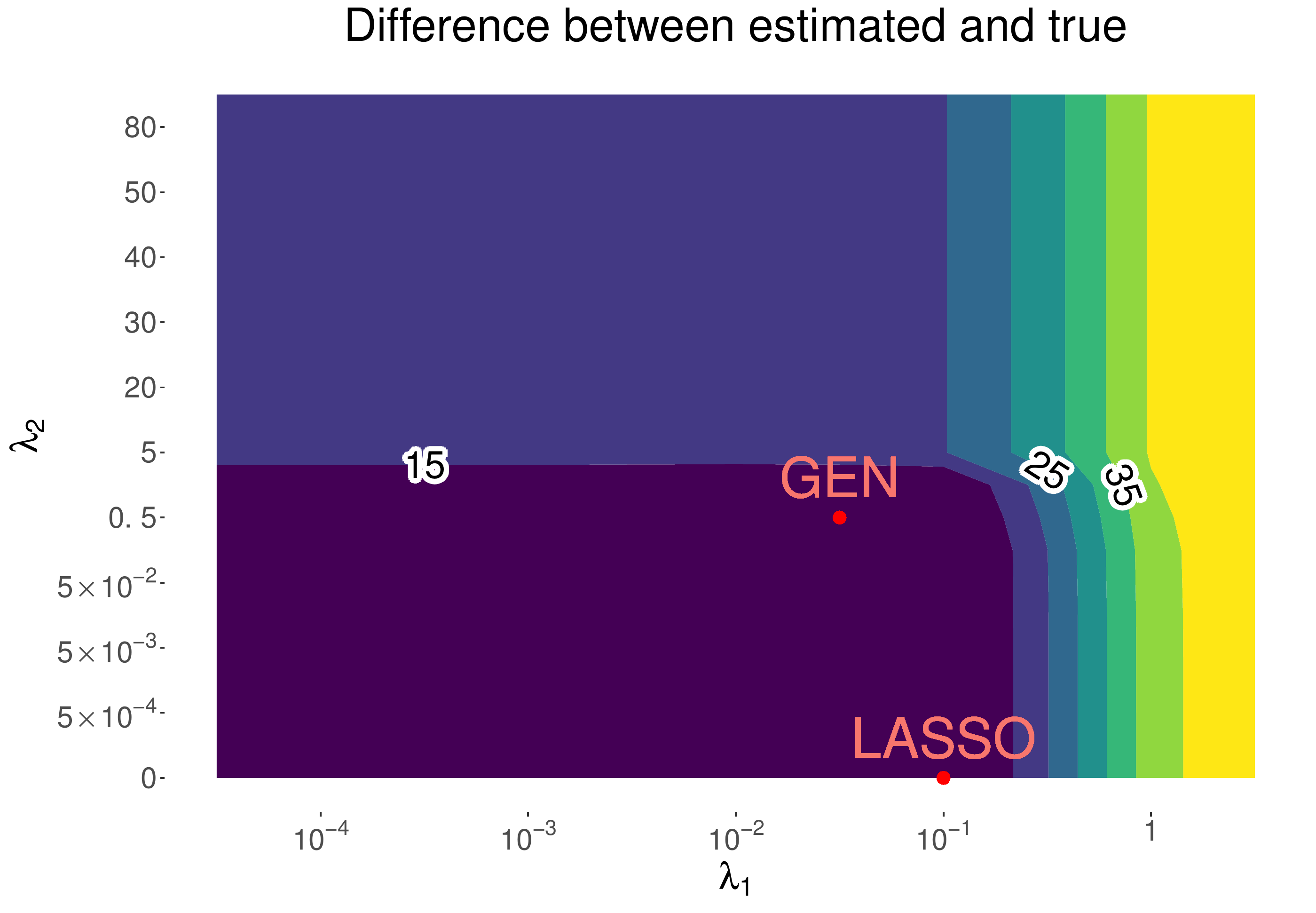}
\caption{sample size 200}
\end{subfigure}
\caption{Simulation performance measurements from generalized fused lasso for Scenario 1. Darker areas represent lower values. For reference, in the simulation with sample size 50, the sample AUC is 0.9017, comparing to the GFL AUC 0.9595 in the figure; the difference between sample and true is 23.94, comparing to the difference between GFL estimate and true as 6.66 in the figure. In the simulation with sample size 200, the sample AUC is 0.9365, comparing to the GFL AUC 0.9570 in the figure; the difference between sample and true is 11.28, comparing to the difference between GFL estimate and true as 5.20 in the figure. }
\label{fig:s1_gfl_contour}
\end{figure}

\section{Additional results from ADHD data}
\label{subsec:supplement6}

\begin{figure}[H]
\centering
\begin{subfigure}[b]{\textwidth}
\centering
\includegraphics[angle=270,width=0.45\linewidth,valign=t]{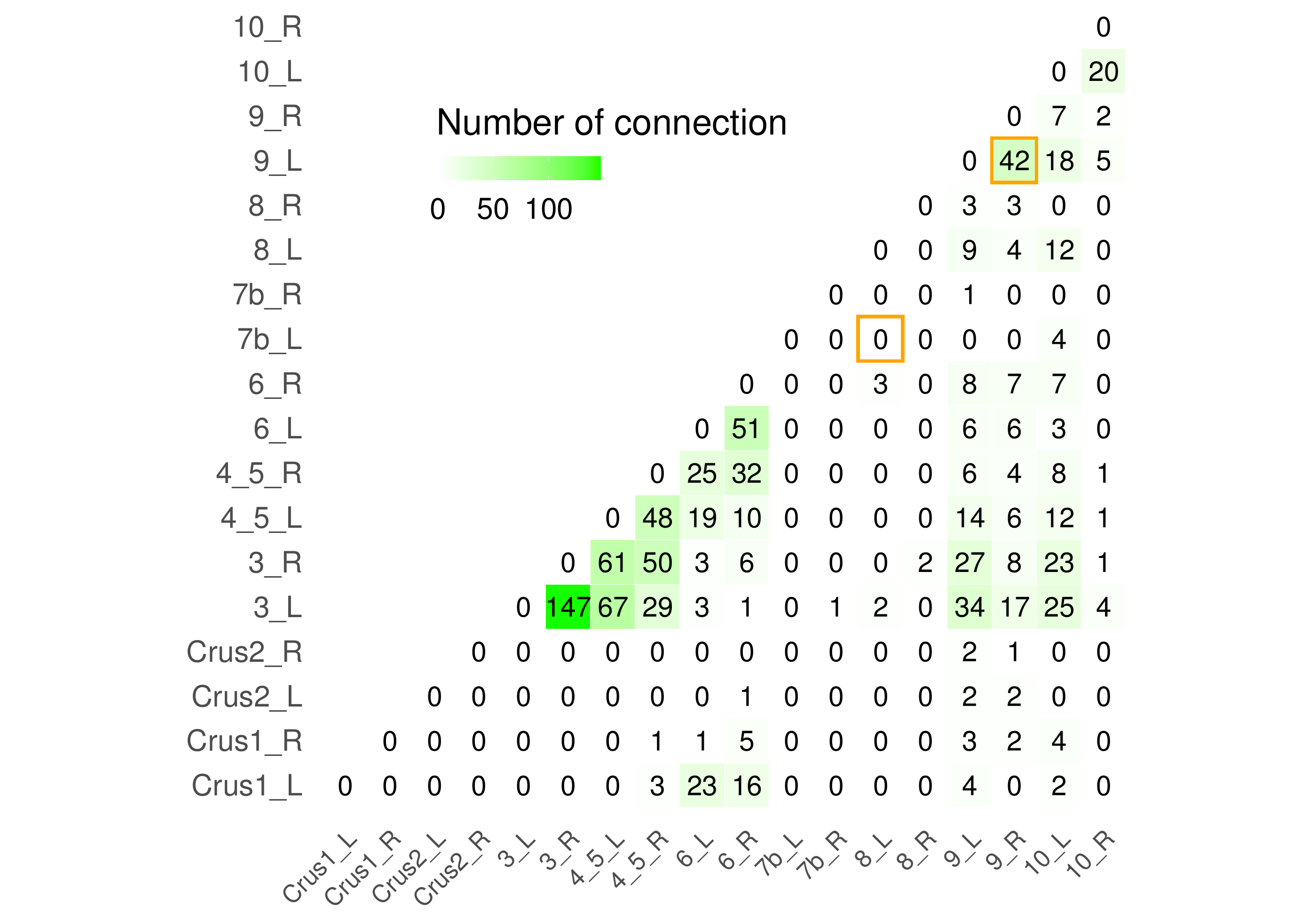}
\includegraphics[width=0.32\linewidth,valign=t]{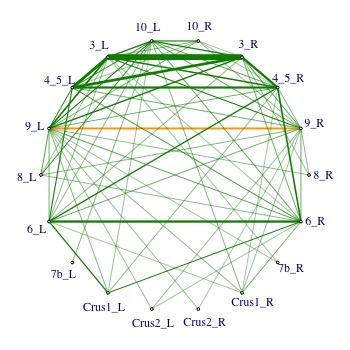}
\caption{Healthy group}
\end{subfigure}
\begin{subfigure}[b]{\textwidth}
\centering
\includegraphics[angle=270,width=0.45\linewidth,valign=t]{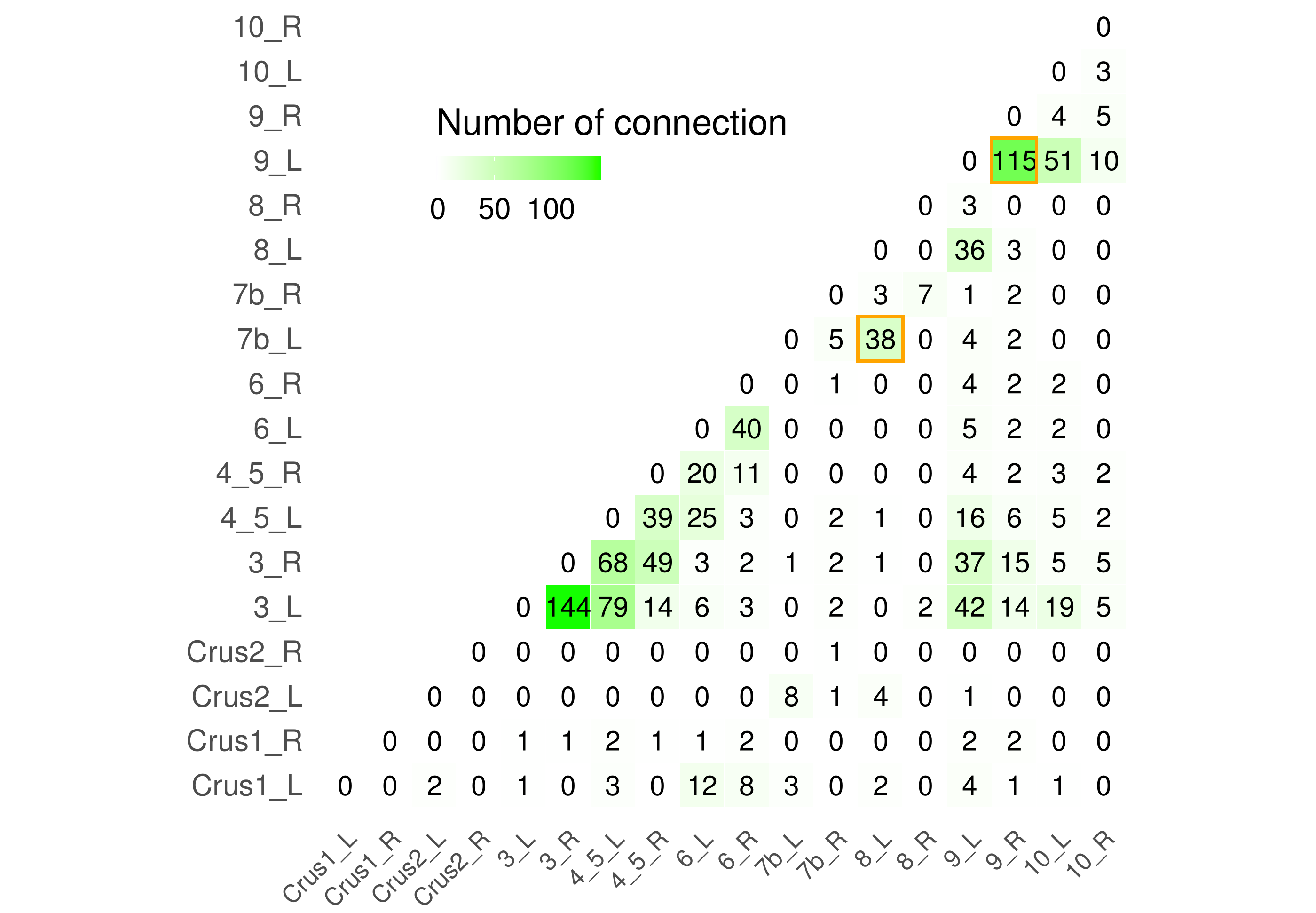}
\includegraphics[width=0.32\linewidth,valign=t]{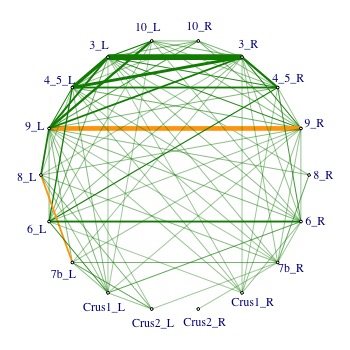}
\caption{ADHD group}
\end{subfigure}
\caption{[Result 1, GEN] Accumulated connections between different cerebellum regions estimated by generalized elastic net, showing the total connections between different cerebellum regions over 172 time points. The yellow squares on the left highlight the number of the edge 7b\_L - 8\_L and the edge 9\_L - 9\_R. }
\label{fig:app_gen11_connection}
\end{figure}

\begin{figure}[H]
\centering
\begin{subfigure}[b]{0.47\textwidth}
\centering
\includegraphics[width=\linewidth,valign=t]{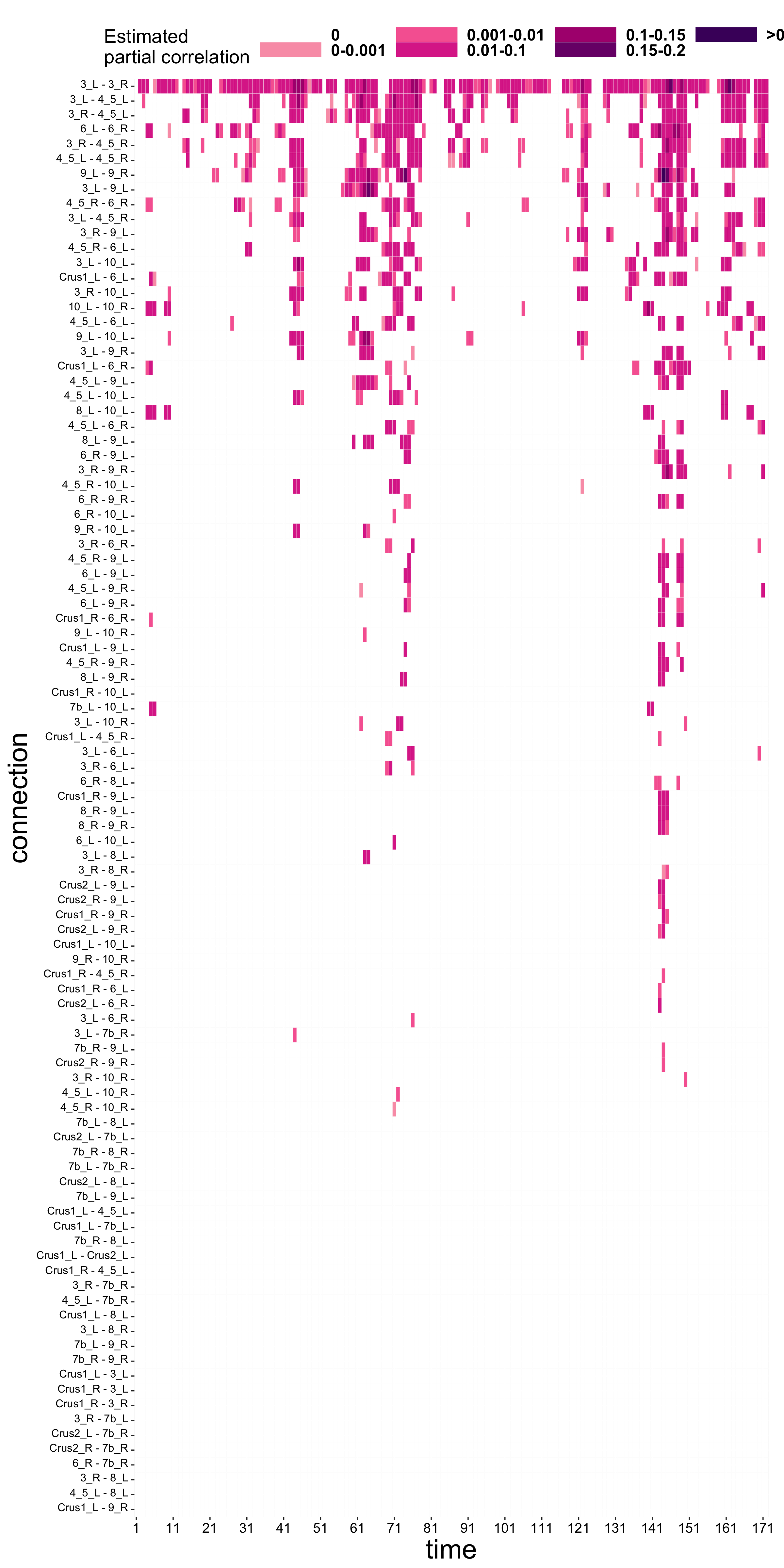}
\caption{Healthy group}
\end{subfigure}
~
\begin{subfigure}[b]{0.47\textwidth}
\centering
\includegraphics[width=\linewidth,valign=t]{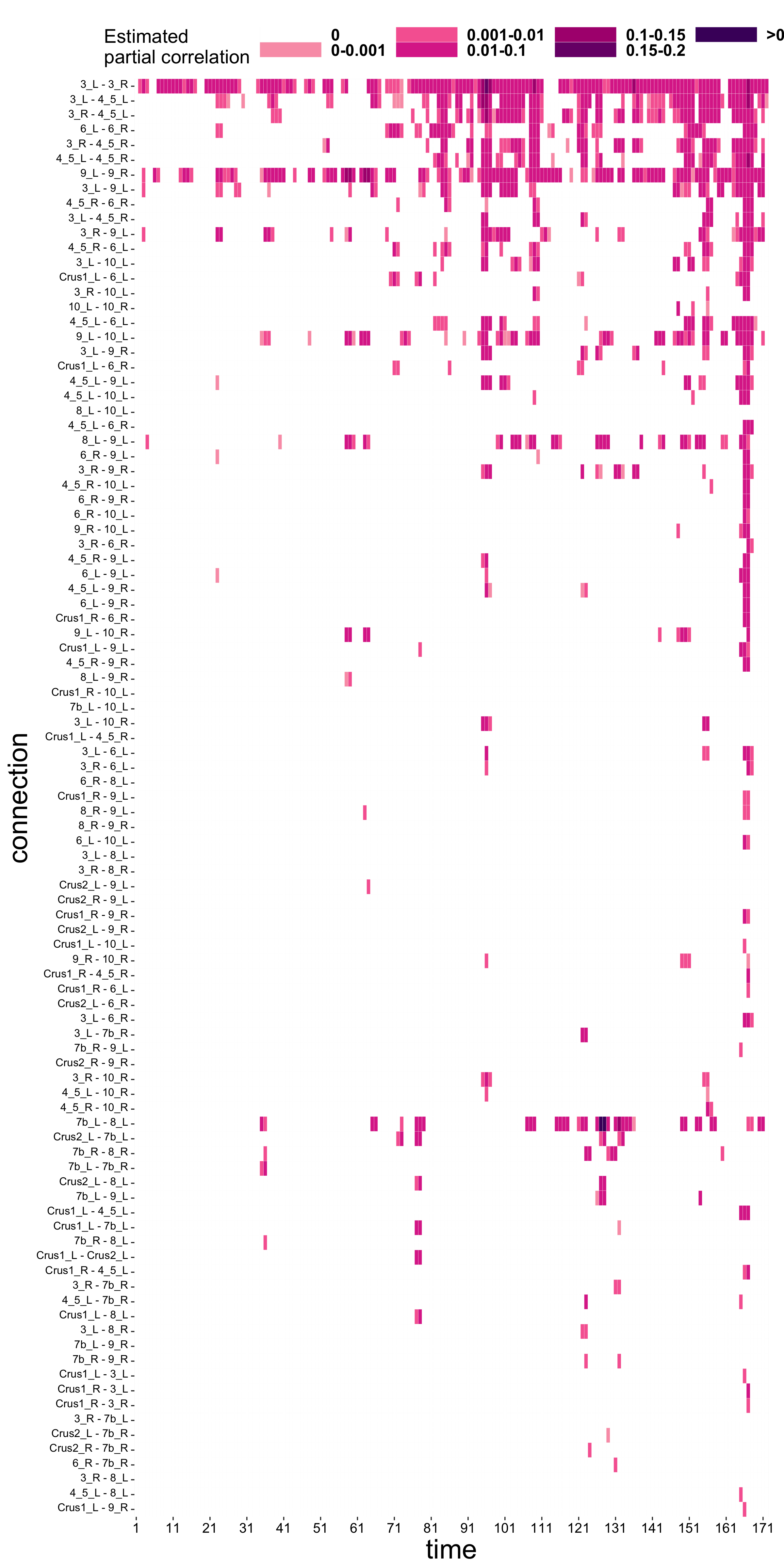}
\caption{ADHD group}
\end{subfigure}
\caption{[Result 1, GEN] Estimated partial correlations between different cerebellum regions estimated by generalized elastic net, illustrating how connections change over time. Each cell corresponds a connection at a given time point, and the color represents the magnitude of the estimated partial correlation value. }
\label{fig:app_gen11_table}
\end{figure}

\begin{figure}[H]
\centering
\begin{subfigure}[b]{\textwidth}
\centering
\includegraphics[angle=270,width=0.45\linewidth,valign=t]{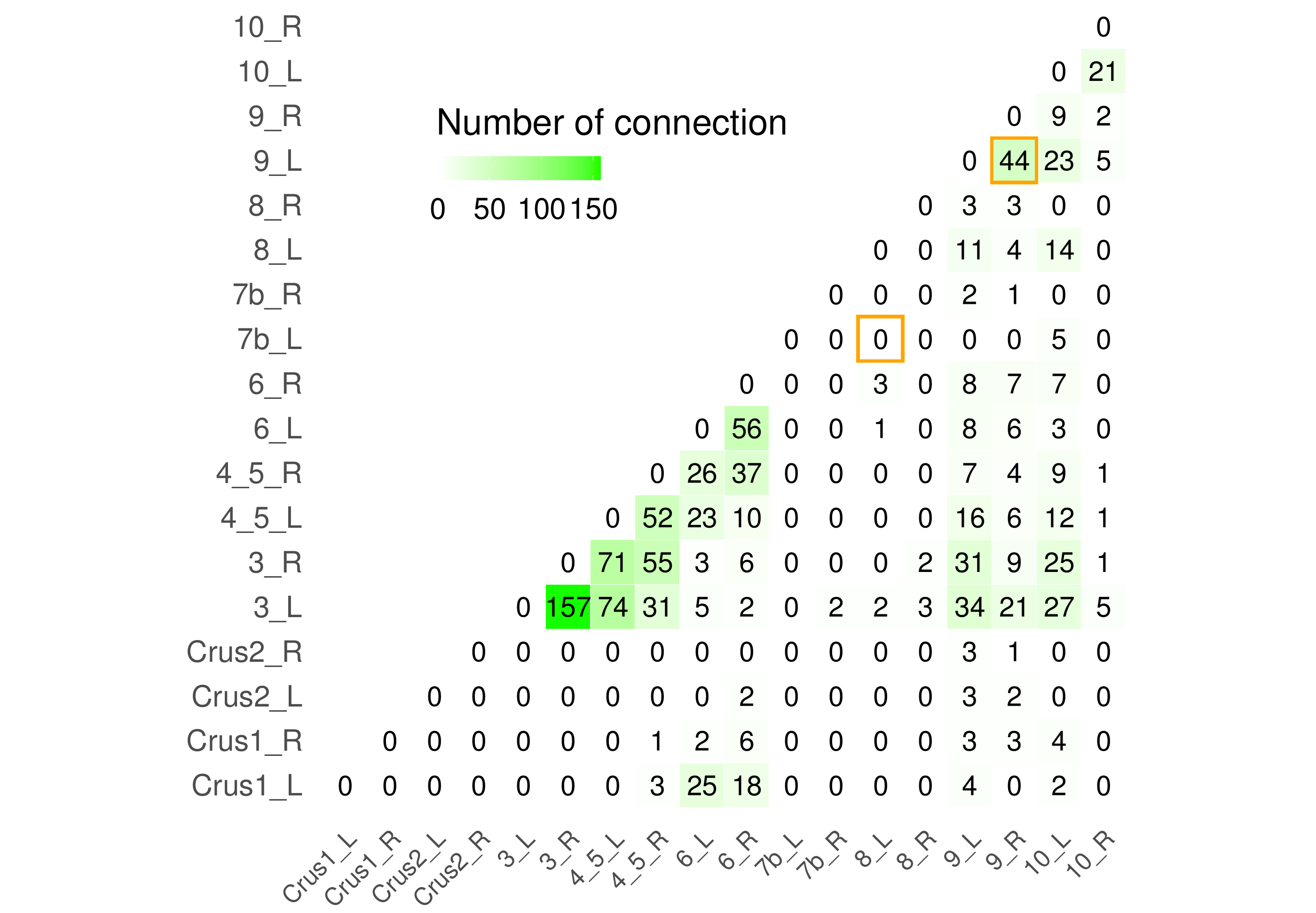}
\includegraphics[width=0.32\linewidth,valign=t]{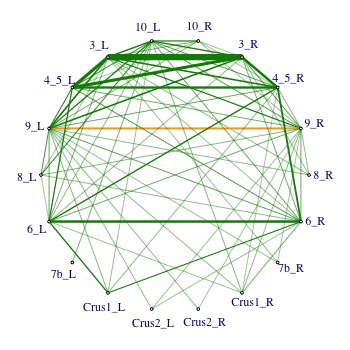}
\caption{Healthy group}
\end{subfigure}
\begin{subfigure}[b]{\textwidth}
\centering
\includegraphics[angle=270,width=0.45\linewidth,valign=t]{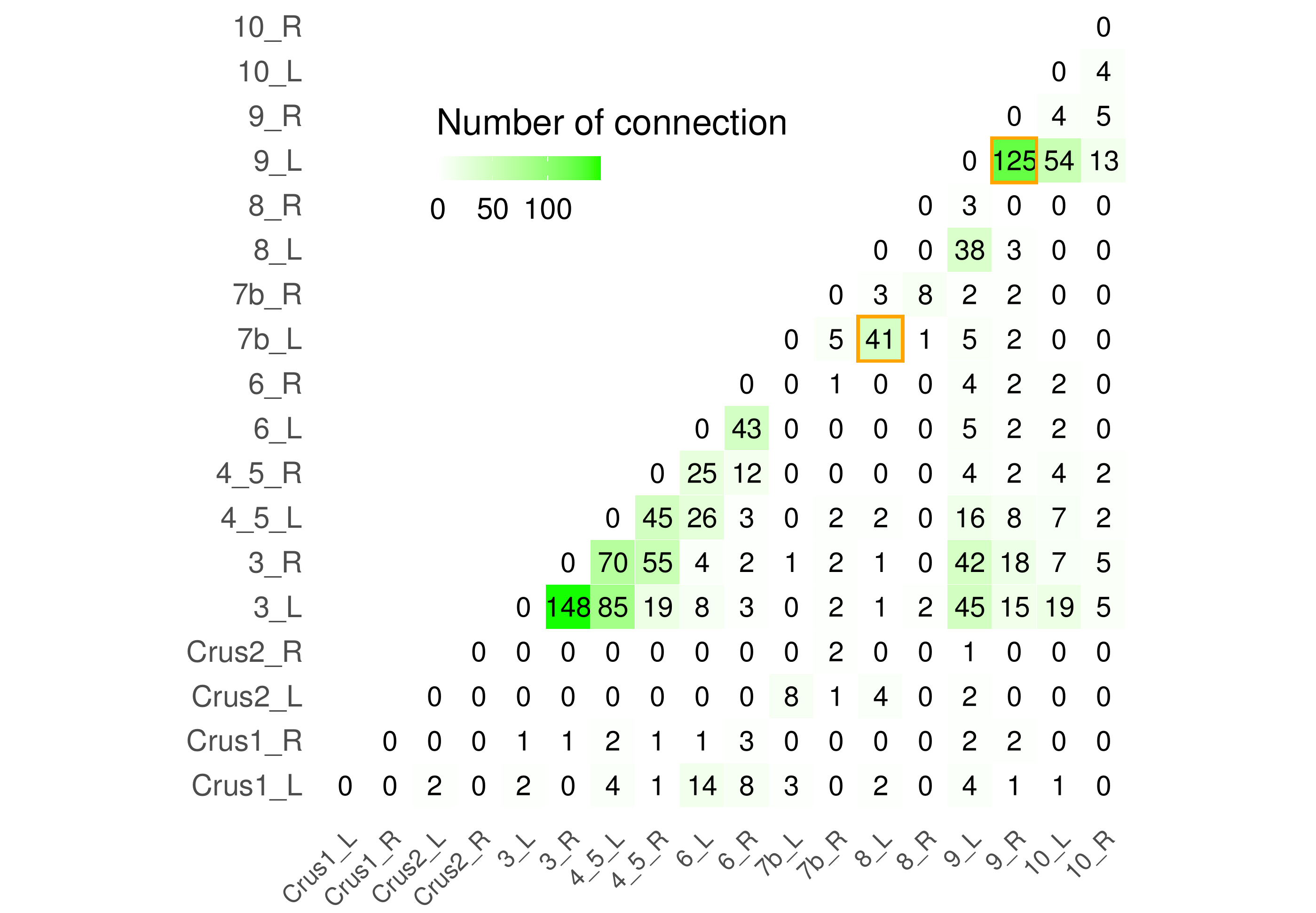}
\includegraphics[width=0.32\linewidth,valign=t]{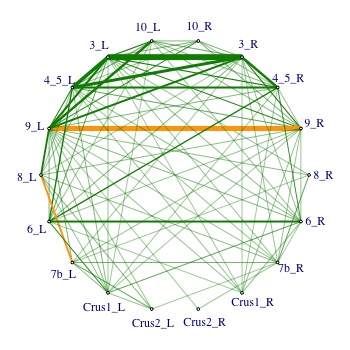}
\caption{ADHD group}
\end{subfigure}
\caption{[Result 2, GEN] Accumulated connections between different cerebellum regions estimated by generalized elastic net, showing the total connections between different cerebellum regions over 172 time points. The yellow squares on the left highlight the number of the edge 7b\_L - 8\_L and the edge 9\_L - 9\_R. }
\label{fig:app_gen13_connection}
\end{figure}

\begin{figure}[H]
\centering
\begin{subfigure}[b]{0.47\textwidth}
\centering
\includegraphics[width=\linewidth,valign=t]{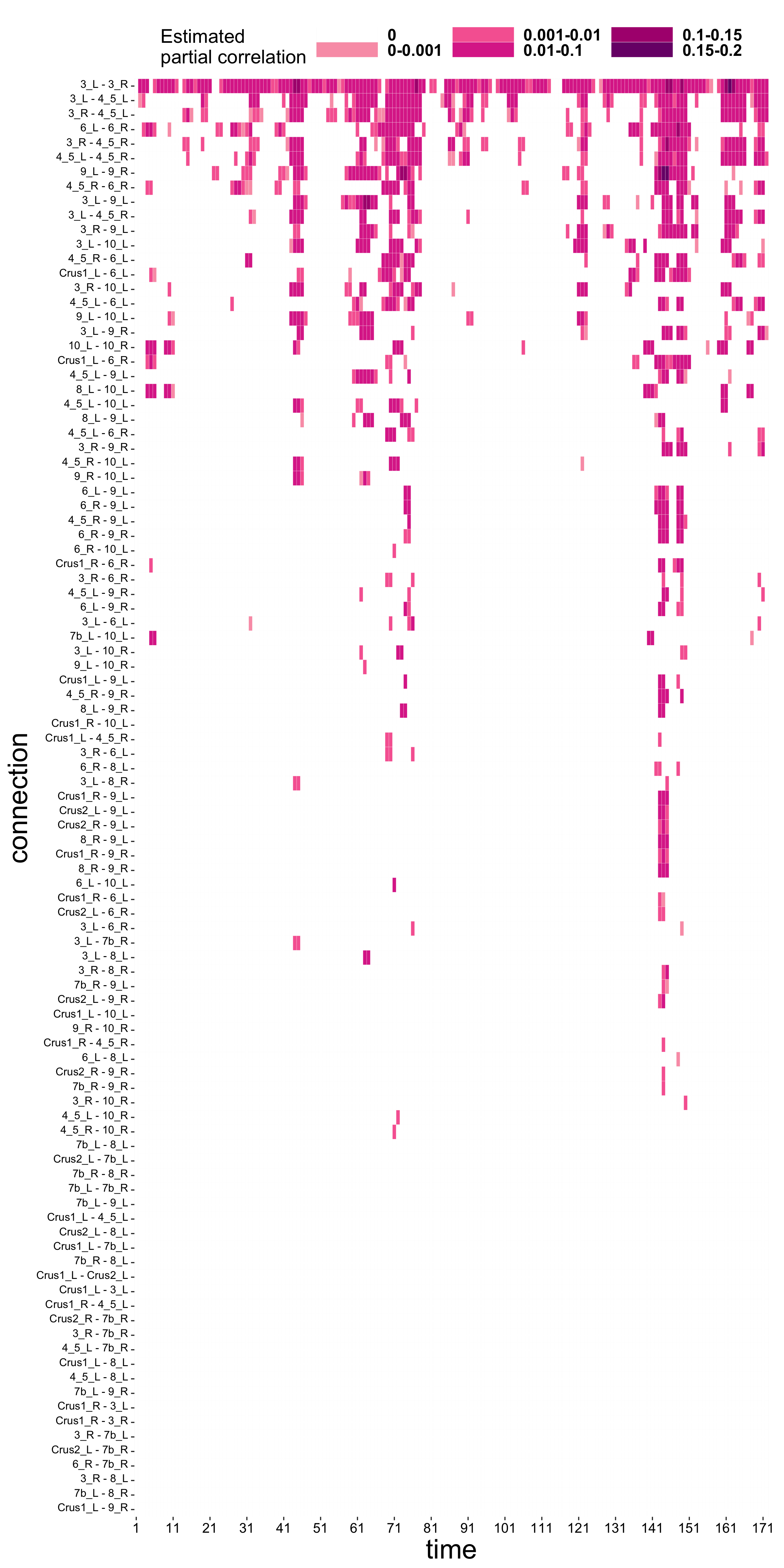}
\caption{Healthy group}
\end{subfigure}
~
\begin{subfigure}[b]{0.47\textwidth}
\centering
\includegraphics[width=\linewidth,valign=t]{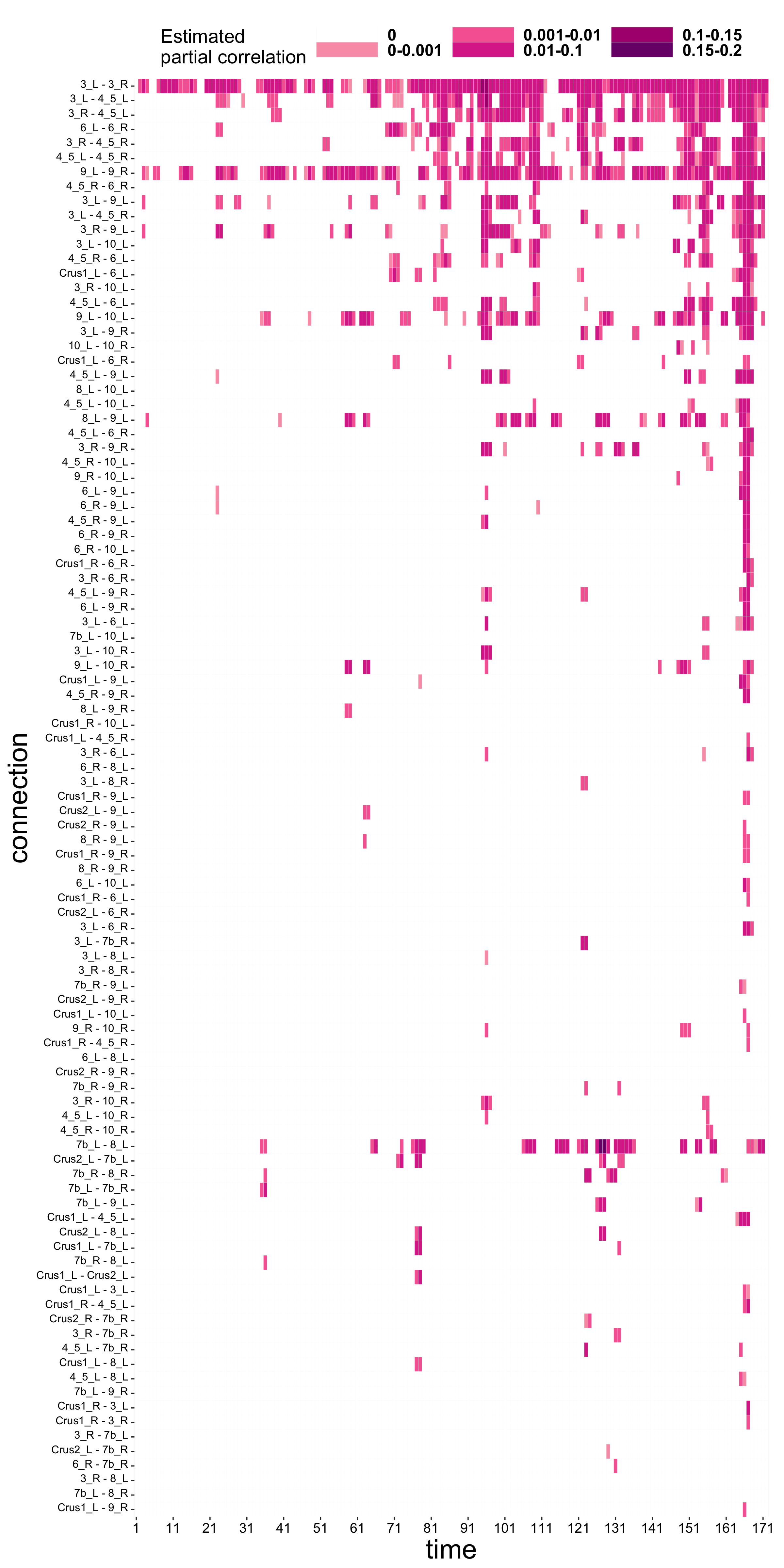}
\caption{ADHD group}
\end{subfigure}
\caption{[Result 2, GEN] Estimated partial correlations between different cerebellum regions estimated by generalized elastic net, illustrating how connections change over time. Each cell corresponds a connection at a given time point, and the color represents the magnitude of the estimated partial correlation value. }
\label{fig:app_gen13_table}
\end{figure}

\begin{figure}[H]
\centering
\begin{subfigure}[b]{\textwidth}
\centering
\includegraphics[angle=270,width=0.45\linewidth,valign=t]{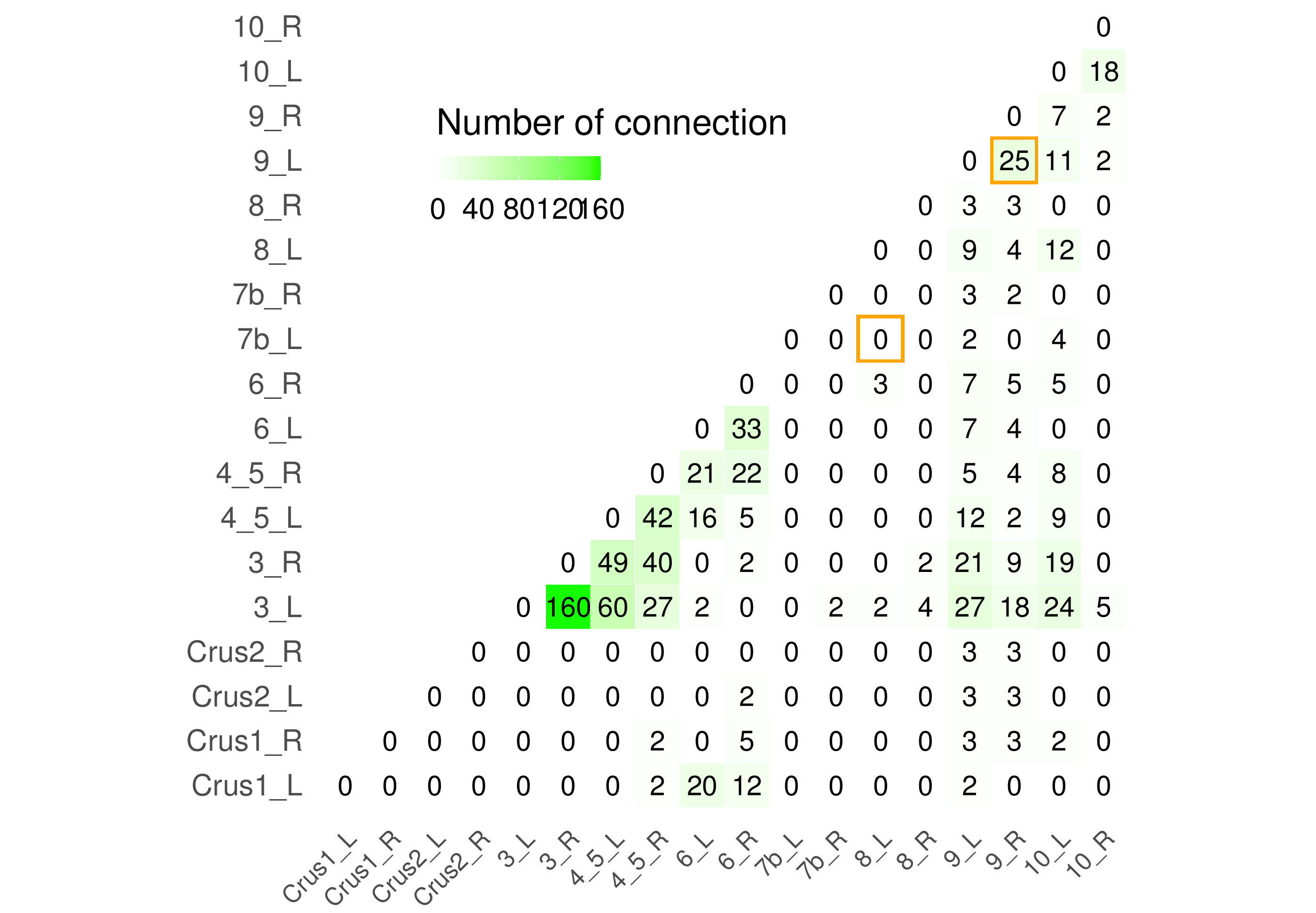}
\includegraphics[width=0.32\linewidth,valign=t]{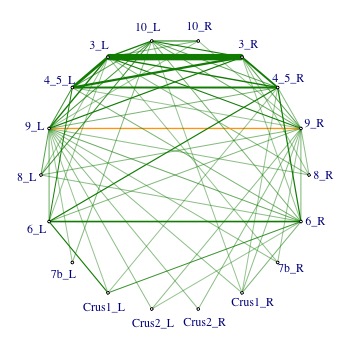}
\caption{Healthy group}
\end{subfigure}
\begin{subfigure}[b]{\textwidth}
\centering
\includegraphics[angle=270,width=0.45\linewidth,valign=t]{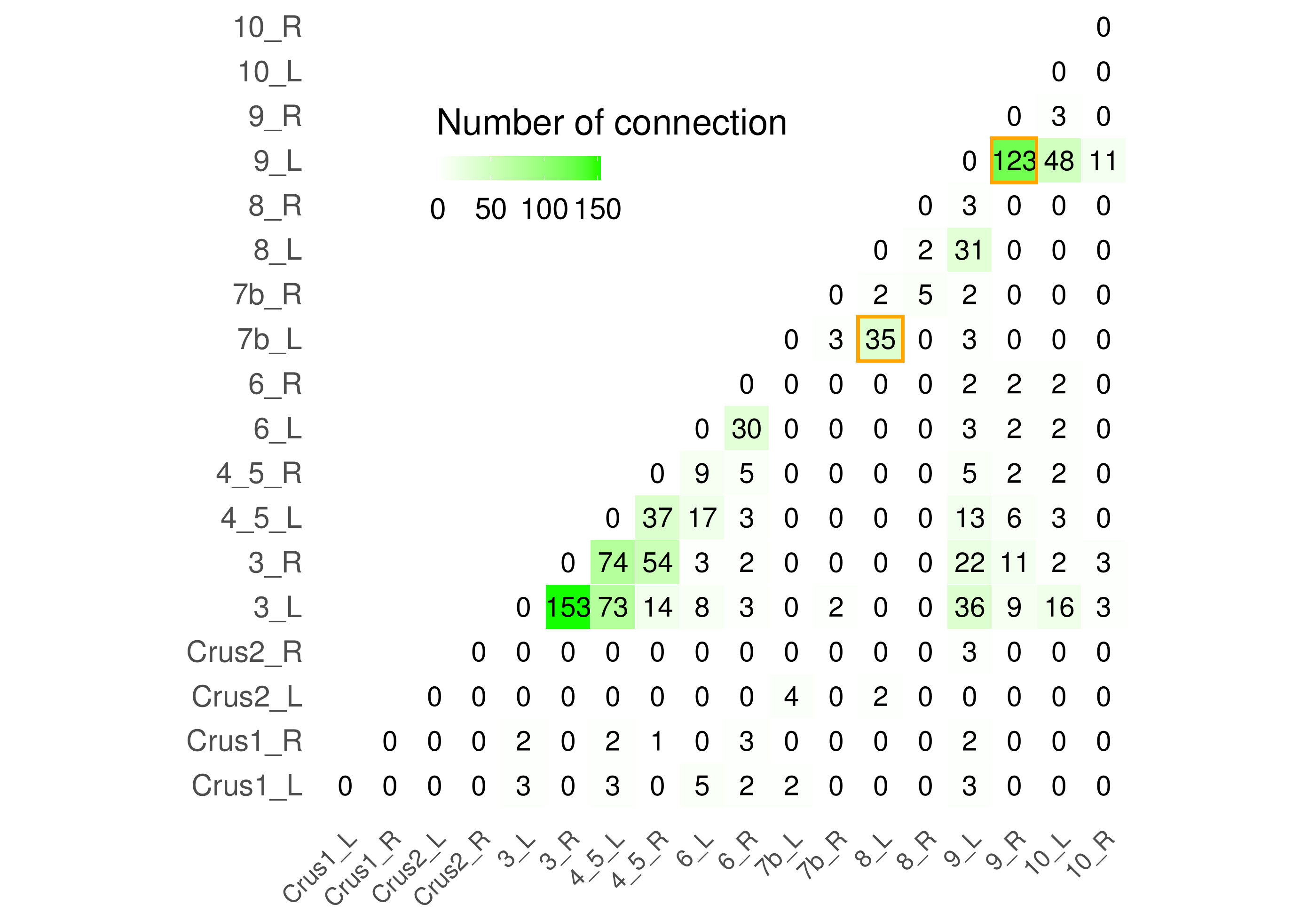}
\includegraphics[width=0.32\linewidth,valign=t]{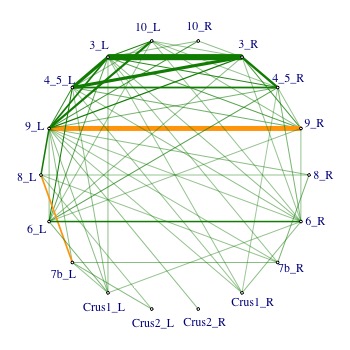}
\caption{ADHD group}
\end{subfigure}
\caption{[Result 2, GFL] Accumulated connections between different cerebellum regions estimated by generalized fused lasso. The figures show the total connections between different cerebellum regions over 172 time points. The yellow squares on the left highlight the number of the edge 7b\_L - 8\_L and the edge 9\_L - 9\_R. }
\label{fig:app_gfl16_connection}
\end{figure}

\begin{figure}[H]
\centering
\begin{subfigure}[b]{0.47\textwidth}
\centering
\includegraphics[width=\linewidth,valign=t]{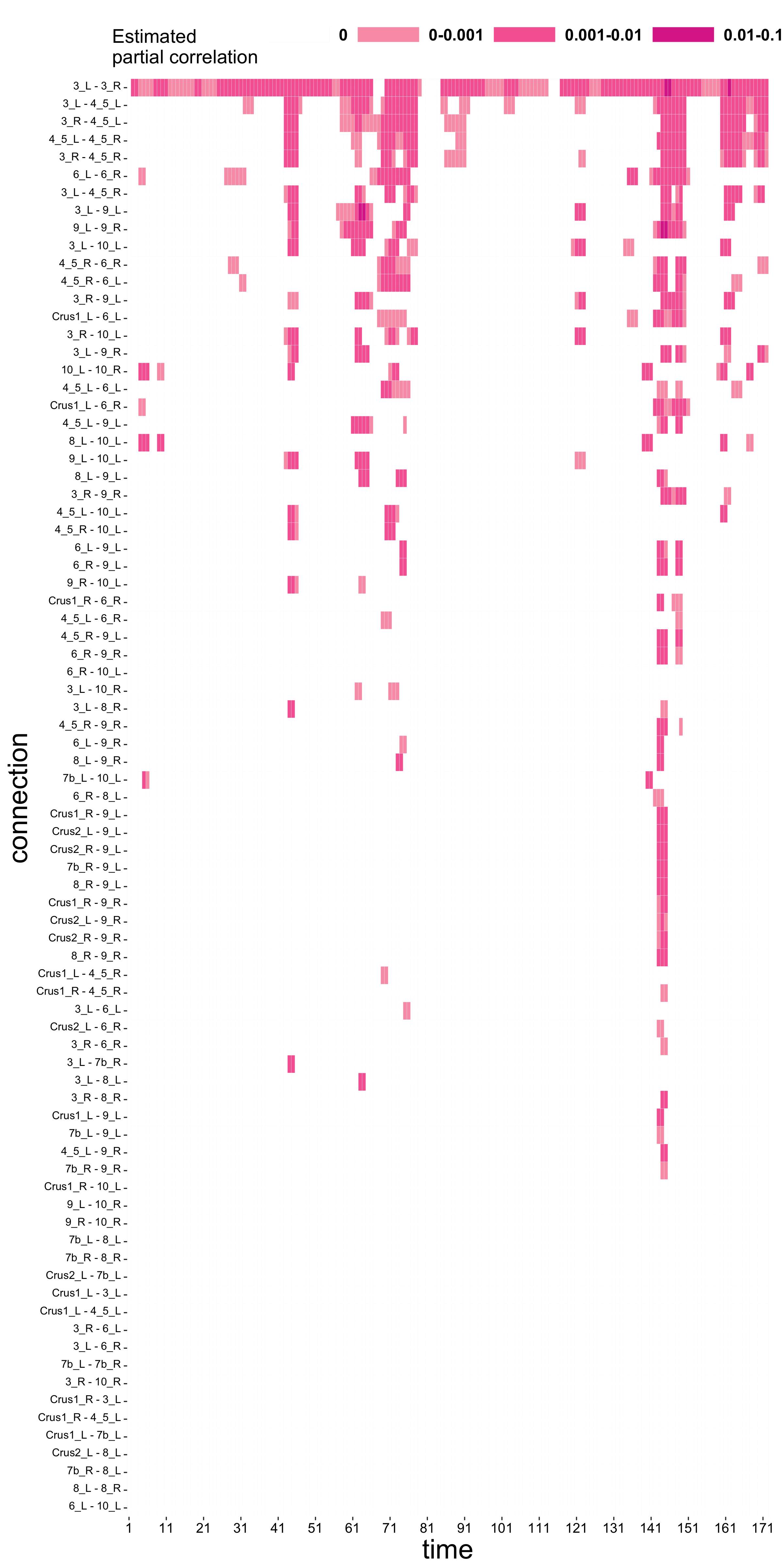}
\caption{Healthy group}
\end{subfigure}
~
\begin{subfigure}[b]{0.47\textwidth}
\centering
\includegraphics[width=\linewidth,valign=t]{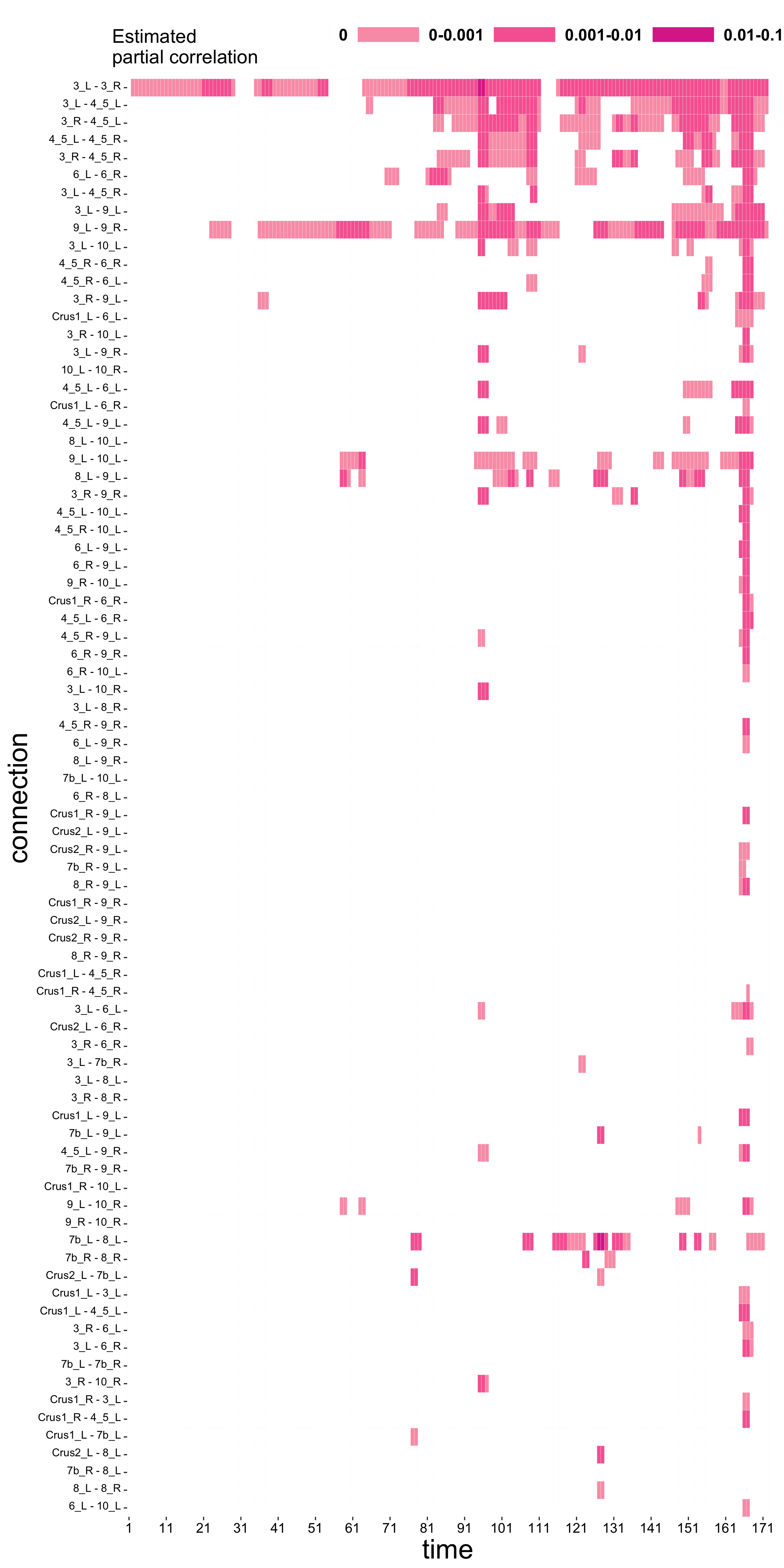}
\caption{ADHD group}
\end{subfigure}
\caption{[Result 2, GFL] Estimated partial correlations between different cerebellum regions estimated by generalized fused lasso, illustrating how connections change over time. Each cell corresponds a connection at a given time point, and the color represents the magnitude of the estimated partial correlation value. }
\label{fig:app_gfl16_table}
\end{figure}


\end{document}